\documentclass[11pt]{article}
\usepackage{xcolor}
\definecolor{repositorylink}{HTML}{0366d6}

\usepackage[T1]{fontenc}
\usepackage{lmodern}
\usepackage[letterpaper,margin=1in]{geometry}
\usepackage{microtype}
\usepackage{mathtools}
\usepackage{amssymb}
\usepackage{mathrsfs}
\usepackage{amsthm}
\usepackage{graphicx}
\usepackage{academicons}
\IfFileExists{fontawesome5.sty}{%
	\usepackage{fontawesome5}%
	\newcommand{\GitHubIcon}{\mbox{{\small\faGithub}}\nobreakspace}%
}{%
	\newcommand{\GitHubIcon}{}%
}
\usepackage{flafter}
\usepackage[font=small,labelfont=bf,skip=6pt]{caption}
\usepackage{needspace}
\usepackage{booktabs}
\usepackage{tabularx}
\usepackage{array}
\usepackage{enumitem}
\usepackage{tikz}
\usetikzlibrary{arrows.meta,positioning,shapes.geometric}
\usepackage{aliascnt}
\usepackage[hidelinks]{hyperref}
\hypersetup{
	pdftitle={A Full Complexity Dichotomy for Complex-Valued Boolean Holant Problems},
	pdfauthor={Chenghua Liu}
}
\usepackage[capitalise,nameinlink,noabbrev]{cleveref}
\usepackage{authblk}

\setlist[itemize]{leftmargin=1.6em,itemsep=0.25em,topsep=0.3em}
\setlist[enumerate]{leftmargin=1.8em,itemsep=0.3em,topsep=0.3em}
\allowdisplaybreaks[2]

\theoremstyle{plain}
\newtheorem{theorem}{Theorem}[section]

\newaliascnt{lemma}{theorem}
\newtheorem{lemma}[lemma]{Lemma}
\aliascntresetthe{lemma}

\newaliascnt{proposition}{theorem}
\newtheorem{proposition}[proposition]{Proposition}
\aliascntresetthe{proposition}

\newaliascnt{corollary}{theorem}
\newtheorem{corollary}[corollary]{Corollary}
\aliascntresetthe{corollary}

\newaliascnt{claim}{theorem}

\aliascntresetthe{claim}

\theoremstyle{definition}
\newaliascnt{definition}{theorem}
\newtheorem{definition}[definition]{Definition}
\aliascntresetthe{definition}

\newaliascnt{procedure}{theorem}

\aliascntresetthe{procedure}

\theoremstyle{remark}
\newaliascnt{remark}{theorem}
\newtheorem{remark}[remark]{Remark}
\aliascntresetthe{remark}

\crefname{theorem}{Theorem}{Theorems}
\Crefname{theorem}{Theorem}{Theorems}
\crefname{lemma}{Lemma}{Lemmas}
\Crefname{lemma}{Lemma}{Lemmas}
\crefname{proposition}{Proposition}{Propositions}
\Crefname{proposition}{Proposition}{Propositions}
\crefname{corollary}{Corollary}{Corollaries}
\Crefname{corollary}{Corollary}{Corollaries}
\crefname{claim}{Claim}{Claims}
\Crefname{claim}{Claim}{Claims}
\crefname{definition}{Definition}{Definitions}
\Crefname{definition}{Definition}{Definitions}
\crefname{procedure}{Procedure}{Procedures}
\Crefname{procedure}{Procedure}{Procedures}
\crefname{remark}{Remark}{Remarks}
\Crefname{remark}{Remark}{Remarks}

\newcommand{\AlgNums}{\overline{\mathbb Q}}
\newcommand{\Ftwo}{\mathbb F_2}
\newcommand{\cF}{\mathcal F}
\newcommand{\cG}{\mathcal G}
\newcommand{\cH}{\mathcal H}
\newcommand{\cA}{\mathscr A}
\newcommand{\cP}{\mathscr P}
\newcommand{\cL}{\mathscr L}
\newcommand{\cM}{\mathscr M}
\newcommand{\cT}{\mathscr T}
\newcommand{\cC}{\mathscr C}
\newcommand{\cE}{\mathscr E}
\newcommand{\angles}[1]{\left\langle #1\right\rangle}
\newcommand{\Holant}{\operatorname{Holant}}
\newcommand{\KHolant}{\operatorname{K\text{-}Holant}}
\newcommand{\EOTract}{\mathsf{EOTract}}
\newcommand{\Tract}{\mathsf{Tract}}
\newcommand{\FP}{\mathsf{FP}}
\newcommand{\SharpP}{\#\mathsf P}
\newcommand{\PGL}{\operatorname{PGL}}
\newcommand{\GL}{\operatorname{GL}}
\newcommand{\Mat}{\operatorname{Mat}}
\newcommand{\diag}{\operatorname{diag}}
\newcommand{\Span}{\operatorname{span}}

\newcommand{\wt}{\operatorname{wt}}
\newcommand{\supp}{\operatorname{supp}}
\newcommand{\arity}{\operatorname{arity}}

\newcommand{\EO}{\mathrm{EO}}
\newcommand{\GE}{\mathrm{GE}}
\newcommand{\Eq}{\mathrm{EQ}}
\newcommand{\Neq}{\mathrm{NEQ}}

\newcommand{\one}{\mathbf 1}
\newcommand{\zero}{\mathbf 0}

\newcommand{\DeltaZero}{\Delta_0}
\newcommand{\DeltaOne}{\Delta_1}
\newcommand{\RMcore}{\mathrm{RM}_8}
\newcommand{\leT}{\leq_{\mathrm T}}
\newcommand{\equivT}{\equiv_{\mathrm T}}

\newcolumntype{L}[1]{>{\raggedright\arraybackslash}p{#1}}

\begin{document}
	
	\title{A Full Complexity Dichotomy for Complex-Valued\\[0.15em]
		Boolean Holant Problems}
	\author[1,*]{Chenghua Liu}
	\author[2,\textdagger]{Boning Meng}
	\author[1,\textdaggerdbl]{Juqiu Wang}
	
	\affil[1]{Institute of Software, Chinese Academy of Sciences, Beijing, China}
	\affil[2]{University of Regensburg, Regensburg, Germany}
	
	\affil[ ]{%
		\textsuperscript{*}\texttt{liuch.russell@gmail.com}\quad
		\textsuperscript{\textdagger}\texttt{mengboning2013@gmail.com}\quad
		\textsuperscript{\textdaggerdbl}\texttt{wangjuqiu104@gmail.com}%
	}
	
	\date{}
	
	\maketitle
	
	\begin{abstract}
		We prove a complexity dichotomy for Boolean Holant problems defined by
		arbitrary finite sets of algebraic complex-valued signatures.  The tractable
		cases are characterized by an explicit, decidable criterion.
	\end{abstract}
	
	\section*{Acknowledgments}
	Chenghua Liu is sincerely grateful to Zhengfeng Ji and Mingsheng Ying
	for their continued encouragement and generous support.
	Boning Meng and Juqiu Wang warmly thank Mingji Xia for his
	invaluable guidance and dedicated mentorship.
	The authors used OpenAI's ChatGPT in preparing this manuscript, including
	exploratory proof development, exact symbolic and finite-case checks,
	language editing, and \LaTeX{} preparation. All companion verification code
	was written by OpenAI Codex and subsequently reviewed by the authors.
	All mathematical claims, computer-assisted certificates, and the final
	presentation were reviewed and finalized by the authors, who take full
	responsibility for the content.
	
	\section*{Code Availability}
	The verification code for the finite calculations in
	Appendix~\ref{sec:exact-finite-certificates} is available in the
	\path{certificates/} directory of the
	\href{https://github.com/liuchliuch/Full-Holant-Dichotomy}%
	{\textcolor{repositorylink}{\GitHubIcon\textit{GitHub repository:
	Full-Holant-Dichotomy}}}. The complete suite is run from the repository
	root with \texttt{python3 certificates/run\_all\_exact.py}.
	The \path{README.md} in that directory gives requirements and reproduction
	commands; \path{theorem-to-script-map.md} identifies the statements checked
	by each program and distinguishes supplementary checks from calculations
	used in the proof.
	
	\setcounter{tocdepth}{2}
	\tableofcontents
	
	\section{Introduction}
	\label{sec:introduction}
	
	A Holant problem asks for the value of a graph tensor-network partition
	function: each vertex is assigned a local function, called a signature,
	of its incident Boolean edge variables.  One sums the product of these
	functions over all edge assignments.  The framework
	includes weighted graph homomorphisms, counting CSPs, Eulerian orientations,
	and vertex models.  A \emph{holographic transformation} is
	an invertible simultaneous change of basis on the local tensors and wires
	\cite{Valiant2008,CaiLuXia2009}.
	
	We prove a dichotomy for finite algebraic complex-valued Boolean Holant
	problems on general graphs.  Our proof treats unrestricted all-even
	signature sets, in which every positive-arity signature has even arity and
	no auxiliary binary signature is prescribed, and combines this analysis
	with known dichotomies.  We first review the graph-homomorphism,
	counting-CSP, and Holant classifications that precede this result.
	
	\subsection{From counting CSP to the full Holant dichotomy}
	\label{subsec:intro-history}
	
	Weighted graph homomorphisms
	(\(\#\mathrm{GH}\)) and counting constraint satisfaction problems
	(\(\#\mathrm{CSP}\)) provide earlier examples of counting dichotomies on
	general graphs.  Their classifications developed in
	three directions: from Boolean to arbitrary finite domains, from symmetric to
	arbitrary local functions, and from Boolean-valued through nonnegative,
	signed-real, and complex weights.  Cai and Chen's general-domain complex
	classification covers both problems \cite{CaiChen2017}.
	\Cref{tab:intro-gh-csp-history} records these classifications.  A dash
	indicates that no separate reference is listed for that case; the case is
	covered by a more general entry in the table.
	
	\begin{table}[!htbp]
		\centering
		\caption{Development of general-graph dichotomies for weighted graph
			homomorphisms and counting CSP.}
		\label{tab:intro-gh-csp-history}
		\small
		\setlength{\tabcolsep}{3.5pt}
		\renewcommand{\arraystretch}{1.16}
		\begin{tabularx}{\textwidth}{@{}l
				>{\raggedright\arraybackslash}X
				>{\raggedright\arraybackslash}X
				>{\raggedright\arraybackslash}X
				>{\raggedright\arraybackslash}X@{}}
			\toprule
			Weight range & Symmetric \(\#\mathrm{GH}\) & \(\#\mathrm{GH}\) &
			Boolean-domain \(\#\mathrm{CSP}\) &
			General-domain \(\#\mathrm{CSP}\) \\
			\midrule
			\(\{0,1\}\)
			& \cite{DyerGreenhill2000}
			& \cite{DyerGoldbergPaterson2007}
			& \cite{CreignouHermann1996}
			& \cite{Bulatov2013,DyerRicherby2013} \\
			\(\mathbb Q_{\ge0}\)
			& -- & -- & --
			& \cite{BulatovDyerGoldbergJalseniusJerrumRicherby2012} \\
			\(\mathbb R_{\ge0}\)
			& \cite{BulatovGrohe2005}
			& \cite{CaiChen2019}
			& \cite{DyerGoldbergJerrum2009}
			& \cite{CaiChenLu2016} \\
			\(\mathbb R\)
			& \cite{GoldbergGroheJerrumThurley2010}
			& --
			& \cite{BulatovDyerGoldbergJalseniusRicherby2009}
			& -- \\
			\(\mathbb C\)
			& \cite{CaiChenLu2013}
			& \cite{CaiChen2017}
			& \cite{CaiLuXia2014}
			& classification \cite{CaiChen2017};
			decidability \cite{LiuMeng2026CSPDecidability} \\
			\bottomrule
		\end{tabularx}
	\end{table}
	
	The last cell distinguishes classification from uniform recognition.  Cai and
	Chen's 2017 theorem classified every fixed algebraic-complex language, but its
	tractability side was expressed by three conditions over an unbounded generated
	family of partial sums.  In 2026, Liu and Meng gave a total exact
	algorithm deciding those conditions from the finite input language, proved
	that Block Orthogonality alone implies the other two, and also decided the
	corresponding degree-multiple counting-CSP criterion
	\cite{LiuMeng2026CSPDecidability}.  This makes the tractability criterion
	decidable from a finite, exactly encoded algebraic complex-valued language
	on any finite domain.
	
	For Holant on larger domains, classifications are available under further
	restrictions on the signatures.  A domain-three theorem treated one
	symmetric ternary signature with all unary signatures available
	\cite{CaiLuXia2013Domain3}; its journal version appeared in 2026
	\cite{CaiLuXia2026Domain3}.  Subsequent classifications covered special
	edge-colouring signatures and restricted domain-three and domain-four
	settings \cite{CaiGuoWilliams2016EdgeColoring,LiuFanCai2025}.  In 2025,
	Cai and Ihm proved a dichotomy for real-valued symmetric signatures on
	domain three with all unary signatures available, allowing an arbitrary
	finite signature set \cite{CaiIhm2025}.
	
	On the Boolean domain, the Holant framework was introduced in 2009
	\cite{CaiLuXia2009}.  Classifications first treated symmetric signatures or
	assumed access to specified unary signatures.  The latter assumptions were
	reduced from all unary signatures (\(\Holant^*\)), to four distinguished
	unaries (\(\Holant^+\)), to the two unary functions fixing a bit to zero or
	one (\(\Holant^c\)).  Later work treated signature sets containing any
	nonzero odd-arity signature (\(\Holant^{\mathrm{odd}}\)).  The symmetric
	and arbitrary-signature classifications are separated in
	\cref{tab:intro-boolean-holant-history}; the dash convention is the same as in
	\cref{tab:intro-gh-csp-history}.
	
	\begin{table}[!htbp]
		\centering
		\caption{Development of Boolean-domain Holant dichotomies on general
			graphs.}
		\label{tab:intro-boolean-holant-history}
		\scriptsize
		\setlength{\tabcolsep}{3pt}
		\renewcommand{\arraystretch}{1.18}
		\begin{tabularx}{\textwidth}{@{}l
				>{\raggedright\arraybackslash}X
				>{\raggedright\arraybackslash}X
				>{\raggedright\arraybackslash}X@{}}
			\multicolumn{4}{@{}l}{\textit{(a) Symmetric signature sets}}\\[0.2em]
			\toprule
			Weight range & \(\Holant^*\) & \(\Holant^c\) & Unrestricted \(\Holant\) \\
			\midrule
			\(\mathbb R_{\ge0}\) & -- & -- & -- \\
			\(\mathbb R\) & -- & \cite{CaiLuXia2009}
			& \cite{HuangLu2016} \\
			\(\mathbb C\) & \cite{CaiLuXia2009}
			& \cite{CaiHuangLu2012}
			& \cite{CaiGuoWilliams2016} \\
			\bottomrule
		\end{tabularx}
		
		\medskip
		\begin{tabularx}{\textwidth}{@{}l
				>{\raggedright\arraybackslash}X
				>{\raggedright\arraybackslash}X
				>{\raggedright\arraybackslash}X
				>{\raggedright\arraybackslash}X
				>{\raggedright\arraybackslash}X@{}}
			\multicolumn{6}{@{}l}{\textit{(b) Arbitrary signature sets}}\\[0.2em]
			\toprule
			Weight range & \(\Holant^*\) & \(\Holant^+\) & \(\Holant^c\) &
			\(\Holant^{\mathrm{odd}}\) & Unrestricted \(\Holant\) \\
			\midrule
			\(\mathbb R_{\ge0}\) & -- & -- & -- & --
			& \cite{LinWang2018} \\
			\(\mathbb R\) & -- & --
			& \cite{CaiLuXia2018RealHolantc}
			& \cite{CaiFuShao2020Entanglement}
			& \cite{ShaoCai2020} \\
			\(\mathbb C\)
			& \cite{CaiLuXia2011}
			& \cite{Backens2017}
			& \cite{Backens2021,Backens2025}
			& \cite{MengWangXiaZheng2025,GuanShaoShi2026}
			& This paper; with binary disequality: \cite{Liu2026DEQ2Holant} \\
			\bottomrule
		\end{tabularx}
	\end{table}
	
	Classifications without auxiliary unary signatures were obtained for
	symmetric complex-valued signatures \cite{CaiGuoWilliams2016}, arbitrary
	nonnegative signatures \cite{LinWang2018}, and arbitrary real-valued
	signatures \cite{ShaoCai2020}.  These results address restrictions on
	symmetry or weights that are absent from the present theorem.  Other
	dichotomies treat a single ternary signature on 3-regular graphs,
	the six-vertex and eight-vertex models, and weighted Eulerian orientations
	with arrow-reversal symmetry
	\cite{YangHuangFu2024,CaiFuXia2018,CaiFu2023,CaiFuShao2020ARS}.
	
	In 2025, Meng, Wang, and Xia proved an \(\FP^{\mathsf{NP}}\) versus
	\(\SharpP\)-hard dichotomy for weighted Eulerian orientations
	\cite{MengWangXia2025EO}.  Meng, Wang, Xia, and Zheng combined this result
	with tensor decomposition to obtain the same classification for complex
	Boolean Holant with a nonzero odd-arity signature
	\cite{MengWangXiaZheng2025}.  In 2026, Guan, Shao, and Shi gave a
	polynomial-time algorithm based on linear programming for the former
	\(\FP^{\mathsf{NP}}\) cases, yielding \(\FP\) versus \(\SharpP\)-hard
	dichotomies for both problems \cite{GuanShaoShi2026}.
	
	In 2026, Liu and Meng classified complex Boolean Holant with binary
	disequality \(X(a,b)=\one[a\ne b]\) available.  Their theorem allows an
	arbitrary finite algebraic complex-valued signature set and gives an
	explicit decidable tractability criterion \cite{Liu2026DEQ2Holant}.
	Their theorem has an equivalent formulation for \(K\)-Holant with binary
	equality \(I(a,b)=\one[a=b]\) available, obtained by replacing equality and
	disequality wires \cite[Theorem~1.2]{Liu2026DEQ2Holant}.  We use this
	formulation below; \(K\)-Holant is defined in \cref{subsec:holographic}.
	
	Xia also proposed a framework that uses finite groups associated with
	binary signatures to organize Boolean tensor-network classifications.
	It analyzes several branches under the assumption that all realizable
	nonzero quaternary signatures decompose \cite{Xia2026Framework}.
	
	Our proof handles the unrestricted all-even case by combining Liu and
	Meng's binary-equality dichotomy and their dichotomy for tensor-prime
	even-parity quaternary signatures with both endpoint values nonzero
	\cite{Liu2026DEQ2Holant} with a height decomposition based on Hamming
	imbalance.  Together with the odd-arity dichotomy, this gives the full
	classification in the unrestricted complex cell of
	\cref{tab:intro-boolean-holant-history}.
	
	\paragraph{Independent work.}
	We have been informed of at least the following related work, which was
	carried out independently of the present paper and brought to our attention
	before its public release.
	Yuan Huang informed us of work on complex-valued Holant problems on
	4-regular graphs, for which a manuscript is being prepared for posting
	on arXiv.
	Jiayi Zheng informed us of unpublished work on the complex-valued Holant
	dichotomy in the case where the realizable binary signatures form an odd
	dihedral group.
	Mingji Xia informed us of work concerning the twisted Klein-group case
	and the \(C_1\) case, under the condition that all quaternary signatures
	are decomposable.
	The results and proofs in the present paper were developed independently
	of these works.
	
	\subsection{Our result: the full dichotomy}
	\label{subsec:intro-main}
	
	The predicate \(\Tract(\cF)\), defined in \cref{sec:predicate}, is an
	equivalent formulation of the seven conditions in the odd-arity dichotomy
	of Meng, Wang, Xia, and Zheng \cite[Theorem~30]{MengWangXiaZheng2025}.
	Our result applies this criterion without an odd-arity assumption.
	Each condition concerns the signature set as a whole.  In each holographic
	alternative, one basis must work simultaneously for the edge and every
	member of \(\cF\).
	
	\begin{theorem}[Full dichotomy theorem]
		\label{thm:main}
		Let \(\cF\) be a finite set of algebraic complex-valued Boolean signatures,
		given in the exact algebraic representation fixed in
		\cref{subsec:complexity-conventions}.  The predicate \(\Tract(\cF)\)
		is decidable, and the following complexity classification holds:
		\[
		\begin{aligned}
			\Tract(\cF)
			&\quad\Longrightarrow\quad \Holant(\cF)\in\FP,\\
			\neg\Tract(\cF)
			&\quad\Longrightarrow\quad
			\Holant(\cF)\text{ is }\SharpP\text{-hard}
		\end{aligned}
		\]
		under polynomial-time Turing reductions.
	\end{theorem}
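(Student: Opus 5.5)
The plan is to split on the parity of arities in \(\cF\), handle each branch by a known or newly proved dichotomy, and check that every branch lands on the same predicate \(\Tract(\cF)\).

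\textbf{Odd branch.} Suppose \(\cF\) contains a nonzero signature of odd arity. Then we are exactly in the setting of the odd-arity dichotomy of Meng, Wang, Xia and Zheng \cite[Theorem~30]{MengWangXiaZheng2025}, sharpened to \(\FP\) versus \(\SharpP\)-hard by Guan, Shao and Shi \cite{GuanShaoShi2026}. Since \(\Tract(\cF)\) was set up in \cref{sec:predicate} to be equivalent to their seven conditions, this branch reduces to verifying that equivalence. Zero signatures and zero-arity constants can be discarded, since they either make the instance trivially \(0\) or contribute a global scalar.

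\textbf{All-even branch, tractable direction.} Here every positive-arity signature has even arity. For tractability I will show that each alternative in \(\Tract(\cF)\) (affine, product-type, matchgate, local-affine, vanishing, and so on, under a common holographic basis) gives a polynomial-time algorithm. These alternatives are the standard ones, so the algorithms carry over after a holographic transformation, and this direction is routine.

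\textbf{All-even branch, hardness direction.} Assume \(\neg\Tract(\cF)\). I proceed in three steps.
\begin{enumerate}
\item Realize binary signatures by gadgets, and study the group (modulo scalars) generated by the nonsingular binary signatures realized. This is the group-organizing idea attributed to Xia \cite{Xia2026Framework}.
\item If some realizable binary signature, after a holographic transformation that keeps the edge orthogonal, yields binary equality \(I\) in the \(K\)-Holant formulation, invoke the binary-equality dichotomy \cite[Theorem~1.2]{Liu2026DEQ2Holant}. I would then need to check that its tractability criterion, pulled back through the transformation, coincides with \(\Tract(\cF)\).
\item Otherwise, decompose each signature by height, meaning the Hamming imbalance \(\wt(x)-(\arity - \wt(x))\) on its support. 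If some nonzero component lies off the balanced (Eulerian-orientation) height, extract it by interpolation, using polynomially many copies and Vandermonde systems in a weighted gadget parameter. From that component I either produce a tensor-prime even-parity quaternary signature with both endpoint values nonzero, which is handled by the quaternary dichotomy in \cite{Liu2026DEQ2Holant}, or reduce to the Eulerian-orientation case of \cite{MengWangXia2025EO,GuanShaoShi2026}.
\end{enumerate}

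\textbf{Decidability.} This follows because each alternative in \(\Tract(\cF)\) asks for a basis from an explicit finite list, or for a solution of polynomial equations over \(\AlgNums\). Both can be checked exactly on the algebraic encoding, for example via Gröbner bases or by enumerating the finitely many normalized bases. The finite certificates in the appendix cover the residual small cases.

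\textbf{Main obstacle.} I expect the hard part to be the hardness direction in the all-even case when no nontrivial binary signature is realizable beyond scalar multiples of the edge and its orthogonal variants. Then neither \(I\) nor a useful quaternary can be obtained directly. My first attempt would be to use the height decomposition to isolate a non-decomposable quaternary signature by gadget composition with the Hamming-imbalance weighting. Failing that, I would argue that all realizable quaternaries decompose, which forces \(\cF\) into a product-type or affine class and so contradicts \(\neg\Tract(\cF)\).
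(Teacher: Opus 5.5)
Your odd-arity branch, your tractable side, and your use of the binary-equality dichotomy after normalizing a weighted equality to $I$ agree with the paper. Step~3 of your hardness direction fails, however, and that is where the paper's new work lies.

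\emph{The interpolation is not available.} The height decomposition by interpolation has no parameter to work with. For $D=\diag(s,s^{-1})$ one has $D^{\mathsf T}XD=X$. By charge conservation, multiplying every support word of every signature by $s^{-2\chi}$ therefore changes no partition function. The usual way to weight a single signature by charge is to attach realizable binaries with diagonal transfer $\diag(\alpha,\beta)$. A Vandermonde system with polynomially many nodes then needs $\alpha/\beta$ of infinite order, and in that case \cref{lem:binary-interpolation} already yields unaries. When no such binary exists (say every realizable binary is a multiple of $X$), you have specified no parameter at all.

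\emph{The two outcomes you then claim are not exhaustive.} You conclude with either a tensor-prime even quaternary with nonzero endpoints or a reduction to Eulerian orientations. Let $d$ be the gcd of the nonzero supported half-charges. If $d\ge3$, charge conservation makes every gadget of arity below $2d$ EO. So no weighted equality, no $I$, and no quaternary with nonzero endpoints can be realized by any gadget, yet the set is two-sided and is not an EO problem. An example is any set containing $c\RMcore$ (half-charges $-4,0,4$) all of whose members have half-charges in $4\mathbb Z$.

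The paper handles this in stages:
\begin{enumerate}
\item It builds a gcd-height mountain from tensor products and endpoint-preserving self-loops (\cref{lem:charge-bridge}). Tensor-factor extraction, which your plan never invokes, drives this and the factor-closure step; its common one-sided exception is why that case is split off first.
\item At height three it uses the loop and two-copy contractions together with their certificate.
\item At every height at least three it obtains hardness from an EO signature outside $\mathcal B_{\rm ARS}$ (\cref{thm:hc-relative-eo-terminal}).
\item For $d\ge5$ it uses port normalization and the Shao--Cai reconstruction.
\item At $d=4$ it reaches the Reed--Muller core. There $Y$ and then $I$ are adjoined only by Turing reductions through linear combinations of closures (\cref{thm:q8-normalized-bell-lift}), never by gadgets.
\end{enumerate}

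\emph{Two further gaps.} First, at height one every realizable binary may be nonsingular of finite transfer order while the normalized transfer group is odd cyclic or odd dihedral. Then no weighted equality is directly realizable, and your plan has no branch for this case. The paper needs the contraction analysis of \cref{thm:odd-cyclic,thm:odd-dihedral} to reach hardness, a strict update, or a common $\cP$-presentation in Walsh coordinates.

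Second, your fallback, that if all realizable quaternaries decompose then $\cF$ is product-type or affine, is unjustified. Decomposability of gadget outputs does not produce one basis for the whole set. In the Reed--Muller case every self-loop output is an $X$-matching product, yet tractability is decided by the other members through the binary-equality dichotomy.

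Minor points: matchgates are not among the alternatives of $\Tract$, since they concern planar graphs. The appendix certificates serve hardness arguments, not decidability, which is \cref{thm:predicate-algorithms}.
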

	
	The theorem allows arbitrary mixed arities, zero and tensor-decomposable signatures,
	and singular binary factors.  Recognition is effective for the fixed exact
	algebraic encoding, while evaluation and all oracle reductions are
	polynomial in the input-instance size.
	
	\subsection{Proof organization}
	\label{subsec:proof-sketch}
	
	In the standard \(K\)-basis, ordinary equality edges become disequality
	\(X\)-edges.  Each such edge contributes one zero and one one, so the
	\emph{half-charge}---half the difference between the numbers of ones and
	zeros---is conserved.  Cases supported only on one side of the central layer,
	or on a single layer, reduce to the known weighted Eulerian-orientation
	classification.  Before constructing the height parameter,
	\cref{sec:mountain-terminals} proves the binary, unary, EO, and equality
	reductions used in the later cases.  In the remaining
	two-sided, multilayer case, \cref{sec:factor-charge} shows that tensor-factor
	extraction and the charge argument either reach an already classified branch
	or produce a \emph{mountain}.  If \(d\) is the gcd of the nonzero
	half-charges supported by the current signature set, this is an arity-\(2d\)
	tensor with nonzero all-zero and all-one coefficients and all remaining
	support on the central weight-\(d\) layer.  This height \(d\) indexes the rest
	of the proof.
	
	The first two heights are treated separately.  At height one the mountain
	is binary.  The singular and infinite-order subcases fall into classifications
	already available.  In the remaining finite-order subcase, the symmetry split
	either exposes a nonsingular weighted equality, which can be normalized to
	\(I\) before applying the binary-equality dichotomy
	\cite[Theorem~1.2]{Liu2026DEQ2Holant}, stated here as
	\cref{thm:eq2-kholant-dichotomy}, or reduces to an odd cyclic or odd dihedral
	form.  Those two forms are resolved within the same height-one analysis,
	giving \cref{cor:height-one-complete}.
	
	At height two the mountain is an eight-vertex signature whose two endpoint
	values are nonzero.  If it does not factor nontrivially,
	\cref{thm:quaternary-kholant-dichotomy} applies; if it does factor, one obtains
	a nonsingular weighted equality and returns to the binary-equality case.
	Together the two branches give \cref{cor:eight-vertex-complete-closure}.
	
	Height three is settled by the loop and two-copy contractions in
	\cref{thm:height-three-complete}.  For all heights at least four,
	\cref{thm:height-at-least-four} uses one relative port normalization and
	a global phase recovery.  At heights at least five, the resulting
	matching-product structure gives a hard two-copy output directly.
	Height four is the exceptional eight-port case: a classification of the possible matchings
	leads to the Reed--Muller core and a Turing reduction adjoining binary
	equality.  A final normalization argument transports each common tractable
	representation or hardness reduction back to the original coordinates.
	
	The appendix records the exact finite certificates and their direct-use
	consequences for the odd-dihedral and high-charge arguments.
	
	\subsection{Related work}
	\label{subsec:related-work}
	
	Beyond dichotomies for exact counting on general graphs, related work
	studies restricted graph classes, stronger lower bounds under
	\(\#\mathrm{ETH}\), and algebraic criteria for equality of Holant values.
	
	Restricting the input graph to be planar can make additional counting
	problems tractable.  For Boolean counting CSP, Cai, Lu, and Xia classified
	real-valued symmetric constraints \cite{CaiLuXia2017PlanarCSP}.
	Guo and Williams extended the classification to symmetric constraints with
	complex weights \cite{GuoWilliams2020PlanarCSP}, and Cai and Fu removed the
	symmetry assumption \cite{CaiFu2022PlanarCSP}.  In this planar model, the
	incidence graph is embedded in the plane and the argument order follows
	the cyclic order around each constraint vertex.  Every case that is hard
	on general graphs but tractable in this planar setting admits a holographic
	reduction to matchgates.  These reductions evaluate the partition function
	through weighted perfect-matching sums computed by the FKT algorithm.
	
	Graph homomorphism problems are a subclass of counting CSP, so these
	Boolean-domain results also classify planar graph homomorphisms with
	complex symmetric weight matrices.  For larger domains, Cai and Maran
	proved a dichotomy for real symmetric \(3\times3\) matrices in 2023
	\cite{CaiMaran2023PlanarGH}.  Their 2024 preprint classified nonnegative
	real symmetric \(4\times4\) matrices of full rank
	\cite{CaiMaran2024PlanarGH}.  In a 2026 preprint, Cai, Maran, and Young
	obtained a dichotomy for nonnegative real symmetric matrices of arbitrary
	finite size whose diagonal entries are pairwise distinct
	\cite{CaiMaranYoung2026}.
	
	For planar Holant with symmetric complex-valued Boolean signatures,
	Cai, Fu, Guo, and Williams gave a classification that includes tractable
	cases beyond holographic reductions to the FKT algorithm
	\cite{CaiFuGuoWilliams2022PlanarHolant}.  In 2026, Cai, Fan, Shao, and Tang
	classified the eight-vertex model into three cases: tractable on general
	graphs, hard on general graphs but tractable on planar graphs, and hard
	even on planar graphs \cite{CaiFanShaoTang2026}.
	
	Beyond planar graphs, classifications are also known for other graph
	restrictions.
	Thilikos and Wiederrecht characterized the minor-closed graph classes on
	which perfect matchings can be counted in polynomial time
	\cite{ThilikosWiederrecht2024}.  A class is minor-closed if it is closed
	under vertex deletion, edge deletion, and edge contraction.
	Meng and Pan's ESA 2026 paper gives dichotomies for complex-weighted
	Boolean counting CSP when the incidence graph excludes a fixed clique as
	a minor, including each fixed variable-occurrence bound \(D\ge3\)
	\cite{MengPan2026Minor}.  The incidence graph records variables and
	constraints as its two types of vertices.  Their classifications show how
	the tractable cases depend on the excluded clique and the occurrence bound.
	
	Another direction strengthens the running-time lower bounds within the
	hard cases.  Under the counting exponential-time hypothesis
	(\(\#\mathrm{ETH}\)), which rules out counting satisfying assignments of
	3-CNF formulas in time subexponential in the number of variables,
	Brand, Dell, and Roth
	proved dichotomies for the Tutte polynomial and unweighted Boolean counting
	CSP \cite{BrandDellRoth2019}.  Liu extended the Boolean counting-CSP
	classification to algebraic complex weights, including bounded-occurrence
	versions \cite{Liu2023CSPExponential}.
	
	For bipartite Holant, Cai and Young proved that two pairs of finite
	signature sets give equal values on all corresponding labelled graphs
	if and only if their orbit closures under holographic transformations
	intersect \cite{CaiYoung2026}.
	
	\section{Preliminaries and the Tractable Boundary}
	\label{sec:preliminaries}
	
	\subsection{Algebraic notation}
	\label{subsec:foundational-algebra}
	
	We use standard algebraic notation.  All rings are commutative with identity.
	The fields \(\mathbb Q,\mathbb R,\mathbb C\) denote the rationals, reals,
	and complex numbers, and \(F^\times=F\setminus\{0\}\) for a field \(F\).
	We also use \(\mathbb F_3=\mathbb Z/3\mathbb Z\).
	For a finite set \(V\), \(F^V\) denotes the functions \(V\to F\).
	We write \([m]=\{1,\ldots,m\}\), with \([0]=\varnothing\), and
	\(\Mat_{m\times n}(F)\) for matrices over \(F\), abbreviated to
	\(\Mat_n(F)\) when \(m=n\).  Transposes are denoted by \(M^{\mathsf T}\);
	they do not include complex conjugation.  The group of invertible matrices
	is \(\GL_n(F)\), and \(I_n\) is its identity.
	
	\subsection{Complexity and exact-input conventions}
	\label{subsec:complexity-conventions}
	
	A \emph{counting problem} is a function from finite encoded instances to exact
	numbers.  The class \(\FP\) consists of functions computable exactly by a
	deterministic algorithm in time polynomial in the input length.  The class
	\(\SharpP\) consists of the functions that count accepting computation paths
	of a nondeterministic polynomial-time Turing machine.  A (possibly
	complex-valued) counting problem is \emph{\(\SharpP\)-hard} if every function
	in \(\SharpP\) has a polynomial-time Turing reduction to it.  We use
	\emph{tractable} to mean ``in \(\FP\)'' and \emph{hard} to mean
	``\(\SharpP\)-hard'' under this convention.
	
	An \emph{oracle algorithm} for a problem \(B\) may request exact values of
	\(B\) on instances that it constructs.  For counting problems \(A,B\),
	\(A\leT B\) means that such an algorithm computes \(A\), makes only
	polynomially many calls to \(B\), constructs instances of polynomial size,
	and otherwise uses polynomial time.  We write \(A\equivT B\) when both
	directions hold.  Known nonzero algebraic rescalings and exact algebraic
	postprocessing are allowed.
	The notation \(\FP^{\mathsf{NP}}\) permits an \(\mathsf{NP}\) decision
	oracle.  Recognition of a finite exact signature list is required to halt
	with the correct answer; it is not asserted to run in polynomial time in
	that list.  In every tractable alternative, any shared basis, sign or class
	must work simultaneously for the entire signature set.
	Throughout,
	\[
	\AlgNums:=\{\alpha\in\mathbb C:p(\alpha)=0
	\text{ for some }0\ne p\in\mathbb Q[t]\}
	\qquad\text{and}\qquad
	\Ftwo:=\mathbb Z/2\mathbb Z
	\]
	are respectively the algebraic complex numbers and the two-element field.
	We fix an \emph{exact algebraic encoding} of \(\alpha\in\AlgNums\) by its
	monic irreducible polynomial over \(\mathbb Q\) and a rectangle in
	\(\mathbb C\) whose vertices have rational real and imaginary parts,
	containing \(\alpha\) and no other complex root of that polynomial;
	the rectangle is called isolating.  A
	\emph{number field} is a finite extension \(L/\mathbb Q\), and
	\([L:\mathbb Q]\) is its dimension as a \(\mathbb Q\)-vector space.
	Equality, zero tests, conjugation, and field arithmetic are exact.  The signature set is fixed when a Holant problem is evaluated;
	running time is measured in the encoded instance size.  
	
	For later comparison, a weighted Boolean \emph{counting constraint
		satisfaction problem} (counting CSP) has Boolean variables, local constraint
	functions, and partition function equal to the sum, over all variable
	assignments, of the product of all constraint values.  The notation
	\(\#\mathrm{CSP}^q\) used below imposes the additional degree-multiple
	condition specified in \cref{subsec:holant-definitions}.
	
	\subsection{Tensor and group conventions}
	\label{subsec:algebra-group-terminology}
	
	For a nonzero vector, tensor or matrix \(v\) over \(F\), write
	\([v]=F^\times v\) for its projective class.  In particular,
	\(\PGL_n(F)=\GL_n(F)/(F^\times I_n)\).
	A projective line consists of the classes in a two-dimensional linear span.
	We use the standard terms \emph{Segre variety} for the projective classes of
	nonzero simple tensors, \emph{ruling} for a linear family with all but one
	tensor factor fixed, and \emph{secant} for a line through two distinct points.
	A pencil is a family \(q_s=A-sB\); its parameter range is specified at each use.
	
	Tensor products use the convention \(V^{\otimes0}=F\), with empty product
	\(1\).  A nonzero tensor has rank one when it is a product of one vector
	on each of its specified factors.  For two factors this is matrix rank one.
	We say \(a\) divides \(f\), with cofactor \(b\), when
	\(f=a\otimes b\) up to port permutation and nonzero scalar redistribution.
	Every tensor is taken to divide the zero tensor.  A proper tensor factor
	has positive arity strictly smaller than its parent.
	
	The notation \(\Span S\) denotes linear span.  An affine subspace is a
	translate \(a+L\) of a linear subspace; its direction is \(L\), and an
	affine plane has dimension two.  For nonempty \(S\),
	\(\operatorname{aff}(S)=a+\Span\{x-a:x\in S\}\), where \(a\in S\).
	For \(x,y\in\Ftwo^V\), the dot product \(x\cdot y=\sum_{v\in V}x_vy_v\)
	is taken in \(\Ftwo\).  The notation \(x_{\widehat{pq}}\) deletes
	coordinates \(p,q\) and keeps the inherited order.
	
	We use \(H\le G\) for a subgroup, \(H\triangleleft G\) for a normal
	subgroup, and \([G:H]\) for its index.  An involution has order two.
	Our finite-group notation is
	\[
	C_m,\qquad
	D_{2m}=\langle r,s\mid r^m=s^2=1,\ srs=r^{-1}\rangle,\qquad
	V_4=C_2\times C_2,
	\]
	where \(C_m\) is cyclic of order \(m\) and \(D_{2m}\) has order \(2m\).
	The symmetric and alternating groups are \(S_m\) and \(A_m\).
	For \(S\subseteq G\), the centralizer is
	\(C_G(S)=\{g\in G:gs=sg\text{ for every }s\in S\}\), and the center is
	\(Z(G)=C_G(G)\).  If \(\zeta\) is a primitive \(m\)-th root of unity,
	then \([\mathbb Q(\zeta):\mathbb Q]=\varphi(m)\), where \(\varphi\) is
	Euler's totient.
	
	Angle brackets denote generated subgroup in a group context, linear span
	in a vector-space context, and the tensor closure \(\angles{\cC}\) for a
	signature class, as defined in \cref{subsec:holant-definitions}.
	The Frobenius pairing is defined in \cref{subsec:reduction-conventions}.
	
	\subsection{Graphs, matchings, and Walsh coordinates}
	\label{subsec:graph-code-walsh-terminology}
	
	Graphs may have parallel edges and loops; each loop contributes two to
	\(\deg_G(v)\).  A perfect matching of a finite set \(V\) is a partition
	into pairs.  The mate of a vertex is the other endpoint of its matching
	edge.  An incidence graph joins two families according to the stated
	incidence relation, and an alternating path alternates two specified edge
	types.  We write \(K_n\) for the complete graph and \(\binom Vk\) for the
	\(k\)-element subsets of \(V\).
	
	A binary linear code is a subspace of \(\Ftwo^V\), and an affine code is
	a translate of one.  An information set consists of coordinates on which
	restriction is a bijection onto the corresponding binary coordinate space.
	For a represented binary matroid, independence means linear independence
	of the representing vectors; its basis-exchange graph joins bases differing
	by one exchanged element.
	
	For \(x\in\Ftwo^V\), put \(\wt(x)=|\{v:x_v=1\}|\).
	A Hamming layer fixes \(\wt(x)\); parity means \(\wt(x)\bmod2\).
	The operations \(\oplus\) and \(\bigoplus\) denote coordinatewise addition
	in \(\Ftwo\), while \(\bar x\) is the bitwise complement.
	A portwise bit flip by \(b\in\Ftwo^V\) sends \(f\) to
	\(f^{\oplus b}(x)=f(x\oplus b)\).  Bit-flip/port-permutation orbits use
	these flips and permutations of argument positions.  Binary words are
	ordered lexicographically with \(0<1\).
	
	For an integer \(r\ge0\) and \(f:\Ftwo^r\to\AlgNums\), viewed as the lexicographically ordered column
	array \((f(x))_{x\in\Ftwo^r}\),
	its support is \(\supp(f)=\{x:f(x)\ne0\}\).  We fix
	the unnormalized Walsh matrix and the normalized Walsh coordinates
	\begin{equation}
		P=\begin{pmatrix}1&1\\1&-1\end{pmatrix},
		\qquad
		\widetilde f=(P^{-1})^{\otimes r}f.
		\label{eq:Walsh-coordinates}
	\end{equation}
	Thus inversion is \(f=P^{\otimes r}\widetilde f\).  In coordinates,
	\[
	f(x)=\sum_{y\in\Ftwo^r}(-1)^{x\cdot y}\widetilde f(y).
	\]
	Consequently, if \(\widetilde f\) has even-parity support then
	\(f(\bar x)=f(x)\), and if it has odd-parity support then
	\(f(\bar x)=-f(x)\).  An overline on a
	complex scalar means complex conjugation.  At ambient arity \(r\), we write \(\mathbf1_{\rm even}\) and
	\(\mathbf1_{\rm odd}\) for the indicator tensors of even and odd parity;
	the arity is inferred from context.
	References below to \emph{Fourier inversion} mean precisely this inverse
	Walsh formula.
	
	Given a finite set \(V\), a perfect matching \(M\) of \(V\), and
	\(\epsilon\in\Ftwo^M\), the corresponding \emph{matching code} is
	\[
	C(M,\epsilon)=
	\{x\in\Ftwo^V:x_u\oplus x_v=\epsilon_{\{u,v\}}
	\text{ for every }\{u,v\}\in M\}.
	\]
	After choosing an orientation of every edge of \(M\), write \(\vec M\) for
	the resulting set of ordered pairs.  A nonzero \emph{matching product} on
	\(V\) is a function \(f:\Ftwo^V\to\AlgNums\) of the form
	\[
	f(x)=\lambda\prod_{(u,v)\in\vec M}b_{uv}(x_u,x_v),
	\qquad
	\lambda\in\AlgNums^\times,
	\quad b_{uv}:\Ftwo^2\to\AlgNums.
	\]
	For the all-one choice of \(\epsilon\), the set \(C(M,\epsilon)\) is
	\[
	\{x\in\Ftwo^V:x_u+x_v=1\text{ for every }\{u,v\}\in M\},
	\]
	where this sum is over the integers.  It has one independent binary choice
	per matching edge.  Any coefficient function on these choices is only a
	coordinate representation; support of this form does not by itself imply
	a matching-product factorization.
	
	\subsection{Gadgets and allowed operations}
	\label{subsec:gadget-proof-vocabulary}
	
	A gadget is a finite tensor network over the allowed signatures, with
	labelled external ports.  Its output is the tensor obtained by summing over
	all internal indices.  The native edge is equality in ordinary coordinates
	and disequality in \(K\)-coordinates.  A \emph{direct} construction uses
	one fixed finite gadget.  Connecting a binary signature at a port, joining
	two copies, and contracting two ports are specified by their wiring or
	matrix formulas when used.
	
	A slice fixes some arguments of a tensor.  Such a restriction, a linear
	combination or a Walsh-coordinate calculation does not alone give a gadget.
	Pinning requires an available point-mass unary.  Interpolation instead
	gives an oracle reduction by solving a linear system from exact evaluations.
	The convention for adjoining signatures under these reductions is stated
	in \cref{subsec:reduction-conventions}; it does not identify general
	Turing reductions with direct gadgets.
	
	A signature set is \emph{preprocessed} after removing zero and nullary
	signatures as in \cref{subsec:zero-nullary}.  Tensor-factor extraction and
	global changes of basis are used only under the hypotheses of the results
	stated below.
	
	\subsection{Signatures, tensors, and Holant instances}
	\label{subsec:holant-definitions}
	
	Recall that \(\AlgNums\) is the field of algebraic complex numbers and
	\(\Ftwo\) is the binary field.  For an integer \(r\ge0\), an arity-\(r\)
	Boolean signature is a
	function \(f:\{0,1\}^r\to\AlgNums\), identified with its lexicographically
	ordered column tensor in \((\AlgNums^2)^{\otimes r}\).  Its \emph{ports} are
	its labelled argument positions, and their order is part of the data.  We
	write \(\arity(f)=r\) and \(\supp(f)=\{x:f(x)\ne0\}\), and use
	\(\one[\mathsf P]\) for the indicator of a proposition \(\mathsf P\).
	A signature set is \emph{all-even} when each positive-arity member has even
	arity.  The terms nullary, unary, binary, ternary, quaternary, six-ary, and
	eight-ary mean arity \(0,1,2,3,4,6,8\), respectively; an \(r\)-port tensor
	has arity \(r\).
	For an ordered binary signature \(B\), we identify \(B\) with the matrix
	\((B(a,b))_{a,b\in\Ftwo}\), with port \(1\) indexing rows and port \(2\)
	indexing columns.
	All binary-vector arithmetic is over \(\Ftwo\), with \(\oplus\) used when
	bitwise addition should be explicit.  The symbols \(\zero^r\) and
	\(\one^r\) denote the all-zero and all-one words, respectively; at \(r=0\),
	both denote the unique empty word.  More generally,
	\(\zero^V,\one^V\in\Ftwo^V\) are the constant words on a finite index set
	\(V\).  The values \(f(\zero^r)\) and \(f(\one^r)\) are the two
	\emph{endpoint coefficients}.  For \(x\in\Ftwo^V\) and \(A\subseteq V\),
	\(x_{\widehat A}\) denotes the restriction of \(x\) to \(V\setminus A\), in
	the inherited order when \(V\) is ordered.
	For scalars \(a,b\), we use
	\[
	\diag(a,b)=\begin{pmatrix}a&0\\0&b\end{pmatrix}.
	\]
	
	For an integer \(r\ge0\), let \(S_r\) denote the symmetric group on
	\([r]\).  For an arity-\(r\) signature \(f\) and \(\tau\in S_r\),
	\((\tau f)(x)=f(x_{\tau^{-1}(1)},\ldots,x_{\tau^{-1}(r)})\).  For signatures
	\(f:\Ftwo^r\to\AlgNums\) and \(g:\Ftwo^s\to\AlgNums\) on disjoint ordered
	ports, and \(x\in\Ftwo^r,y\in\Ftwo^s\), define
	\((f\otimes g)(x,y)=f(x)g(y)\).
	The signature \(f\) is \emph{symmetric} when \(\tau f=f\) for every
	\(\tau\in S_r\), equivalently when \(f(x)\) depends only on \(\wt(x)\).
	For \(T\subseteq[r]\), \(\one_T\in\{0,1\}^r\) denotes the incidence
	vector of \(T\); in particular \(\one_\varnothing=\zero^r\).  We write
	\(e_j=\one_{\{j\}}\) for the \(j\)-th standard basis vector whenever the
	ambient arity is clear.
	A nonzero positive-arity signature is \emph{tensor-prime} if, up to port permutation and nonzero scalar, it cannot
	be written as \(a\otimes b\) with both \(a\) and \(b\) of positive arity.
	A nonzero positive-arity signature that is not tensor-prime is
	\emph{tensor-decomposable}.
	Prime factorization is unique up to these operations and factor order
	\cite[full version, Lemma~2.13]{CaiFuShao2020ARS}.  For any class \(\cC\) of
	signatures, its tensor closure \(\angles{\cC}\) consists of tensor products
	of members of \(\cC\), up to port permutation and nonzero scalar.  By
	convention, it also contains zero and the arity-zero empty product.
	
	For an integer \(r\ge0\) and \(a\in\{0,1\}^r\), define the
	\emph{point-mass signature}
	\(\delta_a:\{0,1\}^r\to\AlgNums\) at \(a\) by
	\[
	\delta_a(x)=\one[x=a].
	\]
	For an integer \(r\ge1\) and \(a\in\{0,1\}^r\), a
	\emph{weighted generalized equality} is a signature
	\begin{equation*}
		c_0\delta_a+c_1\delta_{\bar a},
		\qquad c_0,c_1\in\AlgNums^\times,
	\end{equation*}
	where \(\bar a\) is the bitwise complement.  It is \emph{pure} when
	\(a=\zero^r\).  Portwise bit flips make it pure, but are used only when
	actually available or inside a stated global transform.
	We abbreviate generalized equality by \(\GE\).  For
	\(p,q\in\AlgNums\), in arity two,
	\(p\delta_{00}+q\delta_{11}=\diag(p,q)\) is a \emph{weighted binary
		equality}, and \(p\delta_{01}+q\delta_{10}\) is a \emph{weighted binary
		disequality}; either is nonsingular when \(pq\ne0\).
	
	For a signature set \(\cF\), a signature grid over \(\cF\) is a finite
	multigraph \(G=(V,E)\), with loops allowed,
	together with a signature \(f_v\in\cF\) satisfying
	\(\arity(f_v)=\deg_G(v)\) and an ordering, of that length, of the incident
	edge-ends at every vertex.  An edge assignment
	\(\sigma:E\to\{0,1\}\) therefore induces an ordered word \(x_v^\sigma\)
	at \(v\).  For the resulting grid \(\Omega\),
	\begin{equation*}
		\Holant_\Omega
		=
		\sum_{\sigma:E\to\{0,1\}}
		\prod_{v\in V}
		f_v(x_v^\sigma).
	\end{equation*}
	The problem \(\Holant(\cF)\) asks for this value on an input grid over
	\(\cF\).
	Throughout, comma-separated signature or signature-set arguments mean adjoining
	their union; for example, \(\Holant(\cF,a,b)\) abbreviates
	\(\Holant(\cF\cup\{a,b\})\).
	
	We use the unary pinnings
	\[
	\DeltaZero=[1,0]^{\mathsf T},
	\qquad
	\DeltaOne=[0,1]^{\mathsf T},
	\]
	and the binary equality and disequality tensors
	\[
	I=I_2=\Eq_2=
	\begin{pmatrix}1&0\\0&1\end{pmatrix},
	\qquad
	X=\Neq_2=
	\begin{pmatrix}0&1\\1&0\end{pmatrix}.
	\]
	We also fix the sign and skew matrices
	\[
	Z=\begin{pmatrix}1&0\\0&-1\end{pmatrix},
	\qquad
	Y=ZX=\begin{pmatrix}0&1\\-1&0\end{pmatrix}.
	\]
	More generally, for \(r\ge1\), let
	\[
	\Eq_r(x_1,\ldots,x_r)=\one[x_1=\cdots=x_r].
	\]
	For signature sets \(\cE\) and \(\cF\), the bipartite problem
	\(\Holant(\cE\mid\cF)\) is defined on a bipartite graph
	\(H=(U,V,E)\): vertices in the left part \(U\) are assigned signatures
	from \(\cE\), and vertices in the right part \(V\) are assigned signatures
	from \(\cF\).
	Subdividing every ordinary edge by an \(\Eq_2\)-vertex gives
	\[
	\Holant(\cF)=\Holant(\Eq_2\mid\cF).
	\]
	
	For an integer \(q\ge1\), the counting-CSP problem
	\(\#\mathrm{CSP}^q(\cF)\) restricts every variable degree to be divisible
	by \(q\); equivalently, its variable side uses \(\Eq_{kq}\), \(k\ge1\).
	We use the convention that the variable list contains only variables that
	occur in at least one constraint.  If an input encoding explicitly lists
	\(z\) isolated Boolean variables, they may first be deleted and the final
	partition value multiplied by \(2^z\).
	The notation \(\#\mathrm{CSP}^q(\cF,\cH)\) means
	\(\#\mathrm{CSP}^q(\cF\cup\cH)\).
	
	\subsection{Holographic transformations and the
		\texorpdfstring{\(K\)}{K}-basis}
	\label{subsec:holographic}
	
	We use the polynomial-time Turing-reduction and exact-postprocessing
	conventions of \cref{subsec:complexity-conventions}.
	
	For tensors or matrices \(A\) and \(B\), write
	\[
	A\doteq B
	\]
	when
	\[
	A=\lambda B
	\qquad\text{for some }\lambda\in\AlgNums^\times.
	\]
	In every gadget or reduction identity, this scalar is known, tracked, and
	restored by exact algebraic postprocessing.
	
	For \(M\in\GL_2(\AlgNums)\) and an arity-\(r\) right-side signature \(f\),
	write
	\[
	Mf=M^{\otimes r}f,
	\qquad
	M\cF=\{Mf:f\in\cF\}.
	\]
	For an arity-\(r\) left-side signature \(g\), juxtaposition denotes the dual
	action \(gM=(M^{\mathsf T})^{\otimes r}g\).  In particular,
	\((\Eq_2)M\) has matrix \(M^{\mathsf T}M\), and the holographic identity is
	\begin{equation*}
		\Holant(\Eq_2\mid\cF)
		=
		\Holant((\Eq_2)M\mid M^{-1}\cF),
		\qquad
		(\Eq_2)M=M^{\mathsf T}M.
	\end{equation*}
	This exact identity makes the transformed edge a genuine weighted binary,
	not a componentwise scalar.
	
	Write \(Z_X(\Omega)\) for the partition function of a grid whose internal
	edges carry \(X\).  The general orthogonal similitude group of \(X\) is
	\begin{equation*}
		GO(X)
		=
		\{D\in\GL_2(\AlgNums):D^{\mathsf T}XD=\lambda X
		\text{ for some }\lambda\in\AlgNums^\times\}
	\end{equation*}
	for legal global transformations preserving the native disequality edge
	\(X\) projectively.  If \(E(\Omega)\) is the internal edge set,
	\(D\Omega\) applies \(D\) at every vertex, and
	\(D^{\mathsf T}XD=\lambda X\), then
	\begin{equation}
		Z_X(D\Omega)=\lambda^{|E(\Omega)|}Z_X(\Omega).
		\label{eq:GOX-global-scalar}
	\end{equation}
	Thus the two problems differ only by the known edge-count scalar.  The
	matrix acts globally and is not thereby a local gadget.  For \(m\ge1\), let
	\[
	\mu_m=\{z\in\AlgNums:z^m=1\}.
	\]
	We use the projective linear group \(\PGL_2(\AlgNums)\) defined in
	\cref{subsec:algebra-group-terminology}.
	Thus, for a nonsingular \(2\)-by-\(2\) matrix \(T\), the general projective
	notation \([T]\) from \cref{subsec:algebra-group-terminology} identifies
	matrices differing by a nonzero algebraic scalar.  The
	\emph{projective order} of \([T]\) is the least \(m\ge1\) for which
	\(T^m=\lambda I\) for some \(\lambda\in\AlgNums^\times\), and is infinite
	if no such \(m\) exists.
	\begin{proposition}[Finite projective binary groups]
		\label{prop:finite-pgl2-classification}
		Every finite subgroup of \(\PGL_2(\mathbb C)\) is conjugate to exactly one
		of the following types:
		\[
		C_m,\qquad D_{2m},\qquad A_4,\qquad S_4,\qquad A_5,
		\]
		where \(m\ge1\) for \(C_m\), \(m\ge2\) for \(D_{2m}\),
		\(|C_m|=m\), \(|D_{2m}|=2m\), and
		\(|A_4|=12\), \(|S_4|=24\), \(|A_5|=60\).
	\end{proposition}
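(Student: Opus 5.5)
The approach is the classical Klein classification. Finite subgroups of \(\PGL_2(\mathbb C)\) become finite rotation groups of the sphere, and those are counted by the orbit formula.

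\emph{Step 1: compactness.} Let \(G\le\PGL_2(\mathbb C)\) be finite, and let \(\widetilde G\) be its full preimage in \(\mathrm{SL}_2(\mathbb C)\). Every class in \(\PGL_2\) has exactly two representatives of determinant one, differing by sign, so \(\widetilde G\) is finite of order \(2|G|\). Averaging a Hermitian form over \(\widetilde G\) gives an invariant positive-definite Hermitian form, so after conjugation \(\widetilde G\le\mathrm{SU}(2)\). Hence \(G\) is conjugate into \(\mathrm{PSU}(2)\cong\mathrm{SO}(3)\), acting by rotations on the Riemann sphere \(\mathbb P^1(\mathbb C)\).

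\emph{Step 2: the orbit count.} Each nontrivial element of \(G\) fixes exactly two points of \(\mathbb P^1\), namely its eigenlines; these are distinct, since a nontrivial element of finite order is diagonalizable with distinct eigenvalues up to scalar. Count pairs \((g,p)\) with \(g\ne1\) and \(gp=p\). Grouping these by the orbits \(O_1,\dots,O_k\) of points with nontrivial stabilizer, of stabilizer orders \(n_1,\dots,n_k\ge2\), gives
\[
2(|G|-1)=\sum_{i=1}^k\frac{|G|}{n_i}(n_i-1),
\qquad\text{that is,}\qquad
2-\frac2{|G|}=\sum_{i=1}^k\Bigl(1-\frac1{n_i}\Bigr).
\]
For \(|G|>1\), the left side lies in \([1,2)\), which forces \(k\in\{2,3\}\). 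For \(k=2\) we get \(n_1=n_2=|G|\). For \(k=3\), the triple \((n_1,n_2,n_3)\) is one of \((2,2,m)\), \((2,3,3)\), \((2,3,4)\) or \((2,3,5)\), with \(|G|=2m,12,24,60\) respectively.

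\emph{Step 3: identification and conjugacy.} This is the main work.

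\begin{itemize}
\item If \(k=2\), the whole group fixes one point and its partner; conjugating these to \(0,\infty\) makes \(G\) consist of diagonal classes of a finite group of roots of unity. So \(G\) is cyclic, \(C_m\), and unique up to conjugacy as \(\langle[\diag(\zeta_m,1)]\rangle\).
\item If the data is \((2,2,m)\), the orbit of size two gives an index-two cyclic subgroup, and the order-two elements invert it, so \(G\cong D_{2m}\). Conjugating as in the cyclic case and then normalizing the involution to \([X]\) shows uniqueness.
\item For \((2,3,3)\), \((2,3,4)\) and \((2,3,5)\), the orbits have the vertex, edge and face counts of the tetrahedron, octahedron and icosahedron. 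The permutation action on a suitable small orbit (the 4 vertices, the 4 diagonal pairs, or the 5 inscribed tetrahedra) identifies \(G\) with \(A_4\), \(S_4\) and \(A_5\).
\end{itemize}

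Conjugacy within a type follows because two such configurations of points on the sphere are related by a rotation once the groups are unitary. Two unitary finite groups that are conjugate in \(\PGL_2(\mathbb C)\) are then conjugate by a unitary element, so uniqueness up to \(\mathrm{SO}(3)\)-conjugacy transfers.

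\emph{Exactly one type.} The listed groups are pairwise nonisomorphic. The only coincidences of order are \(C_{2m}\) against \(D_{2m}\), with \(A_4,S_4,A_5\) against cyclic and dihedral groups of the same order. These are distinguished by abelianness, by the presence of an element of order \(|G|\), and, for \(D_{12}\) versus \(A_4\) and \(D_{24}\) versus \(S_4\), by the existence of an element of order \(6\) or \(12\). The degenerate case \(D_4\cong V_4\) with \(m=2\) is listed only as dihedral and is not cyclic, so no group appears twice.

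In practice I would likely cite this as Klein's theorem rather than reprove it.
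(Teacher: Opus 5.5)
Your outline is correct. It differs from the paper only in that the paper gives no argument: it simply cites \cite[Proposition~2.1]{Liu2026DEQ2Holant}, that is, Klein's classification. You instead reprove it via the unitary trick, the fixed-point count \(2-2/|G|=\sum_i(1-1/n_i)\), and an orbit-by-orbit identification.

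Your Steps~1--2 and the cyclic and dihedral cases are essentially complete. In the polyhedral cases only \(A_4\) comes quickly. The action on the four-point orbit is faithful, because a nontrivial element fixes exactly two points, so a group of order \(12\) is an index-two subgroup of \(S_4\), hence \(A_4\). The \(S_4\) and \(A_5\) identifications, and the conjugacy within each polyhedral type, rest on the regular-polyhedron picture, which you say you would cite anyway.

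Two small corrections. First, your claim that \(\PGL_2(\mathbb C)\)-conjugate unitary groups are unitarily conjugate is true (by polar decomposition), but it runs in the direction you do not need. What you need is that \(\mathrm{SO}(3)\)-conjugacy implies \(\PGL_2(\mathbb C)\)-conjugacy, and that follows from the inclusion. Second, \(D_{60}\) versus \(A_5\) is missing from your list of order coincidences; they are distinguished by an element of order \(30\).

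Your route buys self-containment. The paper's citation loses nothing, because the proposition is used only through the abstract isomorphism type of \(H=\angles{G,[Z]}\) together with the facts in \cref{lem:finite-pgl2-facts}. The paper explicitly declines to apply any conjugating matrix from the classification, since such a matrix need not lie in \(GO(X)\). So the in-type conjugacy, the least complete part of your sketch, is never used downstream.
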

	
	\begin{proof}
		This is \cite[Proposition~2.1]{Liu2026DEQ2Holant}.
	\end{proof}
	
	\begin{lemma}[Finite projective-group facts]
		\label{lem:finite-pgl2-facts}
		The following properties will be used below.
		\begin{enumerate}
			\item In
			\[
			D_{2m}=\langle r,s\mid r^m=s^2=1,\ srs=r^{-1}\rangle,
			\]
			the rotations \(\angles r\cong C_m\) form a normal index-two
			subgroup, and every element outside it is a reflection and an
			involution.  If \(m\) is odd, the rotation group is the unique
			index-two subgroup.  If \(m\) is even, the only nonidentity
			rotational involution is the central element \(r^{m/2}\).
			\item The group \(A_4\) has no index-two subgroup, and the centralizer
			of an involution is \(V_4\).
			\item The unique index-two subgroup of \(S_4\) is \(A_4\).  Every
			involution is a transposition or a double transposition; precisely
			the double transpositions belong to \(A_4\).
			\item The group \(A_5\) is simple and has no index-two subgroup.  The
			centralizer of each involution is \(V_4\).
			\item The projective centralizer of the distinguished involution is
			\[
			C_{\PGL_2(\mathbb C)}([Z])
			=
			\left\{
			\left[\begin{pmatrix}a&0\\0&b\end{pmatrix}\right]:a,b\in\mathbb C,\ ab\ne0
			\right\}
			\sqcup
			\left\{
			\left[\begin{pmatrix}0&a\\b&0\end{pmatrix}\right]:a,b\in\mathbb C,\ ab\ne0
			\right\}.
			\]
			Thus \([Z]\) is the only nonidentity diagonal involution, while
			every nonsingular anti-diagonal class is an involution.
		\end{enumerate}
	\end{lemma}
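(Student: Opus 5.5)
The five items are standard finite-group facts together with one matrix computation, so I will verify each one directly from a presentation or a permutation model.

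For item 1, I work in the presentation \(D_{2m}=\langle r,s\mid r^m=s^2=1,\ srs=r^{-1}\rangle\). Every element is \(r^k\) or \(r^ks\), and \(\langle r\rangle\) has index two, hence is normal. For a non-rotation, \((r^ks)^2=r^k(sr^ks)=r^kr^{-k}=1\), so every reflection is an involution. For uniqueness when \(m\) is odd, I use that an index-two subgroup contains all squares. The squares include \(r^2\), and \(r^2\) generates \(\langle r\rangle\) when \(m\) is odd, so the subgroup must equal \(\langle r\rangle\). For the involution claim, a rotation \(r^k\) with \(r^{2k}=1\) and \(r^k\ne1\) forces \(k=m/2\). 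This element is central, since \(sr^{m/2}s=r^{-m/2}=r^{m/2}\).

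For items 2--4, I use permutation models. For \(A_4\), an index-two subgroup would contain all squares. Every 3-cycle is the square of a 3-cycle, and the eight 3-cycles already exceed six elements, so no such subgroup exists. Its involutions are the three double transpositions, which together with the identity form \(V_4\). This \(V_4\) is abelian, so the centralizer of an involution contains \(V_4\). The centralizer excludes the 3-cycles, because a 3-cycle conjugates the double transpositions cyclically; hence it equals \(V_4\). For \(S_4\), an index-two subgroup contains all squares, in particular all 3-cycles, and these generate \(A_4\); hence \(A_4\) is the unique index-two subgroup. The involutions have cycle type \(2\) or \(2^2\), and only the even ones lie in \(A_4\). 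For \(A_5\), simplicity is classical, and it rules out index-two subgroups because such a subgroup would be normal. For an involution such as \((12)(34)\), the centralizer in \(S_5\) has order \(120/15=8\). Exactly half of that centralizer is even, because it contains the odd element \((12)\). So the centralizer in \(A_5\) has order \(4\), and it contains the Klein group \(\{e,(12)(34),(13)(24),(14)(23)\}\), hence equals \(V_4\).

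Item 5 is the only part specific to \(\PGL_2\), and I expect it to be the main point requiring care, because commuting must be taken up to scalar. Write \(T=\begin{pmatrix}a&b\\c&d\end{pmatrix}\) and impose \(TZ=\lambda ZT\). Squaring gives \(Z^2=I\), which forces \(\lambda^2=1\).
\begin{itemize}
\item If \(\lambda=1\), comparing entries gives \(b=c=0\), so \(T\) is diagonal.
\item If \(\lambda=-1\), comparing entries gives \(a=d=0\), so \(T\) is anti-diagonal.
\end{itemize}
Conversely, both families commute with \(Z\) projectively, and invertibility forces \(ab\ne0\) in each case. The union is disjoint because no nonsingular matrix is both diagonal and anti-diagonal.

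For the final sentence of item 5, take a nonsingular diagonal matrix \(\diag(a,b)\). Its square \(\diag(a^2,b^2)\) is scalar iff \(a=\pm b\). The case \(a=b\) is the identity class and \(a=-b\) is \([Z]\), so \([Z]\) is the only nonidentity diagonal involution. For an anti-diagonal matrix, \(\begin{pmatrix}0&a\\b&0\end{pmatrix}^2=abI\), which is a nonzero scalar. Such a class is never the identity class, so every nonsingular anti-diagonal class is an involution.
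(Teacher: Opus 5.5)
Your verification is correct, but the paper does not argue this way: its proof is only the citation \cite[Lemma~2.2]{Liu2026DEQ2Holant}. You make the lemma self-contained instead.
\begin{itemize}
\item For items 1--3 and the $A_4$ part of item 2, you use one device throughout: an index-two subgroup contains every square. The squares already generate $\langle r\rangle$ for odd $m$, include all eight $3$-cycles of $A_4$, and generate $A_4$ inside $S_4$.
\item For the $A_5$ centralizer you use a class-size count.
\item For item 5 you use the entrywise comparison $TZ=\lambda ZT$ with $\lambda=\pm1$.
\end{itemize}
The citation keeps the preliminaries short. Your version isolates the one genuinely projective point: commuting with $[Z]$ up to a scalar forces diagonal or anti-diagonal shape. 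That is exactly the step the proof of \cref{thm:finite-group-split} uses to turn a centralizing involution into a weighted equality.

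Two small points to tighten.
\begin{itemize}
\item In item 4 you compute the centralizer only for $(12)(34)$. Add that every involution of $A_5$ is a double transposition and that all of these are conjugate, for instance in $S_5$, which normalizes $A_5$. Their centralizers in $A_5$ are then conjugate, so the computation for one involution covers all of them.
\item In item 5 the squaring step is unnecessary. The entrywise equations $a=\lambda a$, $-b=\lambda b$, $c=-\lambda c$, $d=\lambda d$ already force $T=0$ unless $\lambda=\pm1$, and they give the diagonal and anti-diagonal cases directly.
\end{itemize}
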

	
	\begin{proof}
		This is \cite[Lemma~2.2]{Liu2026DEQ2Holant}.
	\end{proof}
	
	The conjugacy classification above is group-theoretic only.  Its
	conjugating matrix need not lie in \(GO(X)\), so it does not by itself
	define a legal global K-Holant normalization or provide an actual binary
	gadget.  Later applications retain the distinguished classes \([X]\) and
	\([Z]\).
	
	Fix the K-coordinate matrix
	\begin{equation}
		K=\frac1{\sqrt2}
		\begin{pmatrix}1&1\\ i&-i\end{pmatrix},
		\qquad
		K^{-1}=\frac1{\sqrt2}
		\begin{pmatrix}1&-i\\1&i\end{pmatrix}.
		\label{eq:K-matrix}
	\end{equation}
	Since \(K^{\mathsf T}K=X\), putting
	\(\widehat{\cF}=K^{-1}\cF\) gives
	\begin{equation}
		\Holant(\cF)
		\equivT
		\Holant(X\mid\widehat{\cF})
		=:\KHolant(\widehat{\cF}).
		\label{eq:K-equivalence}
	\end{equation}
	For an original-coordinate signature \(f\), or an original-coordinate
	signature set \(\cF\), a hat records passage to \(K\)-coordinates:
	\[
	\widehat f=K^{-1}f,
	\qquad
	\widehat{\cF}=K^{-1}\cF.
	\]
	Once we work with a signature set already in \(K\)-coordinates, we suppress
	these hats and denote its member signatures simply by \(f\).  Unless stated
	otherwise, signatures and signature sets appearing in a K-Holant problem
	are understood to be in \(K\)-coordinates, with native edge \(X\).  For a signature \(f\) already in \(K\)-coordinates, a tilde denotes
	the Walsh coordinates fixed in \cref{eq:Walsh-coordinates}; using them in
	a calculation does not implement a global transformation.  For
	\(s\in\AlgNums\), the identities
	\begin{equation*}
		P^{-1}\!\left[\frac12
		\begin{pmatrix}1+s&1-s\\1-s&1+s\end{pmatrix}\right]P
		=\diag(1,s),
		\qquad
		P^{\mathsf T}XP=2Z
	\end{equation*}
	explain its use in the odd cyclic and dihedral calculations.
	
	\subsection{Gadgets, interpolation, and factor reductions}
	\label{subsec:reduction-conventions}
	
	\paragraph{Equivalence convention.}
	Throughout a proof, \(\mathsf H(\cG)\) denotes the relevant Holant or
	\(K\)-Holant problem, and \(\cG\) denotes the current finite signature set up
	to polynomial-time Turing equivalence.  Whenever a fixed gadget construction,
	polynomial interpolation, or a cited reduction supplies a finite set \(S\)
	such that
	\begin{equation}
		\mathsf H(\cG\cup S)\leT \mathsf H(\cG),
		\label{eq:finite-equivalent-enlargement}
	\end{equation}
	the reverse reduction is immediate by restriction.  We therefore replace
	\(\cG\) by \(\cG\cup S\), keep the notation \(\cG\), and thereafter write
	\(g\in\cG\) for \(g\in S\), without recording its construction history.
	Every such enlargement is finite.  Signatures that differ by a known nonzero
	scalar are identified, with the scalar restored in the final value.  Formal
	slices, linear combinations, Walsh-coordinate identities, and local changes
	of basis do not add signatures unless one of the stated reductions supplies
	them.  A whole-set \(GO(X)\)-normalization is instead a holographic
	equivalence and is governed by \cref{lem:normalization-ledger}.
	This is the problem-level form of the gadget and interpolation rules used by
	Liu and Meng \cite[Section~2.5]{Liu2026DEQ2Holant} and by
	\cite[Section~2.3]{MengWangXiaZheng2025}; the factor instance is supplied by
	\cref{thm:external-factor}.
	
	A \emph{direct gadget} over \(\cG\) is a finite \(X\)-edge network of
	\(\cG\)-vertices with ordered external ports.  If it realizes \(h\), then
	\begin{equation}
		\KHolant(\cG\cup\{h\})
		\leT
		\KHolant(\cG).
		\label{eq:direct-gadget-reduction}
	\end{equation}
	A fixed gadget has constant size.  A gadget realizing \(\rho h\), for
	\(\rho\in\AlgNums^\times\), implements the same Turing reduction after an instance with
	\(N\) occurrences is rescaled by \(\rho^{-N}\).
	Thus a direct gadget is one instance of the equivalence convention above.
	In arguments using local topology or port order, we specify a direct gadget
	rather than relying only on membership obtained by a Turing reduction.
	
	Let \(f\) have arity \(r\).  For an ordered pair of distinct ports
	\(p=(i,j)\), a binary kernel
	\(C:\Ftwo^2\to\AlgNums\), and a residual word \(z\) indexed by
	\([r]\setminus\{i,j\}\) in increasing port order, let
	\(z^{i\leftarrow a,j\leftarrow b}\) be the arity-\(r\) word obtained by
	inserting \(a,b\) at ports \(i,j\).  Define the two-port contraction by
	\begin{equation*}
		(\partial_p^C f)(z)
		=
		\sum_{a,b\in\Ftwo}C(a,b)
		f\bigl(z^{i\leftarrow a,j\leftarrow b}\bigr).
	\end{equation*}
	We also write
	\(f_{ij}^{ab}(z)=f(z^{i\leftarrow a,j\leftarrow b})\) for the residual
	slice.  If an ordered binary \(B\) is placed on the ordered port pair
	\(e=(u,v)\), we write \(B_e\).  The notation \(B_e\mid f\) means tensor
	divisibility on this pair in the sense of
	\cref{subsec:algebra-group-terminology}.
	In the notation of Shao--Cai
	\cite[full version, Section~2.4]{ShaoCai2020}, their
	\(\widehat\partial_{ij}\widehat f\) and \(\partial_{ij}f\) are,
	respectively, our \(\partial_{(i,j)}^X\widehat f\) and
	\(\partial_{(i,j)}^I f\).  Port order fixes the ordered pair \((i,j)\).
	For \(U,V\in\Mat_2(\AlgNums)\), the Frobenius pairing is bilinear:
	\begin{equation*}
		\langle U,V\rangle_{\mathrm{Frob}}
		=\operatorname{tr}(U^{\mathsf T}V)
		=\sum_{a,b\in\Ftwo}U(a,b)V(a,b),
	\end{equation*}
	without conjugation.  Two matrices are \emph{orthogonal} for this pairing
	when their Frobenius pairing is zero.  Suppose a directly realized
	subnetwork with ordered boundary tensor \(B\) closes ports \(p\).  Its two attachment
	\(X\)-edges complement both boundary bits, so
	\begin{equation*}
		C(a,b)
		=\sum_{u,v\in\Ftwo}X(a,u)B(u,v)X(v,b)
		=(XBX)(a,b)
		=B(1-a,1-b).
	\end{equation*}
	Exchanging the two ordered boundary inputs replaces \(B\) by
	\(B^{\mathsf T}\), and the same attachment therefore gives
	\(XB^{\mathsf T}X\).  Likewise, two inward-ordered binary chains \(L\) and
	\(R\) joined by one native \(X\)-edge have external tensor
	\(LXR^{\mathsf T}\).  These formulas explain where transposes arise in
	direct gadget constructions and fix the ordered-port convention used
	below.  Every contraction so constructed is directly realized; a contraction
	with arbitrary \(C\), or a linear combination of contractions, is only a
	tensor calculation unless a separate reduction makes it available.
	
	For any ordered binary signature \(B\), its \emph{transfer} is
	\[
	T_B=BX.
	\]
	Series composition through one native edge multiplies transfers.  Reversing
	the two ports induces
	\[
	T^\#=XT^{\mathsf T}X.
	\]
	Thus \((ST)^\#=T^\#S^\#\) and \((T^\#)^\#=T\).  For nonsingular
	\(B\), the projective order of its transfer is the order of \([BX]\).
	Global normalizations act on the whole problem, not by adjoining individual
	signatures; after applying \cref{lem:normalization-ledger}, the transformed
	finite set is again denoted by the same letter.
	
	Let \(G\le\PGL_2(\AlgNums)\) be finite, and suppose each class has
	a directly realized ordered-binary representative.  Choose one
	nonsingular ordered binary \(B_g\) for each class \(g\), normalized by
	\(R_g=B_gX\), \([R_g]=g\), and \(R_1=I\), where \(1=[I]\).  The matrix of
	every representative of class \(g\) is \(\rho B_g\) for a unique nonzero
	scalar \(\rho\).  Since \(G\) is finite, the equivalence convention lets us
	adjoin these representatives once and write \(B_g\in\cG\).  Displayed
	projective identities are read up to their known nonzero scalars; a scalar is
	shown explicitly whenever its zero/nonzero status or a coefficient ratio is
	used.
	
	Interpolation is used only through a stated polynomial-time oracle reduction;
	linear combinations are not gadgets.  The tensor-factor extraction result is
	\cref{thm:external-factor}, with its common-one-sided exception kept explicit.
	
	\subsection{Zero and nullary preprocessing}
	\label{subsec:zero-nullary}
	
	An arity-zero (or \emph{nullary}) signature is a scalar.  Let \(\cF^\circ\)
	be obtained from \(\cF\) by deleting all identically zero and nullary
	signatures.  An instance containing a zero-signature vertex has value zero.
	Otherwise, deleting its nullary vertices gives \(\Omega^\circ\), and
	\[
	\Holant_\Omega=
	\left(\prod_{\substack{c\in\cF\\\arity(c)=0}}
	c^{m_c(\Omega)}\right)\Holant_{\Omega^\circ},
	\]
	where \(m_c(\Omega)\) is the multiplicity of \(c\) and the empty residual
	instance has value \(1\).  Thus all structural arguments assume, without
	further mention, that signatures are nonzero and have positive arity; the
	evaluation algorithm restores the known nullary factors exactly.
	
	All algebraic arithmetic and recognition use the exact-input conventions of
	\cref{subsec:complexity-conventions}.
	
	\subsection{Charge and Hamming layers}
	\label{subsec:charge}
	
	All charge terminology in this subsection is in \(K\)-coordinates.  Recall
	the Hamming weight \(\wt(x)=|\{j\in[r]:x_j=1\}|\) from
	\cref{subsec:graph-code-walsh-terminology}.  For an integer \(r\ge0\), define
	the doubled-charge function \(q_r:\Ftwo^r\to\mathbb Z\) by
	\[
	q_r(x)=2\wt(x)-r.
	\]
	For an even integer \(r\), we also use the integer half-charge
	\[
	\chi_r(x)=\frac{q_r(x)}2=\wt(x)-\frac r2.
	\]
	In the all-even analysis, gcd, divisibility, height, and interval statements
	use \(\chi_r\); \emph{doubled charge} explicitly means \(q_r\).
	Put
	\[
	HW_r^{\ge}=\{x\in\Ftwo^r:q_r(x)\ge0\},
	\quad
	HW_r^{\le}=\{x\in\Ftwo^r:q_r(x)\le0\},
	\quad
	HW_r^{=}=\{x\in\Ftwo^r:q_r(x)=0\},
	\]
	and define \(HW_r^{>},HW_r^{<}\) by strict inequalities.  The unsubscripted
	symbols use the ambient arity:
	\[
	HW^{\ge},\qquad HW^{\le},\qquad HW^=,\qquad HW^>,\qquad HW^<.
	\]
	A
	signature has one of these properties when its support lies in the
	corresponding set.  A signature is \emph{single-layer} when its support lies
	in one Hamming layer; a signature set is single-layer when each member is so
	supported, with member-dependent layers allowed, and is \emph{multilayer}
	otherwise.  A signature set is \emph{common
		one-sided} when all members are \(HW^{\ge}\) or all are \(HW^{\le}\);
	otherwise it is \emph{two-sided}, equivalently both charge signs occur among
	support words of its members.  Here \(\EO\) abbreviates Eulerian orientation.  An
	arity-\(2d\) \emph{EO signature} is supported in \(HW_{2d}^=\); this central
	support condition is the local balance constraint for an Eulerian
	orientation.
	
	\begin{definition}[Mountain]
		\label{def:mountain}
		For an integer \(d\ge1\), an arity-\(2d\) signature \(M\) in
		\(K\)-coordinates is a mountain of height \(d\) if
		\[
		\supp(M)\subseteq HW_{2d}^=\cup\{\zero^{2d},\one^{2d}\}
		\qquad\text{and}\qquad
		M(\zero^{2d})M(\one^{2d})\ne0.
		\]
		Equivalently,
		\[
		\{\chi_{2d}(x):M(x)\ne0\}\subseteq\{-d,0,d\},
		\qquad
		\{-d,d\}\subseteq\{\chi_{2d}(x):M(x)\ne0\}.
		\]
		Thus the height is both half the arity and the absolute integer half-charge
		of either endpoint.  No restriction is placed on the central EO layer.
	\end{definition}
	
	For an \(X\)-edge grid \(\Omega\) with vertex set \(V(\Omega)\), every edge
	contributes one \(0\)-bit and one \(1\)-bit.  Hence each contributing
	assignment \(x\), with induced local word \(x_v\) at vertex \(v\), satisfies
	\begin{equation*}
		\sum_{v\in V(\Omega)}q_{\arity(f_v)}(x_v)=0.
	\end{equation*}
	Thus \(X\)-contraction preserves external total charge.
	
	\begin{remark}[Notation relative to prior work]
		\label{rem:mwxz-notation}
		We follow \cite[Sections~2.1--2.3]{MengWangXiaZheng2025} in writing
		\(\widehat{\cF}=K^{-1}\cF\), \(HW^{\ge}\), \(HW^{\le}\), \(HW^=\),
		and \emph{EO signature}.  Their \(\neq_2\) and \(=_r\) are our \(X\) and
		\(\Eq_r\), respectively; subscripts in \(HW_r^\bullet\) record arity.  Thus
		\(q_r(x)=|\{j\in[r]:x_j=1\}|-|\{j\in[r]:x_j=0\}|\), while the height arguments use its
		integer half \(\chi_r=q_r/2\).
		
		Our \(\Span\) is linear span; the affine span denoted
		\(\operatorname{Span}(f)\) in \cite{MengWangXiaZheng2025} is written
		\(\operatorname{aff}(\supp(f))\).
	\end{remark}
	
	\subsection{Bell signatures and the Reed--Muller core}
	\label{subsec:pauli-notation}
	
	For \(d,\ell,u,v\in\Ftwo\), define the ordered Bell signature
	\begin{equation*}
		B_{d,\ell}(u,v)
		=
		\one[u\oplus v=d](-1)^{\ell u}.
	\end{equation*}
	Write \(B_\gamma=B_{d,\ell}\) for
	\(\gamma=(d,\ell)^{\mathsf T}\).  Port order matters
	for \(B_{1,1}\), which is negated under reversal.
	
	With truth-table order \(00,01,10,11\), these four signatures are
	\begin{equation}
		\begin{aligned}
			B_{0,0}&=(1,0,0,1)=I,&
			B_{0,1}&=(1,0,0,-1)=Z,\\
			B_{1,0}&=(0,1,1,0)=X,&
			B_{1,1}&=(0,1,-1,0)=Y.
		\end{aligned}
		\label{eq:bell-matrix-dictionary}
	\end{equation}
	On an even finite port set, a \emph{Bell product} is a nonzero scalar
	multiple of a tensor product of Bell signatures on the oriented pairs of
	a perfect matching.
	Reversing a pair carrying \(B_{1,1}\) changes the scalar sign; on the empty
	port set a Bell product is a nonzero scalar.  A binary proportional to a
	Bell signature is written \(B\doteq B_\gamma\).
	Our real skew matrix \(Y\) differs from the conventional complex Pauli
	\(Y\)-matrix by a nonzero scalar.
	
	\begin{remark}[Bell notation in Shao--Cai]
		The tuple in \cref{eq:bell-matrix-dictionary} is Shao--Cai's
		\((=^+_2,=^-_2,\neq^+_2,\neq^-_2)\) \cite[Section~2]{ShaoCai2020};
		\(Y^{\mathsf T}=-Y\).  Their holographic matrix named \(Z\) is our \(K\),
		whereas our \(Z\) is only the diagonal sign matrix.
	\end{remark}
	
	Index the eight ports by the points of \(\Ftwo^3\) in lexicographic order.
	The first-order Reed--Muller code is
	\[
	RM(1,3)
	=
	\{(\ell(t))_{t\in\Ftwo^3}\mid
	\ell\colon\Ftwo^3\to\Ftwo\text{ is affine}\}.
	\]
	Explicitly, its elements are the truth tables of
	\[
	\ell(t_1,t_2,t_3)
	=a+u_1t_1+u_2t_2+u_3t_3,
	\qquad
	a,u_1,u_2,u_3\in\Ftwo,
	\]
	with all arithmetic over \(\Ftwo\).  Evaluating such an \(\ell\) on the
	eight inputs \(t\in\Ftwo^3\), in the fixed order above, gives one binary
	string of length eight.  The \(2^4=16\) choices of
	\((a,u_1,u_2,u_3)\) give exactly the sixteen codewords of \(RM(1,3)\).
	For \(x\in\Ftwo^8\), we write
	\begin{equation}
		\RMcore(x)=\one[x\in RM(1,3)].
		\label{eq:RM-core}
	\end{equation}
	
	\begin{remark}[The Shao--Cai \(f_8\) core]
		\label{rem:sc-f8-core}
		With the preceding port order, \(\RMcore\) is
		the signature \(f_8=\widehat f_8\) of
		\cite[full version, Section~8]{ShaoCai2020}.  Its Bell contraction identities
		are established in \cite[Theorems~13--14]{CaiFuShao2020Entanglement};
		\cref{thm:q8-normalized-bell-lift} proves the reduction adjoining equality
		to the current signature set.
	\end{remark}
	
	\subsection{Exact finite verification}
	\label{subsec:certificate-terminology}
	
	The appendix states the finite calculations used in the proof and describes
	how the supplied programs reconstruct and check the candidates from their
	defining equations.  All arithmetic is exact.  For an integer matrix, a
	nonzero \(k\)-by-\(k\) minor, or rank \(k\) modulo a prime, gives rational
	rank at least \(k\).  For a matrix with \(n\) columns, \(s\) independent
	rational kernel vectors give rank at most \(n-s\).  We use modular
	arithmetic only with these stated rank bounds, not as a substitute for a
	characteristic-zero identity.
	
	The six-port calculation also uses exact satisfiability checks for real
	polynomial systems after splitting complex variables into real and imaginary
	parts.  The appendix records the normalization and coordinate conventions,
	software requirements and which conclusions depend on the solver.  The correspondence between results and programs is
	in \texttt{certificates/theorem-to-script-map.md}.
	
	\subsection{Tractable signature sets and effective recognition}
	\label{sec:predicate}
	
	\subsubsection{The standard signature classes}
	\label{subsec:standard-classes}
	
	We include the zero signature in every class below.  The affine class
	\(\cA\) consists of the zero signature and the nonzero arity-\(r\) signatures
	\begin{equation}
		f(x)=\lambda\,\one[Ax=b]\,i^{Q(x)},
		\label{eq:affine-class}
	\end{equation}
	where \(r,m\ge0\) are integers,
	\(\lambda\in\AlgNums^\times\),
	\(A\in\Mat_{m\times r}(\Ftwo)\), \(b\in\Ftwo^m\), and
	\(Q:\Ftwo^r\to\mathbb Z_4\), where
	\(\mathbb Z_4=\mathbb Z/4\mathbb Z\), is represented by a multilinear
	polynomial over \(\mathbb Z_4\) of degree at most two whose mixed
	coefficients are even.
	Equivalently, the phase is a product of fourth-root phases of affine linear
	forms interpreted as \(\{0,1\}\)-valued integers.
	Here a \emph{mixed coefficient} is the coefficient of \(x_ix_j\) for
	\(i\ne j\), and an affine linear form over \(\Ftwo\) is
	\(a_0+\sum_{j=1}^r a_jx_j\) with all coefficients in \(\Ftwo\).
	
	The product class \(\cP\) consists of zero and all signatures of the form
	\(\lambda\prod_{j=1}^m h_j\), where \(m\ge0\) is an integer,
	\(0\ne\lambda\in\AlgNums\), and every factor
	\(h_j\) is a unary, binary equality, or binary disequality constraint applied
	to the signature variables.  Empty products and repeated variables are
	allowed; an inconsistent product denotes the zero signature.
	
	For every integer \(q\ge2\), put
	\begin{equation*}
		\rho_q=e^{\pi i/(2q)},\qquad T_q=\diag(1,\rho_q).
	\end{equation*}
	For \(r\in\mathbb Z\), define the arity-wise affine twist \(\cA_q^r\) by
	\begin{equation*}
		f\in\cA_q^r
		\quad\Longleftrightarrow\quad (T_q^{-r})^{\otimes\arity(f)}f\in\cA.
	\end{equation*}
	Since \(T_q^q=\diag(1,i)\) preserves \(\cA\), we take \(r\bmod q\);
	in particular, \(T_2=\diag(1,e^{\pi i/4})\).
	The local-affine class \(\cL\) consists of the signatures \(f\) such that
	\begin{equation*}
		\left(\bigotimes_{j=1}^{\arity(f)}T_2^{\alpha_j}\right)f\in\cA
		\qquad\text{for every }\alpha\in\supp(f).
	\end{equation*}
	
	Let \(\cT\) be the set of all unary and binary signatures, and let
	\begin{equation*}
		\cM=\{f:\text{\(f\) is a Boolean signature and }
		f(x)=0\text{ for every \(x\) with }\wt(x)>1\}.
	\end{equation*}
	
	\begin{definition}[Common transformability and \(K\)-presentation]
		\label{def:common-transformability}
		\label{def:common-K-presentation}
		Let \(\cC\) be closed under nonzero scalars and port permutations.  A
		finite signature set \(\cF\) is \(\cC\)-transformable if one
		\(M\in\GL_2(\AlgNums)\) satisfies
		\begin{equation*}
			M^{\mathsf T}M\in\cC,\qquad M^{-1}\cF\subseteq\cC.
		\end{equation*}
		Equivalently, for \(\widehat{\cF}=K^{-1}\cF\), one matrix
		\(U=K^{-1}M\) gives the common \(K\)-presentation
		\begin{equation*}
			U^{\mathsf T}XU\in\cC,\qquad U^{-1}\widehat{\cF}\subseteq\cC.
		\end{equation*}
		In both displays the edge is included and the quantifier order is one basis
		for the entire signature set, not one basis per signature.  We use this definition
		for \(\cC\in\{\cA,\cP,\cL\}\) and for every fixed affine twist
		\(\cA_q^r\) (in particular \(\cA_2^1\)).
	\end{definition}
	
	\subsubsection{The Eulerian-orientation predicate}
	\label{subsec:eo-predicate}
	
	For an arity-\(r\) signature \(g\), define its central restriction
	\begin{equation*}
		g|_{\EO}(x)=g(x)\one[q_r(x)=0].
	\end{equation*}
	It is zero for odd \(r\), and zero restrictions are deleted.  For a
	signature set \(\cG\), set
	\begin{equation*}
		\cG|_{\EO}=\{g|_{\EO}:g\in\cG,\ g|_{\EO}\ne0\}.
	\end{equation*}
	
	Let \(g\ne0\) have arity \(k\) and be supported on the single Hamming
	layer of weight \(d\).  Its \emph{EO completion} is the ordered tensor
	\begin{equation}
		g_{\to\EO}
		=
		\begin{cases}
			g\otimes\DeltaZero^{\otimes(2d-k)},
			& 2d\ge k,\\
			g\otimes\DeltaOne^{\otimes(k-2d)},
			& 2d<k.
		\end{cases}
		\label{eq:eo-completion}
	\end{equation}
	The original ports precede the added unaries; a zeroth power is the nullary
	unit.  The completed signature has even arity and central support, hence is
	EO\@.  For a signature set of nonzero single-layer signatures, write
	\begin{equation*}
		\cG_{\to\EO}=\{g_{\to\EO}:g\in\cG\}.
	\end{equation*}
	
	For arity-\(2d\) EO signature \(g\), an oriented perfect pairing
	\(\pi=((a_1,b_1),\ldots,(a_d,b_d))\) has unordered pairs partitioning
	\([2d]\).  For \(z\in\Ftwo^d\), let \(y\in\Ftwo^{2d}\) be the unique word
	with \(y_{a_j}=z_j\) and \(y_{b_j}=1-z_j\); the pairing induces
	\begin{equation*}
		g^\pi(z_1,\ldots,z_d)=g(y).
	\end{equation*}
	
	\begin{definition}[The predicate \(\EOTract\)]
		\label{def:eo-tract}
		For a finite EO signature set \(\cG\), \(\EOTract(\cG)\) holds if there exist a
		single sign \(\varepsilon\in\{+1,-1\}\) and a single class
		\(\cC\in\{\cA,\cP\}\), both shared by the entire signature set, such that:
		\begin{enumerate}
			\item for every \(g\in\cG\) and all
			\(\alpha,\beta,\gamma\in\supp(g)\), if
			\(\delta=\alpha\oplus\beta\oplus\gamma\), then
			\(\delta\in\supp(g)\) or
			\(\varepsilon q_{\arity(g)}(\delta)>0\);
			\item for every \(g\in\cG\) and every oriented perfect pairing \(\pi\)
			of its ports, \(g^\pi\in\cC\).
		\end{enumerate}
		We adopt the vacuous convention \(\EOTract(\varnothing)=\mathrm{true}\).
	\end{definition}
	
	This is the opposite-pair restriction of
	\cite{MengWangXiaZheng2025,GuanShaoShi2026}: substitute
	\(y_{b_j}=1-y_{a_j}\), with the converse lifted, for a pairing with
	\(d\) pairs, by
	\(\prod_{j=1}^dX(y_{a_j},y_{b_j})\).  Both \(\cA,\cP\) are closed under it.
	
	\subsubsection{The tractability predicate}
	\label{subsec:tract-predicate}
	
	For an arbitrary finite signature set \(\cF\), define
	\(\Tract(\cF):=\Tract(\cF^\circ)\), where \(\cF^\circ\)
	is defined in \cref{subsec:zero-nullary}.
	For a preprocessed signature set \(\cF\), put
	\(\widehat{\cF}=K^{-1}\cF\).  Define \(\Tract(\cF)\) to be the disjunction
	of the following seven alternatives.
	
	\begin{enumerate}[label=\textnormal{(\arabic*)}]
		\item \label{case:one-sided}
		The signature set \(\widehat{\cF}\) is common one-sided, in the explicit
		sense that
		\[
		[\,\forall\widehat f\in\widehat{\cF},\
		\supp(\widehat f)\subseteq HW_{\arity(\widehat f)}^{\ge}\,]
		\quad\text{or}\quad
		[\,\forall\widehat f\in\widehat{\cF},\
		\supp(\widehat f)\subseteq HW_{\arity(\widehat f)}^{\le}\,],
		\]
		and \(\EOTract(\widehat{\cF}|_{\EO})\) holds.
		
		\item \label{case:single-weight}
		Every \(\widehat f\in\widehat{\cF}\) is supported on one Hamming
		weight \(d_f\).  Across the signature set, some support word has negative
		charge and some (possibly other) support word has positive charge, and
		\(\EOTract(\widehat{\cF}_{\to\EO})\) holds, with the completion
		operation defined in \cref{eq:eo-completion}.
		
		\item \label{case:arity-two}
		\(\cF\subseteq\angles{\cT}\).
		
		\item \label{case:matching}
		\(\cF\subseteq\angles{K\cM}\), or
		\(\cF\subseteq\angles{KX\cM}\).
		
		\item \label{case:affine}
		\(\cF\) is \(\cA\)-transformable.
		
		\item \label{case:product}
		\(\cF\) is \(\cP\)-transformable.
		
		\item \label{case:local-affine}
		\(\cF\) is \(\cL\)-transformable.
	\end{enumerate}
	
	\begin{remark}
		For every signature \(f\), the matrix symbols in \cref{case:matching} act
		tensorwise at the arity of
		each signature.  In particular,
		\[
		\begin{aligned}
			f\in\angles{K\cM}
			&\Longleftrightarrow
			\text{ every tensor-prime factor of }K^{-1}f\text{ lies in }\cM,\\
			f\in\angles{KX\cM}
			&\Longleftrightarrow
			\text{ every tensor-prime factor of }XK^{-1}f\text{ lies in }\cM.
		\end{aligned}
		\]
	\end{remark}
	
	\begin{lemma}[Effective algebraic transformability witnesses]
		\label{lem:algebraic-transform-witness}
		Let \(E\in\Mat_2(\AlgNums)\), let \(\cF\) be a finite algebraic
		signature set, and let \(\cC\in\{\cA,\cP,\cL\}\).  Suppose there is a matrix
		\(U\in\GL_2(\mathbb C)\) such that
		\[
		U^{\mathsf T}EU\in\cC,
		\qquad
		U^{-1}\cF\subseteq\cC,
		\]
		where membership in \(\cC\) allows complex values for the free scalar and
		unary parameters in its normal forms.  Then such a witness exists in
		\(\GL_2(\AlgNums)\).  Moreover, for exact algebraic input, the existence of a
		witness is decidable and an algebraic witness can be produced.
	\end{lemma}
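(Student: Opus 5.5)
The plan is to reduce the existence of a complex witness to the solvability of a polynomial system with algebraic coefficients. Both conclusions then follow from two standard facts. First, a system of polynomial equations and inequations over \(\AlgNums\) that has a solution in \(\mathbb C\) already has one in \(\AlgNums\), by the Nullstellensatz, since \(\AlgNums\) is algebraically closed. Second, the existential first-order theory of algebraically closed fields admits effective quantifier elimination, for instance via Gröbner bases. The only real work is to show that, for each \(\cC\in\{\cA,\cP,\cL\}\), the condition ``\(U^{\mathsf T}EU\in\cC\) and \(U^{-1}\cF\subseteq\cC\)'' is a finite disjunction of such constructible conditions in the entries of \(U\) together with auxiliary parameters.

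\textbf{Step 1: finitely many normal-form templates at each arity.}
\begin{itemize}
\item \(\cA\): a nonzero member of arity \(r\) is \(\lambda\one[Ax=b]i^{Q(x)}\). There are finitely many affine supports and finitely many quadratic forms \(Q\) over \(\mathbb Z_4\) with even mixed coefficients. So membership of a tensor \(g\) is a finite disjunction, over templates \((S,Q)\), of the condition: there exists \(\lambda\ne0\) with \(g(x)=\lambda\one[x\in S]i^{Q(x)}\) for all \(x\).
\item \(\cP\): a nonzero member is \(\lambda\) times a product of unaries, equalities and disequalities. Fix the combinatorial pattern, namely a partition of the ports into classes related by \(=\) or \(\ne\), with one unary per class. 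This gives finitely many patterns, and each template is parametrized polynomially by the scalar and unary entries; these are the ``free parameters'' of the statement.
\item \(\cL\): the condition is quantified over \(\alpha\in\supp(f)\). The support is itself one of finitely many subsets of \(\Ftwo^r\). So fix the support pattern, which is expressed by zero and nonzero conditions on \(U^{-1}f\). For each \(\alpha\) in it, impose the \(\cA\)-membership condition on the tensor \((\bigotimes_j T_2^{\alpha_j})U^{-1}f\), whose entries are polynomial in \(U^{-1}\) with algebraic coefficients.
\end{itemize}

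\textbf{Step 2: assemble the system.} Write \(U\) with indeterminate entries and introduce \(w\) with \(w\det U=1\), so that \(U^{-1}\) has polynomial entries. Since \(\cF\) and \(E\) are finite, with fixed arities, the full condition is a finite disjunction, over choices of one template per signature (including \(E\)), of a system of the following form:
\begin{itemize}
\item polynomial equations in \((U,w,\text{parameters})\), whose coefficients lie in the number field generated by the entries of \(\cF\) and \(E\), together with \(i\) and \(e^{\pi i/4}\);
\item finitely many inequations, namely \(\lambda\ne0\) and the nonvanishing needed for support patterns. These are handled by the Rabinowitsch trick.
\end{itemize}
The zero signature belongs to every class, so a signature whose transform is zero is covered by an extra disjunct that requires all of its entries to vanish. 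Note that \(U^{-1}f=0\) cannot occur for \(f\ne0\), which is consistent.

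\textbf{Step 3: conclude.} A complex witness solves one disjunct. That disjunct is then consistent over \(\mathbb C\), so by the Nullstellensatz (the ideal obtained after the Rabinowitsch trick does not contain \(1\)) it has a point over \(\AlgNums\). This yields \(U\in\GL_2(\AlgNums)\).

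For decidability, compute for each disjunct a Gröbner basis over the coefficient number field, represented exactly, and test whether it contains \(1\). To produce an algebraic point, use a zero-dimensional reduction: add generic rational linear equations until the variety is finite, then solve by triangular decomposition and isolate roots. This yields exact algebraic encodings in the sense of \cref{subsec:complexity-conventions}.

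\textbf{Main obstacle.} The subtle step is \(\cL\). Its definition quantifies over the support of the transformed signature, which depends on \(U\). Branching on the support pattern first is what makes this constructible. The same care is needed to keep the count of branches finite: templates are enumerated per arity and per signature of the fixed finite set, so the enumeration is finite, although exponential. That is acceptable, since recognition is only required to halt.
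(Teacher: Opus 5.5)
Your proposal is correct. Enumerating the finitely many normal-form templates at each arity, and for \(\cL\) first branching on the support of \(U^{-1}f\), turns the transformability condition into a finite disjunction of polynomial equation--inequation systems over a number field. The weak Nullstellensatz then gives a witness over \(\AlgNums\), and Gr\"obner-basis consistency tests decide existence.

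The paper does not prove this lemma itself; it imports it as \cite[Lemma~2.15]{Liu2026DEQ2Holant}. So there is no in-paper argument to compare against, and your reduction is a standard self-contained proof of this statement.

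One step should be made explicit. When you cut down by generic rational linear equations to extract a point, you must check each section for nonemptiness and zero-dimensionality, and re-choose it if the check fails. This terminates because the bad choices form a proper Zariski-closed set.
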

	
	\begin{proof}
		This is \cite[Lemma~2.15]{Liu2026DEQ2Holant}.
	\end{proof}
	
	\subsubsection{Dichotomies and tensor decomposition}
	\label{subsec:dichotomy-decomposition-tools}
	
	\begin{theorem}[Weighted Eulerian orientations
		{\cite{MengWangXia2025EO}; \cite[Theorems~1.1 and~2.17]{GuanShaoShi2026}}]
		\label{thm:external-eo}
		For every finite algebraic EO signature set \(\cG\), if \(\EOTract(\cG)\) holds,
		then \(\KHolant(\cG)\in\FP\); otherwise
		\(\KHolant(\cG)\) is \(\SharpP\)-hard.  The polynomial-time side uses the
		LP algorithm of Guan, Shao, and Shi.
	\end{theorem}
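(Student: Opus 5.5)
The proof will be a translation argument rather than a new derivation. Both directions of the dichotomy are already proved for the weighted Eulerian-orientation problem: hardness and the $\FP^{\mathsf{NP}}$ upper bound in \cite{MengWangXia2025EO}, and the polynomial-time algorithm in \cite[Theorems~1.1 and~2.17]{GuanShaoShi2026}. The work is to check that our objects coincide with theirs, so that the cited statements apply verbatim.

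\textbf{Step 1: identify the problems.} An instance of $\KHolant(\cG)$ for an EO signature set $\cG$ is a grid with native edge $X$. By the charge-conservation identity in \cref{subsec:charge}, each $X$-edge contributes one $0$-bit and one $1$-bit. Since every $g\in\cG$ is supported on $HW^=$, every contributing assignment orients each edge so that each vertex has equal in- and out-degree. Hence $\KHolant(\cG)$ is literally the weighted $\#\EO$ problem of \cite{MengWangXia2025EO}, with local weight $g(x)$ on balanced words. By \cref{rem:mwxz-notation}, their $\neq_2$ and $HW^=$ are our $X$ and $HW^=$, and the problem inputs agree.

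\textbf{Step 2: match the tractability criterion.} We show that \cref{def:eo-tract} is equivalent to the criterion used by Meng--Wang--Xia and Guan--Shao--Shi. This is the step I expect to require the most care.
\begin{itemize}
\item Their criterion fixes one global sign and one class in $\{\cA,\cP\}$ for the whole set, as ours does.
\item Their support condition is the closure of $\supp(g)$ under $\alpha\oplus\beta\oplus\gamma$, except at words lying strictly on the $\varepsilon$-side. This is our condition~1.
\item Their second condition is stated via the pairing-restricted signature $g^\pi$, obtained by identifying $y_{b_j}=1-y_{a_j}$. Condition~2 uses exactly this construction, as remarked after \cref{def:eo-tract}. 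In particular, $\cA$ and $\cP$ are closed under the inverse lift by $\prod_j X(y_{a_j},y_{b_j})$, so restricting to every oriented pairing neither loses nor adds information.
\end{itemize}
I would check carefully that the quantifier order is the same in both criteria (one $\varepsilon$ and one $\cC$ shared across all of $\cG$), and that the orientation of pairs is immaterial. Both $\cA$ and $\cP$ are invariant under the bit flip of a variable in the relevant sense, which is what makes the orientation irrelevant.

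\textbf{Step 3: assemble.} If $\EOTract(\cG)$ fails, \cite{MengWangXia2025EO} gives $\SharpP$-hardness under polynomial-time Turing reductions. This is our reduction notion from \cref{subsec:complexity-conventions}, and exact algebraic scalars are handled by the stated postprocessing. If $\EOTract(\cG)$ holds, their earlier $\FP^{\mathsf{NP}}$ algorithm is replaced by the LP-based algorithm of \cite{GuanShaoShi2026}, which places the problem in $\FP$ for algebraic weights.

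Two degenerate points remain:
\begin{itemize}
\item The vacuous convention $\EOTract(\varnothing)$ is trivially consistent, since an empty signature set admits only the empty grid.
\item Zero and nullary members are removed by \cref{subsec:zero-nullary}.
\end{itemize}
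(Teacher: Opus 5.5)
Your proposal is correct and matches the paper. The paper gives no separate proof of this statement: it imports hardness from \cite{MengWangXia2025EO} and the $\FP$ algorithm from \cite[Theorems~1.1 and~2.17]{GuanShaoShi2026}, and relies on the remark after \cref{def:eo-tract} that condition~2 is exactly those works' opposite-pair restriction, with $\cA$ and $\cP$ closed under the lift by $\prod_j X(y_{a_j},y_{b_j})$. Your identification of $\KHolant(\cG)$ with $\Holant(\neq_2\mid\cG)$ and your matching of the shared sign $\varepsilon$ and class $\cC$ is the same translation, spelled out in more detail.
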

	
	\begin{theorem}[Single-weight K-Holant
		{\cite[full version, Theorem~31; conference version, Theorem~28]{MengWangXiaZheng2025}}]
		\label{thm:external-single-weight}
		Let \(\cG\) be a finite preprocessed algebraic signature set in which every
		signature is supported on one Hamming layer.  Then \(\KHolant(\cG)\) is
		\(\SharpP\)-hard unless either \(\cG\) is common one-sided with
		\(\EOTract(\cG|_{\EO})\), or both positive and negative charges occur and
		\(\EOTract(\cG_{\to\EO})\).  Both exceptions are in \(\FP\) by
		\cref{thm:external-eo}.
	\end{theorem}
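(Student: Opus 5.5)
The statement is imported from Meng, Wang, Xia and Zheng, so the plan is to reduce it to their single-weight theorem and to \cref{thm:external-eo}, checking that the conventions agree. I would not re-derive the hardness side. The work is to show that their hypotheses and exceptional cases coincide with ours, and that the two tractable exceptions are polynomial-time.

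The first step is a notation check using \cref{rem:mwxz-notation}. Their \(\neq_2\) edge is our \(X\), and their \(HW^{\ge}\), \(HW^{\le}\) and \(HW^{=}\) are ours. Their EO completion pads with \(\DeltaZero\) or \(\DeltaOne\) to reach the central layer, which matches \cref{eq:eo-completion} up to port order. Their Eulerian-orientation tractability criterion is \(\EOTract\) of \cref{def:eo-tract}, through the opposite-pair substitution \(y_{b_j}=1-y_{a_j}\) recorded after that definition. With these identifications the hardness statement is \cite[Theorem~31]{MengWangXiaZheng2025} verbatim: outside the two exceptions, \(\KHolant(\cG)\) is \(\SharpP\)-hard.

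For the first exception, \(\cG\) is common one-sided. Every X-edge grid has total charge zero, and each vertex charge has a fixed sign. Hence every contributing assignment has every local word in \(HW^{=}\). Therefore \(\KHolant(\cG)\) equals \(\KHolant(\cG|_{\EO})\) instance by instance. This lies in \(\FP\) by \cref{thm:external-eo} when \(\EOTract(\cG|_{\EO})\) holds.

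For the second exception, both charge signs occur. I would reduce \(\KHolant(\cG)\) to \(\KHolant(\cG_{\to\EO}\cup\{\Delta_0\otimes\Delta_1\})\), or alternatively invoke their tractability proof directly. A vertex \(g\) of arity \(k\) on layer \(d\) is replaced by \(g_{\to\EO}\), whose padding ports must be absorbed. A padding \(\DeltaZero\) is paired with a \(\DeltaOne\) padding elsewhere via an X-edge. Conservation of charge ensures that the total excess of zeros equals the total excess of ones, so all padding ports can be matched. The matched pairs, each joined by an X-edge, contribute factor \(1\) exactly when the unaries are consistent.

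The main obstacle is making this pairing legitimate. An X-edge between \(\DeltaZero\) and \(\DeltaOne\) gives \(\DeltaZero^{\mathsf T}X\DeltaOne=0\). The correct pairing uses two \(\DeltaZero\) ports, or equivalently the padding tensors must be read directly as a factor of the original signature. The cleanest route is therefore to cite the FP side of \cite[Theorem~31]{MengWangXiaZheng2025}, which shows \(\KHolant(\cG)\leT\KHolant(\cG_{\to\EO})\). That result then combines with \cref{thm:external-eo} and the LP algorithm of Guan, Shao and Shi.
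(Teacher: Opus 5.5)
Your overall route is the paper's. The paper gives no separate argument for this statement: it quotes the hardness side from \cite[Theorem~31]{MengWangXiaZheng2025}, read through the dictionary of \cref{rem:mwxz-notation}, and obtains both exceptions in \(\FP\) from \cref{thm:external-eo}. Your charge-conservation argument for the common-one-sided exception is exactly what is needed. (An instance using a signature whose EO restriction is zero simply has value \(0\).)

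Your ``main obstacle'' paragraph, however, contains a computational error. With \(\DeltaZero=[1,0]^{\mathsf T}\) and \(\DeltaOne=[0,1]^{\mathsf T}\),
\[
\DeltaZero^{\mathsf T}X\DeltaOne=X(0,1)=1,
\]
not \(0\). Your proposed fix, joining two \(\DeltaZero\) padding ports, is the version that fails: it gives \(X(0,0)=0\) and kills every term.

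Your first idea was therefore already a complete proof of the second exception, and it needs no auxiliary \(\DeltaZero\otimes\DeltaOne\) vertex. Take an instance over \(\cG\). Since every signature is single-layer, every supported local word at \(v\) has the same doubled charge \(2d_v-k_v\). If \(\sum_v(2d_v-k_v)\ne0\), the value is \(0\). Otherwise:
\begin{enumerate}
\item Replace each \(g_v\) by \((g_v)_{\to\EO}\).
\item Observe that there are \(\sum_{2d_v>k_v}(2d_v-k_v)\) padding \(\DeltaZero\)-ports and equally many padding \(\DeltaOne\)-ports.
\item Join these two sets of ports bijectively by native \(X\)-edges.
\end{enumerate}
Each new edge contributes exactly \(X(0,1)=1\). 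Hence the resulting instance over the EO set \(\cG_{\to\EO}\) has the same value, and \cref{thm:external-eo} gives \(\FP\) directly.

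Falling back on the tractability side of the cited source is not wrong, but it is unnecessary. Its original conclusion was only \(\FP^{\mathsf{NP}}\), so you would still need the Guan--Shao--Shi upgrade. The direct reduction to \cref{thm:external-eo} is what the paper's remark ``both exceptions are in \(\FP\)'' refers to.
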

	
	\begin{theorem}[Odd-arity Holant
		{\cite[full version, Theorem~33; conference version, Theorem~30]{MengWangXiaZheng2025}}]
		\label{thm:external-odd}
		Let \(\cF\) be a finite preprocessed algebraic signature set containing a
		nonzero odd-arity signature.  Then \(\Holant(\cF)\in\FP\) if
		\(\Tract(\cF)\) holds, and is \(\SharpP\)-hard otherwise.
	\end{theorem}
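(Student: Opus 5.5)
The plan is to derive \cref{thm:external-odd} from \cite[full version, Theorem~33]{MengWangXiaZheng2025}, not to reprove the odd-arity classification. What has to be checked is that our predicate \(\Tract(\cF)\) from \cref{subsec:tract-predicate} is equivalent, alternative by alternative, to the seven tractability conditions of that theorem. The conventions must also agree: the exact algebraic input of \cref{subsec:complexity-conventions} and preprocessed sets as in \cref{subsec:zero-nullary}. Since \(\cF\) is preprocessed, zero and nullary signatures are already absent. Both sides therefore quantify over the same nonzero positive-arity signatures, and the nullary scalars are restored as in \cref{subsec:zero-nullary}.

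First I translate notation using \cref{rem:mwxz-notation}. Their \(\neq_2\) is our \(X\), their \(=_r\) is our \(\Eq_r\), and their \(Z\)-transformation is our \(K\) (\cref{eq:K-matrix}). Hence their \(\widehat{\cF}\) coincides with ours, and their \(HW^{\ge},HW^{\le},HW^=\) conditions are ours with the arity recorded as a subscript. Their affine span \(\operatorname{Span}(f)\) is our \(\operatorname{aff}(\supp f)\).

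With that dictionary I match the alternatives in order.
\begin{enumerate}
\item Alternatives \ref{case:one-sided} and \ref{case:single-weight} are their EO-based conditions. The central restriction \(g|_{\EO}\) and the completion \(g_{\to\EO}\) of \cref{eq:eo-completion} reproduce their constructions: the completion pads with \(\DeltaZero\) or \(\DeltaOne\) until the support becomes central.
\item \(\EOTract\) (\cref{def:eo-tract}) is the opposite-pair formulation of the FP-side criterion of \cref{thm:external-eo}. The remark after \cref{def:eo-tract} gives the substitution \(y_{b_j}=1-y_{a_j}\) and its converse through \(\prod_j X(y_{a_j},y_{b_j})\). Since \(\cA\) and \(\cP\) are closed under these operations, the two formulations coincide.
\item Alternatives \ref{case:arity-two}--\ref{case:local-affine} are the standard \(\angles{\cT}\), \(\angles{K\cM}\)/\(\angles{KX\cM}\), and \(\cA\)-, \(\cP\)-, \(\cL\)-transformability conditions. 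Here I verify that \cref{def:common-transformability} places the edge condition \(M^{\mathsf T}M\in\cC\) together with a single basis \(M\) for the whole set, as in their definition.
\end{enumerate}

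The main obstacle is the EO clauses, because that is where the definitions differ in form rather than only in names. The risky points are the opposite-pair versus general-pairing formulation of the EO tractable condition, and the sign convention \(\varepsilon\) in the support-closure condition. I would try first to show that the \(\varepsilon\)-condition in \cref{def:eo-tract} matches their one-sided closure condition once the charge \(q_r\) is identified with their Hamming-imbalance parameter.

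After these equivalences, both conclusions transfer directly. Membership in \(\FP\) follows from their theorem, with the polynomial-time algorithm for the EO branches supplied by \cref{thm:external-eo} (Guan--Shao--Shi). Otherwise their theorem gives \(\SharpP\)-hardness under the Turing reductions of \cref{subsec:complexity-conventions}.
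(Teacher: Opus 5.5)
Your proposal is correct and follows essentially the same route as the paper. The paper also imports this result by citation from Meng--Wang--Xia--Zheng and treats \(\Tract\) as an equivalent reformulation of their seven conditions, with \(\EOTract\) as the opposite-pair restriction of their EO criterion. Their first two alternatives originally gave only \(\FP^{\mathsf{NP}}\), and the paper upgrades these to \(\FP\) using Guan--Shao--Shi, just as you do for the EO branches.
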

	
	Its first two alternatives were originally in \(\FP^{\mathsf{NP}}\); the
	stated \(\FP\) conclusion follows from
	\cite[Theorem~1.1 and Corollary~1.2]{GuanShaoShi2026}.
	
	\begin{theorem}[Tensor-factor extraction
		{\cite[full version, Theorem~32 and Remark~39; conference version,
			Theorem~29]{MengWangXiaZheng2025}}]
		\label{thm:external-factor}
		Let \(\Lambda\) be a finite signature set and let
		\(h=a\otimes b\) be a nonzero tensor product.  Unless
		\(\Lambda\cup\{h\}\) is common one-sided,
		\[
		\KHolant(\Lambda,a,b)\equivT\KHolant(\Lambda,h).
		\]
		In the common-one-sided exception, \(\KHolant(\Lambda,h)\) is in \(\FP\)
		when \(\EOTract((\Lambda\cup\{h\})|_{\EO})\) holds and is
		\(\SharpP\)-hard otherwise, by charge conservation and
		\cref{thm:external-eo}.
	\end{theorem}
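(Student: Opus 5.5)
The easy direction \(\KHolant(\Lambda,h)\leT\KHolant(\Lambda,a,b)\) is immediate, since \(h=a\otimes b\) is a direct gadget consisting of two disconnected vertices. All of the work is in the converse \(\KHolant(\Lambda,a,b)\leT\KHolant(\Lambda,h)\). I follow the strategy of \cite[Theorem~32]{MengWangXiaZheng2025} and of the earlier decomposition lemmas of Lin--Wang and Shao--Cai. The plan is to realize \(a\) and \(b\) separately, up to known nonzero scalars, using only copies of \(h\) and \(\Lambda\); the equivalence convention then lets us adjoin them.

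Fix an instance \(\Omega\) containing \(n\) copies of \(a\). Replace each copy of \(a\) by a copy of \(h\), wiring the \(a\)-ports as before. The \(n\) \(b\)-parts are left dangling, so the boundary tensor is \(b^{\otimes n}\). We then close these dangling parts with a fixed realizable closing gadget. The key local object is a gadget \(G_b\) over \(\Lambda\cup\{h\}\) whose external ports match those of \(b\), possibly together with some extra \(a\)-parts that are themselves paired off. It must satisfy \(\langle b, G_b\rangle\ne0\) under \(X\)-contraction.

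If \(G_b\) exists, closing every dangling \(b\) with a copy of \(G_b\) multiplies \(Z_X(\Omega)\) by the known scalar \(\langle b,G_b\rangle^{n}\). There is one complication: \(G_b\) may introduce new dangling \(a\)'s, for instance when \(G_b\) is built from another copy of \(h\). To handle this, I pair copies of \(h\) by joining their \(b\)-parts to each other. This realizes \(a\otimes a\) times the scalar \(\sum b(x)b(\bar x)\), or more generally \(a\otimes a\) times a twisted pairing of \(b\) with itself through a binary from \(\Lambda\). This reduces the problem to instances with an even number of \(a\)'s. The parity obstruction is then removed by a single further copy of \(h\) closed with a fixed gadget, or by doubling the whole instance and taking an exact algebraic root of the scalar. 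The argument for \(b\) is symmetric.

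The main obstacle is showing that some such nonvanishing pairing exists. Charge conservation is the whole difficulty. The \(X\)-edge forces the total doubled charge of a closed network to be zero. So if every support word of \(b\) has positive charge, say, then every pairing of \(b\) with copies of itself vanishes identically. The closing gadget must therefore draw compensating negative charge from \(\Lambda\).

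This is exactly where I use the hypothesis that \(\Lambda\cup\{h\}\) is not common one-sided. It supplies a signature with a support word of each charge sign. From these I first build, via pinning-free contractions, gadgets whose nonzero support lies on a single prescribed charge. Combining these with copies of \(h\), I then choose a closing gadget of the correct opposite charge. A generic-combination argument then gives a nonzero pairing: the contraction value is a polynomial in the gadget's internal structure that is not identically zero, and a finite search over bounded-size gadgets finds one. I expect this charge-balancing construction to be the step that requires real care.

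In the common-one-sided exception, no such compensation is possible. There the claimed alternative follows directly. On a one-sided set, every contributing assignment of a closed \(X\)-grid has all local charges zero. Hence \(\KHolant(\Lambda,h)=\KHolant((\Lambda\cup\{h\})|_{\EO})\), and \cref{thm:external-eo} decides its complexity.
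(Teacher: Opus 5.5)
\textbf{There is a genuine gap: the nonvanishing of the closing pairing is never proved.} Your easy direction is fine. So is your treatment of the common-one-sided exception: if all local charges have one sign, total charge zero forces each to vanish, so only EO restrictions contribute and \cref{thm:external-eo} applies.

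The converse reduction, however, rests entirely on a realizable closing gadget \(G_b\) over \(\Lambda\cup\{h\}\) with \(\langle b,G_b\rangle_X=\sum_x b(x)G_b(\bar x)\ne0\), and likewise for \(a\). You never establish that such a gadget exists. Two-sidedness only says the pairing is not forced to vanish for support reasons; note also that the two signs need not occur in a single signature. The complex sum can still cancel. For instance, \(b=\delta_{00}+\delta_{11}+i(\delta_{01}+\delta_{10})\) is supported on both charge signs, yet \(\sum_x b(x)b(\bar x)=1+1-1-1=0\).

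Your ``generic-combination'' step does not help. Realizable gadgets form a discrete family with no free parameters to specialize, so there is no polynomial whose non-vanishing you could invoke. An unbounded search also does not prove that a witness exists. Showing that some realizable closure is nonzero is exactly the substance of the theorem, which the paper imports from \cite{MengWangXiaZheng2025}. You would need a real mechanism, for example an extremal-charge argument isolating a single nonzero term, or an interpolation system whose invertibility you prove. You also cannot borrow this paper's tools for producing charged gadgets or pins, because \cref{lem:charge-bridge,lem:hc-rel-complementary-unary} themselves invoke \cref{thm:external-factor}.

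\textbf{The parity repair also fails.} Doubling \(\Omega\) only gives access to \(Z(\Omega)^2\), whichever root of the known scalar you take, and over \(\mathbb C\) this determines \(Z(\Omega)\) only up to sign. The other option, closing one extra copy of \(h\) by a fixed gadget, presupposes the missing nonzero closure.

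Worse, consider the situations where the paper actually applies this theorem: all-even sets with odd-arity factors, such as \(\DeltaZero,\DeltaOne\) from \(\DeltaZero\otimes\DeltaOne\) or from \(\DeltaOne^{\otimes2d}\) and \(\DeltaZero^{\otimes2d}\). There every gadget has even arity, so no fixed \(G_b\) closing an odd-arity \(b\) exists at all. The reduction must instead use the handshake parity of the instance. Merge an \(a\)-vertex with a \(b\)-vertex into one \(h\)-vertex. Then you only need \(a\otimes a\) and \(b\otimes b\), obtained by closing \(b\otimes b\) and \(a\otimes a\) respectively. Even these even-arity closures require a proved nonzero pairing. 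For \(b=\DeltaOne\) the self-pairing \(\sum_x b(x)b(\bar x)\) is zero, so the closure must come from \(\Lambda\), and that is the step your proposal leaves open.
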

	
	\begin{theorem}[Quaternary eight-vertex boundary
		{\cite[Theorem~3.1]{CaiFu2023}}]
		\label{thm:external-eight-vertex}
		Let \(M\) be a quaternary signature satisfying
		\[
		\supp(M)\subseteq\{x\in\{0,1\}^4:\wt(x)=2\}\cup\{0000,1111\},\qquad
		M(0000)M(1111)\ne0.
		\]
		Then \(\Holant(X\mid M)\) is \(\SharpP\)-hard unless there exist
		\(\cC\in\{\cA,\cP,\cL\}\) and \(U\in\GL_2(\AlgNums)\) such that
		\[
		U^{\mathsf T}XU\in\cC,\qquad U^{-1}M\in\cC.
		\]
		Every exception is in \(\FP\).  The source permits \(U\) over \(\mathbb C\);
		\cref{lem:algebraic-transform-witness} makes it algebraic.
	\end{theorem}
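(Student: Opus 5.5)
The plan is to derive this statement from the eight-vertex classification of Cai and Fu \cite[Theorem~3.1]{CaiFu2023} by translating its notation into ours, discarding the alternatives that our hypotheses exclude, and then making the witness matrix algebraic. No new gadget argument should be needed; the work is in matching conventions carefully.

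\textbf{Step 1: match the setting.} Cai and Fu state their dichotomy for $\Holant(\neq_2\mid f)$, where $f$ is a quaternary signature supported on the even-weight words. By \cref{rem:mwxz-notation}, their $\neq_2$ is our $X$. Our hypothesis $\supp(M)\subseteq\{\wt=2\}\cup\{0000,1111\}$ lies inside the even-weight set. So $\Holant(X\mid M)$ is literally an instance of their problem, possibly after reordering the ports so that their parametrization of the eight entries $(a,b,c,d,w,x,y,z)$ applies. Port reordering is a relabelling of the input grid and does not change complexity. In their parametrization the endpoint pair is the entry pair indexed by $0000$ and $1111$, and our hypothesis makes both entries nonzero.

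\textbf{Step 2: read off the tractable list under $M(0000)M(1111)\ne0$.} Their tractable cases consist of three transformability alternatives for $\cA$, $\cP$ and $\cL$, which have exactly the form ``$U^{\mathsf T}XU\in\cC$ and $U^{-1}M\in\cC$'' of \cref{def:common-transformability} specialized to $\{M\}$. The remaining alternatives are support-type conditions, such as six-vertex or EO-type support or the vanishing of one entry of an opposite pair. I will check each such alternative and show that it forces $M(0000)=0$ or $M(1111)=0$, so none survives. This is the step I expect to require the most care. The source lists its cases in its own parametrization and sometimes after a normalization, so each tractable alternative must be traced back to the unnormalized $M$ to confirm that it either
\begin{itemize}
\item is one of the three transformability conditions with respect to $X$, or
\item contradicts the nonvanishing of the endpoint coefficients.
\end{itemize}
If the source writes some case through $\cA$-, $\cP$- or $\cL$-transformability under an explicit matrix such as its $Z$ (our $K$), I will rewrite it as a statement about $U$ with $U^{\mathsf T}XU$ in the class. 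This uses the identity $K^{\mathsf T}K=X$ from \cref{eq:K-equivalence}.

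\textbf{Step 3: membership in $\FP$.} The source proves that each surviving exception is in $\FP$, via the holographic identity and the standard polynomial-time algorithms for $\cA$, $\cP$ and $\cL$. I will only note that the known edge-count scalars produced by the holographic identity are restored exactly, which is allowed by the conventions of \cref{subsec:complexity-conventions}.

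\textbf{Step 4: algebraic witness.} The source allows $U\in\GL_2(\mathbb C)$. Applying \cref{lem:algebraic-transform-witness} with $E=X$ and $\cF=\{M\}$ replaces any complex witness by one in $\GL_2(\AlgNums)$. Conversely, an algebraic witness is in particular a complex one. Hence the criterion is the same with either field, and the dichotomy holds as stated.
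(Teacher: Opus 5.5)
Your proposal is correct and follows the paper's route. The statement is Cai and Fu's eight-vertex dichotomy \cite[Theorem~3.1]{CaiFu2023}, read for $\Holant(X\mid M)$ with both endpoint entries nonzero, and \cref{lem:algebraic-transform-witness} with $E=X$ and $\cF=\{M\}$ replaces the complex witness $U$ by an algebraic one. The paper gives no separate argument for your Step~2 case check; it is the bookkeeping implicit in quoting the source in this endpoint-nondegenerate form.
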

	
	\begin{theorem}[Even-degree counting CSP
		{\cite[full version, Theorem~4.1]{CaiLuXia2018RealHolantc}}]
		\label{thm:external-csp2}
		Let \(\cG\) be a finite algebraic Boolean signature set.  Then
		\(\#\mathrm{CSP}^2(\cG)\) is \(\SharpP\)-hard unless one common alternative
		holds:
		\[
		\cG\subseteq\cA,\qquad \cG\subseteq\cP,\qquad
		\cG\subseteq\cL,\qquad\text{or}\qquad \cG\subseteq\cA_2^1.
		\]
		Each exceptional case is computable in polynomial time.
	\end{theorem}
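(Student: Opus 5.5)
This statement is imported rather than proved here. The plan is to invoke \cite[full version, Theorem~4.1]{CaiLuXia2018RealHolantc} directly, and then check that its hypotheses and its list of tractable classes agree with the conventions fixed in \cref{subsec:holant-definitions,subsec:standard-classes}.

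\textbf{Hypotheses.} The source treats finite sets of complex-valued Boolean signatures under \(\#\mathrm{CSP}^2\), where every variable has even degree, so that the variable side uses \(\Eq_{2k}\), \(k\ge1\). A finite algebraic set \(\cG\) is a special case. Our convention of deleting isolated variables, with the factor \(2^z\) restored afterwards, changes values only by a known scalar. Hardness under polynomial-time Turing reductions therefore transfers unchanged.

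\textbf{Matching the tractable classes.} The source's affine, product and local-affine classes are, up to notation, our \(\cA\), \(\cP\) and \(\cL\); in particular \(\cL\) is defined by the same support-wise \(T_2^{\alpha}\) twist. The fourth class is the affine class twisted by \(\diag(1,e^{\pi i/4})\). We must check that the direction of the twist is irrelevant. Since \(T_2^{2}=\diag(1,i)\) preserves \(\cA\), we have \(T_2^{-1}=T_2\cdot\diag(1,-i)\), so \(T_2^{-1}f\in\cA\) if and only if \(T_2f\in\cA\). Hence the source's class equals our \(\cA_2^1\). This notational comparison is the only step needing real care.

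\textbf{Tractable side.} For completeness I would recall why each exception is in \(\FP\).
\begin{itemize}
\item For \(\cG\subseteq\cA\), \(\cP\) or \(\cL\), the standard counting-CSP algorithms apply, and equalities of every arity lie in each class.
\item For \(\cG\subseteq\cA_2^1\), perform the holographic change \(T_2\) on the constraint side. Then each variable vertex \(\Eq_{2k}\) becomes \(\Eq_{2k}(T_2)^{\otimes 2k}\), which is the weighted equality \(\delta_{\zero}+\rho_2^{2k}\delta_{\one}\) with \(\rho_2^{2k}=i^k\). This lies in \(\cA\), so the transformed instance is an affine \(\#\mathrm{CSP}\).
\end{itemize}
All arithmetic is exact over \(\AlgNums\), as fixed in \cref{subsec:complexity-conventions}.
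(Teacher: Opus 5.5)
Your handling of the twist direction is correct and agrees with the paper's remark that \(T_2^2=\diag(1,i)\) preserves \(\cA\). Your tractable-side checks are also fine.

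The gap is in the hypothesis-matching step. The cited \cite[full version, Theorem~4.1]{CaiLuXia2018RealHolantc} is a \(\Holant^c\)-type result: it assumes both pinnings \(\DeltaZero,\DeltaOne\) are freely available. What it classifies is therefore, in effect, \(\#\mathrm{CSP}^2(\cG\cup\{\DeltaZero,\DeltaOne\})\). Your claim that a finite algebraic \(\cG\) is ``a special case'' of the source's setting is not right. As written, the hardness direction does not follow, because \(\SharpP\)-hardness of the pinned problem does not by itself give hardness of \(\#\mathrm{CSP}^2(\cG)\). This, not the twist convention, is the step that needs real care.

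The missing ingredient is a reduction \(\#\mathrm{CSP}^2(\cG,\DeltaZero,\DeltaOne)\leT\#\mathrm{CSP}^2(\cG)\). The paper obtains it from the fact that \(\Eq_4\) is available on the variable side of \(\#\mathrm{CSP}^2\). The chain is as follows.
\begin{enumerate}
\item \cref{cor:higher-equality-unary} with \(d=4\) gives \(\Holant(\cG,\Eq_4,\DeltaZero,\DeltaOne)\leT\Holant(\cG,\Eq_4)\).
\item Every ordinary-Holant oracle instance arising there is returned to \(\#\mathrm{CSP}^2(\cG)\) form. One merges adjacent equality vertices, whose merged arities stay even, and subdivides each remaining \(\cG\)--\(\cG\) edge with \(\Eq_2\).
\item In the other direction, each \(\Eq_{2k}\) with \(k\ge2\) is a chain of \(\Eq_4\)'s. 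Hence pinned \(\#\mathrm{CSP}^2\) instances are instances over \(\cG\cup\{\Eq_4,\DeltaZero,\DeltaOne\}\).
\end{enumerate}
Adding this chain of reductions before invoking the source closes the gap.
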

	
	The source assumes that both pinnings are freely available.  In the present
	formulation, \(\Eq_4\) is available on the variable side of
	\(\#\mathrm{CSP}^2\), and \cref{cor:higher-equality-unary} supplies those
	pinnings.  Any ordinary-Holant oracle instance arising in this use is
	returned to bipartite CSP form by merging adjacent even-arity equality
	vertices and subdividing every remaining \(\cG\)--\(\cG\) edge with
	\(\Eq_2\).  The class denoted \(\cA^\alpha\) there is our \(\cA_2^1\);
	the inverse-twist convention does not change the class because
	\(T_2^2=\diag(1,i)\) preserves \(\cA\).
	
	\begin{theorem}[Degree-multiple counting CSP
		{\cite[full version, Theorem~5.3; proof in
			Section~5.5]{CaiFuShao2020Entanglement}}]
		\label{thm:external-cspq}
		Let \(q\ge2\) be an integer and let \(\cG\) be a finite algebraic Boolean signature set.  If
		\(\#\mathrm{CSP}^q(X,\cG)\) is not \(\SharpP\)-hard, then either
		\[
		\cG\subseteq\cP\qquad\text{or}\qquad
		\cG\subseteq\cA_q^r\quad\text{for one }r\in\{1,\ldots,q\}.
		\]
		Both alternatives are computable in polynomial time.
	\end{theorem}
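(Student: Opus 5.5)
The plan is to import the theorem from the cited source after matching conventions, and to verify the tractability claim directly. The hardness direction is what \cite[full version, Theorem~5.3]{CaiFuShao2020Entanglement} proves; its proof is in Section~5.5 there. I will check three translations. First, the degree-multiple restriction there matches ours: variables of degree divisible by \(q\), that is, the variable side uses \(\Eq_{kq}\) for \(k\ge1\), with isolated variables handled as in \cref{subsec:holant-definitions}. Second, their problem allows binary disequality on the constraint side, which is our \(X\in\cG\cup\{X\}\). Third, their twisted affine families coincide with our \(\cA_q^r=\{f:(T_q^{-r})^{\otimes\arity(f)}f\in\cA\}\). Since \(T_q^q=\diag(1,i)\) preserves \(\cA\), the exponent may be read modulo \(q\), which gives \(r\in\{1,\ldots,q\}\). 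If their twist uses \(T_q^{r}\) rather than \(T_q^{-r}\), the sign only permutes the residues \(r\bmod q\), so the alternative as a whole is unchanged. Values in the source may be complex, but our inputs are algebraic, so no witness issue arises: the classes are explicit and contain no free basis.

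For the polynomial-time claim, I treat the two alternatives separately. If \(\cG\subseteq\cP\), then \(X\in\cP\) and every \(\Eq_{kq}\in\cP\), so the whole instance is a product-type \(\#\mathrm{CSP}\). It is evaluated by propagating equalities and disequalities along connected components and multiplying unary weights. If \(\cG\subseteq\cA_q^r\), I apply the holographic transformation \(T_q^{-r}\) on the constraint side and \(T_q^{r}\) on the variable side. Each constraint becomes a member of \(\cA\). The binary \(X\) becomes \((T_q^{-r})^{\otimes2}X\), which is \(\rho_q^{-r}X\in\cA\). Each variable \(\Eq_{kq}\) becomes \(\one[x_1=\cdots=x_{kq}]\rho_q^{rkq}\) on the all-one word, so its phase is \(i^{rk}\) on the all-one input, and this lies in \(\cA\) because it is a unary fourth-root phase on a generalized equality. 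The transformed instance is therefore an affine \(\#\mathrm{CSP}\), computable in polynomial time by the standard Gauss-sum algorithm for quadratic \(\mathbb Z_4\) phases over affine supports.

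The only step I expect to need care is the convention check: matching the source's parametrization of its twisted classes, and its treatment of whether \(X\) is part of the language, to our \(\cA_q^r\) and \(\#\mathrm{CSP}^q(X,\cG)\). Everything else is bookkeeping with known nonzero scalars, which are restored by exact postprocessing.
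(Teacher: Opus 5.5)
Your proposal is correct and follows the paper's route. The paper gives no independent argument for this statement; it imports the theorem from \cite[full version, Theorem~5.3]{CaiFuShao2020Entanglement}, as you plan to do. Your direct check of the tractable side is accurate and goes beyond what the paper writes: equality/disequality propagation handles \(\cP\), and in the twisted case the transformation \(T_q^{-r}\) on constraints and \(T_q^{r}\) on the variable-side \(\Eq_{kq}\) gives \(\rho_q^{-r}X\) and \(\delta_{\zero^{kq}}+i^{rk}\delta_{\one^{kq}}\), hence an instance of \(\#\mathrm{CSP}(\cA)\).
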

	
	\subsubsection{Decidability and polynomial-time evaluation}
	\label{subsec:predicate-algorithms}
	
	\begin{theorem}[Recognition and evaluation]
		\label{thm:predicate-algorithms}
		For every finite algebraic signature set \(\cF\), the predicate \(\Tract(\cF)\),
		with the preprocessing convention of \cref{subsec:zero-nullary}, is
		decidable.  If it holds, then \(\Holant(\cF)\in\FP\).
	\end{theorem}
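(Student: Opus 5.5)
The proof goes alternative by alternative through the seven disjuncts of \(\Tract(\cF)\). For each one we give a decision procedure, then a polynomial-time evaluation algorithm for the case where it holds. First apply the preprocessing of \cref{subsec:zero-nullary}: delete zero and nullary signatures, and restore their known scalar factors at the end of evaluation. Then compute \(\widehat{\cF}=K^{-1}\cF\) exactly in a number field containing \(i\), \(\sqrt2\) and all entries of \(\cF\). With the exact-input conventions of \cref{subsec:complexity-conventions}, zero tests and arithmetic on these entries are exact.

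\textbf{Combinatorial alternatives.}
\begin{enumerate}
\item Alternatives \ref{case:one-sided} and \ref{case:single-weight} are decided by inspecting supports. The charges \(q_r(x)\) of support words give common one-sidedness, single-layer support and two-sidedness directly.
\item For these two alternatives we form \(\widehat{\cF}|_{\EO}\), respectively the completion \(\widehat{\cF}_{\to\EO}\) of \cref{eq:eo-completion}, and test \(\EOTract\). Condition~1 of \cref{def:eo-tract} ranges over finitely many triples \(\alpha,\beta,\gamma\), so it is a finite support check. Condition~2 ranges over finitely many oriented pairings \(\pi\). Each \(g^\pi\) must be tested for membership in \(\cA\) or \(\cP\): check the affine support, solve for the \(i^{Q}\) phase, or check for a product of unary/equality/disequality factors. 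Both tests are finite over the exact entries, done once for each shared choice of \(\varepsilon\) and \(\cC\).
\item Alternatives \ref{case:arity-two} and \ref{case:matching} need the unique prime factorization of \cite[full version, Lemma~2.13]{CaiFuShao2020ARS}. It is computable because a tensor splits across a port bipartition iff the corresponding flattening has rank one, and there are finitely many bipartitions. Then check that every prime factor of \(f\) has arity at most two, respectively that every prime factor of \(K^{-1}f\) or of \(XK^{-1}f\) lies in \(\cM\).
\end{enumerate}

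\textbf{Transformability alternatives.} Alternatives \ref{case:affine}--\ref{case:local-affine} are decided by \cref{lem:algebraic-transform-witness} with \(E=I\). This also produces an algebraic witness \(M\) whenever one exists.

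\textbf{Evaluation.}
\begin{itemize}
\item In cases \ref{case:one-sided} and \ref{case:single-weight}, pass to \(\KHolant(\widehat{\cF})\) via \cref{eq:K-equivalence} and invoke \cref{thm:external-single-weight} together with \cref{thm:external-eo}. In the one-sided case, charge conservation forces every contributing assignment to use only central words at every vertex. So the instance value equals the value with each \(\widehat f\) replaced by \(\widehat f|_{\EO}\), or the instance contributes zero when a vertex has no central support, and \cref{thm:external-eo} applies. In the single-weight case, \cref{thm:external-single-weight} states directly that the exception is in \(\FP\).
\item In case \ref{case:arity-two}, each vertex factors into unary and binary pieces. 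The grid then splits into paths and cycles, which are evaluated by products of \(2\times2\) matrices and traces.
\item In case \ref{case:matching}, work in \(K\)-coordinates. Each prime factor has support on words of weight at most one (after an \(X\) flip in the second variant). This is the known matchgate/``at most one'' tractable case of \cite{MengWangXiaZheng2025}. We plan to cite its algorithm rather than reprove it: the factors are split into disjoint pieces, and the sum reduces to counting weighted choices along the resulting graph by a local dynamic programme.
\item In cases \ref{case:affine}--\ref{case:local-affine}, the holographic identity gives \(\Holant(\cF)=\Holant(M^{\mathsf T}M\mid M^{-1}\cF)\), with every signature in a single class \(\cC\). The bipartite instance is then a \(\#\mathrm{CSP}\) instance over \(\cC\) once each edge binary \(M^{\mathsf T}M\in\cC\) is treated as a constraint. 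It is tractable by the standard algorithms: Gauss sums for \(\cA\), connected-component propagation for \(\cP\), and the local-affine algorithm for \(\cL\), all recorded as exceptions in \cref{thm:external-csp2}.
\end{itemize}

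\textbf{Main obstacle.} The expected difficulty is bookkeeping rather than mathematics. We must check that each cited algorithm accepts our exact algebraic encoding. We must also check that in \(\cL\) the shared basis makes the edge signature local-affine together with the rest of the set. This is guaranteed because \cref{def:common-transformability} includes the edge in \(\cC\).
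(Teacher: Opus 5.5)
The paper gives no argument for this statement; it cites \cite[Theorem~2.22]{Liu2026DEQ2Holant}. You instead rebuild the statement disjunct by disjunct. You use the component results the paper already quotes: \cref{thm:external-eo,thm:external-single-weight}, \cref{lem:algebraic-transform-witness} with \(E=I\), and the tractable side of \cref{thm:external-csp2}. To these you add finite exact tests on supports, prime factorizations, and membership in \(\cA,\cP,\cM\). Your argument is sound. It shows which ingredients carry the weight and that recognition is a finite collection of exact tests. The citation is shorter and inherits the exact-encoding bookkeeping. Your route is not independent of that source, since \cref{lem:algebraic-transform-witness} is itself \cite[Lemma~2.15]{Liu2026DEQ2Holant}.

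Two points should be tightened.
\begin{itemize}
\item For alternative~\ref{case:matching}, no matchgate algorithm is involved, and you need not cite one. The global transformation by \(X\) is legal because \(X^{\mathsf T}XX=X\). After applying it where needed and splitting factors into separate vertices, every \(X\)-edge has exactly one endpoint reading \(1\), and every \(\cM\)-vertex reads at most one \(1\). Orient each edge toward its \(1\)-end; every vertex then has in-degree at most one. A connected component with \(n\) vertices therefore has \(n\) contributing configurations if it is a tree (choose the root), two if it is unicyclic, and none if it has more edges. This gives the polynomial-time evaluation directly.
\item For alternatives~\ref{case:affine}--\ref{case:local-affine}, state explicitly that the bipartite instance is a \(\#\mathrm{CSP}^2\) instance, because every edge variable occurs in exactly two constraints. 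That is the setting \cref{thm:external-csp2} covers. It matters for \(\cL\), which does not contain the odd-arity equalities implicit in ordinary \(\#\mathrm{CSP}\); for example \(\Eq_3\notin\cL\). So describing the instance as ``a \(\#\mathrm{CSP}\) instance over \(\cC\)'' is the wrong framing in the local-affine case.
\end{itemize}
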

	
	\begin{proof}
		This is \cite[Theorem~2.22]{Liu2026DEQ2Holant}.
	\end{proof}
	
	\subsection{Problem equivalence and normalization}
	\label{sec:retained-normalization}
	
	\subsubsection{Factor closure}
	\label{subsec:factor-saturation}
	
	For a preprocessed finite signature set \(\Lambda\), choose an exact
	tensor-prime factorization of every member and let \(\Pi(\Lambda)\) be the
	finite set of all prime factors, with their ordered port blocks.  We call
	\(\Lambda\) \emph{factor-closed} when \(\Pi(\Lambda)\subseteq\Lambda\).
	
	\begin{lemma}[Factor closure]
		\label{lem:factor-saturation}
		If \(\Lambda\) is not common one-sided, then
		\begin{equation}
			\KHolant(\Lambda)
			\equivT
			\KHolant(\Lambda\cup\Pi(\Lambda)).
			\label{eq:factor-saturation}
		\end{equation}
		Accordingly, under the equivalence convention we replace \(\Lambda\) by this
		union, keep the same notation, and assume \(\Pi(\Lambda)\subseteq\Lambda\).
		Repeated factorization terminates because every proper positive-arity factor
		has smaller arity than its parent.
	\end{lemma}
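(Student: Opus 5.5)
The proof iterates \cref{thm:external-factor}, once for each tensor-prime factor.

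Fix exact tensor-prime factorizations of all members of \(\Lambda\). For each \(h\in\Lambda\) with factorization \(h=p_1\otimes\cdots\otimes p_k\) (up to port permutation and a known scalar), we adjoin the factors \(p_1,\ldots,p_k\) one at a time. At each step we split off one factor, \(h=a\otimes b\) with \(a=p_1\) and \(b=p_2\otimes\cdots\otimes p_k\). We then apply \cref{thm:external-factor} with the current set \(\Lambda'\) in the role of \(\Lambda\), where \(\Lambda'\) contains \(h\). This gives
\[
\KHolant(\Lambda',a,b)\equivT\KHolant(\Lambda',h)=\KHolant(\Lambda').
\]
Since \(b\) is again a product of primes, repeating the step on \(b\) adjoins every prime factor. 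The set stays finite, and the chain of \(\equivT\) has length bounded by the total number of prime factors. That bound is a constant, because \(\Lambda\) is fixed. The composite of boundedly many polynomial-time Turing reductions is again one. Port permutations and the known scalars are absorbed by the equivalence convention of \cref{subsec:reduction-conventions}.

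The one hypothesis to check at every application is that \(\Lambda'\cup\{h\}=\Lambda'\) is not common one-sided. This holds because each step only enlarges the set, and \(\Lambda\subseteq\Lambda'\). The definition of two-sidedness is existential: some member has a support word of negative charge, and some member has one of positive charge. These witnesses already occur in \(\Lambda\), so they persist in every superset. Hence the exception in \cref{thm:external-factor} never arises.

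For termination, we only use the fixed exact factorizations. Each prime factor is by definition tensor-prime, so no further splitting is needed; then \(\Pi(\Lambda)\subseteq\Lambda\cup\Pi(\Lambda)\) holds by construction. If one instead repeats factorization on newly adjoined members, the process still terminates. Every proper factor has strictly smaller positive arity, so arity gives a well-founded descent. Uniqueness of prime factorization \cite[full version, Lemma~2.13]{CaiFuShao2020ARS} shows that the factors of the factors are again among the chosen primes, up to permutation and scalar. Therefore the enlarged set is factor-closed.

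The main point needing care is the two-sidedness invariant. It is straightforward here only because we never delete signatures. In particular, we must not replace \(h\) by \(a,b\). We keep \(h\) in the set, so the reduction is a genuine enlargement in the sense of \eqref{eq:finite-equivalent-enlargement}.
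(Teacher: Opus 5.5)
Your proof is correct. The paper itself proves this lemma only by citing \cite[Lemma~2.23]{Liu2026DEQ2Holant}, and your argument is the natural proof of that statement: iterate \cref{thm:external-factor} on the current set, which already contains \(h\), and note that the common-one-sided exception never arises because two-sidedness is existential and survives every enlargement. Your treatment of termination and of factor-closedness of \(\Lambda\cup\Pi(\Lambda)\) is also sound.
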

	
	\begin{proof}
		This is \cite[Lemma~2.23]{Liu2026DEQ2Holant}.
	\end{proof}
	
	For a factor-closed signature set \(\Lambda\) containing a tensor-prime
	signature of arity at least three, put
	\begin{equation*}
		\nu(\Lambda)
		=\min\{\arity(h):h\in\Lambda
		\text{ is tensor-prime and }\arity(h)\ge3\}.
	\end{equation*}
	This parameter is used only after the branch in which every prime factor has
	arity at most two has been removed.
	
	\subsubsection{Downward inheritance and coordinate transport}
	
	\begin{lemma}[Downward inheritance]
		\label{lem:downward-inheritance}
		If \(\cG\subseteq\cG'\) are preprocessed signature sets and
		\(\Tract(K\cG')\) holds, then \(\Tract(K\cG)\) holds.
	\end{lemma}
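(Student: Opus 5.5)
The plan is to verify the seven alternatives of \(\Tract\) one at a time: whichever alternative holds for \(K\cG'\), the same alternative, with the same witness, holds for \(K\cG\). Here \(\widehat{K\cG}=\cG\subseteq\cG'=\widehat{K\cG'}\), and \(\cG\) and \(\cG'\) are preprocessed, so no zero or nullary members have to be discarded first.

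\emph{Set-wise universal alternatives.} Alternatives \ref{case:arity-two}, \ref{case:matching}, \ref{case:affine}, \ref{case:product} and \ref{case:local-affine} are all of the form "every member lies in a fixed class", possibly after one shared matrix \(U\) with \(U^{\mathsf T}XU\in\cC\). Each is closed under passing to a subset: for \(\angles{\cT}\), \(\angles{K\cM}\) and \(\angles{KX\cM}\) this is immediate. For \(\cC\)-transformability (\cref{def:common-transformability}), the same \(U\) works, because the edge condition does not depend on the set and \(U^{-1}\cG\subseteq U^{-1}\cG'\subseteq\cC\).

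\emph{EO-based alternatives.} For alternative \ref{case:one-sided}, if all members of \(\cG'\) lie in \(HW^{\ge}\) (respectively \(HW^{\le}\)), so do all members of \(\cG\). Also \(\cG|_{\EO}\subseteq\cG'|_{\EO}\). Since both conditions of \cref{def:eo-tract} quantify universally over members, with the sign \(\varepsilon\) and the class \(\cC\) shared, \(\EOTract\) is inherited by subsets; if \(\cG|_{\EO}=\varnothing\), it holds vacuously. For alternative \ref{case:single-weight}, the single-layer property and the inclusion \(\cG_{\to\EO}\subseteq\cG'_{\to\EO}\) both pass to subsets. The two-sidedness clause, however, need not: \(\cG\) may contain only signatures of one charge sign.

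\emph{The main obstacle} is this last subcase, where \(\cG'\) satisfies alternative \ref{case:single-weight} but \(\cG\) is common one-sided, say every member of \(\cG\) lies in \(HW^{\ge}\). I would then show that \(\cG\) satisfies alternative \ref{case:one-sided}. Each \(g\in\cG\) has a single layer \(d_g\ge\arity(g)/2\). If \(2d_g=\arity(g)\), then \(g_{\to\EO}=g=g|_{\EO}\). Otherwise \(g|_{\EO}=0\) and \(g\) contributes nothing to \(\cG|_{\EO}\). Hence \(\cG|_{\EO}\subseteq\cG_{\to\EO}\subseteq\cG'_{\to\EO}\), and \(\EOTract(\cG|_{\EO})\) follows from subset closure. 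The \(HW^{\le}\) case is symmetric.
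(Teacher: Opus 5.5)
Your proof is correct. The paper gives no argument of its own and simply cites \cite[Lemma~2.24]{Liu2026DEQ2Holant}. Your clause-by-clause check is a self-contained replacement for that citation.

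You also correctly isolate the one point that is not plain subset closure: the two-sidedness clause of alternative~\ref{case:single-weight} can fail for \(\cG\). In that case \(\cG\) is common one-sided, and the inclusion \(\cG|_{\EO}\subseteq\cG_{\to\EO}\subseteq\cG'_{\to\EO}\) gives alternative~\ref{case:one-sided}.
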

	
	\begin{proof}
		This is \cite[Lemma~2.24]{Liu2026DEQ2Holant}.
	\end{proof}
	
	\begin{lemma}[\(GO(X)\)-invariance of tractability]
		\label{lem:GOX-tract-invariance}
		Let \(\cG\) be a finite preprocessed signature set and let
		\(D\in GO(X)\).  Then
		\begin{equation*}
			\Tract(K\cG)
			\quad\Longleftrightarrow\quad
			\Tract\bigl(K(D\cG)\bigr).
		\end{equation*}
	\end{lemma}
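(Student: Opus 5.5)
The plan is to check each of the seven alternatives of \(\Tract\) separately. For each one we show that if it holds for \(\cF=K\cG\), then the same alternative, or another alternative from the list, holds for \(K(D\cG)\). Since \(D^{-1}\in GO(X)\), the converse follows by applying the forward direction to \(D\cG\) and \(D^{-1}\). Preprocessing is unaffected, because \(D\) is invertible and preserves arity, so nonzero positive-arity signatures stay nonzero and positive-arity. The first step is to describe \(GO(X)\) explicitly. From \(D^{\mathsf T}XD=\lambda X\), the two columns of \(D\) must be isotropic for the form \(X\). The only isotropic vectors are multiples of \(e_0\) and \(e_1\), so \(D\) is either diagonal, \(\diag(a,b)\), or anti-diagonal, \(X\diag(a,b)\). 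Thus \(D\) is a diagonal scaling, possibly composed with a global bit flip \(X\).

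For the holographic alternatives \ref{case:affine}--\ref{case:local-affine}, the invariance follows directly from the \(K\)-presentation in \cref{def:common-transformability}. Suppose \(U\) witnesses \(\cC\)-transformability of \(\cG\), so that \(U^{\mathsf T}XU\in\cC\) and \(U^{-1}\cG\subseteq\cC\). Put \(U'=DU\). Then
\[
U'^{\mathsf T}XU'=U^{\mathsf T}(D^{\mathsf T}XD)U=\lambda\,U^{\mathsf T}XU\in\cC
\]
and \(U'^{-1}D\cG=U^{-1}\cG\subseteq\cC\), using that \(\cC\) is closed under nonzero scalars. Alternative \ref{case:arity-two} only involves arities and tensor factorizations, and both are preserved by any invertible local map. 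For alternative \ref{case:matching}, note that \(D\) and \(XD\) act factorwise on the prime factors of \(K^{-1}f\). A diagonal \(D\) preserves \(\cM\), since it preserves supports of weight at most one. An anti-diagonal \(D\) sends \(\cM\) to \(X\cM\), so it interchanges the two disjuncts \(\angles{K\cM}\) and \(\angles{KX\cM}\).

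For the charge-based alternatives \ref{case:one-sided} and \ref{case:single-weight}, write \(D=\diag(a,b)\) or \(D=X\diag(a,b)\). A diagonal \(D\) multiplies \(f(x)\) by \(a^{r-\wt(x)}b^{\wt(x)}\). This preserves supports, Hamming layers, one-sidedness, and the EO restriction. On an EO word of arity \(2d\) the factor is the constant \((ab)^d\), so each member of \(\cG|_{\EO}\) changes only by a nonzero scalar. The same holds for the EO completions, since the pinned unaries \(\DeltaZero,\DeltaOne\) also change only by scalars. A global bit flip \(X\) complements all supports and negates all charges. It therefore exchanges the \(HW^{\ge}\) and \(HW^{\le}\) conditions in \ref{case:one-sided} and preserves the two-sided condition in \ref{case:single-weight}. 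What must be verified is that \(\EOTract\) is preserved under complementing every signature, with the sign \(\varepsilon\) replaced by \(-\varepsilon\). Condition~1 of \cref{def:eo-tract} transforms correctly because \(\bar\alpha\oplus\bar\beta\oplus\bar\gamma=\overline{\alpha\oplus\beta\oplus\gamma}\) and \(q(\bar\delta)=-q(\delta)\). For condition~2, complementing \(g\) turns \(g^\pi\) into \((g^\pi)\) composed with a bit flip of all variables, and both \(\cA\) and \(\cP\) are closed under bit flips. For the completions there is one extra check: \(\bar g\) is supported on weight \(k-d\), and its completion pads with the opposite unary. We must check that \((\bar g)_{\to\EO}\) equals \(\overline{g_{\to\EO}}\) up to port order and scalar, and that the predicate is invariant under port permutation.

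The main obstacle is this last bookkeeping. The completion lengths \(2d-k\) and \(k-2d\) swap under \(d\mapsto k-d\), and the case \(2d=k\) needs separate care. The remaining alternatives are straightforward; alternatively, this lemma may simply be cited as in \cite{Liu2026DEQ2Holant}.
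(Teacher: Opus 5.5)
Your proof is correct. Since the paper gives no argument of its own here and only cites \cite[Lemma~2.25]{Liu2026DEQ2Holant}, your alternative-by-alternative check is a self-contained route rather than a reconstruction of the printed proof.

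Your description of \(GO(X)\) as the invertible diagonal and anti-diagonal matrices is right. With it, each alternative is stable on its own:
\begin{itemize}
\item \ref{case:affine}--\ref{case:local-affine} via \(U\mapsto DU\) and closure of \(\cC\) under nonzero scalars;
\item \ref{case:arity-two} and \ref{case:matching} by the factorwise action, with the anti-diagonal part exchanging the two disjuncts of \ref{case:matching};
\item \ref{case:one-sided} and \ref{case:single-weight} because a diagonal \(D\) only rescales each EO-restricted or single-layer member by a nonzero constant, while the bit flip exchanges \(HW^{\ge}\) with \(HW^{\le}\) and \(\varepsilon\) with \(-\varepsilon\).
\end{itemize}

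The step you flag as the main obstacle is in fact immediate. If \(g\) has arity \(k\) and is supported on weight \(d\), then \(Xg\) is supported on weight \(k-d\). Because \(X\DeltaZero=\DeltaOne\), checking the three cases \(2d>k\), \(2d<k\), \(2d=k\) gives \((Xg)_{\to\EO}=X(g_{\to\EO})\) exactly, with no port permutation or scalar; for \(2d=k\) both sides equal \(Xg\).

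You also do not need closure of \(\cA\) and \(\cP\) under bit flips for condition~2 of \(\EOTract\). One has \((Xg)^\pi=g^{\pi'}\), where \(\pi'\) reverses the orientation of every pair, and the condition already quantifies over all oriented pairings.

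Your version makes explicit that the lemma is purely definitional, with no reduction or complexity input behind it. The citation buys brevity.
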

	
	\begin{proof}
		This is \cite[Lemma~2.25]{Liu2026DEQ2Holant}.
	\end{proof}
	
	\begin{lemma}[Coordinate transport along equivalent augmentations]
		\label{lem:normalization-ledger}
		Let \(\cG_0\) be a finite preprocessed algebraic signature set, and let
		\(L_0\) be a
		number field containing its entries, their conjugates, and every algebraic
		constant fixed before the construction.  Consider a sequence
		\((A_i,\Lambda_i)\), in which every \(\Lambda_i\) is a finite preprocessed
		algebraic signature set, beginning with
		\[
		A_0=I,
		\qquad
		\Lambda_0=\cG_0,
		\]
		in which each step is one of the following.
		\begin{enumerate}
			\item An augmentation satisfying the following conditions: set
			\(A_{i+1}=A_i\), choose a finite
			\(\Lambda_{i+1}\supseteq\Lambda_i\) with
			\[
			\KHolant(\Lambda_{i+1})\leT\KHolant(\Lambda_i),
			\]
			and require every newly adjoined positive-arity signature \(h\) to
			satisfy
			\[
			h\doteq A_i^{\otimes\arity(h)}h_0
			\quad\text{for a tensor }h_0\text{ over }L_0.
			\]
			\item A global normalization: choose \(D_i\in GO(X)\), and set
			\[
			A_{i+1}=D_iA_i,
			\qquad
			\Lambda_{i+1}=D_i\Lambda_i,
			\]
			recording the known nonzero scalar contributed by each transformed
			native edge as in \cref{eq:GOX-global-scalar}.
		\end{enumerate}
		Then, at every index \(i\),
		\begin{equation}
			A_i\cG_0\subseteq\Lambda_i,
			\qquad
			A_i^{\mathsf T}XA_i=\mu_iX
			\quad(\mu_i\in\AlgNums^\times),
			\qquad
			\KHolant(\Lambda_i)\equivT\KHolant(\cG_0),
			\label{eq:normalization-ledger}
		\end{equation}
		and every positive-arity \(h\in\Lambda_i\) satisfies
		\[
		h\doteq A_i^{\otimes\arity(h)}h_0
		\quad\text{for a tensor }h_0\text{ over }L_0.
		\]
		
		At an index where every nonzero directly realizable ordered binary is
		nonsingular and has finite projective transfer order, let \(G(\Lambda_i)\)
		be the group of its projective transfer classes \([BX]\), and put
		\[
		\overline G_i=[A_i]^{-1}G(\Lambda_i)[A_i].
		\]
		Then \(\overline G_i\le\PGL_2(L_0)\).  On any consecutive sequence
		of such indices, a global normalization leaves \(\overline G_i\) unchanged,
		whereas an augmentation can only enlarge it.  Thus all groups compared in
		the finite-group argument lie in the single fixed field \(L_0\); any generic
		interpolation that leaves this field is used only when it completes the
		complexity classification of the current case.
		
		If \(\Tract(K\Lambda_i)\) holds, then \(\Tract(K\cG_0)\) holds.
		Separately, if one matrix \(V\) is a common \(\cC\) K-presentation of
		\(\Lambda_i\), where \(\cC\) is closed under nonzero scalar multiplication,
		then
		\[
		W=A_i^{-1}V
		\]
		is a common \(\cC\) K-presentation of \(\cG_0\).  In particular, this
		applies to \(\cA,\cP,\cL\).
	\end{lemma}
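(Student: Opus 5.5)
The plan is induction on the index \(i\), checking that each of the two step types preserves the displayed invariants \cref{eq:normalization-ledger} and the representation \(h\doteq A_i^{\otimes\arity(h)}h_0\). Afterwards we read off the group statement and the two transport statements from these invariants.

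\emph{Base case and invariants.} At \(i=0\) we have \(A_0=I\) and \(\Lambda_0=\cG_0\), so all three conditions hold trivially with \(\mu_0=1\). Each \(h\in\cG_0\) has entries in \(L_0\), so \(h_0=h\) works.

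\emph{Augmentation step.} Here \(A_{i+1}=A_i\), and \(\Lambda_{i+1}\supseteq\Lambda_i\), so the inclusion \(A_i\cG_0\subseteq\Lambda_{i+1}\) persists. The reduction \(\KHolant(\Lambda_{i+1})\leT\KHolant(\Lambda_i)\) is given, and the converse holds by restriction. The representation of new members is exactly the hypothesis.

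\emph{Normalization step.} Let \(D_i^{\mathsf T}XD_i=\lambda X\). Then
\[
(D_iA_i)^{\mathsf T}X(D_iA_i)=\lambda\mu_iX,
\qquad
D_iA_i\cG_0\subseteq D_i\Lambda_i .
\]
The equivalence \(\KHolant(D_i\Lambda_i)\equivT\KHolant(\Lambda_i)\) follows from \cref{eq:GOX-global-scalar}, since the edge-count scalar is known and nonzero. Finally, \(D_ih\doteq(D_iA_i)^{\otimes r}h_0\) with the same \(h_0\).

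\emph{Group statement.} Take a directly realizable nonsingular binary \(B\in\Lambda_i\) and write \(B\doteq A_i^{\otimes 2}B_0\), that is, \(B=\rho A_iB_0A_i^{\mathsf T}\) with \(B_0\) over \(L_0\). Then
\[
[A_i]^{-1}[BX][A_i]=[B_0A_i^{\mathsf T}XA_i]=[\mu_iB_0X]=[B_0X],
\]
and \(B_0X\) lies in \(\PGL_2(L_0)\). The key point is the relation \(A_i^{\mathsf T}X=\mu_iXA_i^{-1}\), which removes \(A_i\) from the conjugated transfer.

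Under a normalization, realizable binaries over \(D_i\Lambda_i\) correspond bijectively to those over \(\Lambda_i\) by applying \(D_i\) to the whole gadget. The scalars coming from internal edges are nonzero, so \(G(D_i\Lambda_i)=[D_i]G(\Lambda_i)[D_i]^{-1}\), and therefore \(\overline G\) is unchanged. Under an augmentation, every gadget over \(\Lambda_i\) is still a gadget over \(\Lambda_{i+1}\) and \(A\) is fixed, so \(\overline G\) can only grow.

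\emph{Transport of tractability.} First, \(A_i\cG_0\subseteq\Lambda_i\), so \(\Tract(K\Lambda_i)\) implies \(\Tract(KA_i\cG_0)\) by \cref{lem:downward-inheritance}. Next, \(A_i\) is a product of elements of \(GO(X)\) and hence lies in \(GO(X)\). Applying \cref{lem:GOX-tract-invariance} then gives \(\Tract(K\cG_0)\).

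For presentations, suppose \(V^{\mathsf T}XV\in\cC\) and \(V^{-1}\Lambda_i\subseteq\cC\), and put \(W=A_i^{-1}V\). Then
\[
W^{-1}\cG_0=V^{-1}A_i\cG_0\subseteq V^{-1}\Lambda_i\subseteq\cC,
\qquad
W^{\mathsf T}XW=V^{\mathsf T}(A_i^{-1})^{\mathsf T}XA_i^{-1}V=\mu_i^{-1}V^{\mathsf T}XV\in\cC .
\]
The last step uses \((A_i^{-1})^{\mathsf T}XA_i^{-1}=\mu_i^{-1}X\) and closure of \(\cC\) under nonzero scalars.

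The main obstacle is the claim that the group is unchanged under normalization. This needs care to show that the conjugation correspondence of realizable binaries is exact, including the scalar bookkeeping and port order (transpose conventions). I would handle it by applying \(D_i\) vertex-wise to a gadget and invoking the edge identity \(D^{\mathsf T}XD=\lambda X\) on every internal edge.
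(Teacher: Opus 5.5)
Your proof is correct and follows the same route as the paper, which cites \cite[Lemma~2.26]{Liu2026DEQ2Holant} applied inductively to the two step types: restriction gives the reverse reduction for an augmentation, and conjugation by \([D_i]\) cancels in the pullback by \([D_iA_i]\). One point to tighten is that in the group statement \(B\) ranges over all directly realizable binaries, which are gadget outputs and not only members of \(\Lambda_i\); you should first show that every gadget output over \(\Lambda_i\) is \(\doteq A_i^{\otimes r}g_0\) with \(g_0\) over \(L_0\), which follows edge by edge from \(A_i^{\mathsf T}XA_i=\mu_iX\), exactly as in your normalization argument.
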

	
	\begin{proof}
		This is \cite[Lemma~2.26]{Liu2026DEQ2Holant}, applied inductively to the two
		displayed kinds of step.  Restriction supplies the reverse reduction for an
		augmentation.  A normalization conjugates the raw transfer group by
		\([D_i]\), which cancels after pullback by \([A_{i+1}]=[D_iA_i]\).
	\end{proof}
	
	\subsection{Binary-equality and quaternary dichotomies}
	\label{subsec:equality-quaternary-interfaces}
	
	The following dichotomies are consequences of
	the results of Liu and Meng \cite{Liu2026DEQ2Holant}.
	
	\begin{theorem}[Binary-equality \(K\)-Holant dichotomy]
		\label{thm:eq2-kholant-dichotomy}
		Let \(\Lambda\) be a finite algebraic signature set containing
		\(I=\Eq_2\).  If \(\Tract(K\Lambda)\) holds, then
		\(\KHolant(\Lambda)\in\FP\); otherwise
		\(\KHolant(\Lambda)\) is \(\SharpP\)-hard.
		The applicable side is decidable from exact algebraic input.
	\end{theorem}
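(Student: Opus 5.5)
The plan is to obtain \cref{thm:eq2-kholant-dichotomy} from the $K$-Holant form of the Liu--Meng dichotomy \cite[Theorem~1.2]{Liu2026DEQ2Holant}. Three things must be checked: that the problem, the native edge and the tractability predicate are the same objects as in the source, that the scalar bookkeeping is known, and that decidability follows from \cref{thm:predicate-algorithms}.

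First, I will fix the problem.
\begin{itemize}
  \item By \cref{eq:K-equivalence}, $\KHolant(\Lambda)=\Holant(X\mid\Lambda)$, where $\Lambda$ is read in $K$-coordinates. Its original-coordinate set is $K\Lambda$, and $\Holant(K\Lambda)\equivT\KHolant(\Lambda)$.
  \item Since $I\in\Lambda$, the original-coordinate set contains $K^{\otimes2}I$, which has matrix $KK^{\mathsf T}$. A direct computation gives $KK^{\mathsf T}=Z$ with the fixed $K$ of \cref{eq:K-matrix}.
  \item By \cite[Theorem~1.2]{Liu2026DEQ2Holant}, the source states its dichotomy equivalently for $K$-Holant with $I$ adjoined, obtained by replacing equality and disequality wires. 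I will check that its $K$-matrix, its ordering of ports and transposes, and its exact-algebraic input model agree with \cref{subsec:complexity-conventions} and \cref{subsec:holographic}. Where the source uses a matrix differing from ours by an element of $GO(X)$ or by a scalar, \cref{eq:GOX-global-scalar} makes the discrepancy a known edge-count scalar. \cref{lem:GOX-tract-invariance} then shows that the predicate is unaffected.
\end{itemize}

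Second, I will identify the predicate. The source criterion is the seven-alternative $\Tract$ of \cref{subsec:tract-predicate}, applied to the original-coordinate set; in our notation this is $\Tract(K\Lambda)$. I will also handle preprocessing.
\begin{itemize}
  \item Zero and nullary members of $\Lambda$ are removed as in \cref{subsec:zero-nullary}. The known nullary factors are restored afterwards, and any instance containing a zero vertex has value zero.
  \item This removal does not touch $I$, since $I$ is nonzero and binary. The hypothesis of the source therefore persists for $\Lambda^\circ$.
  \item By definition $\Tract(K\Lambda)=\Tract(K\Lambda^\circ)$, so the dichotomy transfers unchanged.
\end{itemize}

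Third, the conclusions follow.
\begin{itemize}
  \item If $\Tract(K\Lambda)$ holds, membership in $\FP$ follows either from the source or directly from \cref{thm:predicate-algorithms} applied to $K\Lambda$, through \cref{eq:K-equivalence}.
  \item If $\Tract(K\Lambda)$ fails, $\SharpP$-hardness is the hardness half of the source theorem.
  \item Decidability of the applicable side is the recognition part of \cref{thm:predicate-algorithms}. The entries of $K\Lambda$ are algebraic and exactly computable from those of $\Lambda$, since $K$ has entries in $\mathbb Q(i,\sqrt2)$.
  \item Any tractable witness is over $\AlgNums$, by \cref{lem:algebraic-transform-witness}.
\end{itemize}

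I expect the main obstacle to be purely conventional: confirming that the source's "binary equality available in $K$-coordinates" matches our $I$ with native edge $X$ and our ordered-port and transpose conventions. If the source's $K$ differs from ours, I will write the difference as a diagonal or anti-diagonal matrix in $GO(X)$ and invoke \cref{lem:GOX-tract-invariance} together with \cref{lem:normalization-ledger}.
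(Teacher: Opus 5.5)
\textbf{The gap is in your second step.} You take the criterion of \cite[Theorem~1.2]{Liu2026DEQ2Holant} to be the seven-alternative $\Tract$ applied to $K\Lambda$, so that its hardness half would read ``$\neg\Tract(K\Lambda)\Rightarrow$ hard''. That is not how the theorem is stated. It uses a reduced four-clause predicate, and its hardness half applies only when \emph{that} predicate fails.

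The tractable side is unaffected, since it follows from \cref{thm:predicate-algorithms} anyway. For hardness, however, you must show that $\neg\Tract(K\Lambda)$ forces the reduced predicate to fail. This is the one place where the hypothesis $I\in\Lambda$ is genuinely used. It is not a matter of matching $K$-matrices, port orders or $GO(X)$-normalizations: the paper applies the source's Theorem~1.2 directly, with no change of basis. So the obstacle you anticipate is not the real one.

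\textbf{The missing idea is the computation you made and then dropped:} $K^{\otimes2}I=KK^{\mathsf T}=Z\in K\Lambda$. The paper feeds this anchor to the binary-anchor lemma \cite[Lemma~2.13]{Liu2026DEQ2Holant}, which identifies the reduced predicate with $\Tract(K\Lambda)$. You can see directly why the anchor matters. With $I\in\Lambda$, alternatives~\ref{case:one-sided}, \ref{case:single-weight} and~\ref{case:matching} of $\Tract(K\Lambda)$ fail automatically:
\begin{itemize}
\item $I$ is supported on $\{00,11\}$, which carries both charge signs, so alternative~\ref{case:one-sided} fails.
\item The same support spans two Hamming weights, so alternative~\ref{case:single-weight} fails.
\item $K^{-1}Z=XK^{-1}Z=I$ is tensor-prime and has the weight-two support word $11$, so $Z\notin\angles{K\cM}\cup\angles{KX\cM}$ and alternative~\ref{case:matching} fails.
\end{itemize}
Thus $\Tract(K\Lambda)$ collapses to alternatives~\ref{case:arity-two} and~\ref{case:affine}--\ref{case:local-affine}, and this collapse is what allows it to be matched with the source's reduced predicate. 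Once you replace your assertion with this identification, your preprocessing, tractability and decidability steps go through as written.
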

	
	\begin{proof}
		\cite[Theorem~1.2]{Liu2026DEQ2Holant} is stated with a reduced four-clause
		predicate.
		Since \(I\in\Lambda\), its coordinate image contains
		\(Z=K^{\otimes2}I\in K\Lambda\); hence the binary-anchor lemma
		\cite[Lemma~2.13]{Liu2026DEQ2Holant} identifies that reduced predicate with
		\(\Tract(K\Lambda)\).  The displayed formulation follows.
	\end{proof}
	
	An eight-vertex signature has arity four and vanishes on odd Hamming weight.
	
	\begin{theorem}[Tensor-prime quaternary \(K\)-Holant dichotomy]
		\label{thm:quaternary-kholant-dichotomy}
		Let \(\Lambda\) be a finite algebraic signature set containing a
		tensor-prime eight-vertex signature \(M\) with \(M(0000)M(1111)\ne0\).
		If \(\Tract(K\Lambda)\) holds, then \(\KHolant(\Lambda)\in\FP\); otherwise
		\(\KHolant(\Lambda)\) is \(\SharpP\)-hard.  The applicable side is decidable
		from exact algebraic input.
	\end{theorem}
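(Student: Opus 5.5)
The statement is presented as a consequence of the results of Liu and Meng \cite{Liu2026DEQ2Holant}. So the plan is to transport their quaternary theorem into the present notation, exactly as was done for \cref{thm:eq2-kholant-dichotomy}, and not to reprove it.

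\textbf{Step 1: tractable side.} If \(\Tract(K\Lambda)\) holds, \cref{thm:predicate-algorithms} applied to \(K\Lambda\) gives \(\Holant(K\Lambda)\in\FP\). By \cref{eq:K-equivalence}, \(\KHolant(\Lambda)\equivT\Holant(K\Lambda)\), so \(\KHolant(\Lambda)\in\FP\). Decidability of the applicable side is the recognition half of the same theorem.

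\textbf{Step 2: hard side, locating the source theorem.} Identify in \cite{Liu2026DEQ2Holant} the dichotomy for a signature set containing a tensor-prime even-parity quaternary signature with both endpoint values nonzero. Our \(M\) is of this type: arity four, vanishing on odd weight, tensor-prime, and \(M(0000)M(1111)\ne0\). Their theorem gives \(\SharpP\)-hardness of \(\KHolant(\Lambda)\) unless a tractability predicate holds for \(K\Lambda\), presumably a reduced clause list as in \cite[Theorem~1.2]{Liu2026DEQ2Holant}.

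\textbf{Step 3: matching that predicate with \(\Tract(K\Lambda)\).} This is the step needing care. One direction is immediate: each reduced clause is one of the seven alternatives of \(\Tract\).

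For the converse, I would rule out clauses \cref{case:one-sided}, \cref{case:single-weight}, \cref{case:arity-two} and \cref{case:matching} directly from \(M\).

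\begin{itemize}
\item Clauses \cref{case:one-sided} and \cref{case:single-weight} fail. The support words \(0000\) and \(1111\) have half-charges \(-2\) and \(2\), so \(\widehat{\Lambda}\) is two-sided and \(M\) is not single-layer.
\item Clause \cref{case:arity-two} fails. A tensor-prime quaternary signature does not lie in \(\angles{\cT}\), since prime factorization is unique.
\item Clause \cref{case:matching} fails. A tensor-prime arity-four member of \(\angles{K\cM}\) would have to lie in \(\cM\) itself. But \(M\) has the weight-four value \(M(1111)\ne0\), and \(XM\) has the weight-four value \(M(0000)\ne0\), so neither lies in \(\cM\).
\end{itemize}

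Hence \(\Tract(K\Lambda)\) reduces to the three transformability alternatives \(\cA\), \(\cP\), \(\cL\), with the edge included.

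The remaining subtlety is whether the source states these with basis \(U\) satisfying \(U^{\mathsf T}XU\in\cC\) in \(K\)-coordinates, or in original coordinates. \cref{def:common-transformability} shows the two forms are equivalent. If the source instead phrases the condition through an anchor binary, I would invoke the analogue of the binary-anchor lemma \cite[Lemma~2.13]{Liu2026DEQ2Holant}. \cref{lem:algebraic-transform-witness} ensures that algebraic versus complex witnesses make no difference.

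I expect this predicate-matching bookkeeping to be the only real work. In particular, the key check is that the reduced clauses coincide with alternatives \cref{case:affine}--\cref{case:local-affine} given a tensor-prime mountain of height two. The special case \(\Lambda=\{M\}\) is consistent with \cref{thm:external-eight-vertex}, which serves as a sanity check.
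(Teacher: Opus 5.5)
Your proposal is correct and matches the paper's proof, which consists solely of the citation \cite[Theorem~1.3]{Liu2026DEQ2Holant}. Your Step~3 eliminations of alternatives~(1)--(4) are valid (two-sided endpoints, tensor-primeness of \(KM\), and \(M(0000)M(1111)\ne0\)), but they are not needed: unlike the binary-equality case, the paper does no predicate reconciliation here and uses the source theorem as already stated with \(\Tract(K\Lambda)\).
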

	
	\begin{proof}
		This is \cite[Theorem~1.3]{Liu2026DEQ2Holant}.
	\end{proof}
	
	\begin{corollary}[Eight-vertex \(K\)-Holant dichotomy]
		\label{cor:eight-vertex-complete-closure}
		Let \(\Gamma\) be a finite algebraic signature set containing an
		eight-vertex signature \(M\) with \(M(0000)M(1111)\ne0\).
		If \(\Tract(K\Gamma)\) holds, then \(\KHolant(\Gamma)\in\FP\); otherwise
		\(\KHolant(\Gamma)\) is \(\SharpP\)-hard.
	\end{corollary}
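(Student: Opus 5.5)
The plan is to split on whether the given eight-vertex signature \(M\) is tensor-prime. If \(M\) is tensor-prime, \cref{thm:quaternary-kholant-dichotomy} applies verbatim to \(\Gamma\) and gives the conclusion. So the work is in the decomposable case. There I will show that \(\KHolant(\Gamma)\) is Turing-equivalent to a problem \(\KHolant(\Lambda)\) with \(I\in\Lambda\). I will then apply \cref{thm:eq2-kholant-dichotomy} and transport tractability back to \(\Gamma\).

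\textbf{Step 1: the possible factorizations.} Suppose \(M\) is decomposable. The prime factors have arities summing to four, and any unary factor would carry odd weight. Since \(M\) vanishes on odd Hamming weight and has \(M(0000)M(1111)\ne0\), I expect a short support argument to force \(M\doteq B_1\otimes B_2\) on some pairing of the ports. Here each \(B_j\) is a binary supported on even weight, with \(B_j(00)B_j(11)\ne0\). In \(K\)-coordinates this means each \(B_j\) is a nonsingular weighted binary equality \(\diag(p_j,q_j)\). The all-zero and all-one coefficients of \(M\) then factor as \(p_1p_2\) and \(q_1q_2\).

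\textbf{Step 2: extracting a factor and normalizing it.} The set \(\Gamma\) is two-sided, because \(M\) itself has support words of charge \(\pm2\). So \cref{thm:external-factor} (equivalently \cref{lem:factor-saturation}) lets me adjoin \(B_1=\diag(p,q)\) with \(pq\ne0\) without changing the complexity. I will then use the global normalization \(D=\diag(1,t)\) with \(t^2=p/q\). This matrix lies in \(GO(X)\), since \(D^{\mathsf T}XD=tX\). Under this normalization \(\diag(p,q)\) becomes proportional to \(\diag(1,1)=I\).

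\textbf{Step 3: applying the binary-equality dichotomy.} Recording the step in \cref{lem:normalization-ledger}, the problem becomes \(\KHolant(\Lambda)\) with \(I\in\Lambda\) and \(\Lambda\supseteq D\Gamma\). Applying \cref{thm:eq2-kholant-dichotomy}, the problem is hard unless \(\Tract(K\Lambda)\) holds. If \(\Tract(K\Lambda)\) holds, the ledger lemma gives \(\Tract(K\Gamma)\), and then \cref{thm:predicate-algorithms} places the problem in \(\FP\). For the converse direction I need: if \(\Tract(K\Gamma)\) holds, then \(\Tract(K\Lambda)\) holds.

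\textbf{The main obstacle} is this converse: adjoining the factor must not destroy tractability. I would argue it through \cref{lem:GOX-tract-invariance} together with closure of each of the seven alternatives under taking tensor factors. The classes \(\cA,\cP,\cL\), the classes \(\angles{\cT}\), \(\angles{K\cM}\), \(\angles{KX\cM}\), and the EO conditions are all closed under this operation. Tensor-factor closure of these classes is the point I expect to need the most care, especially for the EO-type alternatives. It may be easier to observe that \(\Gamma\) already has a binary-anchor form, so the reduced predicate of \cite{Liu2026DEQ2Holant} applies to it directly. Finally, decidability of which side applies follows from \cref{thm:predicate-algorithms}.
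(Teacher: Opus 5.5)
This is essentially the paper's argument: the tensor-prime case goes to \cref{thm:quaternary-kholant-dichotomy}, and the decomposable case has a forced $2+2$ split into nonsingular diagonal binaries. One of these is adjoined by factor closure and sent to $I$ by a diagonal element of $GO(X)$ before \cref{thm:eq2-kholant-dichotomy} is invoked.

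Your ``main obstacle'' is not actually needed. The implication $\Tract(K\Gamma)\Rightarrow\KHolant(\Gamma)\in\FP$ follows directly from \cref{thm:predicate-algorithms} together with \cref{eq:K-equivalence}. For the hardness direction you only need $\Tract(K\Lambda)\Rightarrow\Tract(K\Gamma)$, which you already obtain from the ledger (equivalently, downward inheritance plus \cref{lem:GOX-tract-invariance}). So the tensor-factor closure of the seven alternatives can simply be dropped.
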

	
	\begin{proof}
		If \(\Tract(K\Gamma)\) holds,
		\cref{thm:predicate-algorithms,eq:K-equivalence} give the polynomial-time
		conclusion.  Assume \(\neg\Tract(K\Gamma)\).  If \(M\) is tensor-prime,
		\cref{thm:quaternary-kholant-dichotomy} gives hardness.
		
		Suppose instead that \(M\) is tensor-decomposable.  Its nonzero endpoint values
		exclude a \(1+3\) factorization: setting the one-port block to \(1\) and the
		other block to \(000\) would give a nonzero odd-weight entry.  Thus every
		nontrivial factorization has type \(2+2\).  In such a factorization, the
		nonzero endpoints and the odd-weight zeros force each binary factor to be
		diagonal with both diagonal entries nonzero.  Up to a port permutation and a
		nonzero scalar, one factor is therefore
		\[
		E=\diag(p,q),\qquad p,q\in\AlgNums^\times.
		\]
		Because \(M\) contains both charge signs, the common-one-sided exception in
		\cref{lem:factor-saturation} is impossible.  By the equivalence convention,
		we may therefore assume \(E\in\Gamma\).
		Choose algebraic \(a,b\ne0\) with \(a^2p=b^2q=1\), and put
		\(D=\diag(a,b)\).  Then
		\[
		D^{\mathsf T}XD=abX,
		\qquad DED^{\mathsf T}=I,
		\]
		so \(D\in GO(X)\) and \(D\Gamma\) contains \(I\).  The global
		normalization and \cref{lem:GOX-tract-invariance} preserve both complexity and
		the failure of \(\Tract\).  Hence
		\cref{thm:eq2-kholant-dichotomy} gives hardness.  Thus the tensor-prime and
		tensor-decomposable cases follow from the quaternary and binary-equality
		dichotomies, respectively.
	\end{proof}
	\section{Auxiliary Reductions}
	\label{sec:mountain-terminals}
	
	This section collects binary interpolation, reductions using one occurrence
	of an additional signature, and dichotomies for sets containing a nonzero
	unary or a higher equality.  We also prove hardness when a set satisfying
	the stated charge and nontractability assumptions contains an EO signature
	outside \(\mathcal B_{\rm ARS}\).
	
	\subsection{Binary Interpolation}
	\label{proofsubsec:central-descent}
	
	\begin{lemma}[Binary interpolation]
		\label{lem:binary-interpolation}
		Let \(\Gamma\) be a finite algebraic signature set whose supported doubled
		charges include both signs, let
		\(0\ne B\in\Gamma\) be ordered binary, and put \(A=BX\).
		\begin{enumerate}
			\item If \(B=u\otimes v\) is singular, where \(u,v\ne0\), then
			factor closure gives \(u,v\in\Gamma\).
			\item If \(A\) is nonsingular and has infinite projective order, there is
			a nonzero rank-one binary \(H\in\Gamma\).
		\end{enumerate}
	\end{lemma}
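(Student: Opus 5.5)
Part 1 follows from \cref{lem:factor-saturation}. The hypothesis that supported doubled charges of \(\Gamma\) take both signs says exactly that \(\Gamma\) is not common one-sided. A singular nonzero binary \(B\) has matrix rank one, so \(B=u\otimes v\) with \(u,v\ne0\) unary. These unaries are its tensor-prime factors, so \(\Pi(\Gamma)\) contains them up to scalar. By \cref{eq:factor-saturation} and the equivalence convention, we adjoin \(u\) and \(v\) to \(\Gamma\).

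For part 2, I would use the standard chain-of-copies interpolation. Connecting \(k\) copies of \(B\) in series through native \(X\)-edges gives a direct gadget whose transfer is \(A^k\). So the binary \(B_k=A^kX^{-1}=A^kX\) is realizable for every \(k\ge1\).

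First, bring \(A\) to Jordan form \(A=S J S^{-1}\) over \(\AlgNums\). Infinite projective order means one of two things. Either \(A\) has two eigenvalues \(\lambda_1,\lambda_2\) whose ratio \(\lambda_1/\lambda_2\) is not a root of unity. Or \(A\) is a nontrivial Jordan block, in which case \(A^k\doteq\) (identity plus \(k\) times a nilpotent), again not of finite projective order.

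In the diagonalizable case, write \(A^k=\lambda_2^k\bigl(t^kP_1+P_2\bigr)\), where \(t=\lambda_1/\lambda_2\) and \(P_1,P_2\) are the rank-one spectral idempotents. Then any grid using \(m\) occurrences of a target binary \(H_s=sP_1X+P_2X\) has Holant value that is a polynomial of degree at most \(m\) in \(s\). Replacing each occurrence by the \(k\)-chain gives this polynomial at \(s=t^k\), up to the known scalar \(\lambda_2^{km}\). For \(k=1,\dots,m+1\) these evaluation points are pairwise distinct, because \(t\) is not a root of unity. Solving the Vandermonde system therefore recovers all coefficients, hence the value at \(s=0\). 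This realizes \(H=P_2X\), which is rank one and nonzero, so \(H\in\Gamma\) by interpolation.

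In the Jordan case, \(A^k=\lambda^k(I+kN)\) with \(N\ne0\) nilpotent. The Holant value is then a polynomial in \(k\), and we interpolate the linear family \(I+sN\) to the coefficient of top degree. Concretely, take \(H=NX\), the coefficient of \(s^m\) up to rescaling, or equivalently use the family \(sI+N\) at \(s\to0\). Since \(N^2=0\) and \(N\ne0\), \(N\) has rank one, so \(H\) is a nonzero rank-one binary.

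The main obstacle is bookkeeping rather than the mathematics. We need the interpolation to meet the equivalence convention: polynomially many oracle calls, instances of polynomial size, and exact algebraic postprocessing. The known scalars \(\lambda_2^{km}\) must be nonzero, which holds since \(A\) is nonsingular. The spectral data may lie in an extension field, but everything stays in \(\AlgNums\), and \cref{lem:normalization-ledger} explicitly allows such generic interpolation when it serves the classification. I would also check that the ordered-port convention \(B_k=A^kX\) matches the series formula \(LXR^{\mathsf T}\) stated in \cref{subsec:reduction-conventions}.
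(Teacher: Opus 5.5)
Your proposal is correct. Part~1 is exactly factor closure, since the two-sided charge hypothesis is precisely the negation of common one-sidedness. Part~2 is the standard chain-of-copies Vandermonde interpolation on the transfers \(A^k\), and you handle the Jordan case correctly by extracting the leading coefficient of the family \(I+sN\) instead of specializing \(s\).

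The paper gives no argument of its own here and simply imports the statement as Lemma~3.1 of Liu and Meng, which presumably uses this same interpolation. Your aside about the normalization ledger is unnecessary: the equivalence convention already permits adjoining a signature obtained by polynomial interpolation.
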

	
	\begin{proof}
		This is \cite[Lemma~3.1]{Liu2026DEQ2Holant}.
	\end{proof}
	
	\subsection{Single-occurrence Reductions and Nonzero Unaries}
	\label{subsec:single-occurrence-carriers}
	
	The single-occurrence argument used below was originally developed for an
	expanded version of \cite{MengWangXiaZheng2025}. Since that version is not
	publicly available, we include the argument here for completeness.
	
	The dichotomy for a set containing a nonzero unary will apply below;
	both point masses are needed only in the constructions that explicitly use them.
	Write
	\[
	\Delta=\DeltaZero\otimes\DeltaOne,
	\]
	with the displayed order of its two ports.  For a signature \(g\), the
	superscripts \(g^{=1}\) and \(g^{\leq1}\) mean that an oracle instance contains,
	respectively, exactly one or at most one \(g\)-vertex.  They do not denote
	tensor powers.
	
	\begin{lemma}[Reduction with at most one \(\Delta\)-vertex]
		\label{lem:universal-directed-pair}
		For every fixed finite algebraic signature set \(\Gamma\),
		\[
		\KHolant(\Gamma,\Delta^{\leq1})
		\leT
		\KHolant(\Gamma).
		\]
		No EO, parity, charge, mountain, or tractability hypothesis is required.
	\end{lemma}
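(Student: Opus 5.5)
The plan is to reduce an instance with exactly one \(\Delta\)-vertex to polynomially many instances over \(\Gamma\) alone. Instances with no \(\Delta\)-vertex are passed to the oracle unchanged.

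\textbf{Setup.} Let \(\Omega\) be an instance containing one \(\Delta\)-vertex \(w\). Its port~\(1\) is joined by an \(X\)-edge to a port of some vertex \(u\), and its port~\(2\) by an \(X\)-edge to a port of \(v\); possibly \(u=v\). Deleting \(w\) and its two edges leaves an open network \(F\) over \(\Gamma\) with two dangling ports \(a\) (at \(u\)) and \(b\) (at \(v\)). Let \(g(\alpha,\beta)\) be the value of \(F\) with the dangling bits fixed to \(\alpha,\beta\). Since the attaching \(X\)-edges complement the bits of \(\DeltaZero\) and \(\DeltaOne\), we get
\[
\Holant_\Omega=g(1,0).
\]
Thus the task is to extract one off-diagonal entry of the binary tensor \(g\). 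This tensor is of polynomial size but not of fixed size, so it is not a gadget. We may only use oracle calls on closed instances built from copies of \(F\).

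\textbf{Oracle instances.} I would build three families of closed instances over \(\Gamma\).
\begin{enumerate}
\item The \emph{loop closure} joins \(a\) to \(b\) by one \(X\)-edge. It has value \(g(0,1)+g(1,0)\).
\item The \emph{cyclic chains} \(\Omega_N\), for \(N=1,\dots,4\), take \(N\) disjoint copies of \(F\) and join \(b\) of copy \(k\) to \(a\) of copy \(k+1\) cyclically. Each has value \(\operatorname{tr}(T^N)\) with \(T=gX\).
\item The \emph{reversed joins} connect \(a\) to \(a\) and \(b\) to \(b\) across two copies. They give traces of words in \(T\) and \(T^\#=XT^{\mathsf T}X\), by the transfer calculus of \cref{subsec:reduction-conventions}.
\end{enumerate}
Writing \(T=\begin{pmatrix}p&q\\r&s\end{pmatrix}\), these values determine \(p+s\), \(ps-qr\) and \(q+r\) exactly, using exact algebraic postprocessing. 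The sought value \(g(1,0)\) is one of \(q,r\), according to the orientation convention.

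\textbf{Main obstacle.} Every trace of a word in \(T,T^\#\) is invariant under transposition, and transposition exchanges \(q\) and \(r\). So closures of copies of \(F\) alone can never separate \(g(1,0)\) from \(g(0,1)\). The step I expect to be hardest is therefore breaking this \(q\leftrightarrow r\) symmetry. My first attempt is to exploit the asymmetry of the two cut edges inside the rest of \(\Omega\): instead of closing at \(w\) only, cut one further edge \(e\) of \(F\), chosen on a path from \(u\) to \(v\). This yields a four-port open network whose pieces are glued in chains of length \(N\), with one segment reversed. The resulting \(4\times4\) transfer matrix has traces that are no longer transposition-invariant as a whole. From polynomially many such values, the individual entry \(q\) should then be recoverable by solving a Vandermonde-type linear system, in the interpolation style of \cref{lem:binary-interpolation}. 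A fallback is to split into cases. If the loop closure and the traces already force \(q=r\), or one of them to vanish (for example because \(q r\) and \(q+r\) are determined), the answer is read off directly. Otherwise the two roots of \(t^2-(q+r)t+qr\) are distinct, and the remaining ambiguity is a single binary choice. I would resolve that choice with one additional asymmetric instance, namely two copies of \(F\) glued \(b\)-to-\(a\) with a third copy inserted in reverse.
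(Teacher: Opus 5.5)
Your reduction to computing the single entry \(g(1,0)\) is correct, and so is your observation that closures of copies of \(F\) alone cannot tell \(g(1,0)\) from \(g(0,1)\). Neither proposed way around this obstruction works, however, so the proof has a genuine gap.

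The obstruction is really complement invariance, not transposition. Every native edge satisfies \(X(1-a,1-b)=X(a,b)\). Hence any closed network assembled by native edges from copies of a piece cut out of \(F\) has the same value when all bits of every copy are complemented. This complement interchanges \(g(1,0)\) and \(g(0,1)\).

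Cutting a further edge to obtain a four-port piece keeps this invariance. So no Vandermonde-type solve in traces of its \(4\times4\) transfer matrix can isolate \(g(1,0)\): the same values fed into the same formula would have to return \(g(0,1)\) for the complemented piece. Your fallback instance (two copies glued \(b\)-to-\(a\) with a third inserted in reverse) is a trace of a word in \(T\) and \(T^\#\). By your own remark it is symmetric, so it cannot resolve the binary choice.

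There are also bookkeeping slips. With \(T=gX\), the sought entry is the diagonal entry \(T(1,1)\), not one of \(q,r\). The closures determine \(g(0,1)+g(1,0)\), \(g(0,1)g(1,0)\) and \(g(0,0)g(1,1)\), but not \(g(0,0)+g(1,1)\): the map \(\diag(t,t^{-1})\in GO(X)\) changes that sum while fixing every closed value. Finally, knowing that one of \(g(0,1),g(1,0)\) vanishes does not tell you which one.

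The missing idea is to break the symmetry with a fixed gadget of the language \(\Gamma\), not with pieces of the input. Since \(\Delta=\delta_{01}=\tfrac12(X+Y)\), attaching through native edges gives \(g(1,0)=\tfrac12\bigl[(g(0,1)+g(1,0))+(g(1,0)-g(0,1))\bigr]\). The first term is your loop closure. For the antisymmetric term, split on a property of the fixed set \(\Gamma\):

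If every two-port gadget \(H\) over \(\Gamma\) has \(H(0,1)=H(1,0)\), then in particular \(g(1,0)=g(0,1)\) and the term vanishes.

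Otherwise fix one gadget \(G\) with ports \((c,d)\) and \(\kappa=G(0,1)-G(1,0)\ne0\); its size and \(\kappa\) depend only on \(\Gamma\). Form two instances over \(\Gamma\) with native edges: one joining \(a\) to \(d\) and \(b\) to \(c\), the other joining \(a\) to \(c\) and \(b\) to \(d\). The first value minus the second is \(\kappa\bigl(g(0,1)-g(1,0)\bigr)\).

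This is the identity \(Y_{12}\otimes Y_{34}=X_{14}\otimes X_{23}-X_{13}\otimes X_{24}\) applied to \(F\otimes G\). It is the same device the paper uses for a single \(Y\)-occurrence in \eqref{eq:q8-y-square-closures}. Three oracle calls and division by the known \(\kappa\) then give \(g(1,0)\), with no hypothesis on \(\Gamma\). Instances with no \(\Delta\)-vertex, or with a self-looped \(\Delta\)-vertex (value one), are immediate.
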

	
	\begin{proof}
		This is \cite[Lemma~3.2]{Liu2026DEQ2Holant}.
	\end{proof}
	
	The next elementary lemma isolates the only additional ingredient needed to
	turn one-occurrence access into unrestricted access to a new signature.  It
	applies verbatim to ordinary Holant and to K-Holant; write \(\mathsf H\) for
	either setting, with its native wires.
	
	\begin{lemma}[Single-occurrence replication]
		\label{lem:single-occurrence-replication}
		Let \(\Gamma\) be a fixed finite algebraic signature set, and let \(g\) and
		\(h\) be nonzero signatures of arities \(r\) and \(s\), respectively.  All
		connections below use the native binary edge: \(\Eq_2\) for ordinary Holant
		and \(X\) for K-Holant.  Suppose that
		\begin{enumerate}
			\item \(\mathsf H(\Gamma,g^{=1})\leT\mathsf H(\Gamma)\);
			\item there is a fixed gadget \(C\) over \(\Gamma\) whose dangling edges are
			the three disjoint ordered blocks
			\[
			A=(a_1,\ldots,a_r),\qquad
			P=(p_1,\ldots,p_r),\qquad
			Y=(y_1,\ldots,y_s).
			\]
			Connecting one \(g\)-vertex coordinatewise to \(A\) realizes
			\[
			\rho\,g(P)h(Y)
			\]
			on \(P,Y\), for a known \(\rho\in\AlgNums^\times\).  Moreover, for
			consecutive copies \(C_i,C_{i+1}\), the block \(P_i\) can be connected
			coordinatewise to \(A_{i+1}\), with no other port identifications.
			Thus a chain of \(k\) copies, beginning with one \(g\)-vertex,
			contains exactly that one \(g\)-vertex and realizes
			\[
			\rho^k g(P_k)\prod_{i=1}^k h(Y_i),
			\]
			where the blocks \(Y_1,\ldots,Y_k\) are pairwise disjoint; and
			\item there is a fixed instance \(J\) over \(\Gamma\cup\{g\}\), containing
			exactly one \(g\)-vertex, whose exactly computable partition value
			\(Z(J)\) is nonzero.
		\end{enumerate}
		Then
		\[
		\mathsf H(\Gamma,h)\leT\mathsf H(\Gamma).
		\]
	\end{lemma}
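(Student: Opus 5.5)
Given an instance \(\Omega\) of \(\mathsf H(\Gamma,h)\), the plan is to replace every \(h\)-vertex by one \(Y\)-block of a single chain of \(C\)-copies. The chain is fed by one \(g\)-vertex. Its terminal \(g(P_k)\) block is closed off by the remainder of the fixed instance \(J\). The result is an instance of \(\mathsf H(\Gamma,g^{=1})\), and hypothesis~1 finishes the reduction. Concretely, let \(k\) be the number of \(h\)-vertices of \(\Omega\), and let \(\Omega'\) be \(\Omega\) with those vertices deleted. The \(k\) ordered edge-end tuples they leave are dangling, and the tensor of \(\Omega'\) on them is \(F\). Then \(\Holant_\Omega\) is the contraction of \(F\) with \(\bigotimes_{i=1}^k h\) along native edges.

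Now fix \(J\). Let \(J^-\) be \(J\) with its unique \(g\)-vertex removed; it is a gadget over \(\Gamma\) with \(r\) dangling edges in the port order of that vertex. Connecting a \(g\)-vertex to \(J^-\) returns \(Z(J)\). Build the instance \(\Omega^\ast\) as follows. Take one \(g\)-vertex and attach it to \(A_1\) of copy \(C_1\). For \(i<k\), join \(P_i\) to \(A_{i+1}\) as permitted in hypothesis~2. Attach \(P_k\) to the dangling edges of \(J^-\), and attach each block \(Y_i\) to the dangling tuple left by the \(i\)-th \(h\)-vertex of \(\Omega'\). By hypothesis~2, the chain realizes \(\rho^k g(P_k)\prod_i h(Y_i)\) as an exact tensor identity. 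Closing \(P_k\) against \(J^-\) turns the \(g\)-factor into the scalar \(Z(J)\), and the \(Y\)-blocks carry exactly \(\prod_i h\). Hence
\[
\Holant_{\Omega^\ast}=\rho^k Z(J)\,\Holant_\Omega .
\]
Here I use that contracting a subnetwork may be done in any order, since the partition function is a full tensor contraction. The instance \(\Omega^\ast\) contains exactly one \(g\)-vertex and otherwise only \(\Gamma\)-signatures. Its size is \(|\Omega|+k|C|+|J|\), which is polynomial. Dividing by the known nonzero algebraic number \(\rho^kZ(J)\) is allowed exact postprocessing, so the reduction goes through.

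The one step needing care is the wiring of a dangling \(Y_i\) edge to an edge-end of \(\Omega'\). The deleted \(h\)-vertex \(v\) was incident to edges \(e\), and each such edge \(e\) has another end \(w\), which might itself be an \(h\)-vertex or even \(v\) itself via a loop. To handle this, cut each such edge \(e\) at its midpoint. This gives half-edges, and the native binary edge is recovered by joining the \(Y_i\) port to the other half through one native wire. For ordinary Holant this is \(\Eq_2\), which composes trivially. For K-Holant a cut native edge is one \(X\), and joining port to port reproduces exactly one \(X\), matching the convention that \(C\)'s dangling edges are native. When both ends of \(e\) are \(h\)-vertices, join the two corresponding \(Y\)-ports by a single native edge. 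I expect this port-order and loop bookkeeping to be the only real obstacle. The algebra is immediate once we note that \(h(Y_i)\) on dangling native edges behaves exactly as an \(h\)-vertex would.
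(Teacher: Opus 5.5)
Your proposal is correct and follows essentially the same route as the paper. You delete the \(h\)-vertices, attach the \(Y_i\) blocks of a single chain fed by one \(g\)-vertex, close \(P_k\) against \(J\) with its \(g\)-vertex removed, obtain \(\rho^kZ(J)\,Z(\Omega)\), and divide before invoking hypothesis~1. The only point the paper states that you leave implicit is the case \(k=0\), where the input is already an instance over \(\Gamma\); your construction also degenerates correctly there, to \(\Omega\sqcup J\).
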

	
	\begin{proof}
		This is \cite[Lemma~3.3]{Liu2026DEQ2Holant}.  An input with no \(h\)-vertex
		is already an instance over \(\Gamma\).  Let \(\Omega\) be an input with
		\(k\ge1\) occurrences of \(h\).  Delete those vertices, retain their incident
		edges as \(k\) ordered boundary blocks, and attach the blocks
		\(Y_1,\ldots,Y_k\) of the displayed chain coordinatewise in the corresponding
		orders.  Let \(J^\circ\) be the network obtained by deleting the
		unique \(g\)-vertex of \(J\), and join its boundary coordinatewise to
		\(P_k\) in the port order inherited from that removed vertex.  Denote the
		resulting instance by \(\Omega'\).  It contains exactly one
		\(g\)-vertex, and the two tensor identities give
		\[
		Z(\Omega')=\rho^k Z(J)Z(\Omega).
		\]
		Divide by the known nonzero scalar \(\rho^kZ(J)\) and apply assumption~1.
	\end{proof}
	
	\begin{lemma}[Dichotomy with a nonzero unary]
		\label{lem:common-unary-exit}
		Let \(\cF\) and \(\Gamma\) be finite algebraic signature sets, and let
		\(u\ne0\) be unary.  If \(u\in\cF\), then
		\(\Holant(\cF)\) is \(\SharpP\)-hard unless \(\Tract(\cF)\) holds.  Likewise,
		if \(u\in\Gamma\), then
		\(\KHolant(\Gamma)\) is \(\SharpP\)-hard unless \(\Tract(K\Gamma)\) holds.
	\end{lemma}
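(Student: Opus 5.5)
The plan is to reduce to the odd-arity dichotomy, \cref{thm:external-odd}.  A nonzero unary is itself a nonzero odd-arity signature.  So if \(u\in\cF\), after preprocessing \(u\in\cF^\circ\), and the hypothesis of \cref{thm:external-odd} is met.  That theorem gives: \(\Holant(\cF^\circ)\) is \(\SharpP\)-hard unless \(\Tract(\cF^\circ)\) holds.  By the preprocessing convention, \(\Tract(\cF)=\Tract(\cF^\circ)\).  Moreover, \(\Holant(\cF)\) and \(\Holant(\cF^\circ)\) are Turing-equivalent: instances with a zero-signature vertex evaluate to zero, and nullary vertices contribute known scalar factors.  Hardness of \(\Holant(\cF^\circ)\) therefore transfers to \(\Holant(\cF)\), which settles the first assertion.

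For the \(K\)-Holant assertion, put \(\cF=K\Gamma\).  Since \(K^{-1}(K\Gamma)=\Gamma\), the equivalence \cref{eq:K-equivalence} gives
\[
\KHolant(\Gamma)\equivT\Holant(K\Gamma).
\]
The holographic image of the unary is \(Ku=K^{\otimes1}u\), which is nonzero because \(K\) is invertible, and it lies in \(K\Gamma\).  Applying the first assertion to \(K\Gamma\) shows that \(\Holant(K\Gamma)\) is \(\SharpP\)-hard unless \(\Tract(K\Gamma)\) holds.  Hence the same is true of \(\KHolant(\Gamma)\).

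The only point needing care is bookkeeping rather than mathematics.  First, the ``unless'' direction is pure hardness, so the tractable side is not needed here; it is supplied separately by \cref{thm:predicate-algorithms}.  Second, the cited odd-arity theorem is stated for preprocessed sets, so I will pass explicitly through \(\cF^\circ\).  I will check that the nonzero unary survives preprocessing, which it does since it is nonzero and has positive arity, and that the reductions between \(\cF\) and \(\cF^\circ\) are polynomial-time Turing.  No single-occurrence or gadget argument is required.
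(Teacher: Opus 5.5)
Your proof is correct and takes essentially the paper's route. The paper simply cites \cite[Lemma~3.4]{Liu2026DEQ2Holant}, and elsewhere it handles nonzero unaries exactly as you do, as a special case of the odd-arity dichotomy \cref{thm:external-odd}, with the \(K\)-Holant form transported through \cref{eq:K-equivalence}; your explicit passage through \(\cF^\circ\) (where only the restriction direction \(\Holant(\cF^\circ)\leT\Holant(\cF)\) is needed) and the check that \(Ku\ne0\) lies in the algebraic set \(K\Gamma\) complete the bookkeeping.
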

	
	\begin{proof}
		This is \cite[Lemma~3.4]{Liu2026DEQ2Holant}.
	\end{proof}
	
	\begin{corollary}[Higher equality supplies unaries]
		\label{cor:higher-equality-unary}
		Let \(d\ge4\), and let \(\Gamma\) be a finite algebraic signature set in
		either the ordinary-Holant or K-Holant setting.  If \(\Eq_d\in\Gamma\), then
		\(\DeltaZero,\DeltaOne\in\Gamma\).  Hence
		\cref{lem:common-unary-exit} applies.
	\end{corollary}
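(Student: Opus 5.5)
Since \(\Eq_d\) with \(d\ge4\) lies in \(\Gamma\), the plan is to manufacture both point-mass unaries from this single signature using only the allowed reductions: a direct self-loop gadget, the single-occurrence machinery of \cref{lem:universal-directed-pair}, and \cref{lem:single-occurrence-replication}. I treat the ordinary-Holant setting first. There the native edge is \(\Eq_2\), so a self-loop on two ports of \(\Eq_d\) realizes \(\Eq_{d-2}\) directly. Iterating this reaches \(\Eq_2\) or \(\Eq_3\) according to the parity of \(d\). In the odd case, \(\Eq_1=[1,1]^{\mathsf T}\) is also available. The remaining task is to split the all-ones behaviour into \(\DeltaZero\) and \(\DeltaOne\).

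For this I would pass to \(K\)-coordinates, where both settings become one problem, with native edge \(X\). Under \(K^{-1}\), the ordinary \(\Eq_d\) becomes a tensor proportional to \(\mathbf 1[\wt(x)\equiv 0 \bmod 4]\)-type phases. More usefully, in the \(K\)-Holant setting \(\Eq_d\) itself is a mountain-like signature supported on \(\{\zero^d,\one^d\}\), with both endpoint half-charges \(\pm d/2\). The key construction uses \cref{lem:universal-directed-pair}, which gives one occurrence of \(\Delta=\DeltaZero\otimes\DeltaOne\). I attach \(\DeltaZero\) from that occurrence to one port of \(\Eq_d\), which pins the whole equality to \(\zero\) and outputs \(\DeltaZero^{\otimes(d-1)}\) on the remaining ports. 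This gives the hypothesis \(\mathsf H(\Gamma,g^{=1})\leT\mathsf H(\Gamma)\) of \cref{lem:single-occurrence-replication} with \(g=\DeltaZero\), and symmetrically with \(g=\DeltaOne\).

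The replication gadget \(C\) is a copy of \(\Eq_d\) with block \(A\) one port, \(P\) one port and \(Y\) one port; the remaining \(d-3\ge1\) ports carry excess copies. Feeding \(g=\DeltaZero\) into \(A\) yields \(\DeltaZero\) on \(P\), \(\DeltaZero\) on \(Y\), and \(\DeltaZero^{\otimes(d-3)}\) on the leftover ports. Those leftovers must be closed. Here \(d\ge4\) is essential: I would pair the leftovers between two copies, or contract two leftover ports through a native edge, so that they contribute a known nonzero scalar. If the native edge is \(X\), joining two \(\DeltaZero\) outputs gives zero. In that case I would instead absorb the leftovers into a second \(\Eq_d\) chain carrying \(\DeltaOne\), or close them by \(X\)-edges against the \(P\)-chain of the opposite pinning. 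This is bookkeeping that I expect to work out by parity. For condition 3, I would take \(J\) to be \(\Eq_d\) with one port pinned by \(g\) and the other ports closed the same way, and check that \(Z(J)\neq0\) explicitly.

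The main obstacle is the closing of the leftover \(d-3\) ports so that the scalar \(\rho\) is nonzero in the \(X\)-edge setting, where like-pinned ports annihilate. My first attempt is to use both pinnings together: route \(\DeltaZero\) and \(\DeltaOne\) chains in parallel, so that each leftover \(\DeltaZero\) port is matched by an \(X\)-edge to a leftover \(\DeltaOne\) port and each such edge contributes \(1\). Once both pinnings are adjoined, \cref{lem:common-unary-exit} applies immediately, since \(\DeltaZero\) is a nonzero unary.
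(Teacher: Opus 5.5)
Your proposal has two genuine gaps.

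First, you do not prove the ordinary-Holant half of the statement. Passing to $K$-coordinates does not make the two settings one problem. $K^{-1}$ sends the ordinary $\Eq_d$ to a parity-type tensor: for even $d$, a multiple of $\one[\wt(x)\text{ even}]$ or of $\one[\wt(x)\text{ odd}]$. It does not send it to the $K$-coordinate $\Eq_d=\delta_{\zero^d}+\delta_{\one^d}$. Likewise, the ordinary pins go to multiples of $(1,1)^{\mathsf T}$ and $(1,-1)^{\mathsf T}$, not to $\DeltaZero,\DeltaOne$. So your $K$-coordinate construction, even once repaired, says nothing about the ordinary case. That ordinary case is the one the paper invokes for $\#\mathrm{CSP}^2$ after \cref{thm:external-csp2}, and it needs its own argument.

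Second, in the $K$-setting, applying \cref{lem:single-occurrence-replication} with $g=\DeltaZero$ is structurally impossible; this is not bookkeeping. Take an all-even set such as $\Gamma=\{\Eq_4\}$. A closed instance with exactly one unary vertex has an odd number of edge-ends, so it does not exist and hypothesis~3 fails. A gadget whose blocks $A,P,Y$ are single ports has three dangling edges, which is also impossible over even-arity vertices. This is exactly why the $d-3$ leftover ports of one copy of $\Eq_d$ cannot be closed when $d$ is even. (Also, attaching $\DeltaZero$ through a native $X$-edge pins $\Eq_d$ to $\one^d$, not $\zero^d$.) In fact no gadget or replication over an all-even set outputs a unary at all, so a non-gadget step is indispensable, and you never use one.

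The missing idea is to replicate the charge-neutral binary $\Delta$ and then split it by tensor-factor extraction, as in the proof of \cref{lem:hc-rel-complementary-unary}. Take $g=h=\Delta$ and build $C$ from two copies of $\Eq_d$:
\begin{itemize}
\item $A=(a_1,a_2)$ takes one port from each copy;
\item $d-3$ further ports of the first copy are joined to $d-3$ ports of the second by $X$-edges;
\item each of $P$ and $Y$ takes one remaining port of the second copy followed by one of the first.
\end{itemize}
A $\Delta$-vertex on $A$ pins the first copy to $\one^d$ and the second to $\zero^d$, and every cross edge is satisfied. The output is $\Delta(P)\Delta(Y)$ with $\rho=1$. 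For $J$, close one $\Delta$-vertex by an $X$-loop, which has value $1$. Together with \cref{lem:universal-directed-pair}, this gives $\KHolant(\Gamma,\Delta)\leT\KHolant(\Gamma)$. Then \cref{thm:external-factor}, applied to $\Delta=\DeltaZero\otimes\DeltaOne$, yields both pinnings. Its common-one-sided exception is excluded because $\Eq_d$ is supported on $\zero^d$ and $\one^d$, which have doubled charges $-d$ and $d$.
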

	
	\begin{proof}
		This is \cite[Corollary~3.6]{Liu2026DEQ2Holant}.
	\end{proof}
	
	\subsection{Hardness from an Eulerian-orientation Signature}
	\label{subsec:hc-relative-eo}
	
	For the results that invoke \textnormal{(REO1)--(REO2)}, fix a finite
	preprocessed algebraic set \(\Gamma\) of even-arity signatures in
	\(K\)-coordinates satisfying:
	\begin{enumerate}[label=\textnormal{(REO\arabic*)},leftmargin=*,labelindent=0pt]
		\item Among the values \(\chi_{\arity(f)}(x)\), with \(f\in\Gamma\)
		and \(x\in\supp(f)\), at least one is positive and at least one
		is negative;
		\item \(\neg\Tract(K\Gamma)\).
	\end{enumerate}
	All gadget connections in this subsection use the native edge \(X\).
	Arrow-reversal symmetry (ARS) means
	\(h(x)=\overline{h(\bar x)}\) for every input \(x\).
	An EO signature \(h\) has \emph{projective arrow-reversal symmetry}
	(projective ARS) if, for some \(c\in\AlgNums^\times\),
	\[
	(c^{-1}h)(x)=\overline{(c^{-1}h)(\bar x)}
	\qquad\text{for every \(x\in\{0,1\}^{\arity(h)}\)}.
	\]
	Here \(\bar x\) is bitwise complement and the overline on the coefficient is
	complex conjugation.  An EO--ARS binary is both EO and ARS without rescaling.
	Let \(\mathcal B_{\rm ARS}\) consist of the nonzero scalar multiples of
	products on arbitrary perfect matchings, with an orientation chosen for each
	matched pair, of binary EO--ARS signatures
	\begin{equation}
		(0,a,\overline a,0),\qquad a\in\AlgNums^\times.
		\label{eq:hc-rel-B}
	\end{equation}
	We write
	\(\neq_4=\delta_{0011}+\delta_{1100}\).
	Nonzero scalars and output-port permutations are suppressed according to the
	equivalence convention.
	
	\begin{lemma}[Substitution of fixed gadgets in a larger signature set]
		\label{lem:ambient-eo-lifting}
		Let \(\Lambda\subseteq\Gamma\) be finite algebraic signature sets in
		K-coordinates.  Suppose that a finite source derivation starting from
		\(\Lambda\) adjoins a finite set \(S\) using only fixed direct \(X\)-gadgets.
		Then the same derivation is valid over \(\Gamma\), and
		\[
		\KHolant(\Gamma\cup S)\leT\KHolant(\Gamma).
		\]
		If, at some stage, a source subproblem \(\KHolant(\Lambda')\), with
		\(\Lambda'\subseteq\Lambda\cup S\), is \(\SharpP\)-hard, then
		\(\KHolant(\Gamma)\) is \(\SharpP\)-hard.
	\end{lemma}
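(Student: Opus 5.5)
This is essentially a bookkeeping statement: a derivation built only from fixed direct \(X\)-gadgets never looks at the ambient set. The argument has three steps.

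First, I would formalize the source derivation as a finite sequence \(\Lambda=\Lambda_0\subseteq\Lambda_1\subseteq\cdots\subseteq\Lambda_n=\Lambda\cup S\). Each step adjoins one signature \(h_{i+1}\), realized by a fixed finite \(X\)-edge network \(N_{i+1}\) whose vertices carry signatures from \(\Lambda_i\). The key observation is that a direct gadget is a concrete tensor network. Its output depends only on the signatures placed at its vertices and on its wiring, never on which other signatures are available in the ambient problem.

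Second, I would set \(\Gamma_i=\Gamma\cup\Lambda_i\). Since \(\Lambda_i\subseteq\Gamma_i\), the same network \(N_{i+1}\) is also a direct gadget over \(\Gamma_i\) and realizes the same \(h_{i+1}\), up to the same known scalar. By \cref{eq:direct-gadget-reduction}, \(\KHolant(\Gamma_{i+1})\leT\KHolant(\Gamma_i)\). Concretely, in an instance over \(\Gamma_{i+1}\), one replaces each \(h_{i+1}\)-vertex by a copy of \(N_{i+1}\), connected port-by-port in the stated order, and divides by \(\rho^{-N}\)-type known scalars. Composing these \(n\) reductions, each of which blows up instance size by a constant factor, gives \(\KHolant(\Gamma\cup S)\leT\KHolant(\Gamma)\). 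Because \(n\) is fixed, the composite is still polynomial. Alternatively, one can substitute the full unfolded gadget for each element of \(S\) directly. Unfolding each \(h\in S\) recursively into a fixed network over \(\Lambda\) alone shows the whole reduction is a single substitution. This also confirms the derivation is valid over \(\Gamma\) in the literal sense.

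Third, for the hardness clause, suppose \(\KHolant(\Lambda')\) is \(\SharpP\)-hard with \(\Lambda'\subseteq\Lambda\cup S\subseteq\Gamma\cup S\). Restriction, since instances over \(\Lambda'\) are instances over \(\Gamma\cup S\), gives \(\KHolant(\Lambda')\leT\KHolant(\Gamma\cup S)\). Chaining with the previous step and using transitivity of \(\leT\) yields hardness of \(\KHolant(\Gamma)\).

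There is no genuine obstacle. The only point requiring care is that the lemma restricts to direct gadgets, so that no step depends on properties of the ambient set. This rules out steps such as interpolation, whose invertibility could depend on the set, and factor extraction, which has the common-one-sided exception in \cref{thm:external-factor}, since that property can change when passing from \(\Lambda\) to \(\Gamma\). One should also check that scalars and port orders are reproduced identically, which holds because the networks are literally the same.
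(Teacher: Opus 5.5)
Your proof is correct and follows the paper's argument. You substitute each fixed direct gadget at every occurrence of its output, leaving vertices labelled from \(\Gamma\setminus\Lambda\) untouched, and induct over the finite derivation; the hardness clause then follows by restriction, \(\KHolant(\Lambda')\leT\KHolant(\Gamma\cup S)\leT\KHolant(\Gamma)\). Your caveat that interpolation and tensor-factor extraction are excluded because their validity can depend on the ambient set matches the paper's own remark that factor extraction is supplied separately by \cref{thm:external-factor}.
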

	
	\begin{proof}
		Each direct gadget is substituted independently at every occurrence of its
		output signature, leaving vertices labelled from \(\Gamma\setminus\Lambda\)
		unchanged.  Induction over the finite source derivation proves the displayed
		reduction.  For the final claim, restriction and that reduction give
		\[
		\KHolant(\Lambda')
		\leT\KHolant(\Gamma\cup S)
		\leT\KHolant(\Gamma).
		\]
		A problem-level Turing reduction quoted from a source is not, by itself, a
		direct-gadget derivation: its local signature-producing operations must first
		be identified.  Tensor-factor extraction is not included in this
		statement; whenever it is used below, it is supplied separately by
		\cref{thm:external-factor}.
	\end{proof}
	
	\begin{theorem}[Hardness from an EO signature outside \(\mathcal B_{\rm ARS}\)]
		\label{thm:hc-relative-eo-terminal}
		Assume \textnormal{(REO1)--(REO2)}.  Let \(h\in\Gamma\) be a nonzero
		positive-arity EO signature.  Suppose there are no
		\(\lambda\in\AlgNums^\times\), perfect matching \(P_h\) of its ports
		together with an orientation \((i,j)\) of each matched pair, and coefficients
		\(a_{ij}\in\AlgNums^\times\) such that
		\[
		h=\lambda
		\bigotimes_{(i,j)\in P_h}(0,a_{ij},\overline{a_{ij}},0)_{ij}.
		\]
		Then \(\KHolant(\Gamma)\) is \(\SharpP\)-hard.
	\end{theorem}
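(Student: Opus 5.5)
The plan is to realize, from \(h\) and \(\Gamma\), a set of EO signatures that falls outside \(\EOTract\), and then to transport that hardness back to \(\Gamma\). The tool for the transport is \cref{lem:universal-directed-pair} together with \cref{lem:single-occurrence-replication}; the hardness itself comes from \cref{thm:external-eo}.

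The first step uses one occurrence of \(\Delta=\DeltaZero\otimes\DeltaOne\), which is free by \cref{lem:universal-directed-pair}. Attaching \(\DeltaZero\) and \(\DeltaOne\) to a pair of ports of \(h\) gives the slices \(h_{ij}^{01}\) and \(h_{ij}^{10}\), which are again EO. The aim is to use this \(\Delta\) access to make an EO gadget \(g\) whose EO-closure is hard; candidates are a binary EO signature that is not EO--ARS up to scale, or a quaternary signature such as \(\neq_4\) combined with a non-Bell factor. The replication lemma then removes the one-occurrence restriction. For this I will use a chain gadget in which the single \(\Delta\)-vertex is threaded through copies of \(h\), each copy contracted against \(\Delta\)-type ports. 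The nonzero anchor instance \(J\) required by hypothesis~3 of that lemma comes from (REO1): pick support words of charge of both signs and close them into a nonzero grid. Alternatively, if a nonzero unary becomes realizable, I finish at once by \cref{lem:common-unary-exit}, since (REO2) gives \(\neg\Tract(K\Gamma)\).

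The case split is on the structure of \(h\).
\begin{enumerate}
\item If \(h\) is a matching product of binary EO factors \((0,a,b,0)\) that fail ARS up to a common scalar, I extract one such binary. This uses \cref{thm:external-factor}, whose non-one-sided hypothesis holds by (REO1). The binary has transfer \(\diag(a,b)\)-type. If \(a/\bar b\) is not a root of unity in modulus/phase, interpolation yields arbitrary diagonal-type EO binaries and hence non-\(\EOTract\) sets. Otherwise the non-ARS phase combined with \(\Gamma\) should contradict (REO2) via the single-weight classification.
\item If \(h\) is not a matching product, I take its EO pairing restrictions \(h^\pi\). Either some \(h^\pi\) lies outside \(\cA\cup\cP\), or the support fails the \(\varepsilon\)-closure condition for both signs. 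Realizing enough of these inside an EO sub-family, via \cref{lem:ambient-eo-lifting}, gives \(\neg\EOTract\) and therefore hardness.
\end{enumerate}

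Within this plan, the step where hardness is actually certified is invoking \cref{thm:external-eo}. It applies to a purely EO set, so I will isolate an EO subfamily \(\Lambda'\) built from \(h\) by direct gadgets only. Hardness of \(\KHolant(\Lambda')\) then lifts to \(\KHolant(\Gamma)\) by \cref{lem:ambient-eo-lifting}.

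The main obstacle is showing that \(\EOTract\) must fail for some realizable EO family whenever \(h\notin\mathcal B_{\rm ARS}\). Tractable EO classes include \(\cA\)- and \(\cP\)-type signatures that are not ARS products, so \(h\) alone may be tractable. The argument must therefore combine \(h\) with the two-sidedness in (REO1) and the global failure \(\neg\Tract(K\Gamma)\). To do this, I would first try to show that a tractable EO family containing \(h\) and closed under the \(\Delta\)-slices would force a common \(\cA\), \(\cP\) or \(\cL\) K-presentation, or a one-sided structure, of all of \(\Gamma\). That would contradict (REO2).
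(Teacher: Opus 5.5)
There is a genuine gap: the mechanism you use to certify hardness cannot work in general. You want to apply \cref{thm:external-eo} to an EO family built from $h$ by direct gadgets, but $h\notin\mathcal B_{\rm ARS}$ does not force such a family to fail $\EOTract$.

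\textbf{Case~1 fails.} Every set of binary EO signatures satisfies $\EOTract$: take $\cC=\cP$; the support $\{01,10\}$ is closed under $\alpha\oplus\beta\oplus\gamma$, and every pairing restriction is a unary. So a binary $(0,a,b,0)$ with $ab=0$ or $|a|\ne|b|$ lies outside $\mathcal B_{\rm ARS}$, yet every EO family generated from it is tractable. This refutes the ``non-$\EOTract$ sets'' step of your case~1. Its other branch is empty: for $ab\ne0$, projective ARS fails exactly when $|a|\ne|b|$, so $a/\bar b$ is never a root of unity there. In addition, \cref{thm:external-single-weight} says nothing about a $\Gamma$ that is not single-layer.

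\textbf{Case~2 fails.} The signature $\neq_4=\delta_{0011}+\delta_{1100}$ is not a matching product. Yet its support is a coset, and each of its pairing restrictions is zero, $\Eq_2$ or $X$, so $\EOTract(\{\neq_4\})$ holds. This refutes the dichotomy you assert in case~2.

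\textbf{Where hardness must come from.} In such situations hardness must come from (REO2), and the paper obtains it through a nonzero unary and \cref{lem:common-unary-exit}:
\begin{itemize}
\item a singular or infinite-order binary yields a rank-one binary by \cref{lem:binary-interpolation}, and then unaries by \cref{thm:external-factor}, whose exception (REO1) excludes;
\item $\neq_4$, or any complementary quaternary $p\delta_s+q\delta_{\bar s}$, yields a unary by \cref{lem:hc-rel-complementary-unary}.
\end{itemize}
That lemma is where one-occurrence access to $\Delta$ and \cref{lem:single-occurrence-replication} genuinely apply, because attaching $\Delta$ to $\delta_{000111}+\delta_{111000}$ reproduces $\Delta\otimes\Delta$. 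For a general $h$ there is no gadget reproducing $\Delta$ on an output block, so your chain threaded through copies of $h$ is unsupported. If you did have unrestricted $\Delta$, factor extraction would give $\DeltaZero,\DeltaOne$ and no EO argument would be needed. The anchor $J$ is simply one $\Delta$-vertex closed by $X$, with value $1$.

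\textbf{The missing descent.} What is missing is the structural descent showing that every EO $h\notin\mathcal B_{\rm ARS}$ reaches one of these terminal objects, or a quaternary whose singleton problem is hard.
\begin{itemize}
\item In \cref{lem:hc-rel-mwx}, the paper runs the Meng--Wang--Xia self-loop and binary-path generation on $\{h\}$, lifting each fixed gadget to $\Gamma$ by \cref{lem:ambient-eo-lifting}. A generated binary that is singular or of infinite projective order gives a unary. Otherwise the generated binaries are, up to scalars, $(0,1,\omega,0)$ with $\omega\in\mu_k$, and the cases $k\ge3$, $k=2$, $k=1$ end in projective ARS, a hard quaternary, or a complementary quaternary.
\item In \cref{lem:hc-rel-cfs}, if $h$ is projectively ARS it factors into exactly ARS tensor-prime factors, at least one nonbinary because $h\notin\mathcal B_{\rm ARS}$. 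That factor is extracted by \cref{thm:external-factor}, and the Cai--Fu--Shao ARS descent on it ends in a hard quaternary or $\neq_4$.
\end{itemize}

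\textbf{Your closing plan.} You propose to deduce a common $\cA$, $\cP$ or $\cL$ presentation of all of $\Gamma$ from EO-tractability of a subfamily, but there is no mechanism behind this. Tractability of an EO subfamily says nothing about the non-EO members of $\Gamma$. The bridge to (REO2) has to be a signature whose adjunction already carries a dichotomy for the whole set; here that signature is a nonzero unary.
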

	
	\begin{lemma}[Complementary quaternaries supply a unary]
		\label{lem:hc-rel-complementary-unary}
		Assume \textnormal{(REO1)--(REO2)}, and suppose \(D\in\Gamma\), where
		\[
		D=p\delta_s+q\delta_{\bar s},
		\qquad
		s\in\{0,1\}^4,\quad \wt(s)=2,\quad p,q\in\AlgNums^\times,
		\]
		then \(\Gamma\) contains a nonzero unary.
	\end{lemma}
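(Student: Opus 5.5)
The plan is to use the single-occurrence machinery of \cref{lem:universal-directed-pair,lem:single-occurrence-replication} to turn \(D\) into a unary. First I normalize port order. Since \(\wt(s)=2\), after a port permutation (suppressed under the equivalence convention) I may take \(s=0011\) and \(\bar s=1100\). Then
\[
D=p\,\delta_{00}\otimes\delta_{11}+q\,\delta_{11}\otimes\delta_{00}.
\]
Viewed across the split \(\{1,2\}\mid\{3,4\}\), this is a rank-two matrix supported on the two complementary directed pairs. The key observation is that one copy of \(\Delta=\DeltaZero\otimes\DeltaOne\) can pin \(D\). Attaching \(\Delta\) through two native \(X\)-edges to ports \(1\) and \(3\) forces \(x_1=1\) and \(x_3=0\), which selects the term \(q\,\delta_{1100}\). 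The output is \(q\,\DeltaOne\otimes\DeltaZero\) on ports \(2,4\), a product of two point-mass unaries. By \cref{lem:universal-directed-pair} this gadget is available once, so \(\KHolant(\Gamma,u^{=1})\leT\KHolant(\Gamma)\) for the nonzero unary \(u=\DeltaOne\). That is assumption~1 of \cref{lem:single-occurrence-replication}, with \(g=u\).

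For assumption~2, I need a fixed gadget \(C\) with blocks \(A=(a)\), \(P=(p)\) and \(Y\). Connecting one \(u\)-vertex at \(a\) should reproduce \(u\) at \(p\) together with a fresh copy of a unary \(h\) at \(Y\). I take \(C\) to be one \(D\)-vertex with \(a\) joined to port \(1\). Since \(u=\DeltaOne\) fed through an \(X\)-edge forces \(x_1=0\), this selects \(p\,\delta_{0011}\). The residual is \(p\,\DeltaZero\otimes\DeltaOne\otimes\DeltaOne\) on ports \(2,3,4\), which contains \(\DeltaOne\) again, for instance at port \(3\). So I set \(P=(3)\) and \(Y=(2)\), and port \(4\) is a second output. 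To keep three clean blocks I will either take \(h=\DeltaZero\otimes\DeltaOne\) (binary, \(s=2\)) with \(Y=(2,4)\), or close port \(4\) with \(\DeltaOne\)-absorbing structure. The cleanest route is \(h=\Delta\) itself. Chaining copies passes \(\DeltaOne\) from \(P_i\) to \(A_{i+1}\), each copy emitting a disjoint \(\Delta\)-block, so the chaining hypothesis holds with \(\rho=p\).

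For assumption~3, I need a fixed instance \(J\) over \(\Gamma\cup\{u\}\) containing exactly one \(u\)-vertex and having nonzero value. Here I use (REO1) and charge conservation: the lone \(u=\DeltaOne\) carries doubled charge \(+1\), and \(\Gamma\) contains support words of negative charge. I expect this step to be the main obstacle, because a closed \(X\)-network has total charge zero while one unary has odd doubled charge and the rest of \(\Gamma\) is even arity. A single \(\DeltaOne\) therefore cannot close, and every instance with exactly one \(u\)-vertex vanishes.

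This forces a repair: work directly with \(g=u\otimes u'\), the binary output \(\DeltaOne\otimes\DeltaZero\) of the pinning step. It has charge zero, and one copy closes against \(D\) at ports \(1,3\), giving value \(q\ne0\). Replication by the chain above, with \(g=\DeltaOne\otimes\DeltaZero\) fed into ports \(1,3\) of a \(D\)-vertex (returning \(\DeltaZero\otimes\DeltaOne\) on \(2,4\); a second \(D\) restores the orientation), then yields unrestricted access to the nonzero binary rank-one \(\DeltaOne\otimes\DeltaZero\). Factor closure (\cref{lem:binary-interpolation}(1), applicable by (REO1)) then places the nonzero unary \(\DeltaOne\) in \(\Gamma\).
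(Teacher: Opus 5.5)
Your repaired step for assumption~2 of \cref{lem:single-occurrence-replication} does not work. With $g=\DeltaOne\otimes\DeltaZero$, your gadget feeds both ends of $g$ into ports $1,3$ of one $D$-vertex. That pins two of the four ports and leaves exactly two free, so the output is just the reversed pair $p\,\DeltaZero\otimes\DeltaOne$ on ports $2,4$, and a second $D$ merely reverses it back. The resulting $C$ has blocks $A$ and $P$ but no $Y$. A chain of $k$ copies therefore realizes only $\rho^k g(P_k)$ and emits no disjoint blocks $Y_1,\dots,Y_k$, so the lemma gives nothing.

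The amplification you found earlier (pinning one port of $D$ releases three point masses) belonged to the abandoned choice $g=\DeltaOne$. For that choice, as you observed, assumption~3 is impossible by charge (indeed degree) parity. Your instance $J$ is also not closed, since ports $2,4$ of $D$ stay dangling. A single $g$-vertex closed by one native $X$-edge, with value $X(1,0)=1$, is the natural $J$.

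The missing idea is that the pinned gadget must have more output ports than input ports. For a binary $g$ this means pinning each end of $g$ at a separate $D$-vertex:
\begin{itemize}
\item Join the $\DeltaOne$-end of $g$ through $X$ to port~$1$ of $D_a$, and the $\DeltaZero$-end to port~$1$ of $D_b$.
\item This forces the word $0011$ on $D_a$ and $1100$ on $D_b$. What remains is $pq$ times the point masses $(0,1,1)$ on ports $2,3,4$ of $D_a$ and $(1,0,0)$ on ports $2,3,4$ of $D_b$.
\item Take $P=(a_3,a_2)$ and $Y=(b_2,b_3)$, and close $(a_4,b_4)$ by a native $X$-edge, which contributes factor $1$. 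Then $C$ realizes $pq\,g(P)g(Y)$ and chains correctly.
\item Assumption~1 is \cref{lem:universal-directed-pair}, since a $g$-vertex is a $\Delta$-vertex with reversed port order.
\item Factor extraction under (REO1) then gives the unaries.
\end{itemize}

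With this repair your route is a valid alternative to the paper's. The paper first turns $D$ into $\neq_4$ by joining two copies, joins two copies of $\neq_4$ into $\delta_{000111}+\delta_{111000}$, and pins two of its ports with one $\Delta$ to release $\Delta\otimes\Delta$, taking $g=h=\Delta$. Your route works with $D$ directly and avoids that intermediate normalization.
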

	
	\begin{proof}
		First reduce \(D\) to the unweighted complementary quaternary.  After a
		port permutation take \(s=0101\), and connect the same last two port
		positions of two copies through native \(X\)-edges.  The resulting ordered
		quaternary is
		\[
		\begin{aligned}
			Q(x_1,x_2,y_1,y_2)
			&=\sum_{u,v\in\Ftwo}
			D(x_1,x_2,u,v)D(y_1,y_2,1-u,1-v)\\
			&=pq\bigl(\delta_{0110}+\delta_{1001}\bigr).
		\end{aligned}
		\]
		Thus, in output order \((x_1,y_2,x_2,y_1)\), the output is the nonzero
		scalar multiple \(pq\,\neq_4\); hence \(\neq_4\in\Gamma\).
		
		Put \(E=\neq_4\).  Joining the fourth ports of
		\(E(a_1,a_2,a_3,u)\) and \(E(b_1,b_2,b_3,v)\) through \(X(u,v)\)
		realizes, in output order \((a_1,a_2,b_3,a_3,b_1,b_2)\), the six-ary
		generalized disequality
		\[
		\delta_{000111}+\delta_{111000}.
		\]
		Attach the \(\DeltaZero\)-port of one \(\Delta\)-vertex through
		\(X\) to \(a_1\), and its \(\DeltaOne\)-port through \(X\) to \(a_3\).
		This selects the word \(111000\); in remaining-port order
		\((b_1,a_2,b_2,b_3)\), the output is
		\(\delta_{0101}=\Delta\otimes\Delta\).  In the notation of
		\cref{lem:single-occurrence-replication}, delete the unique
		\(\Delta\)-vertex to obtain a gadget with input block \(A=(a_1,a_3)\)
		and output blocks \(P=(b_1,a_2)\), \(Y=(b_2,b_3)\).
		Connecting a \(\Delta\)-vertex to \(A\) realizes \(\Delta(P)\Delta(Y)\).
		Connect \(P\) of each copy to \(A\) of the next, in the displayed orders,
		with no other port identifications.  A chain of \(k\) copies therefore
		uses exactly one \(\Delta\)-vertex and realizes
		\(\Delta(P_k)\prod_{i=1}^k\Delta(Y_i)\) on pairwise disjoint output blocks.
		Closing a single \(\Delta\)-vertex by
		\(X\) has value one.  The reductions in
		\cref{lem:universal-directed-pair,lem:single-occurrence-replication}
		therefore give unrestricted access to \(\Delta\), so the equivalence
		convention places \(\Delta\in\Gamma\).
		
		Now use \(\Delta=\DeltaZero\otimes\DeltaOne\).
		By \textnormal{(REO1)}, the common-one-sided exception in
		\cref{thm:external-factor} is impossible.  Factor closure gives
		\(\DeltaZero,\DeltaOne\in\Gamma\).
	\end{proof}
	
	\begin{lemma}[EO alternative]
		\label{lem:hc-rel-mwx}
		Assume \textnormal{(REO1)--(REO2)}, and let \(0\ne h\in\Gamma\) be a
		positive-arity EO signature.  Then either there is a
		\(c\in\AlgNums^\times\) such that
		\[
		(c^{-1}h)(x)=\overline{(c^{-1}h)(\bar x)}
		\qquad(x\in\{0,1\}^{\arity(h)}),
		\]
		or \(\KHolant(\Gamma)\) is \(\SharpP\)-hard.
	\end{lemma}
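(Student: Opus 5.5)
The plan is to reduce the projective-ARS question for \(h\) to a modulus condition on EO binaries obtained from \(h\). Any failure of that condition should yield a nonzero unary in \(\Gamma\), and then \cref{lem:common-unary-exit} together with \textnormal{(REO2)} gives \(\SharpP\)-hardness. First I record what projective ARS means for a binary EO signature \(b=(0,a,a',0)\) with \(aa'\ne0\). The condition \(c^{-1}a=\overline{c^{-1}a'}\) is solvable for some \(c\in\AlgNums^\times\) exactly when \(|a|=|a'|\); one takes \(c\) to be a square root of \(a/\overline{a'}\), suitably normalized. If instead \(|a|\ne|a'|\), the transfer \(bX=\diag(a,a')\) is nonsingular and \(a/a'\) is not a root of unity, so its projective order is infinite. \cref{lem:binary-interpolation}(2) then supplies a nonzero rank-one binary, and part~(1), whose charge hypothesis is \textnormal{(REO1)}, puts nonzero unaries into \(\Gamma\). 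So every directly realizable nonzero EO binary either has \(|a|=|a'|\) or forces hardness.

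For general arity \(2d\), I would treat a general coordinate pair \(x,\bar x\) in the central layer by isolating it with a direct gadget that outputs an EO binary whose two entries are \(h(x)\) and \(h(\bar x)\), up to a common known scalar. Contracting pairs of ports of \(h\) by \(X\)-loops (self-loops) preserves EO support and lowers the arity, so it produces EO binaries whose entries are sums of values of \(h\). To isolate individual values I would use the one-occurrence pair \(\Delta=\DeltaZero\otimes\DeltaOne\) from \cref{lem:universal-directed-pair}. Attaching \(\Delta\) through \(X\) to an oppositely-bitted port pair of \(h\) pins those two ports. Iterating this needs many copies of \(\Delta\). I would obtain them as in the proof of \cref{lem:hc-rel-complementary-unary}: build a chain gadget from \(h\) and use \cref{lem:single-occurrence-replication}, or else show directly that failure of projective ARS already makes some self-contraction violate the modulus condition. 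Inducting on \(d\), I would conclude that for every central \(x\) with \(h(x)\ne0\) we get \(h(\bar x)\ne0\) and \(|h(x)|=|h(\bar x)|\); otherwise some pinned binary has unequal moduli, or is singular and yields a unary by \cref{lem:binary-interpolation}(1). When \(h(x)\ne0\) but \(h(\bar x)=0\), the pinned binary is the singular \((0,h(x),0,0)\), which is rank one, so this case also yields a unary.

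The remaining step is phase coherence. Equal moduli give \(h(x)/\overline{h(\bar x)}=\omega_x\) with \(|\omega_x|=1\), and projective ARS requires \(\omega_x\) to be independent of \(x\) on the support (then \(c^2=\omega\)). Given two support words \(x,y\) with \(\omega_x\ne\omega_y\), I would pin \(h\) so that the output is a quaternary or binary signature whose entries combine \(h(x),h(y)\) and their complements. Contracting it to a binary, I expect a linear combination such as \((0,\alpha h(x)+\beta h(y),\alpha' h(\bar x)+\beta' h(\bar y),0)\) with conjugate-paired coefficients to have unequal moduli for a suitable choice, reducing again to the binary case. This is the step I expect to be the main obstacle. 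Finding a gadget family in which the mixed phases necessarily break the modulus equality needs care when cancellation occurs, and I would first try two-copy constructions of \(h\) contracted against itself (an \(h\cdot\overline{h}\)-type pairing). If no such family works uniformly, I would fall back on \cref{thm:external-eo} applied to the EO subset produced, using \textnormal{(REO2)} and \cref{lem:ambient-eo-lifting} to transfer its hardness to \(\KHolant(\Gamma)\).
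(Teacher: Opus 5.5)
Your binary case agrees with the paper's. A singular $(0,a,b,0)$, or a nonsingular one whose transfer $\diag(a,b)$ has infinite projective order, yields unaries through \cref{lem:binary-interpolation} and factor extraction under \textnormal{(REO1)}, hence hardness by \textnormal{(REO2)} and \cref{lem:common-unary-exit}; otherwise $|a|=|b|$ and a suitable algebraic $c$ exists.

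The gap is the entire case $\arity(h)\ge4$. Your isolation step needs unrestricted pinning, which is not available.
\begin{itemize}
\item \cref{lem:universal-directed-pair} permits at most one $\Delta$-vertex per instance. So even a single pinned binary cannot be used in the many-copy interpolation of \cref{lem:binary-interpolation}, and for $\arity(h)\ge6$ one binary already needs several pins.
\item The replication in \cref{lem:hc-rel-complementary-unary} is specific to $\neq_4$, where one pin on a two-copy output produces $\Delta\otimes\Delta$. A general EO $h$ supplies no such chain gadget.
\item The plan is also circular. If $\Delta$ were freely available, \cref{thm:external-factor} would give $\DeltaZero,\DeltaOne$, and \cref{lem:common-unary-exit} would give hardness at once, without using anything about $h$.
\item Even granting pins, fixing the bits of a port pair discards the complementary assignment of that pair. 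A pinned binary therefore compares $h(y01)$ with $h(y10)$ for a fixed nonempty prefix $y$. These words are not complements, so the singular binary $(0,h(x),0,0)$ you expect when $h(\bar x)=0$ does not arise this way.
\end{itemize}

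The phase-coherence step is left open, and your fallback cannot close it, because failure of projective ARS does not make the EO sublanguage hard. For example, $(0,1,2,0)\otimes(0,1,1,0)$ is not projectively ARS, yet its singleton satisfies $\EOTract$ with $\cC=\cP$, so \cref{thm:external-eo} makes it tractable. Its hardness inside $\Gamma$ comes only from the ambient set: a self-loop joining the two factors produces the infinite-order binary $(0,1,2,0)$, which the two-sided set converts to unaries.

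What is needed is an argument that uses only native self-loops and binary paths, valid over $\Gamma$ by \cref{lem:ambient-eo-lifting}, and shows that a non-ARS $h$ forces either a bad binary or a bad quaternary. The paper supplies this by running the Meng--Wang--Xia generating process on $\{h\}$:
\begin{itemize}
\item A singular or infinite-order generated binary gives unaries.
\item Otherwise the generated binaries are $\bigl(\begin{smallmatrix}0&1\\ \omega&0\end{smallmatrix}\bigr)$ with $\omega$ in a stabilized group $\mu_k$.
\item When $k\ge3$, the cited lemmas give projective ARS.
\item When $k\in\{1,2\}$, they yield projective ARS, a hard EO quaternary, or a complementary quaternary; the last gives a unary by \cref{lem:hc-rel-complementary-unary}.
\end{itemize}
Your proposal has no substitute for this case analysis.
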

	
	\begin{proof}
		We first settle the scalar field.  Suppose an algebraic signature \(h_0\) is
		ARS after multiplication by an arbitrary nonzero complex scalar.  Then, for
		some \(\rho\in\mathbb C^\times\),
		\[
		h_0(x)=\rho\,\overline{h_0(\bar x)}.
		\]
		Choose \(x\) with \(h_0(x)\ne0\).  The same identity implies
		\(h_0(\bar x)\ne0\), and hence
		\[
		\rho=\frac{h_0(x)}{\overline{h_0(\bar x)}}\in\AlgNums^\times,
		\qquad \rho\overline\rho=1.
		\]
		Set \(c=1+\rho\) if \(\rho\ne-1\), and set \(c=i\) if
		\(\rho=-1\).  Then \(c\in\AlgNums^\times\) and
		\(c/\overline c=\rho\), so \(c^{-1}h_0\) is ARS.  Thus every source
		conclusion that ARS holds up to a nonzero scalar has the algebraic scalar
		required in the statement.
		
		We next handle the binary case.  Write \(h=(0,a,b,0)\).  If \(ab=0\), then \(h\) is a
		nonzero singular binary, so \cref{lem:binary-interpolation} and factor
		extraction give a nonzero unary.  By \textnormal{(REO2)} and
		\cref{lem:common-unary-exit}, this is hard.  Suppose \(ab\ne0\), and put
		\[
		A=hX=\operatorname{diag}(a,b).
		\]
		If \(A\) has infinite projective order, \cref{lem:binary-interpolation}
		first gives a nonzero rank-one binary; applying its singular case and factor
		extraction again gives a unary and hence hardness.  Otherwise \(a/b\) is a
		root of unity, so \(|a|=|b|\).  With
		\(\rho=a/\overline b\), one has
		\(h(x)=\rho\overline{h(\bar x)}\) for all \(x\), and the preceding scalar
		argument proves the required projective ARS conclusion.
		
		Assume now that \(\arity(h)\ge4\).  Let \(L\) be the number field generated
		over \(\mathbb Q\) by all entries of \(\Gamma\) and their complex conjugates.
		It is a finite extension of \(\mathbb Q\).  Every output of a fixed direct
		\(X\)-gadget over signatures already generated from \(h\) has entries in
		\(L\), and every nonzero ratio used to normalize such an output also lies in
		\(L\).
		
		Run the generating process of
		\cite[Definition~29]{MengWangXia2025EO} on the singleton source \(\{h\}\),
		with \(X\) as its initial binary.  Each self-loop and each binary path in that
		definition is a fixed direct \(X\)-gadget.  We therefore adjoin its output to
		\(\Gamma\) by \cref{lem:ambient-eo-lifting}; zero outputs are
		discarded.  If a generated nonzero binary is singular, the singular case of
		\cref{lem:binary-interpolation} gives unary factors.  If it is nonsingular
		but its transfer has infinite projective order, the other case of that lemma
		gives a rank-one binary and then unary factors.  Either event is hard by
		\textnormal{(REO2)} and \cref{lem:common-unary-exit}.  Notice that
		\textnormal{(REO1)} persists under these equivalent augmentations, so the
		factor-extraction exception never occurs.
		
		Otherwise every generated nonzero binary is nonsingular and,
		up to a nonzero scalar, has the form
		\[
		B_r=\begin{pmatrix}0&1\\ r&0\end{pmatrix},
		\qquad r\in L^\times,
		\qquad B_rX=\operatorname{diag}(1,r),
		\]
		where \(r\) is a root of unity.  Let \(R_i\) be the parameter set represented
		by the source family \(B_i(h)\) at stage \(i\).  The path clause in
		Definition~29 makes each nonempty \(R_i\) the subgroup generated by the
		root-of-unity parameters available at that stage; repeated self-loop descent
		\cite[Lemma~39]{MengWangXia2025EO} supplies nonemptiness.  To make the
		monotonicity explicit, write the source's stage operation as \(\Phi\).
		Its self-loop and path construction is monotone under inclusion of its
		input parameter set.  Finite order and the positive-power path clause give
		\(1\in R_1\), so \(R_0=\{1\}\subseteq R_1\).  Inductively,
		\(R_i=\Phi(R_{i-1})\subseteq\Phi(R_i)=R_{i+1}\).
		Equality at one stage forces equality thereafter by the same operation.
		Every \(R_i\) lies in the group of roots of unity of \(L\), which is finite.
		Consequently this increasing chain stabilizes at some finite stage \(i_0\).
		For a unique \(k\ge1\), writing \(\mu_k=\{\omega:\omega^k=1\}\), the stabilized
		source notation is therefore
		\[
		B(h)=B_{i_0}(h)=\{B_\omega:\omega\in\mu_k\}.
		\]
		Thus every gadget construction invoked below is finite, as
		required by \cref{lem:ambient-eo-lifting}.
		
		If \(k\ge3\), \cite[Lemmas~40--42]{MengWangXia2025EO} show that \(h\) is
		ARS up to a nonzero scalar.  The opening scalar argument completes this case.
		
		Suppose \(k=2\), and write \(Y=B_{-1}=(0,1,-1,0)\).  Both \(X\) and \(Y\)
		are available in the source derivation, and \(Y\) is realized by a direct
		gadget.  We use the proof of
		\cite[Lemma~47]{MengWangXia2025EO}, not merely its displayed problem-level
		reduction.  Its arity and support-distance inductions use only fixed
		\(X\)- or \(Y\)-self-loops.  At arity four, the proofs of
		\cite[Lemmas~45--46]{MengWangXia2025EO}, before invoking their general
		problem reduction, construct either an EO quaternary
		\(q\) for which the singleton problem is hard by
		\cite[Theorem~16]{MengWangXia2025EO}, or a nonzero complementary
		quaternary
		\[
		D=p\delta_s+q'\delta_{\bar s},
		\qquad pq'\ne0;
		\]
		the two-copy subcase realizes \(\neq_4\) itself.  All these signatures arise
		by the fixed self-loops and two-copy gadgets just identified, so
		\cref{lem:ambient-eo-lifting} adjoins them relative to \(\Gamma\).  The hard
		quaternary gives hardness by restriction.  The complementary quaternary gives a
		unary by \cref{lem:hc-rel-complementary-unary}, and then hardness by
		\cref{lem:common-unary-exit}.  In the remaining case, the proof of
		Lemma~47 gives ARS up to a nonzero scalar, to which the opening argument applies.
		
		Finally suppose \(k=1\).  Follow the proof of
		\cite[Lemma~48]{MengWangXia2025EO}, and let \(C(h)\) denote its set of
		nonzero quaternary signatures obtained by successive \(X\)-self-loops.
		Every member of \(C(h)\) is therefore available by
		\cref{lem:ambient-eo-lifting}.  Moreover, every binary generated from such a
		signature is also generated from \(h\).  Hence its generated binary set is
		contained in \(B(h)=\{X\}\), and Lemma~39 makes that set nonempty; it is
		therefore exactly \(\{X\}\).  If a member of \(C(h)\) lies outside
		\(\{X^{\otimes2},Y^{\otimes2}\}\), the proof of Lemma~45 reaches exactly
		one of the hard or complementary quaternaries described in the preceding
		paragraph.  If \(Y^{\otimes2}\in C(h)\), adjoin this signature and apply
		\cref{thm:external-factor}.  Assumption \textnormal{(REO1)} excludes its exceptional case, so
		\(Y\) is available; the proof of Lemma~47 then applies because
		\(B(h)\subseteq\{X,Y\}\) and both binaries are available.  This replaces,
		rather than imports, the problem-level tensor-power reduction in
		\cite[Lemma~38]{MengWangXia2025EO}.  The case \(C(h)=\varnothing\) is
		impossible for nonzero \(h\) by Lemma~39.  In the remaining case
		\(C(h)=\{X^{\otimes2}\}\), the reorganizing argument in Lemma~48, together
		with \cite[Lemma~43]{MengWangXia2025EO}, makes \(h\) a scalar multiple of a
		real domain-symmetric signature, hence projectively ARS.  Its real and
		imaginary parts are used only as vectors in that argument; they are
		not signatures adjoined to \(\Gamma\).  The opening scalar argument again
		makes the scalar algebraic.
	\end{proof}
	
	\begin{lemma}[ARS alternative]
		\label{lem:hc-rel-cfs}
		Assume \textnormal{(REO1)--(REO2)}, and let \(0\ne h\in\Gamma\) be a
		positive-arity EO signature for which there is a
		\(c\in\AlgNums^\times\) satisfying
		\[
		(c^{-1}h)(x)=\overline{(c^{-1}h)(\bar x)}
		\qquad(x\in\{0,1\}^{\arity(h)}).
		\]
		If there are no \(\lambda\in\AlgNums^\times\), perfect matching \(P_h\)
		together with an orientation \((i,j)\) of each matched pair, and
		\(a_{ij}\in\AlgNums^\times\) such that
		\[
		h=\lambda
		\bigotimes_{(i,j)\in P_h}(0,a_{ij},\overline{a_{ij}},0)_{ij},
		\]
		then \(\KHolant(\Gamma)\) is \(\SharpP\)-hard.
	\end{lemma}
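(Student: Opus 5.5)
The plan mirrors the proof of \cref{lem:hc-rel-mwx}, with the ARS-signature dichotomy of Cai, Fu, and Shao in place of the Meng--Wang--Xia EO classification.

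\emph{Normalization.} First replace \(h\) by \(c^{-1}h\). This is a known nonzero algebraic rescaling, so by the equivalence convention we may assume \(h\) is exactly ARS. The desired factorization is invariant under algebraic scalars: if \(c^{-1}h\) equals \(\lambda'\) times such a product, then \(h\) equals \((c\lambda')\) times the same product.

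\emph{Binary case.} Write \(h=(0,a,\overline a,0)\). If \(a\ne0\), the factorization holds with one matched pair, so this case is vacuous. If \(a=0\), then \(h=0\), which is excluded.

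\emph{Arity at least four.} Let \(L\) be the number field generated by all entries of \(\Gamma\) and their conjugates. I would run the self-loop and binary-path generation on \(\{h\}\), as in the proof of \cref{lem:hc-rel-mwx}, adjoining each output through \cref{lem:ambient-eo-lifting}. Any generated binary that is singular, or whose transfer has infinite projective order, gives a nonzero unary via \cref{lem:binary-interpolation} together with factor extraction. Extraction is legal because (REO1) excludes the one-sided exception of \cref{thm:external-factor}. A nonzero unary gives hardness by (REO2) and \cref{lem:common-unary-exit}. Otherwise the generated binaries form the finite group \(\{B_\omega:\omega\in\mu_k\}\) inside the roots of unity of \(L\).

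Next I would follow the proofs of the ARS dichotomy \cite{CaiFuShao2020ARS}, in particular the reduction of a non-\(\mathcal B_{\rm ARS}\) ARS signature to a hard quaternary by successive \(X\)-self-loops and two-copy contractions. Those steps are fixed direct gadgets, so \cref{lem:ambient-eo-lifting} transports them into \(\Gamma\). The inductive descent should produce one of two outcomes.
\begin{enumerate}
\item An ARS quaternary not in \(\mathcal B_{\rm ARS}\). For this I would use the six-vertex/eight-vertex classification \cite{CaiFu2023,CaiFuXia2018}, arguing that such a quaternary is hard as a singleton unless it lies in a tractable family. I would then show that every tractable family either forces \(\Tract(K\Gamma)\), contradicting (REO2), or yields a complementary quaternary \(p\delta_s+q\delta_{\bar s}\). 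The latter gives a unary by \cref{lem:hc-rel-complementary-unary}, hence hardness.
\item A quaternary in \(\{X^{\otimes2},Y^{\otimes2}\}\). Here I would factor \(Y^{\otimes 2}\) via \cref{thm:external-factor} and reuse the \(k=2\) argument from \cref{lem:hc-rel-mwx}.
\end{enumerate}

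\emph{Remaining case.} If every quaternary reachable from \(h\) decomposes on a perfect matching into EO--ARS binaries, I would argue that \(h\) itself does. The approach is induction on arity: a self-loop on a pair \((i,j)\) either kills the pair or splits off a factor \((0,a,\overline a,0)_{ij}\). Combining this with the unique prime factorization \cite{CaiFuShao2020ARS} would exhibit \(h\) as a matching product of the form~\eqref{eq:hc-rel-B}. That contradicts the hypothesis, so hardness is reached on every branch.

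\emph{Main obstacle.} The hardest step is the last one, together with the quaternary tractable families in outcome 1. Problem-level reductions in the sources must be replaced by explicit direct gadgets, and every source-tractable quaternary must be shown to contradict (REO2) or to supply a unary. I would begin by checking which tractable eight-vertex/six-vertex families are compatible with (REO1), since two-sidedness together with a unary should collapse them into \(\Tract\).
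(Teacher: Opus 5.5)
Your normalization and binary case are correct. You also rightly note that the Cai--Fu--Shao gadgets transfer to \(\Gamma\) via \cref{lem:ambient-eo-lifting}. The gap is in the step that carries the lemma: you never show that an EO--ARS signature outside \(\mathcal B_{\rm ARS}\) yields, by direct gadgets and factor extraction, either a quaternary whose singleton problem is \(\SharpP\)-hard or the signature \(\neq_4\).

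Your substitute for this is the ``remaining case'' induction, in which a self-loop ``kills the pair or splits off a factor \((0,a,\overline a,0)_{ij}\)''. It fails. A self-loop \(\partial_{ij}\) deletes ports \(i,j\) and carries no factor on them. Knowing that all iterated self-loop contractions lie in \(\mathcal B_{\rm ARS}\) does not put \(h\) there. The paper's own \(R_8^{\rm cen}\) is a counterexample:
\begin{itemize}
\item it is real, EO and complement-invariant, hence exactly ARS, and it is tensor-prime;
\item every \(\partial_{pq}R_8^{\rm cen}\) is an \(X\)-matching product;
\item yet its fourteen support words rule out membership in \(\mathcal B_{\rm ARS}\).
\end{itemize}
Excluding such tensors is precisely the mutual-orthogonality analysis of \cite{CaiFuShao2020ARS}, which your outline never mentions. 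There, Lemma~3.2 gives a hard quaternary, \(\neq_4\), or the mutual-orthogonality case. Lemmas~4.1--4.6 propagate divisibility and descend through signatures outside \(\mathcal B_{\rm ARS}\), using mergings and attachments of \(iY=(0,i,-i,0)\). Those attachments are direct gadgets only after \(iY\) has been extracted by \cref{thm:external-factor}. Lemma~4.7 then excludes mutual orthogonality at arities \(4,6,8\).

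Your other branches do not close the gap. In outcome~1, a single quaternary in a tractable six- or eight-vertex family cannot ``force \(\Tract(K\Gamma)\)'', because that predicate concerns the entire set. What is needed is that every non-hard alternative produces \(\neq_4\), which \cref{lem:hc-rel-complementary-unary,lem:common-unary-exit} convert into hardness; CFS Lemma~3.2 supplies exactly this. The binary-group machinery and outcome~2, borrowed from \cref{lem:hc-rel-mwx}, cannot decide membership in \(\mathcal B_{\rm ARS}\). In its non-hard branch the Meng--Wang--Xia analysis ends with projective ARS, which you already assume. Moreover, \(X^{\otimes2}\) and \(Y^{\otimes2}\doteq(iY)^{\otimes2}\) already lie in \(\mathcal B_{\rm ARS}\).

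The paper's route is shorter:
\begin{enumerate}
\item Factor \(c^{-1}h\) into tensor-prime factors that are exactly ARS, with algebraic representatives obtained by the ratio construction.
\item Some factor \(f\) is nonbinary, since otherwise \(h\) would have the prohibited form; hence \(f\notin\mathcal B_{\rm ARS}\).
\item Adjoin \(f\) by \cref{thm:external-factor}, whose exception (REO1) excludes.
\item Run the proof of CFS Theorem~4.8 on the singleton \(\{f\}\). It outputs either a hard quaternary, giving hardness by restriction, or \(\neq_4\), giving a unary and hence hardness by (REO2).
\end{enumerate}
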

	
	\begin{proof}
		Put \(h'=c^{-1}h\), which is algebraic, EO, and ARS.  By unique factorization
		and the ARS factorization lemma
		\cite[Lemmas~2.13--2.14]{CaiFuShao2020ARS}, choose a tensor-prime
		factorization of \(h'\), up to an overall scalar, into factors
		\(g_1,\ldots,g_m\) on disjoint port blocks, each satisfying ARS exactly.
		We now choose representatives without losing either
		algebraicity or ARS.  For each \(j\), choose
		\(\alpha_j\in\operatorname{supp}(g_j)\), let
		\(\alpha=\alpha_1\cdots\alpha_m\), and define on the port block of \(g_j\)
		\[
		q_j(z)=
		\frac{h'(\alpha_1,\ldots,\alpha_{j-1},z,
			\alpha_{j+1},\ldots,\alpha_m)}{h'(\alpha)}.
		\]
		Then \(q_j=g_j/g_j(\alpha_j)\).  Hence every \(q_j\) is algebraic, EO, tensor-prime,
		and projectively ARS, and
		\[
		h'=h'(\alpha)\bigotimes_{j=1}^m q_j.
		\]
		For each \(q_j\), the algebraic-scalar argument in the preceding lemma gives
		\(c_j\in\AlgNums^\times\) such that \(p_j=c_j^{-1}q_j\) satisfies ARS
		exactly.  Thus
		\[
		h'=h'(\alpha)\Bigl(\prod_{j=1}^m c_j\Bigr)
		\bigotimes_{j=1}^m p_j,
		\]
		with an algebraic overall scalar and algebraic EO--ARS tensor-prime factors \(p_j\).
		If every tensor-prime factor were binary, the prohibited displayed decomposition of
		\(h\) would hold.  Thus there is a nonbinary tensor-prime EO--ARS factor
		\(f\notin\mathcal B_{\rm ARS}\).  Iterated use of
		\cref{thm:external-factor} adjoins \(f\) to the current set; by
		\textnormal{(REO1)}, its common-one-sided exception is impossible.  Whenever
		the cited descent subsequently extracts another factor, we repeat the
		displayed ratio construction and rescale by a nonzero algebraic scalar to
		obtain an ARS representative.
		
		Apply the proof of
		\cite[Lemmas~3.2 and~4.1--4.7; Theorem~4.8]{CaiFuShao2020ARS} to the
		singleton EO--ARS source generated from \(f\), not to the possibly non-EO
		ambient language \(\Gamma\).  The signature-producing operations in that
		proof are local: the tensor connections in Lemma~3.2 and the mergings in Lemma~4.6 are
		fixed direct \(X\)-gadgets, while the selected-port attachments of the fixed
		binary \(iY=(0,i,-i,0)\) in Lemmas~4.5--4.6 are direct gadgets after that
		binary factor has been extracted.  These direct gadgets can be substituted over
		\(\Gamma\) by
		\cref{lem:ambient-eo-lifting}; every factor-extraction step is supplied
		separately by \cref{thm:external-factor}, whose exception is again excluded
		by \textnormal{(REO1)}.  Sections~3--4 of the cited proof use no polynomial
		interpolation.
		
		Concretely, Lemma~3.2 produces a direct EO quaternary whose singleton problem
		is hard, directly realizes \(\neq_4\), or enters its mutual-orthogonality
		case.  Lemmas~4.1--4.4 supply the divisibility propagation used in
		Lemma~4.6.  Lemmas~4.5--4.6 descend through signatures outside
		\(\mathcal B_{\rm ARS}\), and Lemma~4.7 rules out mutual orthogonality in the
		base arities \(4,6,8\).  Hence the descent adjoins either a hard quaternary
		or \(\neq_4\) by the reductions just described.  The first gives hardness by restriction;
		the second gives a unary by \cref{lem:hc-rel-complementary-unary} and then
		hardness by \cref{lem:common-unary-exit}.
		
		Finally, the coordinate restrictions called \emph{pinnings} in the proof of
		\cite[Lemma~2.14]{CaiFuShao2020ARS} are coordinate slices used to establish
		tensor-divisibility identities; they are not signatures obtained from direct
		pinning gadgets.  The induction above starts from \(h\in\Gamma\),
		or a factor adjoined by \cref{thm:external-factor}; a coordinate restriction of a
		larger tensor would not suffice, since connecting copies of the larger tensor
		could introduce additional terms.
	\end{proof}
	
	\begin{proof}[Proof of \cref{thm:hc-relative-eo-terminal}]
		Apply \cref{lem:hc-rel-mwx}.  Either \(\KHolant(\Gamma)\) is
		\(\SharpP\)-hard, or \(h\) is projectively ARS.  In the latter case,
		\(h\notin\mathcal B_{\rm ARS}\), so \cref{lem:hc-rel-cfs} gives hardness.
	\end{proof}
	
	\begin{remark}[Scope of the literature inputs]
		The ARS step uses \cite[Theorem~4.8]{CaiFuShao2020ARS}, not the Section~5
		theorem for a wholly EO--ARS signature set.  No theorem requiring the ambient
		set itself to be EO or EO--ARS is invoked: the source results are applied only
		to the indicated EO sublanguages, their direct gadgets are substituted using
		\cref{lem:ambient-eo-lifting}, and factor extraction is handled separately by
		\cref{thm:external-factor}.
	\end{remark}

	\Needspace{8\baselineskip}
	\subsection{Equalities and Generalized Equalities}
	\label{proofsubsec:equality-anchor}
	
	\begin{remark}[Simultaneous transformation of the edge and signature]
		\label{def:equality-anchor}
		The same matrix \(U\) in \cref{lem:equality-anchor} must transform both
		the native edge to a nonzero multiple of \(\Eq_2\) and the quaternary
		signature to a nonzero multiple of \(\Eq_4\).  Transforming the quaternary
		alone does not ensure that the transformed problem has equality edges.
	\end{remark}
	
	\begin{lemma}[Dichotomy from a transformed quaternary equality]
		\label{lem:equality-anchor}
		Let \(\cG\) be a finite algebraic signature set containing an ordered
		quaternary signature \(g\).  Suppose there are
		\[
		U\in\GL_2(\AlgNums),
		\qquad \kappa,\lambda\in\AlgNums^\times,
		\]
		such that
		\[
		(U^{-1})^{\mathsf T}XU^{-1}=\kappa\Eq_2,
		\qquad
		U^{\otimes4}g=\lambda\Eq_4.
		\]
		Then either \(\KHolant(\cG)\) is \(\SharpP\)-hard or
		\(\Tract(K\cG)\) holds.
	\end{lemma}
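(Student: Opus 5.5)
The plan is to perform the global holographic transformation by \(U\), so that the native edge becomes a multiple of \(\Eq_2\) and \(g\) becomes \(\Eq_4\). Then \cref{cor:higher-equality-unary} supplies pinnings, and \cref{lem:common-unary-exit} classifies. Concretely, the holographic identity \(\Holant(X\mid\cG)=\Holant(X U^{-1}\mid U\cG)\) holds, where \(XU^{-1}\) denotes the dual action \((U^{-1})^{\mathsf T}XU^{-1}=\kappa\Eq_2\). Each edge thus contributes a known factor \(\kappa\), so
\[
\KHolant(\cG)\equivT\Holant(U\cG),
\]
an ordinary Holant problem (native edge \(\Eq_2\)) over the finite algebraic set \(U\cG\). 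This set contains \(U^{\otimes4}g\doteq\Eq_4\). The scalars \(\kappa^{|E|}\) and \(\lambda\) are restored by exact postprocessing, as allowed in \cref{subsec:complexity-conventions}.

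Next, apply \cref{cor:higher-equality-unary} in the ordinary-Holant setting with \(d=4\). It gives \(\DeltaZero,\DeltaOne\in U\cG\) under the equivalence convention. Then \cref{lem:common-unary-exit}, applied to \(\cF=U\cG\) with the nonzero unary \(\DeltaZero\), shows that \(\Holant(U\cG)\) is \(\SharpP\)-hard unless \(\Tract(U\cG)\) holds. Hardness transfers back to \(\KHolant(\cG)\) through the equivalence above.

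The remaining step, which I expect to be the only real obstacle, is to show that \(\Tract(U\cG)\) implies \(\Tract(K\cG)\). Here \(U\) is not in \(GO(X)\), so \cref{lem:GOX-tract-invariance} does not apply directly. I would write \(\cF:=U\cG\) and \(\cF':=K\cG=KU^{-1}\cF\), and set \(N=KU^{-1}\). Then \(N^{\mathsf T}N=(U^{-1})^{\mathsf T}K^{\mathsf T}KU^{-1}=(U^{-1})^{\mathsf T}XU^{-1}=\kappa I\), so \(N\) is a scalar multiple of a complex orthogonal matrix. The plan is to show that \(\Tract\) is invariant under \(\cF\mapsto O\cF\) for \(O\) orthogonal up to scalar.

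For the transformability alternatives \ref{case:affine}--\ref{case:local-affine}, this is direct. If \(M\) witnesses \(\cC\)-transformability of \(\cF\), then \(NM\) witnesses it for \(N\cF\), because \((NM)^{\mathsf T}NM=\kappa M^{\mathsf T}M\in\cC\) and \((NM)^{-1}N\cF=M^{-1}\cF\). For the alternatives \ref{case:one-sided}, \ref{case:single-weight}, \ref{case:arity-two}, and \ref{case:matching}, which are not phrased via an arbitrary basis, I would instead argue that each implies one of the transformability alternatives or \ref{case:arity-two}. Alternative \ref{case:arity-two} is manifestly invariant under any basis change of factors.

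Alternatively, and more cleanly, I would cite the route used for \cref{thm:eq2-kholant-dichotomy}. With \(\Eq_2\)-type structure and pinnings present, the binary-anchor lemma of \cite{Liu2026DEQ2Holant} reduces \(\Tract\) to its four basis-covariant clauses. Those clauses transport under \(N\) as above, and \cref{thm:predicate-algorithms} then gives the \(\FP\) side.
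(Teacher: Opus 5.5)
Your reduction is the natural one, and its first two steps are correct. The paper itself only cites \cite[Lemma~3.8]{Liu2026DEQ2Holant}, and the remark before the lemma makes exactly your point: the single \(U\) must carry \(X\) to \(\kappa\Eq_2\) and \(g\) to \(\lambda\Eq_4\) simultaneously, so that \(\KHolant(\cG)=\kappa^{|E|}\Holant(U\cG)\). Then \cref{cor:higher-equality-unary}, \cref{lem:common-unary-exit} and downward inheritance (to discard the adjoined pinnings) give either hardness or \(\Tract(U\cG)\).

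The gap is in transporting \(\Tract\) back. Your premise that \cref{lem:GOX-tract-invariance} does not apply is wrong, because the relevant matrix is \(D=K^{-1}U\), not \(U\). The hypothesis gives \(U^{\mathsf T}U=\kappa^{-1}X\), and \((K^{-1})^{\mathsf T}XK^{-1}=I\), so \(D^{\mathsf T}XD=U^{\mathsf T}U=\kappa^{-1}X\) and \(D\in GO(X)\). Since \(U\cG=K(D\cG)\), that lemma gives \(\Tract(U\cG)=\Tract(K(D\cG))\iff\Tract(K\cG)\) at once. Your \(N=KU^{-1}=KD^{-1}K^{-1}\) is just the ordinary-coordinate form of \(D^{-1}\), which is why it is an orthogonal similitude.

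The argument you substitute does not work as written. You claim that \ref{case:one-sided}, \ref{case:single-weight} and \ref{case:matching} each imply \ref{case:arity-two} or a transformability alternative, and this is false in general. For instance, any set whose \(K\)-coordinate members are all supported in \(HW^{>}\) satisfies \ref{case:one-sided} vacuously, whatever its values, whereas \ref{case:arity-two}--\ref{case:local-affine} are genuine algebraic restrictions.

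What rescues you here is a fact you never use. \(U\cG\) contains \(\lambda\Eq_4\), and \((K^{-1})^{\otimes4}\Eq_4=\tfrac12\mathbf1_{\rm even}\) is two-sided, multilayer, and tensor-prime, has weight-two support words, and is invariant under complement. Hence \ref{case:one-sided}--\ref{case:matching} all fail for \(U\cG\). So \(\Tract(U\cG)\) can only hold through \ref{case:affine}--\ref{case:local-affine}, where your witness transport \(M\mapsto NM\) is correct. Either repair closes the proof.

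The ``cleaner'' route through the binary-anchor lemma is unnecessary and, as stated, unsupported. In the paper that lemma is invoked with the binary \(Z=K^{\otimes2}I\in K\Lambda\), and nothing in your hypotheses supplies it.
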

	
	\begin{proof}
		This is \cite[Lemma~3.8]{Liu2026DEQ2Holant}.
	\end{proof}
	
	\begin{lemma}[Dichotomy with a generalized equality]
		\label{lem:pure-ge}
		Let \(\cG\) be a finite algebraic signature set containing
		\[
		G_q=c_0\delta_{\zero^q}+c_1\delta_{\one^q},
		\qquad c_0,c_1\in\AlgNums^\times,\qquad q\ge4\text{ an integer}.
		\]
		Then either \(\KHolant(\cG)\) is \(\SharpP\)-hard or
		\(\Tract(K\cG)\) holds.
	\end{lemma}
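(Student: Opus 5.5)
We aim to reduce \cref{lem:pure-ge} to the situation of \cref{cor:higher-equality-unary} or \cref{lem:equality-anchor}.  The starting observation is that \(G_q\) lies in \(K\)-coordinates, with native edge \(X\).  In \(K\)-coordinates the all-zero and all-one words carry half-charges \(-q/2\) and \(q/2\), so whenever \(q\ge1\) the set \(\cG\) is two-sided.  This excludes the common-one-sided exception of \cref{thm:external-factor} throughout, and we may freely use factor closure.

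First we normalize \(G_q\) to a pure unweighted generalized equality by a global \(GO(X)\) transformation.  Take \(D=\diag(a,b)\) with \(a^q c_0=b^q c_1\), choosing \(a,b\) algebraic.  Then \(D^{\mathsf T}XD=abX\), so \(D\in GO(X)\), and \(DG_q\doteq \delta_{\zero^q}+\delta_{\one^q}=\Eq_q\).  By \cref{lem:normalization-ledger} and \cref{lem:GOX-tract-invariance}, neither the complexity nor the truth of \(\Tract\) changes.  We may therefore assume \(\Eq_q\in\cG\), with \(\Eq_q\) read in \(K\)-coordinates and \(X\) edges.

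Next we want a generalized equality of arity at least \(4\) in ordinary coordinates, or something that lets us apply \cref{lem:equality-anchor}.  With \(X\) edges, a self-loop on \(\Eq_q\) connects two ports through \(X\) and so forces them to be unequal.  The result is zero, so self-loops alone are useless.  Instead we join two copies of \(\Eq_q\) along one port through \(X\).  This realizes \(\delta_{\zero^{q-1}\one^{q-1}}+\delta_{\one^{q-1}\zero^{q-1}}\), which is a generalized equality after relabelling.  Iterating this construction produces weighted complementary tensors \(\delta_s+\delta_{\bar s}\) on many weight patterns.

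We then split on parity.  If \(q\) is odd, \(\cG\) contains an odd-arity nonzero signature.  Passing back to ordinary coordinates via \cref{eq:K-equivalence}, \cref{thm:external-odd} already gives the dichotomy.

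If \(q\) is even, we aim to realize a complementary tensor \(\delta_s+\delta_{\bar s}\) of arity \(4\).  For \(q=4\) this is \(\Eq_4\) itself.  For larger even \(q\), we join two copies of \(\Eq_q\) along \(q-2\) port pairs through \(X\).  Each connected pair forces its two bits to be complementary, so the remaining constraint is consistent.  The output is a quaternary of the form \(\delta_{s}+\delta_{\bar s}\) with \(s=00\,11\) after reordering, and its support lies in \(HW^=\cup\{\ldots\}\).  We then use the specific matrix
\[
U=\diag(1,1)K^{-1}\text{-type choices}
\]
to move the pattern \(\delta_{0011}+\delta_{1100}\) (an EO signature, namely \(\neq_4\)) into the form \(\Eq_4\).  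Concretely, \cref{lem:hc-rel-complementary-unary} applies directly to \(\neq_4\) and supplies a nonzero unary, but only under (REO1)--(REO2).  Those hypotheses hold here: (REO1) by two-sidedness, and (REO2) is exactly the case in which we must prove hardness.  Therefore, if \(\neg\Tract(K\cG)\) holds, \cref{lem:hc-rel-complementary-unary} gives a nonzero unary, and \cref{lem:common-unary-exit} gives hardness.

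The main obstacle is that \cref{lem:hc-rel-complementary-unary} assumes all signatures in \(\Gamma\) have even arity.  If \(\cG\) contains odd-arity signatures, we instead invoke \cref{thm:external-odd} in ordinary coordinates, as above.  We must also check carefully that the gluing of two copies of \(\Eq_q\) indeed yields \(\neq_4\) up to port order and a nonzero scalar.  We expect this to be a direct calculation: the two copies' all-zero word pairs with the other's all-one word on the connected ports, and vice versa, leaving \(\delta_{0011}+\delta_{1100}\) after ordering.
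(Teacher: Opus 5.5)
Your argument is correct, but it takes a longer route than needed. The paper does not reprove this lemma; it imports it from Liu--Meng \cite[Lemma~3.9]{Liu2026DEQ2Holant}.

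Within the paper's own toolkit, your first step already leads to the short proof. Once the diagonal \(D\in GO(X)\) has turned \(G_q\) into \(\Eq_q\) in \(K\)-coordinates, \cref{lem:GOX-tract-invariance} and the edge-count scalar carry both conclusions back to \(\cG\). Then \cref{cor:higher-equality-unary} places \(\DeltaZero,\DeltaOne\) in the set, and \cref{lem:common-unary-exit} finishes. That corollary is stated for the \(K\)-Holant setting and every \(d\ge4\), with no parity or all-even hypothesis.

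Your detour is also valid. You split on parity and send any nonzero odd-arity member to \cref{thm:external-odd}. In the all-even case you glue two copies of \(G_q\) along \(q-2\) wires to obtain \(c_0c_1(\delta_{0011}+\delta_{1100})\); the normalization is not even needed here, since \cref{lem:hc-rel-complementary-unary} accepts weighted complementary quaternaries. You then invoke that lemma. (REO1) holds because \(G_q\) is two-sided, and (REO2) passes to the enlarged set by \cref{lem:downward-inheritance}. In effect you re-derive the \(K\)-setting pinning statement from the machinery of \cref{subsec:hc-relative-eo}. The cost is the parity split and the all-even restriction inherited from the REO hypotheses, which the direct route through \cref{cor:higher-equality-unary} avoids.

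Two small repairs are needed.
\begin{enumerate}
\item For \(q=4\) you must also use the two-copy gluing along \(q-2=2\) wires. \(\Eq_4=\delta_{0000}+\delta_{1111}\) has \(\wt(s)=0\), so it is not of the form \cref{lem:hc-rel-complementary-unary} requires.
\item Drop the sentence about a ``\(U=\diag(1,1)K^{-1}\)-type'' change of basis moving \(\neq_4\) to \(\Eq_4\). Nothing of the kind is used or needed.
\end{enumerate}
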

	
	\begin{proof}
		This is \cite[Lemma~3.9]{Liu2026DEQ2Holant}.
	\end{proof}
	
	\begin{lemma}[Dichotomy from a two-point Walsh factor]
		\label{lem:walsh-ge-terminal}
		Let \(\Gamma\) be a finite algebraic signature set whose supported doubled
		charges include both signs.  Suppose a finite equivalent enlargement of
		\(\Gamma\) contains a nonzero signature with a tensor factor \(h\) of even
		arity \(2k\ge4\) with
		\[
		\widetilde h=(P^{-1})^{\otimes 2k}h
		=c_0\delta_a+c_1\delta_{\bar a},
		\qquad a\in\Ftwo^{2k},\quad c_0c_1\ne0.
		\]
		Then \(\KHolant(\Gamma)\) is \(\SharpP\)-hard or \(\Tract(K\Gamma)\)
		holds.  If \(c_1/c_0\) is a root of unity, a finite equivalent enlargement
		contains a quaternary \(g\) satisfying the two transformation conditions of
		\cref{lem:equality-anchor} with \(U=K\).  No reflection or prescribed group
		of binary transfers is required.
	\end{lemma}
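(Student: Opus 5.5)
The proof works in Walsh coordinates and relies on one observation. If two ports \(i,j\) of a \(K\)-coordinate tensor \(f\) are closed by a native \(X\)-edge, the result in Walsh coordinates is obtained by contracting \(\widetilde f\) on those ports with \(P^{\mathsf T}XP=2Z\). Likewise, joining one port of \(f\) to one port of \(f'\) by an \(X\)-edge contracts \(\widetilde f\otimes\widetilde f'\) with \(2Z\).

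\emph{Step 1: extract \(h\).} Since the supported charges of \(\Gamma\) take both signs, the common-one-sided exception of \cref{thm:external-factor} cannot occur. Iterated factor extraction therefore puts \(h\in\Gamma\) under the equivalence convention.

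\emph{Step 2: reduce to arity four.} Let \(\widetilde h=c_0\delta_a+c_1\delta_{\bar a}\). Closing ports \(i,j\) with \(a_i=a_j=u\) yields a tensor whose Walsh transform is
\[
2\bigl((-1)^u c_0\,\delta_{a'}+(-1)^{1-u}c_1\,\delta_{\bar a'}\bigr),
\qquad a'=a_{\widehat{ij}}.
\]
This is again a two-point Walsh tensor, with the ratio \(c_1/c_0\) replaced by \(-c_1/c_0\). Among any three ports two values of \(a\) agree, so while the arity is at least \(6\) we can always find such a pair. Iterating reaches a quaternary \(h_4\) with Walsh form \(c_0'\delta_a+c_1'\delta_{\bar a}\) and \(c_1'/c_0'=\pm c_1/c_0\).

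Up to complementing \(a\), the word \(a\) is either constant or balanced. In the balanced case, take two copies of \(h_4\) and join the two ports of the first copy with \(a=1\) to the two ports of the second copy with \(a=1\), using two \(X\)-edges. By the same \(Z\)-calculation, the output is a quaternary whose Walsh transform is supported on \(\{\zero^4,\one^4\}\), with ratio \((c_1'/c_0')^2\) up to sign. Relabel so that \(\widetilde g=\delta_{\zero^4}+r\,\delta_{\one^4}\), where \(r\) equals \((c_1/c_0)^{e}\) up to sign for some \(e\in\{1,2\}\).

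\emph{Step 3: the root-of-unity case.} We take \(m\) copies of \(g\) and join them in a chain by pairs of ports, each connecting ports with equal \(a\)-value. This multiplies the ratio by \(r\) at each join, up to a tracked sign. Combined with self-loops that return the arity to four, and \(m\) a multiple of the order of \(r\) (doubled if necessary to absorb the signs), this realizes a quaternary \(g\) with \(\widetilde g\doteq\delta_{\zero^4}+\delta_{\one^4}\).

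We then verify the two conditions of \cref{lem:equality-anchor} with \(U=K\).
\begin{itemize}
\item A direct computation gives \(KXK^{\mathsf T}=I\), hence \((K^{-1})^{\mathsf T}XK^{-1}=\Eq_2\) and \(\kappa=1\).
\item We have \(K^{-1}=\tfrac1{\sqrt2}P\diag(1,-i)\). Since \((-i)^4=1\), this gives \((K^{-1})^{\otimes4}\Eq_4\doteq P^{\otimes4}(\delta_{\zero^4}+\delta_{\one^4})\doteq g\), i.e.\ \(K^{\otimes4}g=\lambda\Eq_4\).
\end{itemize}
\cref{lem:equality-anchor} then gives the dichotomy.

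\emph{Step 4: the non-root-of-unity case.} The same chain construction with \(n\) copies realizes quaternaries \(g_n\) with \(\widetilde g_n\doteq\delta_{\zero^4}+r^n\delta_{\one^4}\), with \(r^n\) pairwise distinct. Given an instance containing \(N\) occurrences of \(\delta_{\zero^4}+t\,\delta_{\one^4}\) in Walsh form, replace each occurrence by \(g_n\), and expand the value \(\sum_j w_j\,t^j\) over the number \(j\) of occurrences choosing the \(\one^4\) term. Evaluating at \(n=1,\ldots,N+1\) gives a Vandermonde system for the \(w_j\). Solving it and setting \(t=0\) yields \(P^{\otimes4}\delta_{\zero^4}\), which is a tensor product of four copies of the unary \((1,1)^{\mathsf T}\). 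Factor extraction then gives a nonzero unary, and \cref{lem:common-unary-exit} completes the dichotomy.

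\emph{Main obstacle.} I expect the main difficulty to be the sign bookkeeping in Steps 2--3. The chaining must join only ports with equal \(a\)-values and must keep the residual word constant. The accumulated signs must also be controlled so that the ratio is exactly \(1\), not \(-1\). Failing that, one must check that a ratio of \(-1\) can be absorbed by one extra loop-and-glue round, or via the \(\diag(1,-i)\) freedom in \(K^{-1}\).
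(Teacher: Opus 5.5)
\textbf{Gap: the odd-weight case is missing from Step~2.} At arity four you assert that, up to complementing \(a\), the word \(a\) is either constant or balanced. But \(a\in\Ftwo^4\) can also have weight one or three, and complementation only exchanges these two weights.

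This case cannot be avoided in your scheme. Each self-loop closes two ports on which \(a\) agrees, so it preserves \(\wt(a)\bmod 2\). Hence an odd-weight \(a\) at arity \(2k\) reaches arity four with \(\wt(a)\in\{1,3\}\). Such tensors do occur where the lemma is applied, for instance the odd-parity one-block Walsh tables in \cref{prop:odd-dih-q4}. Your balanced-case gadget, which joins the two ports carrying \(a=1\), has nothing to act on there.

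The repair is short. For \(a=e_i\), join port \(i\) of two copies by one \(X\)-edge; the result is a six-port tensor with Walsh form \(\doteq -c_0^2\delta_{\zero^6}+c_1^2\delta_{\one^6}\). Closing one pair of its ports then gives \(\doteq\delta_{0000}+(c_1/c_0)^2\delta_{1111}\).

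The paper avoids the case split altogether, with no preliminary self-loops:
\begin{itemize}
\item Choose two positions where \(a\) agrees, renaming \(a,\bar a\) so that this bit is \(0\).
\item Join all other \(2k-2\) corresponding positions of two copies of \(h\).
\item The Walsh edge \(2Z\) keeps only the \((a,a)\) and \((\bar a,\bar a)\) terms, and their signs agree because \(2k-2\) is even.
\end{itemize}
This yields \(\widetilde q=\lambda(\delta_{0000}+(c_1/c_0)^2\delta_{1111})\) for every \(a\).

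\textbf{The rest is correct but takes a different route.} The paper closes two ports of \(q\) to get a binary \(B\) with Walsh transfer \(\diag(1,\rho)\).
\begin{itemize}
\item If \(\rho\) has infinite order, it cites \cref{lem:binary-interpolation} to obtain a rank-one binary.
\item If \(\rho\) has order \(\ell\), it attaches a chain of \(\ell-1\) copies of \(B\) to one port of \(q\).
\end{itemize}
You instead chain quaternaries and carry out the Vandermonde interpolation yourself to realize \(\bigl((1,1)^{\mathsf T}\bigr)^{\otimes4}\). This is valid, since the nodes \(r^n\) are pairwise distinct exactly when \(r\) is not a root of unity. Your argument is self-contained, whereas the paper reuses an existing binary lemma and keeps the gadget binary.

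\textbf{The sign bookkeeping is not an obstacle.} Join consecutive copies by two edges. Within each surviving term the two joined bits are equal, so each join contributes \(+1\). Then \(m\) copies give arity four and ratio exactly \(r^m\), with no self-loops needed.

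Of your two remedies for a ratio of \(-1\), squaring works, but the \(\diag(1,-i)\) freedom does not. With ratio \(-1\) one gets \(K^{\otimes4}g\doteq\delta_{0000}-\delta_{1111}\). No diagonal adjustment of \(K\) that preserves the edge condition turns this into \(\Eq_4\).
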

	
	\begin{proof}
		Adjoin the specified signature and then its factor \(h\) using
		\cref{eq:finite-equivalent-enlargement,thm:external-factor}; the supported
		charges of both signs exclude the common-one-sided exception.  All subsequent gadgets are
		native \(K\)-gadgets.  In Walsh coordinates an edge is
		\(P^{\mathsf T}XP=2Z\).
		
		Choose two positions at which \(a\) has the same bit.  Interchanging the
		names of \(a,\bar a\), if necessary, makes that bit zero; this merely
		renames the two terms, and uses no bit-flip gadget.  Take two copies of
		\(h\), leave these two positions exposed in each copy, and join all other
		corresponding positions between the copies.  Write \(J\) for the joined
		positions.  The diagonal edges allow only the two equally oriented pairs
		of support words.  Since \(|J|=2k-2\) is even, their signs agree.  Thus the
		four-port gadget output \(q\) has
		\[
		\widetilde q=\lambda(\delta_{0000}+\rho\delta_{1111}),
		\quad \rho=(c_1/c_0)^2,
		\quad \lambda=2^{2k-2}(-1)^{\wt(a|_J)}c_0^2\ne0.
		\]
		Closing two ports of \(q\) gives a binary \(B\) with
		\[
		\widetilde B=2\lambda\diag(1,-\rho),
		\qquad P^{-1}(BX)P=4\lambda\diag(1,\rho).
		\]
		If \(\rho\) has infinite multiplicative order,
		\cref{lem:binary-interpolation} supplies a rank-one binary.  Factor
		extraction supplies a nonzero unary, and \cref{lem:common-unary-exit}
		gives the conclusion.
		
		Otherwise let \(\ell\) be the order of \(\rho\).  A chain of \(\ell-1\)
		copies of \(B\) has projective Walsh transfer \(\diag(1,\rho^{-1})\);
		when \(\ell=1\) use the empty chain.  Attaching this chain to one port of
		\(q\) gives a gadget output \(g\) with
		\(\widetilde g=c(\delta_{0000}+\delta_{1111})\), \(c\ne0\).
		Writing \(E_4=\mathbf1_{\rm even}\), we have
		\[
		g=2cE_4,\qquad K^{\otimes4}g=4c\Eq_4,
		\qquad (K^{-1})^{\mathsf T}XK^{-1}=I.
		\]
		These are the conditions of \cref{lem:equality-anchor} with \(U=K\), so
		that lemma applies.  The equivalent enlargements transfer
		hardness back to \(\Gamma\); tractability is inherited by its subset.
	\end{proof}
	
	\section{Obtaining a Mountain}
	\label{sec:factor-charge}
	
	For an even-arity signature set with supported half-charges of both signs,
	we can adjoin either both point masses or a mountain, as defined in
	\cref{def:mountain}, whose height is the gcd of the nonzero supported
	half-charges.
	
	\begin{lemma}[Producing the gcd-height signature or both point masses]
		\label{lem:charge-bridge}
		Let \(\cG\) be a finite preprocessed algebraic signature set of even-arity
		signatures.
		Assume that among the supported values
		\[
		\wt(x)-\frac{\arity(f)}2,
		\qquad f\in\cG,\quad f(x)\ne0,
		\]
		there is at least one positive value and at least one negative value.  Put
		\[
		d=\gcd\left\{
		\left|\wt(x)-\frac{\arity(f)}2\right|:
		f\in\cG,\ f(x)\ne0,\
		\wt(x)\ne\frac{\arity(f)}2
		\right\}.
		\]
		Then \(d\ge1\), every supported difference
		\(\wt(x)-\arity(f)/2\) lies in \(d\mathbb Z\), and at least one of the
		following reductions holds.
		\begin{enumerate}
			\item There is an arity-\(2d\) signature \(M\) such that
			\[
			\KHolant(\cG,M)\leT\KHolant(\cG),
			\]
			\[
			M(\zero^{2d})M(\one^{2d})\ne0,
			\qquad
			M(x)=0\quad\text{if }\wt(x)\notin\{0,d,2d\}.
			\]
			\item Both point masses are accessible:
			\[
			\KHolant(\cG,\DeltaZero,\DeltaOne)
			\leT\KHolant(\cG).
			\]
		\end{enumerate}
		In the second outcome, if \(\neg\Tract(K\cG)\), then
		\cref{lem:downward-inheritance,lem:common-unary-exit} and the displayed
		reduction imply that \(\KHolant(\cG)\) is \(\SharpP\)-hard.
	\end{lemma}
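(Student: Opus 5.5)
The first two claims are formal.  Both charge signs occur, so the set of nonzero supported half-charges is nonempty and \(d\ge1\).  Every supported half-charge is either \(0\) or has absolute value in the defining set, so it lies in \(d\mathbb Z\).  The last sentence of the statement follows from the second outcome together with \cref{lem:downward-inheritance,lem:common-unary-exit}.  The real work is to build \(M\) or obtain the point masses.

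\textbf{Step 1: a one-sided pure carrier.}  For each \(f\in\cG\) with a support word of positive half-charge, I would take a support word \(x\) of maximal half-charge \(c>0\).  Attaching a fixed word of \(\Delta\)'s to pairs of ports would not be legitimate, since \(\Delta\) is only available once by \cref{lem:universal-directed-pair}.  So I would instead pair ports of \(f\) with loops and with copies of other signatures, concentrating charge.  My first attempt would use extremal layers:
\begin{itemize}
	\item Contract two copies of \(f\), one at top charge and one at bottom charge.  Charge conservation over \(X\)-edges ensures only extremal-layer words survive at the open ports.
	\item Self-loops on a pair \(\{0,1\}\)-complementary in \(x\) preserve \(\chi\) and reduce arity.
\end{itemize}
Repeating this yields signatures whose support lies in \(\{\chi\ge0\}\) with a nonzero value at a maximal word.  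The goal is a signature \(P_c\) with \(P_c(\one^{2c})\ne0\) and every other support word of charge at most \(0\).  The symmetric construction gives \(N_{c'}\) with \(N_{c'}(\zero^{2c'})\ne0\).

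\textbf{Step 2: a Euclidean algorithm on charges.}  Join one open port of \(P_a\) to one open port of \(N_b\) via \(X\).  Conservation gives an output whose charge-extremal words combine the heights \(a\) and \(b\), with extremal half-charge \(a-b\) when \(a>b\).  Iterating this by subtraction produces heights that eventually realize \(d\), since \(d\) is the gcd of charges actually supported.  Then contract a height-\(d\) positive carrier with a height-\(d\) negative carrier.  Because all half-charges lie in \(d\mathbb Z\), the only surviving support words on \(2d\) ports have \(\chi\in\{-d,0,d\}\).  This is precisely the weight condition \(\wt(x)\in\{0,d,2d\}\) in outcome~1, and the endpoint coefficients are products of the nonzero extremal values.

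\textbf{Step 3: the degenerate branch.}  At any stage, a contraction intended to give nonzero endpoints may instead vanish, or the extremal word may fail to be constant on the open ports.  In that case I expect the output to tensor-factor as a point mass times something, that is, to contain \(\DeltaZero\) or \(\DeltaOne\) as a factor, or \(\Delta\) itself.  \Cref{thm:external-factor} applies because both signs are present.  The argument of \cref{lem:hc-rel-complementary-unary} via \cref{lem:single-occurrence-replication} then upgrades one-occurrence access to unrestricted access to both point masses, which is outcome~2.

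The main obstacle is Step 2.  Arranging that the extremal coefficient of the combined gadget is a single nonzero product, rather than a sum that might cancel, requires choosing the extremal words and the port identifications carefully.  I would first try to force uniqueness through the maximality of charge, so that only one term contributes at the top charge.  Failing that, I would treat the cancellation case as the degenerate factorization branch of Step 3.
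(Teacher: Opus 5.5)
There is a genuine gap, and it is the obstacle you flag yourself. Every charge-combining step you propose is an \(X\)-contraction, whose coefficients are sums over internal assignments. Your fallback in Step~3, that a cancellation forces a point-mass tensor factor, has no basis: a vanishing sum such as \(f(x)+f(x')=0\) carries no structural information. Concretely, joining one port of \(P_a\) to one port of \(N_b\) gives at \(\one^{2a-1}\zero^{2b-1}\) the coefficient \(P_a(\one^{2a})N_b(\zero^{2b})+P_a(\one^{2a-1}0)N_b(1\zero^{2b-1})\). Its second term vanishes only for the ``pure'' carriers you aim for in Step~1, and self-loops do not remove intermediate positive layers, so Step~1 does not produce them. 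Nor is \(a-b\) the extremal charge of the output, since positive-charge words of \(N_b\) raise it.

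The missing idea is to add half-charges by \emph{disconnected} tensor products rather than by wiring. At a concatenation of supported words a tensor product takes a product of nonzero values, so nothing can cancel. Nonnegative multiplicities then need no Euclidean iteration. Pairing each positive \(a\) with each negative \(-b\) among the supported charges and summing gives a strictly positive \(v\) with \(\sum_c v_cc=0\). B\'ezout gives an integer \(u\) with \(\sum_c u_cc=d\), and \(n^\pm=\pm u+kv\) is nonnegative for large \(k\).

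The arity reduction also needs an anti-cancellation argument, which your self-loop step omits. If a signature has a non-constant supported word of half-charge \(d\), some self-loop keeps its half-charge-\(d\) layer nonzero. Otherwise the three-equation argument behind \cref{lem:wt-zero-loop}, applied inside that layer, kills the layer. This is the endpoint-preserving self-loop lemma the paper cites. It yields \(P,N\) of arity \(2d\) with \(P(\one^{2d})N(\zero^{2d})\ne0\), and divisibility by \(d\) alone then confines both supports to half-charges \(-d,0,d\). No purity is needed.

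Outcome~2 arises from the final case split, not from cancellation. If neither \(P\) nor \(N\) has both endpoints, their remaining support words are central.
\begin{itemize}
\item If both central parts are nonzero, permute a central word of \(P\) to \((\zero^d,\one^d)\) and join \(d\) ports of \(P\) to \(d\) ports of \(N\).
\item If only \(P\) has a central part, make the same permutation and join two copies of \(P\) to \(N\).
\end{itemize}
In both cases the output has unique endpoint extensions and gives \(M\). Its endpoint values involve central coefficients, not only the extremal ones as your Step~2 asserts, and that is why the split is needed. If both central parts vanish, then \(P\doteq\DeltaOne^{\otimes2d}\) and \(N\doteq\DeltaZero^{\otimes2d}\). \Cref{thm:external-factor} then extracts both point masses directly, with no single-occurrence replication.
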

	
	\begin{proof}
		Let
		\[
		C=\{\chi_{\arity(f)}(x):
		f\in\cG,\ f(x)\ne0,\ \chi_{\arity(f)}(x)\ne0\}
		\]
		be the nonzero supported half-charges.  The gcd assertion is immediate.
		The next argument expresses \(d\) and \(-d\) as sums of elements of
		\(C\) with nonnegative multiplicities.  Tensor products add half-charges, but their factor
		multiplicities cannot be negative.  Because \(C\) contains both signs, take every pair
		\(a,-b\in C\) with \(a,b>0\).  If \(g=\gcd(a,b)\), assigning multiplicity
		\(b/g\) to \(a\) and \(a/g\) to \(-b\) gives total half-charge zero.  Summing
		these assignments over all positive--negative pairs produces a vector
		\(v\in\mathbb Z_{>0}^C\) satisfying
		\[
		\sum_{c\in C}v_c c=0.
		\]
		Bézout's identity, iterated over the finite set \(C\), supplies an integer
		vector \(u\in\mathbb Z^C\) with \(\sum_{c\in C}u_c c=d\)
		\cite[Section~1.3, Theorem~1.2]{Apostol1976}.  The entries of \(u\) need not be
		nonnegative, but every entry of \(v\) is positive.  Consequently, for all
		sufficiently large \(k\),
		\[
		n^+=u+kv,\qquad n^-=-u+kv
		\]
		are nonnegative integer vectors, while their total half-charges remain \(d\)
		and \(-d\), respectively.  Thus \(n^+\) and \(n^-\) are the required
		multiplicity lists.
		
		Choose a supported word of half-charge \(c\) for each \(c\in C\).  The two
		tensor products prescribed by \(n^+,n^-\) contain supported words of half-charges
		\(d,-d\), and all their supported half-charges remain divisible by \(d\).
		Applying the
		endpoint-preserving self-loop lemma
		\cite[full version, Lemma~40; conference version, Lemma~37]{MengWangXiaZheng2025}
		and the equivalence convention gives \(P,N\in\cG\) of arity \(2d\) with
		\(P(\one^{2d})N(\zero^{2d})\ne0\).  Self-loops preserve half-charge, so
		\[
		\supp(P)\cup\supp(N)
		\subseteq
		HW_{2d}^=\cup\{\zero^{2d},\one^{2d}\}.
		\]
		If either \(P\) or \(N\) has both endpoints, take that signature as the
		required \(M\).
		We may therefore assume
		\[
		P(\zero^{2d})=0,
		\qquad
		N(\one^{2d})=0;
		\]
		and their remaining entries are central.
		
		If both central parts are nonzero, permute chosen central words to
		\((\zero^d,\one^d)\) in \(P\) and \((\one^d,\zero^d)\) in \(N\), expose
		the first \(d\) ports of each signature, and join their last \(d\) ports
		coordinatewise.  The resulting arity-\(2d\) signature \(Q\) has unique
		endpoint extensions and
		\[
		Q(\zero^{2d})=P(\zero^d,\one^d)N(\zero^{2d}),\qquad
		Q(\one^{2d})=P(\one^{2d})N(\one^d,\zero^d),
		\]
		both nonzero.  If only \(P\) has a central part, write
		\(N=n\delta_{\zero^{2d}}\) for \(n\in\AlgNums^\times\), permute a supported central word of \(P\) to
		\((\zero^d,\one^d)\), and join the two \(d\)-port blocks of \(N\) to the
		last \(d\) ports of two copies of \(P\), leaving the first \(d\) ports of
		both copies exposed.  Again the endpoint extensions are unique, with
		values \(nP(\zero^d,\one^d)^2\) and \(nP(\one^{2d})^2\).  (The opposite
		one-central case is symmetric.)  Half-charge divisibility restricts either
		output to half-charges \(-d,0,d\); take that output as \(M\).
		
		If both central parts vanish, then
		\(P=p\DeltaOne^{\otimes2d}\) and
		\(N=n\DeltaZero^{\otimes2d}\) for
		\(p,n\in\AlgNums^\times\).  Since the hypothesis supplies both positive
		and negative supported differences, the common-one-sided exception to the
		tensor-factor theorem cannot occur.
		Factor closure therefore gives
		\(\DeltaZero,\DeltaOne\in\cG\), proving the unary outcome.  It is covered by
		\cref{thm:external-factor,thm:external-odd}.
	\end{proof}
	
	\begin{remark}
		The factor alternative is necessary.  For
		\(\{\operatorname{EXACT}^4_3,\operatorname{EXACT}^4_1\}\), where
		\(\operatorname{EXACT}^4_j(x)=\one[\wt(x)=j]\), conservation of integer
		half-charge places every fixed gadget on a single integer-half-charge layer.
		Thus no gadget produces a mountain, and \cref{lem:charge-bridge} gives
		both point masses by factor extraction instead.
	\end{remark}
	\section{Mountain Height One}
	\label{sec:binary-dynamics}
	
	A height-one mountain is an ordered binary signature whose two diagonal
	entries are nonzero.  We prove the dichotomy for any finite algebraic
	signature set containing one.  We use the equivalent-signature-set
	convention throughout: a signature supplied by a gadget, interpolation, or
	an applicable factor reduction is adjoined to the current set.  When direct
	realization matters, we specify the wiring, port order, and nonzero scalar.
	
	A singular binary, or a nonsingular binary whose transfer has infinite
	projective order, is handled by \cref{lem:binary-interpolation} and the
	odd-arity theorem.  Otherwise \cref{lem:bounded-torsion} makes the group of
	all directly realizable binary transfers finite, and
	\cref{thm:finite-group-split} gives a nonsingular weighted equality or an odd
	cyclic or odd dihedral group.  A weighted equality can be normalized to
	\(I=\Eq_2\) by one global diagonal transformation in \(GO(X)\), so
	\cref{thm:eq2-kholant-dichotomy} applies.  We treat the odd-dihedral case
	first, since its contraction arguments are also used in the odd-cyclic case.
	
	The distinguished binary remains the transported original mountain.
	A quaternary mountain satisfies \cref{cor:eight-vertex-complete-closure}.
	Generalized equalities and the signatures satisfying
	\cref{lem:equality-anchor} are handled by
	\cref{lem:pure-ge,lem:equality-anchor}; other tensor factors are adjoined
	using the factor-reduction equivalence.
	
	\begin{theorem}[Dichotomy with a height-one mountain]
		\label{cor:height-one-complete}
		Let \(\Gamma\) be a finite algebraic signature set containing a height-one
		mountain.  Then
		\[
		\Tract(K\Gamma)
		\qquad\text{or}\qquad
		\KHolant(\Gamma)\text{ is }\SharpP\text{-hard}.
		\]
	\end{theorem}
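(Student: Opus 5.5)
We follow the roadmap at the start of this section. Let \(M\in\Gamma\) be the height-one mountain, so \(M=\begin{pmatrix}p&s\\t&q\end{pmatrix}\) in \(K\)-coordinates with \(pq\ne0\). Its entries \(p\) and \(q\) have half-charges \(-1\) and \(+1\), so \textnormal{(REO1)}-type two-sidedness holds. Hence the common-one-sided exception of \cref{thm:external-factor} and \cref{lem:factor-saturation} never occurs, and this persists under all equivalent enlargements. We assume \(\neg\Tract(K\Gamma)\) and aim for hardness. By \cref{lem:downward-inheritance} and \cref{lem:normalization-ledger}, non-tractability also persists under augmentations and \(GO(X)\)-normalizations.

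\textbf{Step 1: degenerate binaries.} Suppose some directly realizable nonzero binary is singular. Then the first part of \cref{lem:binary-interpolation} adjoins a nonzero unary, and \cref{lem:common-unary-exit} gives hardness. Suppose instead some binary has a transfer of infinite projective order. Then the second part gives a rank-one binary, hence again a unary and hardness. So we may assume every realizable nonzero binary is nonsingular with finite-order transfer. A bounded-torsion argument then makes the transfer group \(G=G(\Gamma)\) finite. All entries lie in the fixed number field \(L_0\), whose group of roots of unity is finite, and \cref{lem:normalization-ledger} keeps \(G\le\PGL_2(L_0)\) stable. By \cref{prop:finite-pgl2-classification}, \(G\) is then cyclic, dihedral, or one of \(A_4,S_4,A_5\).

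\textbf{Step 2: finite-group split.} The transfer of \(M\) is \(MX=\begin{pmatrix}s&p\\q&t\end{pmatrix}\). Together with \(X\), whose transfer is \([I]\), and the reversal involution \(T\mapsto T^\#\), this constrains \(G\). Diagonal transfers correspond to weighted disequalities \(B=(0,a,b,0)\), and anti-diagonal transfers to weighted equalities \(\diag(p',q')\). Using \cref{lem:finite-pgl2-facts} and case analysis on \(G\) and the position of \([Z]\), we want the following split. Either a nonsingular anti-diagonal transfer class, that is a weighted equality \(E=\diag(p',q')\), is realizable, or \(G\) is odd cyclic or odd dihedral. The involutions of \(A_4,S_4,A_5\) and the central involution of even dihedral or even cyclic groups should supply an anti-diagonal class. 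In the weighted-equality case, set \(D=\diag(a,b)\) with \(a^2p'=b^2q'=1\), exactly as in \cref{cor:eight-vertex-complete-closure}. Then \(D\in GO(X)\), \(D\Gamma\ni I\), and \cref{thm:eq2-kholant-dichotomy} with \cref{lem:GOX-tract-invariance} finishes.

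\textbf{Step 3: odd dihedral and odd cyclic cases.} This is the main obstacle. Here no realizable binary is a weighted equality, yet the mountain itself is realizable. In the dihedral case we would first conjugate within \(GO(X)\) so that the rotation subgroup is diagonal in Walsh coordinates, using \(P^{-1}(\cdot)P\) and \(P^{\mathsf T}XP=2Z\). Then we examine higher-arity members of \(\Gamma\) via two-copy and loop contractions against the rotations \(\diag(1,\omega)\), \(\omega\) of odd order. Every tensor-prime factor of arity \(\ge4\) has one of two possible forms. If it is a two-point Walsh tensor, we conclude by \cref{lem:walsh-ge-terminal}. If it is a generalized equality, we conclude by \cref{lem:pure-ge} or \cref{lem:equality-anchor}. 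If it is an eight-vertex mountain, we conclude by \cref{cor:eight-vertex-complete-closure}. An EO factor outside \(\mathcal B_{\rm ARS}\) gives hardness by \cref{thm:hc-relative-eo-terminal}. Otherwise every signature is a product of unaries and binaries consistent with the odd group. The obstruction is then that such a set satisfies \(\Tract\) via \(\cP\)- or \(\cA\)-transformability, or via \cref{case:arity-two}, contradicting our assumption. Any small finite configuration in this step would be settled by the exact certificates of the appendix. The odd cyclic case reuses the same contractions, with the reflection absent.
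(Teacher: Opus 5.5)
\textbf{There is a genuine gap: Step 3 asserts its conclusion, and nothing controls how the transfer group changes as signatures are adjoined.}

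Your Steps 1 and 2 follow the paper's route: \cref{lem:binary-interpolation}, \cref{lem:bounded-torsion}, the split of \cref{thm:finite-group-split}, and the diagonal $GO(X)$ normalization of a weighted equality to $I$. Step 2 is only a heuristic, though. The actual device is $H=\angles{G,[Z]}$, which contains $G$ with index at most two because port reversal gives $[Z]G[Z]=G$. The anti-diagonal class is an involution of $G$ that commutes with $[Z]$ and differs from it. When $H$ is cyclic it is instead the mountain's own transfer, whose two nonzero off-diagonal entries rule out the diagonal case. Your ``central involution of an even cyclic group'' can be $[Z]$ itself, which is diagonal.

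Step 3 is where the proof is missing. Nothing you write shows that a nonbinary tensor-prime member must take one of your listed forms. The listed exits are also not exhaustive. \cref{thm:hc-relative-eo-terminal} concerns EO signatures, whereas the primes arising here are single-parity tensors in Walsh coordinates. They satisfy $f(\bar x)=\pm f(x)$ in $K$-coordinates and are generally not EO.

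What is missing is the minimum-arity analysis:
\begin{enumerate}
\item Saturate the two-port contractions against fixed group representatives (\cref{lem:height-one-card-saturation}), so that a minimum-arity prime $f$ satisfies condition~\ref{cond:height-one-contractions}.
\item Obtain a single Walsh parity (\cref{lem:odd-dih-all-arity-parity}).
\item In the dihedral case, treat arity four by \cref{prop:odd-dih-q4} (a Walsh GE closed by \cref{lem:walsh-ge-terminal}, or a product). Treat arity six by \cref{thm:odd-dih-q6-base}; the appendix certificate enters only there, for the $N=3$ support bound. Treat arity at least eight by \cref{lem:odd-dih-common-binary-factor}.
\item In the cyclic case, force arity four by \cref{thm:odd-cyc-higher-collapse}, then close it by four-port dressing in \cref{thm:odd-cyc-quaternary-closure}.
\end{enumerate}
Once every prime is binary, your final sentence is fine, since alternative~\ref{case:arity-two} already holds.

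The second gap is that there is no progress mechanism. The contractions may add binaries whose transfers lie outside the group you classified, or nonbinary primes of smaller arity. So may the factors extracted from them by \cref{thm:external-factor}, which is a Turing reduction rather than a direct gadget. Once adjoined, such binaries are directly realizable, so $G$ changes. You must then return to Steps 1--2, where a degenerate binary, a weighted equality, or a different odd order may now appear.

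\Cref{lem:normalization-ledger} does not keep $G$ fixed. It keeps the pulled-back group unchanged only under global normalizations, and augmentations can enlarge it. The paper closes this loop with the strict updates of \cref{def:height-one-outer-successor}. The pulled-back groups are nested finite subgroups of $\PGL_2(L_0)$ of bounded order, and $\nu\ge3$ can only decrease. Hence \cref{lem:height-one-progress-well-founded} bounds the number of updates, and \cref{lem:height-one-card-saturation} makes each saturation phase finite. Without this, an analysis of one fixed odd group does not yield the theorem.
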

	
	The remainder proves the theorem using the current-set convention.  Only
	the global changes of basis and the changes in \(G(\Gamma)\) remain
	explicit.
	
	For an ordered binary signature \(B\), recall its transfer
	\[
	T_B=BX.
	\]
	Joining \(B\) and \(C\) in series through one \(X\)-edge realizes \(BXC\),
	and therefore
	\[
	T_{BXC}=T_BT_C.
	\]
	Reversing the ports gives the map defined in
	\cref{subsec:reduction-conventions}:
	\begin{equation}
		T^\#:=XT^{\mathsf T}X;
		\qquad
		T^\#=(\det T)ZT^{-1}Z\quad\text{if }T\text{ is nonsingular}.
		\label{eq:transfer-reversal}
	\end{equation}
	
	\begin{remark}[Relation to binary-signature classifications]
		The trichotomy above is the K-Holant form of the binary classification
		used by Cai--Lu--Xia \cite{CaiLuXia2009,CaiLuXia2018RealHolantc}:
		singular matrices, and nonsingular matrices of infinite or finite projective
		order.  The four ordered Bell
		signatures \(\mathcal B=\{B_{0,0},B_{0,1},B_{1,0},B_{1,1}\}\) represent
		the projective Bell classes of Shao--Cai \cite{ShaoCai2020}.  Their
		holographic matrix, denoted \(Z\) in that paper, is our \(K\); throughout
		this paper,
		\(Z=\left(\begin{smallmatrix}1&0\\0&-1\end{smallmatrix}\right)\), as in
		\cref{eq:transfer-reversal}.  Beyond
		the binary classification, we need the complete gadget transfer group
		and uniform control of its higher-arity contractions.
	\end{remark}
	
	\begin{lemma}[Bounded torsion]
		\label{lem:bounded-torsion}
		Fix a number field \(L\), and let \(G\) be the set of projective
		transfers of all directly realizable nonzero nonsingular ordered
		binary gadgets over a fixed signature set with entries in \(L\).
		If \(G\ne\varnothing\) and every element of \(G\) has finite
		projective order, then \(G\) is a finite subgroup of
		\(\PGL_2(L)\).
	\end{lemma}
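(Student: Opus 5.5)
The proof has two parts: closure, which makes \(G\) a group, and a uniform bound on the orders of its elements, which together with \cref{prop:finite-pgl2-classification} makes it finite.

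\emph{Closure.} If \(B\) and \(C\) are directly realizable nonsingular ordered binaries, their series connection through one native \(X\)-edge realizes \(BXC\). This binary is nonsingular, and its transfer is \(T_BT_C\), so \(G\) is closed under multiplication. Let \([T]\in G\) have finite projective order \(m\). Then \(T^{m-1}\) is realized by a chain of \(m-1\) copies, or by the empty chain \(X\) when \(m=1\). Since \(T^{m-1}\doteq T^{-1}\), the class \([T]^{-1}\) lies in \(G\). As \(G\ne\varnothing\), it contains the identity, so \(G\) is a subgroup of \(\PGL_2(\AlgNums)\). Every realized binary has entries in \(L\), since gadget sums of products of entries stay in \(L\). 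Hence \(G\le\PGL_2(L)\).

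\emph{Bounded orders.} Let \([T]\) have order \(m\), with \(T\) having entries in \(L\). Write the eigenvalues of \(T\) as \(\alpha,\beta\). Because \(T^m=\lambda I\) with \(T\) not scalar for \(m>1\), the matrix \(T\) is diagonalizable with \(\alpha\ne\beta\), and \(\zeta=\alpha/\beta\) is a primitive \(m\)-th root of unity. The eigenvalues satisfy a quadratic over \(L\), so \([L(\alpha,\beta):L]\le2\). Therefore \(\zeta\) lies in an extension of \(\mathbb Q\) of degree at most \(2[L:\mathbb Q]\). This gives \(\varphi(m)\le2[L:\mathbb Q]\), and only finitely many \(m\) satisfy that bound. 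So the element orders in \(G\) are bounded by some \(N=N(L)\).

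\emph{Finiteness.} The main obstacle is passing from bounded exponent to finiteness of the group. My first attempt is a Burnside/Schur-type argument: lift \(G\) to \(\mathrm{SL}_2\) by scaling each representative by a square root of its inverse determinant. This gives a group \(\widehat G\le\mathrm{SL}_2(\AlgNums)\) that is a central extension of \(G\) by \(\{\pm1\}\), and every element of \(\widehat G\) has order at most \(2N\). Schur's theorem states that a finitely generated torsion linear group is finite, and Burnside's theorem states that a linear group over \(\mathbb C\) of bounded exponent is finite. Applying either to \(\widehat G\) makes it finite, hence \(G\) is finite.

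A more self-contained alternative avoids these theorems. Suppose \(G\) is infinite. By \cref{prop:finite-pgl2-classification}, any finite subgroup of \(G\) is cyclic, dihedral, \(A_4\), \(S_4\) or \(A_5\), and by the order bound it has size at most \(\max(2N,60)\). Take an increasing chain of finitely generated subgroups. Each is finite by the Burnside/Schur step applied to finitely generated groups, so each has size at most \(\max(2N,60)\). The chain therefore stabilizes, and \(G\) is finite. For this step I would cite Burnside's bounded-exponent theorem for linear groups directly, since it is the cleanest route.
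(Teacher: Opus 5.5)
Your proof is correct and follows essentially the paper's route. The paper cites Liu--Meng for this lemma and repeats its cyclotomic step verbatim in \cref{lem:height-one-progress-well-founded}: series composition and powers make \(G\) a subgroup of \(\PGL_2(L)\), the eigenvalue-ratio bound \(\varphi(m)\le2[L:\mathbb Q]\) bounds the element orders, and a finiteness theorem for bounded-exponent linear groups (Burnside's, applied to your \(\mathrm{SL}_2\) lift \(\widehat G\)) gives finiteness. Two small corrections: Schur's theorem cannot be applied to \(\widehat G\) itself, because finite generation of \(G\) is not known. The last step therefore rests on Burnside's theorem, or on Schur's theorem for the subgroup generated by any \(\max(2N,60)+1\) elements combined with the size bound from \cref{prop:finite-pgl2-classification}. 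Also, for \(m=1\) the inverse is \([T]\) itself, so no empty chain is needed.
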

	
	\begin{proof}
		This is \cite[Lemma~5.2]{Liu2026DEQ2Holant}.
	\end{proof}
	
	For a current signature set \(\cG\), suppose that every nonzero directly
	realizable ordered binary \(B\) is nonsingular and that \([BX]\) has finite
	order.  Series composition and \cref{lem:bounded-torsion} then give the finite
	group
	\begin{equation}
		G(\cG)
		=\{[BX]:B\ne0\text{ is an ordered binary directly realizable over }\cG\}.
		\label{eq:complete-safe-group}
	\end{equation}
	This definition includes all such binary gadgets, including those not yet
	adjoined to \(\cG\).  We use \(G(\cG)\) only under the stated
	nonsingularity and finite-order hypothesis; a singular binary has no
	projective transfer class.  The decision procedure in
	\cref{thm:predicate-algorithms} does not enumerate this group.
	
	\begin{theorem}[Weighted equality or an odd cyclic or dihedral group]
		\label{thm:finite-group-split}
		Let \(\cG\) be a finite algebraic signature set containing the height-one
		mountain
		\[
		B=\begin{pmatrix}a&b\\c&d\end{pmatrix},
		\qquad ad\ne0,
		\]
		and suppose every nonzero directly realizable binary over \(\cG\) is
		nonsingular and its transfer has finite projective order.  Then one of the following
		holds.
		\begin{enumerate}
			\item A nonsingular weighted equality
			\(E=\diag(p,q)\), with \(pq\ne0\), belongs to the current signature
			set.
			\item There exist an odd \(N\ge3\) and \(D\in GO(X)\) such that the legal
			global normalization \(\cG\mapsto D\cG\) puts
			\(G(D\cG)\) in exactly one of the forms
			\[
			\{[R_s]:s\in\mu_N\},
			\qquad
			\{[R_s],[ZR_s]:s\in\mu_N\},
			\qquad
			R_s=\frac12
			\begin{pmatrix}1+s&1-s\\1-s&1+s\end{pmatrix}.
			\]
			Every displayed projective class has a fixed directly realizable
			ordered-binary representative over \(D\cG\).
		\end{enumerate}
	\end{theorem}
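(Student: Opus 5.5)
We will work with the finite group \(G=G(\cG)\le\PGL_2(L_0)\) supplied by \cref{lem:bounded-torsion} and classify it by \cref{prop:finite-pgl2-classification}. The structure we exploit comes from three facts. First, \(G\) contains \([X\cdot X]=[I]\) (realized by \(B\) composed appropriately) and the class \(t=[BX]\) of the mountain. Second, \(G\) is closed under the reversal anti-automorphism \(T\mapsto T^\#=(\det T)ZT^{-1}Z\) of \cref{eq:transfer-reversal}. Consequently \(T\mapsto [Z]\,T^{\#}\) is an automorphism-like twist, and \(G\) is normalized by conjugation by \([Z]\). Third, a diagonal binary \(\diag(p,q)=EX\cdot X\) is a nonsingular weighted equality exactly when its transfer is the anti-diagonal class \([\,\diag(p,q)X\,]\). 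So outcome~1 amounts to showing that some realizable binary has zero off-diagonal entries. Equivalently, it amounts to showing that \(G\) contains an anti-diagonal class whose directly realized representative is already a weighted equality.

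The first step is to produce symmetric binaries. For any realized \(B'\), the series composition of \(B'\) with its port-reversal \(B'^{\mathsf T}\) gives \(B'XB'^{\mathsf T}\), which is symmetric. Applied to the mountain, this yields a symmetric binary with nonzero diagonal, since \(ad\neq0\) and the composition stays in \(HW\)-balanced form. A symmetric nonsingular binary \(S\) has transfer satisfying \(T^\#=T\). The second step is to analyze the projective involutions and "self-reverse" elements \(\{g\in G:g^\#=g\}\) using \cref{lem:finite-pgl2-facts}:

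\begin{enumerate}
\item If \(G\) is of type \(A_4,S_4,A_5\), or an even dihedral or even cyclic group, there is enough structure to find an element whose representative is diagonal-supported. The relevant tools are central involutions, the centralizer \(V_4\), and the fact from item~5 that every anti-diagonal class is an involution commuting with \([Z]\). Taking suitable powers of symmetric binaries then kills the off-diagonal entries, giving \(E=\diag(p,q)\).
\item In the remaining case, \(G\) has odd order \(N\) or is \(D_{2N}\) with \(N\) odd. Here we simultaneously diagonalize the rotation subgroup. Its fixed points on \(\mathbb P^1\) are a pair exchanged or fixed by \([Z]\)-conjugation and reversal. We choose \(D\in GO(X)\), which is generated by diagonal matrices and \(X\) and acts on that pair, so that the rotation generator becomes conjugate, via \(P\), to \(\diag(1,s)\), that is, it becomes \(R_s\). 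In the dihedral case the reflections become \([ZR_s]\).
\end{enumerate}

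The fixed representatives come from \(G\) being finite and every class being realized by definition.

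The main obstacle is the legality of the normalization. The conjugating matrix from the abstract classification need not lie in \(GO(X)\), as the paper warns. We must therefore show that the fixed-point pair of the odd rotation subgroup is positioned, relative to the isotropic points \(0,\infty\) of \(X\), so that a \(GO(X)\) element maps it to \(\{1,-1\}\) (the eigenlines of \(R_s\)). We will try to derive this from reversal invariance: \(\#\) maps the cyclic subgroup to itself, hence preserves its fixed pair. Conjugation by \(Z\) together with inversion must then swap or fix the pair in a way that forces the pair to be \(\{[1:\lambda],[1:-\lambda]\}\). A diagonal \(D\) then rescales \(\lambda\) to \(1\). A degenerate pair containing \(0\) or \(\infty\) must also be handled: it makes the rotations diagonal or anti-diagonal, and that case returns to outcome~1. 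Exactness of "exactly one of the forms" will follow because odd \(N\) separates cyclic from dihedral by the presence of reflections.
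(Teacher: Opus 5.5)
\textbf{Verdict: genuine gap in the case split.} Your normalization of the rotation subgroup is sound. It is a geometric version of the paper's matrix computation: since \([Z]\) inverts an odd-order rotation \([R]\), \(Z\) must swap its two fixed points. These are therefore \([1:\lambda],[1:-\lambda]\) with \(\lambda\ne0\), and \(D=\diag(\lambda,1)\in GO(X)\) gives \(DRD^{-1}\doteq R_s\). The paper instead derives \(u=x\) and \(\beta\gamma\ne0\) from \([ZRZ]=[R^{-1}]\), then takes \(D=\diag(q,1)\) with \(q^2=\gamma/\beta\).

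The problem is that you classify \(G\) by its isomorphism type. Which outcome holds depends on how \([Z]\) sits relative to \(G\). That is only visible in \(H=\langle G,[Z]\rangle\), a finite group in which \(G\) has index at most two; your reversal-closure observation gives exactly this.

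Concretely, suppose \(G\cong D_{2N}\) with \(N\) odd but \([Z]\notin G\). Then \(H\cong D_{4N}\).
\begin{itemize}
\item Your claim that the reflections of \(G\) become \([ZR_s]\), \(s\in\mu_N\), is false here, since it would put \([Z]=[ZR_1]\) in \(G\).
\item In normalized coordinates one gets instead \(G=\{[R_s]:s\in\mu_N\}\cup\{[ZR_s]:s\in-\mu_N\}\).
\item This group contains \([ZR_{-1}]=[ZX]\), an anti-diagonal class. Its realized representative is \(\rho\,ZX\cdot X=\rho Z\), which is a weighted equality.
\end{itemize}
So this subcase belongs to outcome~1, and the proof must detect it:
\begin{itemize}
\item If \([Z]\) is the central half-turn \(r^N\) of \(H\), then \([T]=[BX]\) commutes with \([Z]\). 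Having nonzero off-diagonal entries, it is anti-diagonal.
\item If \([Z]\) is a reflection, then \(r^N\) and \([Z]r^N\) commute with \([Z]\) and lie in opposite cosets of \(G\). So \(G\) contains an anti-diagonal involution.
\end{itemize}

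Your item~1 also has no working mechanism. Powers of symmetric binaries stay inside \(G\) and do not ``kill off-diagonal entries''. What must be shown is that \(G\) contains an anti-diagonal class. The argument that works runs as follows.
\begin{itemize}
\item The projective centralizer of \([Z]\) consists of diagonal and anti-diagonal classes, and \([Z]\) is the only nonidentity diagonal involution.
\item Hence any involution of \(G\) that commutes with \([Z]\) and differs from it is anti-diagonal.
\item Its realized representative \(B_S=\rho SX\) is then the required \(\diag(p,q)\).
\end{itemize}
Producing such an involution is a case analysis over \(H\in\{C_m,D_{2m},A_4,S_4,A_5\}\). It must track whether \([Z]\in G\) and the conjugacy type of \([Z]\) in \(H\). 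For example, if \(G\cong A_4\), \(H\cong S_4\), and \([Z]\) is a transposition outside \(G\), the witness is \([Z]z'\) for the disjoint transposition \(z'\). When \([Z]\) is central, you also need the nonzero off-diagonal entries of \(T\).

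There are a few smaller problems:
\begin{itemize}
\item Diagonal rotations give weighted disequalities \(\rho SX\), not equalities. So your degenerate fixed-pair case does not return to outcome~1 by the reasoning you give. In the odd-cyclic case that case is simply impossible, because \([T]\in G\) is not diagonal.
\item \(BXB^{\mathsf T}\) has diagonal \((2ab,2cd)\), which vanishes when \(b=0\) or \(c=0\). You never use this claim, but it is wrong.
\item You should verify \(G(D\cG)=[D]G(\cG)[D]^{-1}\) under the global transport, so that every displayed class has a directly realizable representative over \(D\cG\).
\end{itemize}
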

	
	\begin{proof}
		Put \(G=G(\cG)\) and
		\[
		H=\angles{G,[Z]}.
		\]
		All gadget entries lie in the number field generated by the entries of
		\(\cG\), so \cref{lem:bounded-torsion} makes \(G\) finite.  If
		\([T]\in G\), port reversal and \cref{eq:transfer-reversal} give
		\([ZT^{-1}Z]\in G\).  Since inversion permutes \(G\), it follows that
		\([Z]G[Z]=G\).  Consequently
		\[
		G\triangleleft H,\qquad [H:G]\le2,
		\]
		and \(H=G\) or \(H=G\sqcup[Z]G\), so \(H\) is finite.
		
		For the distinguished mountain,
		\[
		T=BX=\begin{pmatrix}b&a\\d&c\end{pmatrix}
		\]
		has class \([T]\in G\).  Since \(a,d\ne0\), both off-diagonal entries of
		\(T\) are nonzero.  We first record a weighted equality.  Suppose \([S]\in G\)
		has an anti-diagonal representative
		\[
		S=\begin{pmatrix}0&p\\q&0\end{pmatrix},
		\qquad pq\ne0.
		\]
		By the definition of \(G\), there is a directly realizable ordered binary
		\(B_S\) and a scalar \(\rho\ne0\) such that \(B_SX=\rho S\).  Hence
		\[
		B_S=\rho SX=\rho\diag(p,q),
		\]
		which is the nonsingular weighted equality in outcome~1.
		
		We now use
		\cref{prop:finite-pgl2-classification,lem:finite-pgl2-facts}.  Their use
		here is only group-theoretic: the distinguished class is the fixed
		\([Z]\), and no conjugating matrix from the abstract classification is
		applied to the signature set.
		
		If \(H\) is cyclic, it is abelian, so \([T]\) commutes with \([Z]\).  The
		projective centralizer of \([Z]\) consists of the diagonal and
		anti-diagonal classes.  The two nonzero off-diagonal entries of \(T\)
		exclude the diagonal case; thus \([T]\in G\) is anti-diagonal and gives the
		weighted equality.
		
		Suppose \(H\cong A_4\).  Since \(A_4\) has no index-two subgroup, the
		index bound forces \(G=H\).  The centralizer of \([Z]\) in \(H\) is a
		\(V_4\), so it contains an involution \(u\ne1,[Z]\).  The class \(u\)
		belongs to \(G\) and commutes with \([Z]\).  Because \([Z]\) is the only
		nonidentity diagonal involution, \(u\) is anti-diagonal.  The same argument
		applies when \(H\cong A_5\): simplicity gives \(G=H\), and the
		centralizer of an involution is again \(V_4\).
		
		Suppose \(H\cong S_4\).  If \(G=H\) and \([Z]\) corresponds to a
		transposition, choose the disjoint transposition; if \([Z]\) corresponds
		to a double transposition, choose either of the other two double
		transpositions.  In both cases the chosen involution lies in \(G\), commutes
		with \([Z]\), and differs from it, so it is anti-diagonal.  If \(G\ne H\),
		then the unique index-two subgroup gives \(G=A_4\).  Here \([Z]\notin G\)
		is an involution outside \(A_4\), hence a transposition.  If \(z'\) is the
		disjoint transposition, then \(u=[Z]z'\) is a double transposition in
		\(A_4=G\); it commutes with \([Z]\), differs from it, and is therefore
		anti-diagonal.
		
		It remains to consider
		\[
		H=D_{2m}=\angles{r,z:r^m=z^2=1,\ zrz=r^{-1}},
		\]
		where the distinguished reflection is denoted by \(z=[Z]\) when \([Z]\)
		is a reflection.  If \(m\) is even and \([Z]=r^{m/2}\) is the central
		half-turn, then \([T]\in G\) commutes with \([Z]\); its two nonzero
		off-diagonal entries again make it anti-diagonal.  Suppose instead that
		\([Z]\) is a reflection and \(m\) is even.  The two involutions
		\[
		r^{m/2},\qquad [Z]r^{m/2}
		\]
		commute with \([Z]\) and differ from it.  If \(G=H\), both belong to
		\(G\).  If \([H:G]=2\), then \([Z]\notin G\), and the two displayed
		elements lie in opposite cosets of \(G\); hence exactly one belongs to
		\(G\).  In either event \(G\) contains an involution different from
		\([Z]\) that commutes with \([Z]\), and this involution is anti-diagonal.
		
		The sole unresolved case is therefore
		\[
		H\cong D_{2N},\qquad N\ge3\text{ odd},\qquad [Z]\text{ a reflection},
		\]
		with \(G=C_N\) or \(G=D_{2N}\).  Indeed, for odd \(N\) the rotation
		subgroup is the unique index-two subgroup; thus a proper \(G\) is the
		rotation subgroup, while otherwise \(G=H\).
		
		Choose a rotation class \([R]\) of exact order \(N\), and write
		\[
		R=\begin{pmatrix}u&v\\w&x\end{pmatrix}.
		\]
		The dihedral relation gives \([ZRZ]=[R^{-1}]\), while
		\[
		ZRZ=\begin{pmatrix}u&-v\\-w&x\end{pmatrix},
		\qquad
		R^{-1}\doteq\begin{pmatrix}x&-v\\-w&u\end{pmatrix}.
		\]
		The entries \(v,w\) cannot both vanish: otherwise \([R]=[R^{-1}]\), so
		\([R]^2=1\), contrary to the odd order \(N\ge3\).  Since at least one of
		\(v,w\) is nonzero, comparison in the preceding projective equality forces
		the projective scalar to be \(1\), and hence \(u=x\).  Exactly one of
		\(v,w\) cannot vanish.  In that case nonsingularity gives \(u=x\ne0\), and
		a projective rescaling makes \(R\) a nonidentity unipotent matrix; its
		powers are pairwise distinct in characteristic zero.  Therefore
		\[
		R=\begin{pmatrix}\alpha&\beta\\ \gamma&\alpha\end{pmatrix},
		\qquad \beta\gamma\ne0.
		\]
		
		Choose \(q\in\AlgNums^\times\) with \(q^2=\gamma/\beta\), and put
		\[
		D=\diag(q,1).
		\]
		Then
		\[
		D^{\mathsf T}XD=qX,
		\]
		so \(D\in GO(X)\); moreover \(DZD^{-1}=Z\).  With
		\(\delta=q\beta=q^{-1}\gamma\),
		\[
		DRD^{-1}
		=\begin{pmatrix}\alpha&\delta\\ \delta&\alpha\end{pmatrix}.
		\]
		The matrix \(P\) is used only for the following analytic diagonalization:
		\[
		P^{-1}(DRD^{-1})P
		=\diag(\alpha+\delta,\alpha-\delta).
		\]
		Both eigenvalues are nonzero because \(R\) is nonsingular.  Put
		\[
		s_0=\frac{\alpha-\delta}{\alpha+\delta}.
		\]
		The exact projective order of \([R]\) implies that \(s_0\) is a primitive
		\(N\)-th root of unity, and
		\[
		DRD^{-1}
		=(\alpha+\delta)R_{s_0},
		\qquad
		R_s=\frac12\begin{pmatrix}1+s&1-s\\1-s&1+s\end{pmatrix}.
		\]
		Since \(R_sR_t=R_{st}\), the normalized rotation classes are precisely
		\[
		\{[R_s]:s\in\mu_N\}.
		\]
		The diagonal matrix \(D\) fixes \([Z]\).  Thus, when \(G=C_N\), these are
		all normalized classes, and, when \(G=D_{2N}\), the remaining classes are
		precisely
		\[
		\{[ZR_s]:s\in\mu_N\}.
		\]
		
		Finally, the normalization gives equality of complete groups, not merely
		an inclusion.  If \(B_0\) is a directly realizable ordered binary over
		\(\cG\), legal global transport gives a directly realizable boundary tensor
		\(B_0^D\) over \(D\cG\) with
		\[
		B_0^D\doteq DB_0D^{\mathsf T},
		\qquad
		(DB_0D^{\mathsf T})X=qD(B_0X)D^{-1}.
		\]
		Hence \(T_{B_0^D}\doteq D T_{B_0}D^{-1}\).  Applying the same argument
		with \(D^{-1}\in GO(X)\) gives the converse, so
		\[
		G(D\cG)=[D]G(\cG)[D]^{-1}.
		\]
		Every displayed class therefore has a fixed directly realizable
		ordered-binary representative over \(D\cG\), with the nonzero scalar from
		the global transport recorded and restored.  This is a single legal
		\(GO(X)\)-normalization; neither \(P\) nor an arbitrary conjugator from the
		finite-subgroup classification is used as a gadget or a global
		normalization.
	\end{proof}
	
	\subsection{Odd cyclic and dihedral transfer groups}
	\label{sec:odd-residue}
	
	\begin{definition}[Equivalent updates with strict progress]
		\label{def:height-one-outer-successor}
		Let \((A,\Gamma)\) satisfy \cref{lem:normalization-ledger} for the initial
		set \(\Gamma_0\) and its fixed number field \(L_0\).  An update to
		\((A^+,\Gamma^+)\) must satisfy the same invariant and use only the
		augmentations and global transformations allowed there.  In particular,
		\(\Gamma^+\) is finite, preprocessed, algebraic, and factor-closed, and
		\[
		\begin{gathered}
			A^+\Gamma_0\subseteq\Gamma^+,
			\qquad (A^+)^{\mathsf T}XA^+=\mu_{A^+}X\quad(\mu_{A^+}\ne0),\\
			\KHolant(\Gamma^+)\equivT\KHolant(\Gamma).
		\end{gathered}
		\]
		The transformed height-one mountain and every old tensor-prime signature
		remain members up to the same invertible change of basis and their recorded
		nonzero scalars.  Every nonzero directly realizable ordered binary over
		both sets is nonsingular with finite projective transfer order.  Put
		\[
		H=[A]^{-1}G(\Gamma)[A],
		\qquad H^+=[A^+]^{-1}G(\Gamma^+)[A^+].
		\]
		Both groups are subgroups of \(\PGL_2(L_0)\).  When both values of \(\nu\)
		are defined, the update is required to satisfy
		\[
		H\subseteq H^+,
		\qquad \nu(\Gamma^+)\le\nu(\Gamma),
		\qquad
		H\ne H^+\ \text{or}\ \nu(\Gamma^+)\ne\nu(\Gamma).
		\]
		Thus at least one comparison is strict.  If no nonbinary tensor-prime
		signature remains, the arity-at-most-two alternative gives tractability
		and no further update is made.
	\end{definition}
	
	\begin{lemma}[Termination of the strict updates]
		\label{lem:height-one-progress-well-founded}
		Fix the number field \(L_0\) in
		\cref{lem:normalization-ledger}.  There is no infinite sequence
		\[
		(H_0,\nu_0),(H_1,\nu_1),\ldots
		\]
		such that every \(H_i\) is a finite subgroup of \(\PGL_2(L_0)\), every
		\(\nu_i\) is an integer at least three, and every update satisfies
		\[
		H_i\subseteq H_{i+1},
		\qquad
		\nu_{i+1}\le\nu_i,
		\]
		with at least one of these two inequalities strict.  In the height-one
		construction these quantities are
		\[
		H_i=[A_i]^{-1}G(\Gamma_i)[A_i],
		\qquad
		\nu_i=\min\{\arity(f):f\in\Gamma_i
		\text{ is tensor-prime and }\arity(f)\ge3\}.
		\]
		If the second set is empty, the arity-at-most-two case is tractable and no
		further strict update is made.  This assertion concerns only the
		transitions represented by the displayed pairs; operations preserving both
		\(H_i\) and \(\nu_i\) are handled by the finite recursive extension of
		\cref{lem:height-one-card-saturation}.
	\end{lemma}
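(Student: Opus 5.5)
The plan is to bound the finite subgroups of \(\PGL_2(L_0)\) and then use a lexicographic well-ordering argument.

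First I would show that the orders of finite subgroups of \(\PGL_2(L_0)\) are uniformly bounded. By \cref{prop:finite-pgl2-classification}, every finite subgroup of \(\PGL_2(\mathbb C)\) is cyclic, dihedral, or of order at most \(60\). So it suffices to bound the order of a single element \([T]\in\PGL_2(L_0)\) of finite projective order \(m\). If \(T^m=\lambda I\), then \(T\) is diagonalizable with eigenvalue ratio \(\zeta=\theta_1/\theta_2\) a primitive \(m\)-th root of unity. Since \(\zeta+\zeta^{-1}+2=(\operatorname{tr}T)^2/\det T\in L_0\), the field \(\mathbb Q(\zeta+\zeta^{-1})\) lies in \(L_0\). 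Hence \(\varphi(m)/2\le[L_0:\mathbb Q]\) (for \(m\ge3\)), and this bounds \(m\). A dihedral group \(D_{2m}\) contains a rotation of order \(m\), so the same bound applies to it. Therefore there is a constant \(c(L_0)\) with \(|H|\le c(L_0)\) for every finite \(H\le\PGL_2(L_0)\).

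Next I would check that this bound is enough. In any update with \(H_i\subseteq H_{i+1}\), a strict inclusion \(H_i\subsetneq H_{i+1}\) of finite groups strictly increases the order. Consider the potential
\[
\Phi_i=\bigl(c(L_0)-|H_i|,\ \nu_i\bigr)\in\mathbb Z_{\ge0}\times\mathbb Z_{\ge3}.
\]
Every displayed update weakly decreases both coordinates, and strictly decreases at least one. This is because \(H_i\ne H_{i+1}\) together with \(H_i\subseteq H_{i+1}\) forces \(|H_{i+1}|>|H_i|\). Hence \(\Phi\) strictly decreases in the product order on \(\mathbb Z_{\ge0}\times\mathbb Z_{\ge3}\), which is well-founded. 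In fact the sum of the two coordinates strictly decreases, so it is a nonnegative integer that drops at each step. No infinite sequence exists, and the number of steps is at most \(c(L_0)+\nu_0\).

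Finally I would record why the specific height-one quantities satisfy the hypotheses. By \cref{lem:normalization-ledger}, \(H_i=\overline G_i\) lies in \(\PGL_2(L_0)\). It is finite by \cref{lem:bounded-torsion} together with the standing nonsingular, finite-order hypothesis of \cref{def:height-one-outer-successor}. The minimum \(\nu_i\) is an integer \(\ge3\) whenever it is defined. The empty case is the \(\cT\) alternative \cref{case:arity-two}, where no update is made. The main obstacle is the uniform order bound. One subtlety there is that a class in \(\PGL_2(L_0)\) may need an eigenvalue field outside \(L_0\). This is why I would work with the invariant \((\operatorname{tr}T)^2/\det T\) rather than with the eigenvalues themselves.
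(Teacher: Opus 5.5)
Your proposal is correct and follows the paper's proof. The paper bounds projective element orders by \(\varphi(m)\le 2[L_0:\mathbb Q]\), then uses \cref{prop:finite-pgl2-classification} to get the uniform subgroup bound \(\max\{2M_0,60\}\), and closes with the nonnegative integer potential \((C_0-|H_i|)+(\nu_i-3)\). The only difference is cosmetic: you place the root-of-unity data in \(L_0\) through \((\operatorname{tr}T)^2/\det T=\zeta+\zeta^{-1}+2\), whereas the paper uses the quadratic eigenvalue extension \(L_0(\zeta_m)\), and both give the same inequality.
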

	
	\begin{proof}
		Write \(\delta=[L_0:\mathbb Q]\).  If
		\([T]\in\PGL_2(L_0)\) has projective order \(m\), the ratio of the two
		eigenvalues of \(T\) is a primitive \(m\)-th root of unity \(\zeta_m\).
		The eigenvalues lie in an extension of \(L_0\) of degree at most two, whence
		\[
		\varphi(m)=[\mathbb Q(\zeta_m):\mathbb Q]
		\leq [L_0(\zeta_m):\mathbb Q]\leq2\delta.
		\]
		There are therefore only finitely many possible projective element orders;
		let \(M_0\) be their maximum.  By
		\cref{prop:finite-pgl2-classification}, every finite subgroup of
		\(\PGL_2(L_0)\) has order at most
		\[
		C_0=\max\{2M_0,60\}.
		\]
		
		For a sequence in the statement, define
		\[
		\Phi_i=(C_0-|H_i|)+(\nu_i-3)\in\mathbb Z_{\ge0}.
		\]
		The inclusions and inequalities make \(\Phi_i\) nonincreasing, and the strict
		condition decreases it by at least one at every update.  Hence the sequence
		has at most \(\Phi_0\) updates.  In the application,
		\cref{lem:normalization-ledger} places every pulled-back group in
		\(\PGL_2(L_0)\), equivalent enrichment makes the groups nested, and
		\cref{def:height-one-outer-successor} makes \(\nu\) nonincreasing.  Factor
		closure between updates is finite because every nontrivial tensor split replaces
		a signature by positive-arity factors of strictly smaller arity; it is not
		part of the displayed measure.
	\end{proof}
	
	For either group with odd \(N\), the proof will establish that the single matrix
	\(P\) fixed in \cref{eq:Walsh-coordinates} satisfies
	\begin{equation}
		(P^{-1})^{\otimes\arity(f)}f\in\cP
		\quad(f\in\Lambda),
		\qquad
		P^{\mathsf T}XP=2Z\in\cP.
		\label{eq:common-Walsh-product-presentation}
	\end{equation}
	These are the simultaneous signature and edge conditions for a common
	\(K\)-presentation in \(\cP\); the same matrix must handle every member
	of \(\Lambda\).
	
	\begin{theorem}[The odd-dihedral case]
		\label{thm:odd-dihedral}
		Let \((A,\cG)\) be a stage of the sequence in
		\cref{lem:normalization-ledger} at which \(\cG\) is finite, preprocessed,
		algebraic, and factor-closed and contains the transported height-one
		mountain.  Suppose that, after the normalization of
		\cref{thm:finite-group-split}, every nonzero directly realizable binary is
		nonsingular with finite projective transfer order, and
		\(G(\cG)\) is
		\[
		D_{2N}=\{[R_s],[ZR_s]:s\in\mu_N\}
		\]
		for an odd \(N\ge3\).
		Then at least one of the following holds.
		\begin{enumerate}
			\item \(\KHolant(\cG)\) is \(\SharpP\)-hard or \(\Tract(K\cG)\) holds.
			\item There is a later stage \((A^+,\cG^+)\) of the same sequence at which
			\(\cG^+\) is finite, preprocessed, algebraic, and factor-closed and
			still contains the correspondingly transported height-one mountain,
			such that
			\[
			\KHolant(\cG^+)\equivT\KHolant(\cG),
			\]
			every nonzero directly realizable ordered binary over \(\cG^+\) is
			nonsingular with finite projective transfer order, and either
			\[
			[A]^{-1}G(\cG)[A]
			\subsetneq
			[A^+]^{-1}G(\cG^+)[A^+]
			\]
			or, with both values defined, \(\nu(\cG^+)<\nu(\cG)\).
			In either case the full update conditions of
			\cref{def:height-one-outer-successor} hold.
			\item There is an equivalent finite preprocessed algebraic factor-closed
			set \(\cG^{\mathrm{sat}}\supseteq\cG\), in the same
			coordinates and with the same matrix \(A\), for which the fixed
			Walsh matrix \(P\) satisfies
			\[
			(P^{-1})^{\otimes\arity(f)}f\in\cP
			\quad(f\in\cG^{\mathrm{sat}}),
			\qquad
			P^{\mathsf T}XP=2Z\in\cP.
			\]
			Hence \(K\cG\) is \(\cP\)-transformable.
		\end{enumerate}
	\end{theorem}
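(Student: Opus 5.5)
Work in Walsh coordinates. Conjugating by \(P\) turns every available transfer into a diagonal or anti-diagonal matrix: \(P^{-1}R_sP=\diag(1,s)\) and \(P^{-1}ZP\) is the swap. The edge becomes \(P^{\mathsf T}XP=2Z\). The reflection classes \([ZR_s]\) are then anti-diagonal in the Walsh basis. Consequently, for each \(f\in\cG\), the tilde signature \(\widetilde f=(P^{-1})^{\otimes r}f\) sees every directly realizable binary, attached at a port, as one of two operations: a root-of-unity phase \(\diag(1,s)\), \(s\in\mu_N\), or such a phase composed with a bit flip.

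The key invariance step goes as follows. Attaching \(R_s\) at one port and re-contracting gives a direct gadget. Self-loops with the binaries \(B_s\) (representatives of \([R_s]\)) and \(B'_s\) (representatives of \([ZR_s]\)) give two-port contractions whose Walsh forms are weighted sums \(\sum_a \pm s^{a}\,\widetilde f_{ij}^{aa}\) or \(\sum_a\pm s^a\widetilde f_{ij}^{a\bar a}\). Any new nonzero binary produced this way must again have its transfer in \(D_{2N}\); otherwise we obtain a strict enlargement of \(G\) (outcome~2, since the new group still lies in \(\PGL_2(L_0)\) and the mountain is retained). A singular or infinite-order binary instead gives hardness or \(\Tract\) through \cref{lem:binary-interpolation} and \cref{lem:common-unary-exit} (outcome~1). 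So assume that every binary contraction of every member of \(\cG\), and of every generated gadget, stays inside the prescribed Walsh-diagonal/anti-diagonal family.

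Next, saturate \(\cG\). Close it under self-loops with the finitely many representatives \(B_s,B'_s\), under two-copy joins, and under factor extraction. Apply \cref{thm:external-factor}; the mountain makes both charges occur, so the common-one-sided exception cannot arise. If factor extraction ever yields a tensor-prime factor of arity at least \(3\) below \(\nu(\cG)\), this gives outcome~2 through the \(\nu\) decrease. If it yields a generalized equality, or a two-point Walsh factor of arity at least \(4\), then \cref{lem:walsh-ge-terminal} or \cref{lem:pure-ge} gives outcome~1. Quaternary mountains go to \cref{cor:eight-vertex-complete-closure}. Otherwise we have a finite saturated set \(\cG^{\mathrm{sat}}\) with the same \(A\).

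For the saturated set, the aim is to show that each tensor-prime \(\widetilde f\) of arity at least \(3\) is a product of unary, \(\Eq_2\) and \(\Neq_2\) constraints, i.e.\ lies in \(\cP\). Note that \(2Z\in\cP\) is automatic. The plan is an induction on arity by pair contraction. For any pair \((i,j)\), the odd-\(N\) phase contractions \(\sum_a s^a\widetilde f_{ij}^{aa}\) for \(s\in\mu_N\) form \(N\ge3\) distinct linear combinations of the two slices. Stability of the saturated set then forces one of two things: either these combinations are all proportional, in which case the pair is rigidly tied by an equality or disequality in Walsh coordinates, or they separately lie in \(\cP\) by induction. Odd \(N\) excludes sign-only (affine-type) phase relations, which would need \(s=-1\). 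Support rigidity then plays the role that affine support would otherwise play: each support must be a matching code \(C(M,\epsilon)\) with unary weights. The base arities \(3\) and \(4\) are exceptional and would be checked directly.

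The main obstacle is this last step: turning the contraction-stability conditions into the product structure, and in particular excluding tensor-prime Walsh signatures with non-product support. For that case, first try to show that any non-matching support pattern yields, after two contractions, a binary whose transfer lies outside \(D_{2N}\), or else a two-point Walsh factor of arity at least \(4\); both are already covered above.
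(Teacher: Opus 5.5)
Your framework matches the paper's: Walsh coordinates, a finite contraction-and-factor saturation, and the target of a common Walsh \(\cP\)-presentation with \(P^{\mathsf T}XP=2Z\in\cP\). In that saturation, a new transfer class or a smaller nonbinary prime gives outcome~2, and a singular or infinite-order binary gives outcome~1. One correction: do not close under two-copy joins. They raise arity, so the closure is infinite and is no longer a finite equivalent enlargement. \Cref{lem:height-one-card-saturation} saturates only under two-port contractions of nonbinary primes, which lower arity by two and therefore terminate.

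The genuine gap is the step you yourself call the main obstacle. That step is the whole content of the theorem, and the route you sketch for it fails. If the plus-kernel pencil \(\widetilde f_{ij}^{00}-s\widetilde f_{ij}^{11}\) is projectively constant, you learn only that the two equal-bit slices are proportional. This does not control the sector \(x_i\ne x_j\). Even if that sector is also constant, the flattening across \(\{i,j\}\) can have rank two, so no factor on \(\{i,j\}\) follows. Likewise, knowing that every contraction lies in \(\cP\), even with transfers in \(D_{2N}\), does not put \(\widetilde f\) in \(\cP\). Extracting a factorization from contraction data is exactly what must be proved, and your ``support rigidity'' is asserted rather than derived.

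Here is how the paper closes this gap.
\begin{itemize}
\item It chooses a nonbinary tensor-prime \(f\) of \emph{minimum} arity \(n\) in the saturated set. Saturation plus minimality force every nonzero contraction to be a product of nonsingular Walsh binaries with ratios in \(-\mu_N\) (the class \(\mathcal B_N\)). This property is stable under further contractions by \cref{lem:phase-product-closure}.
\item The aim is then to contradict primeness, not to prove \(\widetilde f\in\cP\). The first step is the single-parity result, \cref{lem:odd-dih-all-arity-parity}.
\item At \(n=4\), the relations \(\mathcal R_N\) for both kernel types are used, with odd \(N\) separating \(\mu_N\) from \(-\mu_N\). What remains is a one-block Walsh generalized equality, sent to \cref{lem:walsh-ge-terminal}, or a \(2+2\) product (\cref{prop:odd-dih-q4}).
\item At \(n=6\), the support is restricted to partition cosets or the orbit of \(S_\star\). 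This uses the bound \((N-1)|U|\le4N\) for \(N\ge5\) and the exact certificate \cref{thm:cert-odd-dih-q6-interface} for \(N=3\). Every case except \((2+2+2)\) yields a contraction with only two support points. In the \((2+2+2)\) case, determinants vanishing at three phases give the three-binary factorization (\cref{thm:odd-dih-q6-base}).
\item At \(n\ge8\), \cref{lem:odd-dih-three-coset-circuit,prop:odd-dih-support-circuit-rigidity} make the residual matching phase-independent. Then \cref{lem:odd-dih-common-binary-factor} extracts a common external binary factor.
\end{itemize}
Only after all of this are all primes binary. None of these arguments, or substitutes for them, appears in your proposal. In particular, arity six is the genuinely exceptional base case. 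Your listed arity three is odd and is already disposed of by \cref{thm:external-odd}.
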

	
	Fix a signature set \(\Gamma\) for which \(G(\Gamma)\) is defined, a
	subgroup \(H\le G(\Gamma)\), and one directly realizable ordered binary
	\(B_h\) with \([B_hX]=h\) for each \(h\in H\).  We will refer to the
	following contraction condition for \(f\in\Gamma\):
	\begin{enumerate}[label=\textnormal{(C)},leftmargin=*]
		\item\label{cond:height-one-contractions}
		For every ordered pair of distinct ports of \(f\), every chosen \(B_h\),
		and both kernel orientations \(XB_hX\) and \(XB_h^{\mathsf T}X\), the
		contraction is zero, or all its positive-arity tensor-prime factors are
		nonsingular binaries \(B\) with \([BX]\in H\).
	\end{enumerate}
	Any nullary factors are absorbed into the recorded scalar.  A zero
	contraction is allowed and is not assigned a projective transfer class.
	The conditions are in the current \(K\)-coordinates; after a Walsh change
	of coordinates the group membership is conjugated by \(P\) as well.
	Applications below take \(H=G(\Gamma)\), with the complete group from
	\cref{eq:complete-safe-group}.  Before using
	condition~\ref{cond:height-one-contractions}, any factor that gives
	hardness, tractability, or an update in
	\cref{def:height-one-outer-successor} is handled separately.  This condition
	does not assert that arbitrary gadget outputs have been adjoined.
	
	\begin{lemma}[Finite recursive contraction and factor extraction]
		\label{lem:height-one-card-saturation}
		Let \((A,\Gamma)\) be a stage of a sequence satisfying
		\cref{lem:normalization-ledger}, over its fixed number field \(L_0\).
		Suppose that \(\Gamma\) is finite,
		preprocessed, algebraic, and factor-closed, contains the transported height-one mountain,
		and every nonzero directly realizable ordered binary over \(\Gamma\) is
		nonsingular and has finite projective transfer order.  Let
		\(H=G(\Gamma)\), and fix one directly realizable ordered-binary
		representative of each class in this finite group.
		
		Start with every nonbinary tensor-prime factor currently in \(\Gamma\).  For
		each retained factor \(f\), process every choice of an ordered pair of
		distinct ports, a fixed representative of a class in \(H\), and either
		kernel orientation.  Adjoin the resulting nonzero two-port contraction and, through
		\cref{lem:factor-saturation}, all of its positive-arity tensor-prime factors;
		apply the same procedure to each newly retained nonbinary factor.  This
		finite recursive procedure reaches one of the following outcomes:
		\begin{enumerate}
			\item \(\KHolant(\Gamma)\) is \(\SharpP\)-hard or
			\(\Tract(K\Gamma)\) holds;
			\item there are a matrix \(A^+\) and a finite preprocessed algebraic
			factor-closed set \(\Gamma^+\), continuing the same coordinate-transport
			sequence and still containing the correspondingly transported
			height-one mountain, with
			\(\KHolant(\Gamma^+)\equivT\KHolant(\Gamma)\), such that
			every nonzero directly realizable ordered binary over \(\Gamma^+\) is
			nonsingular with finite projective transfer order, and either
			\[
			[A]^{-1}G(\Gamma)[A]
			\subsetneq
			[A^+]^{-1}G(\Gamma^+)[A^+]
			\]
			or, with both values defined,
			\(\nu(\Gamma^+)<\nu(\Gamma)\).
			The update satisfies \cref{def:height-one-outer-successor}, including
			the non-strict comparisons for both quantities;
			\item there is a finite preprocessed algebraic factor-closed set
			\(\Gamma^{\mathrm{sat}}\supseteq\Gamma\), with
			\[
			\KHolant(\Gamma^{\mathrm{sat}})\equivT\KHolant(\Gamma),
			\qquad G(\Gamma^{\mathrm{sat}})=H,
			\]
			for which the coordinate matrix remains \(A\).  Every indicated
			two-port contraction has been processed, every
			nonbinary tensor-prime factor supplied by factor closure is retained in
			\(\Gamma^{\mathrm{sat}}\), and every binary factor it produces is
			nonsingular with transfer class in \(H\).  A nonbinary factor produced
			from a parent \(f\) has arity at most \(\arity(f)-2\).
		\end{enumerate}
		The third outcome may preserve \(\nu\); it is exhaustive closure with the
		same displayed group and coordinate matrix.  It does not require a strict
		change in either quantity.
	\end{lemma}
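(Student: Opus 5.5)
Let \(W_0\) be the initial finite set of nonbinary tensor-prime factors of \(\Gamma\). The plan is to process every element of a work list \(W\), starting from \(W_0\). Each processed element is replaced by the tensor-prime factors of all its two-port contractions with the fixed representatives \(B_h\), \(h\in H\), in both kernel orientations. Every such contraction is a fixed direct \(X\)-gadget, so adjoining its output is an augmentation in the sense of \cref{lem:normalization-ledger}. The coordinate matrix stays \(A\), and every new tensor is \(A^{\otimes r}\) applied to a tensor over \(L_0\), because the representatives \(B_h\) are themselves realized over \(\Gamma\). Factors are adjoined through \cref{lem:factor-saturation}. Its common-one-sided exception cannot occur, since the transported height-one mountain has nonzero entries at both endpoints and therefore supplies both charge signs.

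\textbf{Termination.} A two-port contraction of an arity-\(r\) signature has arity \(r-2\), so every nonbinary prime factor of it has arity at most \(\arity(f)-2\). The recursion tree therefore has depth at most \(\max_{f\in W_0}\arity(f)/2\). Each node has at most \(2\binom{r}{2}\cdot 2|H|\) children, and \(|H|\) is finite by \cref{lem:bounded-torsion}. Hence only finitely many signatures are ever adjoined, and the final set is a finite equivalent enlargement.

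\textbf{Case analysis at each adjunction.}
\begin{enumerate}
\item \emph{Unary factor.} If a factor is a nonzero unary, \cref{lem:common-unary-exit} together with \cref{lem:downward-inheritance} gives hardness or \(\Tract(K\Gamma)\). This is outcome~1.
\item \emph{Bad binary.} If some nonzero directly realizable binary over the enlarged set becomes singular, or acquires a transfer of infinite projective order, \cref{lem:binary-interpolation} yields a unary. We are then again in outcome~1.
\item \emph{Binary outside \(H\).} Otherwise the enlarged set still satisfies the finite-order hypothesis. If some new binary factor, or some newly directly realizable binary, has transfer class outside \(H\), the pulled-back group strictly grows. We stop and return the current set as \(\Gamma^+\) with \(A^+=A\). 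The monotonicity of \(\overline G_i\) in \cref{lem:normalization-ledger} gives the required non-strict inclusion.
\item \emph{Prime of small arity.} If a nonbinary prime factor of arity less than \(\nu(\Gamma)\) appears, it has arity at least three, since unaries were excluded in the first case. We stop with \(\nu(\Gamma^+)<\nu(\Gamma)\), and the group comparison is again non-strict.
\end{enumerate}
In the second and third stopping cases, the mountain and the old primes remain members, so all clauses of \cref{def:height-one-outer-successor} hold. If none of these events ever occurs, the final set is \(\Gamma^{\mathrm{sat}}\), and outcome~3 follows by construction: every contraction has been processed, every binary factor is nonsingular with class in \(H\), and \(G(\Gamma^{\mathrm{sat}})=H\).

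\textbf{Main obstacle.} The delicate point is the last equality, \(G(\Gamma^{\mathrm{sat}})=H\), because \(G(\cdot)\) is defined over \emph{all} directly realizable binaries, not only over the adjoined binary factors. Enlarging the set may create new binary gadgets, for example by closing loops on newly retained nonbinary factors. I would handle this through the case analysis itself: after every adjunction, test whether \(G\) of the enlarged set equals \(H\). Any strict enlargement lands in outcome~2, and any loss of the finite-order hypothesis lands in outcome~1.

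Whether \(G\) grew is decidable in the required effective sense. The group \(G\) is finite and lies in the bounded-order family of \cref{lem:height-one-progress-well-founded}, so it suffices to examine direct gadgets up to a fixed size. By the equivalence convention, we only need the existence of a realizing gadget, not an explicit enumeration of the whole group.
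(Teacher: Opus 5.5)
Your proposal is correct and follows essentially the paper's route: recursive contraction with factor extraction, a check after each augmentation for a unary or a singular or infinite-order binary (outcome~1) and for a new transfer class or a drop of \(\nu\) (outcome~2), and termination from the arity drop of two at each level. The paper sends odd-arity factors directly to the odd-arity theorem; your outcome-2 and continuation branches absorb them harmlessly. One remark should be dropped: the claim that gadgets of bounded size suffice to detect growth of \(G\) is unsupported, since a bound on the group order does not bound the size of a witnessing gadget. Nothing depends on it, because the lemma needs only the existence of a realizing gadget, as you note yourself.
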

	
	\begin{proof}
		Begin with the finitely many nonbinary tensor-prime factors in
		\(\Gamma\).  For each retained factor, process its finite set of
		ordered-port, representative, and orientation choices.  For each choice,
		adjoin the resulting contraction and all positive-arity tensor-prime factors
		under \cref{eq:finite-equivalent-enlargement,eq:factor-saturation}.
		The transported mountain rules out the common-one-sided factor exception;
		if a factor theorem instead proves hardness or tractability, this is
		outcome~1.
		
		An odd or unary factor is handled by the odd-arity or unary theorem.
		A singular binary, or a nonsingular binary whose transfer has infinite
		projective order, is handled by \cref{lem:binary-interpolation} and those
		same theorems.  After each augmentation, consider all directly realizable
		nonzero ordered binaries.  If any is singular or its transfer has infinite
		projective order, the preceding argument gives outcome~1.
		Otherwise the group is defined, and an element outside \(H\) strictly
		enlarges the pulled-back group,
		and a new nonbinary tensor-prime factor of arity below \(\nu(\Gamma)\)
		strictly decreases \(\nu\).  These give outcome~2, with both comparisons
		preserved as in \cref{def:height-one-outer-successor}.
		If neither occurs, retain all the binary factors and repeat the procedure
		on each newly retained nonbinary tensor-prime factor.
		
		There are finitely many initial factors and finitely many choices at each
		step.  Every newly produced nonbinary factor has arity at most the arity of
		its parent minus two.  The recursive depth is therefore bounded, so only
		finitely many factors and contraction choices occur.  If neither of the
		first two outcomes occurs, processing all these choices gives outcome~3.
		Every augmentation retains the transformed mountain and satisfies the
		fixed-field and scalar conditions of \cref{lem:normalization-ledger}.
	\end{proof}
	
	\subsection{Contractions for the odd cyclic and dihedral groups}
	\label{subsec:residual-phase-toolkit}
	
	The next two lemmas are shared by the odd cyclic and odd dihedral cases.
	For a signature \(f\in\cG\), recall that a tilde denotes only its
	Walsh coordinates
	\[
	\widetilde f=(P^{-1})^{\otimes\arity(f)}f.
	\]
	The matrix \(P\) is never inserted into a gadget: every contraction below is a
	native \(K\)-gadget, and every pencil identity is only an analysis of those
	contractions; no displayed linear combination is asserted to be realizable.
	
	\subsubsection{Contraction kernels and the four-coefficient relation}
	\label{subsec:odd-dih-cards}
	
	Fix odd \(N\ge3\).  With \(R_s\) as in
	\cref{thm:finite-group-split}, the actual rotation and reflection classes are
	\begin{equation}
		[R_s],\qquad[ZR_s],
		\qquad s\in\mu_N.
		\label{eq:odd-dih-native-transfers}
	\end{equation}
	
	They have Walsh forms
	\[
	P^{-1}R_sP=D_s:=\diag(1,s),
	\qquad P^{-1}ZP=X,
	\qquad F_s:=XD_s.
	\]
	
	Fix a direct binary-chain representative of every projective class in
	\cref{eq:odd-dih-native-transfers}.  Contracting two ports then gives, up to
	the recorded nonzero representative scalar, the Walsh kernels
	
	\begin{equation}
		\mathsf K_s^+=
		\begin{pmatrix}1&0\\0&-s\end{pmatrix},
		\qquad
		\mathsf K_s^-=
		\begin{pmatrix}0&1\\-s&0\end{pmatrix},
		\qquad s\in\mu_N.
		\label{eq:odd-dih-kernels}
	\end{equation}
	
	The plus kernels occur in both groups; the minus kernels occur only in the
	dihedral group.  Under port reversal, \(D_s\) is fixed, whereas \(F_s\) is
	sent projectively to \(F_{s^{-1}}\).  Attaching binary representatives to the ports adds no kernels
	because
	
	\[
	\tau_Z(T):=ZT^{\mathsf T}Z
	\]
	
	preserves \(\{D_s:s\in\mu_N\}\) and
	\(\{D_s,F_s:s\in\mu_N\}\).  Hence \cref{eq:odd-dih-kernels} lists every
	oriented kernel needed below.
	
	Although this subsection otherwise has \(N\) odd, the next pencil fact is
	valid for every integer \(N\ge3\) and will also be used in the even-order
	cyclic case.  For such an \(N\) and
	\(a,b,c,d\in\AlgNums\), let \(\mathcal R_N(a,b,c,d)\) mean that, for some
	\(\alpha\in\mu_N\), one of the following three systems holds:
	
	\begin{align}
		c&=-\alpha a,&d&=-\alpha b,
		&&\text{(proportional)},
		\label{eq:odd-dih-R-proportional}\\
		b&=c=0,&d&=\alpha a,
		&&\text{(diagonal monomial)},
		\label{eq:odd-dih-R-diagonal}\\
		a&=d=0,&c&=\alpha b,
		&&\text{(anti-diagonal monomial)}.
		\label{eq:odd-dih-R-antidiagonal}
	\end{align}
	
	The three systems are not required to be disjoint; in particular, their
	union contains the all-zero quadruple.
	
	\begin{lemma}[Shared cyclic phase pencil, including zeros]
		\label{lem:odd-dih-RN}
		For every integer \(N\ge3\), the relation
		\(\mathcal R_N(a,b,c,d)\) is equivalent to the following
		condition: for every \(s\in\mu_N\),
		\[
		\diag(a-sb,c-sd)Z
		\quad\text{is zero or is projectively }D_\eta
		\text{ for some }\eta\in\mu_N.
		\]
	\end{lemma}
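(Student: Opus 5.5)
The plan is to turn the matrix condition into a single polynomial identity in \(s\), and then read the three systems off its coefficients. Since \(\diag(a-sb,c-sd)Z=\diag(a-sb,\,ds-c)\), put
\[
p(s)=a-bs,\qquad r(s)=ds-c .
\]
A diagonal matrix \(\diag(p,r)\) is projectively \(D_\eta=\diag(1,\eta)\) with \(\eta\in\mu_N\) exactly when \(p\ne0\) and \(r/p\in\mu_N\); in that case \(r\ne0\) as well. So the condition in the lemma says: for every \(s\in\mu_N\), either \(p(s)=r(s)=0\), or both are nonzero and \((r(s)/p(s))^N=1\). In both situations \(r(s)^N=p(s)^N\). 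Conversely, \(r(s)^N=p(s)^N\) forces \(p(s),r(s)\) to be both zero or both nonzero with ratio in \(\mu_N\). Hence the condition is equivalent to
\[
r(s)^N=p(s)^N\qquad(s\in\mu_N).
\]

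For the forward direction, I check the three systems directly.
\begin{itemize}
\item In the proportional case \(c=-\alpha a\), \(d=-\alpha b\), one gets \(r=\alpha p\).
\item In the diagonal monomial case \(b=c=0\), \(d=\alpha a\), one gets \(p=a\) and \(r=\alpha a s\), so the ratio is \(\alpha s\in\mu_N\), or both vanish when \(a=0\).
\item In the anti-diagonal monomial case \(a=d=0\), \(c=\alpha b\), one gets \(p=-bs\) and \(r=-\alpha b\), so the ratio is \(\alpha s^{-1}\in\mu_N\), or both vanish.
\end{itemize}

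For the converse, set \(Q(s)=r(s)^N-p(s)^N\). It is a polynomial of degree at most \(N\) that vanishes on all \(N\) points of \(\mu_N\). Therefore \(Q(s)=\kappa(s^N-1)\) for some constant \(\kappa\), and every coefficient of \(s^k\) with \(1\le k\le N-1\) vanishes. Expanding, and dividing by the nonzero binomial coefficient \(\binom Nk\), gives
\[
d^k(-c)^{N-k}=(-b)^k a^{N-k}\qquad(1\le k\le N-1).
\]
Because \(N\ge3\), both \(k=1\) and \(k=2\) lie in this range; this is exactly where the hypothesis \(N\ge3\) enters.

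I then split on which of \(a,b\) vanish.
\begin{itemize}
\item If \(a=b=0\), then \(p\equiv0\). The identity forces \(r\) to vanish at \(N\ge2\) points, so \(c=d=0\), and the proportional system holds with any \(\alpha\).
\item If \(b=0\ne a\), the \(k=1\) and \(k=N-1\) relations give \(dc^{N-1}=0\) and \(d^{N-1}c=0\).
  \begin{itemize}
  \item If \(c=0\), then \(r/p=(d/a)s\in\mu_N\) at \(s=1\) gives \(d/a\in\mu_N\), which is the diagonal monomial system.
  \item If \(d=0\), then \(r/p=-c/a\) is constant in \(\mu_N\), which is the proportional system with \(\alpha=-c/a\) and \(b=d=0\).
  \end{itemize}
\item The case \(a=0\ne b\) is symmetric, evaluating at \(s=1\) again, and yields the anti-diagonal or proportional system.
\item If \(ab\ne0\), the \(k=1\) relation shows \(c,d\ne0\). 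Put \(w=a/(-c)\) and \(u=d/(-b)\). The relations become \(u^k=w^{N-k}\). From \(k=1,2\) we get \(w^{2N-2}=u^2=w^{N-2}\), so \(w^N=1\) and \(u=w^{N-1}=w^{-1}\). With \(\alpha=w^{-1}\in\mu_N\) this reads \(c=-\alpha a\) and \(d=-\alpha b\), the proportional system.
\end{itemize}

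The only delicate step is the coefficient extraction in the generic case \(ab\ne0\). The rest is bookkeeping of the zero patterns, which the reformulation \(r^N=p^N\) handles uniformly, including the allowed all-zero outputs.
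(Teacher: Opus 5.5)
Your proof is correct and complete. The paper gives no argument of its own for this lemma; it only cites \cite[Lemma~7.1]{Liu2026DEQ2Holant}, so your proposal is a self-contained replacement for that citation rather than a variant of a displayed proof. Your key steps are:
\begin{enumerate}
\item Rewrite the matrix condition as $r(s)^N=p(s)^N$ on $\mu_N$. This absorbs the allowed zero outputs without a separate case split.
\item Observe that $r^N-p^N$ has degree at most $N$ and vanishes on the $N$ distinct points of $\mu_N$, so it is a scalar multiple of $s^N-1$.
\item In the generic case $ab\ne0$, use the vanishing of the $s^1$ and $s^2$ coefficients to force $w^N=1$, $u=w^{-1}$, and hence the proportional system.
\item In the degenerate patterns, use the relation $dc^{N-1}=0$ together with evaluation at $s=1$.
\end{enumerate}
Your route also shows that $N\ge3$ enters only through the $k=2$ coefficient, and this hypothesis is sharp. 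For $N=2$, the quadruple $(a,b,c,d)=(1,2,2,1)$ gives $\diag(-1,-1)$ at $s=1$ and $\diag(3,-3)$ at $s=-1$, so it satisfies the matrix condition. Yet it lies in none of the three systems defining $\mathcal R_2$, since the proportional system would need $\alpha=-2\notin\mu_2$.
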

	
	\begin{proof}
		This is \cite[Lemma~7.1]{Liu2026DEQ2Holant}.
	\end{proof}
	
	\subsubsection{Support parity}
	\label{subsec:odd-dih-parity}
	
	\begin{lemma}[Walsh parity from contractions]
		\label{lem:odd-dih-all-arity-parity}
		Let \(N\ge3\) be odd, let \(H=C_N\) or \(D_{2N}\), and let \(f\ne0\) have
		positive even arity.  At arity two, suppose the Walsh transfer of \(f\)
		belongs to \(H\).  At arity at least four, suppose that every positive-arity
		prime factor of every nonzero contraction of \(f\) using an ordered pair of distinct
		ports, a chosen representative of \(H\), and either kernel orientation is supported on
		one Walsh parity class; in the cyclic case suppose that class is even.  Then
		\(\widetilde f\) is supported on one parity class, which is even in the
		cyclic case.
	\end{lemma}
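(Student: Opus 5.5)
We argue by induction on the arity of \(f\), working entirely in Walsh coordinates, where every native contraction becomes a contraction of \(\widetilde f\) against one of the kernels \(\mathsf K_s^{\pm}\) of \cref{eq:odd-dih-kernels}, with \(\mathsf K_s^-\) available only in the dihedral case.

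\emph{Arity two.} Here \(f\) is a binary whose Walsh transfer lies in the Walsh image of \(H\). That image is \(\{D_s\}\) in the cyclic case and \(\{D_s,F_s\}\) in the dihedral case. Recover \(\widetilde f\) from its transfer using \(P^{\mathsf T}XP=2Z\): the Walsh matrix of \(f\) is, up to scalar, \(D_sZ\) or \(F_sZ\). The first is diagonal, hence supported on \(\{00,11\}\), the even class. The second is anti-diagonal, hence supported on \(\{01,10\}\), the odd class. This settles the base case, including the requirement that the class be even in the cyclic case.

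\emph{Arity at least four.} Suppose \(\widetilde f\) had support in both parity classes. The plan is to find an ordered port pair \((i,j)\) and a kernel whose contraction has nonzero parts of both parities, and then to contradict the hypothesis on prime factors. First, a tensor product whose prime factors each lie in a single parity class itself lies in a single parity class, namely the sum of the factor classes. So it suffices to produce one nonzero contraction that has support in both classes.

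To produce it, contract with \(\mathsf K_s^+\). Its output is
\[
\widetilde f_{ij}^{00}-s\,\widetilde f_{ij}^{11}.
\]
Writing \(\widetilde f=\widetilde f^{\rm ev}+\widetilde f^{\rm odd}\), this contraction preserves residual parity: residual parity equals total parity minus \(a+b\), and \(a+b\) is even for \(ab\in\{00,11\}\). Hence the even and odd parts of the output are the contractions of \(\widetilde f^{\rm ev}\) and \(\widetilde f^{\rm odd}\) separately.

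So it is enough to find a pair \((i,j)\) and an \(s\in\mu_N\) for which both pieces are nonzero. The coefficient of \(s\) is fixed by \(\widetilde f^{\rm ev}\) and \(\widetilde f^{\rm odd}\), and the \(N\ge3\) values \(s\in\mu_N\) are distinct. Hence each piece vanishes for at most one \(s\), unless its two slices \(\widetilde f^{00}_{ij}\) and \(\widetilde f^{11}_{ij}\) both vanish. Because \(N\ge3\), some \(s\) then makes both pieces nonzero, provided each parity part has a nonzero \(\{00,11\}\)-slice on the pair \((i,j)\).

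\emph{Main obstacle.} Choosing \((i,j)\) is the delicate step. It requires a pair at which both \(\widetilde f^{\rm ev}\) and \(\widetilde f^{\rm odd}\) have a support word with equal bits on \(\{i,j\}\). For arity at least four this is plausible, since any word of length at least three has two equal coordinates. The difficulty is that a single pair must serve both parity parts at once.

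If no common pair exists, we would try the \(\{01,10\}\)-slices instead. In the dihedral case use \(\mathsf K_s^-\), which contracts \(\widetilde f_{ij}^{01}-s\,\widetilde f_{ij}^{10}\) and flips residual parity uniformly, so the same separation argument applies. In the cyclic case first contract a different pair to reduce to lower arity, then invoke the induction hypothesis: lower-arity contractions of \(f\) have single even support. A direct combinatorial choice looks feasible. Take a support word \(\alpha\) of one parity part and \(\beta\) of the other; among three ports, some two are equal in \(\alpha\), and some two are equal in \(\beta\). Counting over four ports should force a shared equal pair, and settling this counting is what I expect to be the main work.

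Finally, in the cyclic case exclude single odd support. If \(\widetilde f\) is odd-supported, then a \(\mathsf K_s^+\) contraction on a suitable pair is nonzero and odd. This contradicts the hypothesis that all prime factors are even, since a product of even-class factors is even.
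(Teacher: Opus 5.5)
Your main line is the paper's argument in substance. A \(\mathsf K_s^+\)-contraction on a pair \((i,j)\) preserves residual parity, so its even and odd parts are separate affine pencils in \(s\). Each nonzero part vanishes for at most one phase, so \(N\ge3\) leaves a phase at which the contraction is nonzero with both parities. That contradicts the single parity of a product of single-parity prime factors. The paper first shows that each equal-bit slice pair has one parity and then uses the same phase avoidance; the order of the two steps is immaterial.

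However, the step you call the main obstacle is left unproved, and it is exactly the combinatorial input the proof needs. It closes by pigeonhole. Take supported words \(\alpha\) (even) and \(\beta\) (odd), and look at the \(n\) points \((\alpha_k,\beta_k)\in\Ftwo^2\).
\begin{itemize}
\item If \(n\ge6\), two ports \(i\ne j\) give the same point, so \(\alpha_i=\alpha_j\) and \(\beta_i=\beta_j\).
\item If \(n=4\) and the four points are distinct, they exhaust \(\Ftwo^2\). Then \(\alpha\) and \(\beta\) both have weight two and hence the same parity, a contradiction.
\end{itemize}
So a common equal-bit pair always exists. On it, the residual words of \(\alpha\) and \(\beta\) have opposite parities and are therefore distinct, which completes your argument.

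Drop the fallbacks. The \(\{01,10\}\)-slices and \(\mathsf K_s^-\) are never needed. The appeal to an induction hypothesis for lower-arity contractions is also not licensed. The lemma's hypothesis for a contraction \(g\) of \(f\) would concern contractions of \(g\), which are double contractions of \(f\) and are not assumed. No induction is needed anyway: the hypothesis directly makes every nonzero single contraction single-parity, and even in the cyclic case.

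In the cyclic case, apply your last paragraph to any odd support word \(\beta\), not only when \(\widetilde f\) is wholly odd. Two equal coordinates of \(\beta\) exist since \(n\ge3\). Choosing a phase that avoids the single root of \(\beta\)'s residual coefficient gives a nonzero contraction with an odd residual word. Hence the cyclic case needs no pigeonhole at all.
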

	
	\begin{proof}
		At arity two, a Walsh binary is, up to a nonzero scalar, a diagonal or
		anti-diagonal transfer times \(Z\), and hence has one parity.  Only the
		diagonal alternative occurs in the cyclic group.  At larger arity, the
		tensor product of the prime factors of a nonzero contraction has one parity; in the
		cyclic case it is even.
		
		At arity four, each \(\mathsf K_s^+\)-contraction is zero, diagonal, or
		anti-diagonal, and every entry is affine in \(s\).  Among three phases,
		either two contractions vanish, annihilating the pencil, or two nonzero contractions have
		the same support parity, which forces the opposite entries to vanish
		identically.  In the only remaining pattern---one zero, one diagonal, and
		one anti-diagonal contraction---each opposite pair acquires two roots, with the
		same conclusion.  Hence every equal-bit slice has one parity.
		Two supported words of opposite parity have a coordinate pair equal in both
		unless their four ordered coordinate pairs exhaust \(\Ftwo^2\); in the
		latter case both words have weight two and therefore already have the same
		parity, a contradiction.  In the former case deleting the repeated pair
		contradicts slice parity.  In the cyclic case, odd Walsh parity would
		realize a reflection and is therefore excluded.
		
		For arity at least six, fix distinct ports \(i,j\).  The actual plus contraction is
		\[
		\widetilde h_s(y)=\lambda_s\bigl(\widetilde f(00,y)
		-s\widetilde f(11,y)\bigr),
		\qquad \lambda_s\ne0.
		\]
		A zero contraction makes all others proportional.  Otherwise two phases have the
		same contraction parity.  Every coordinate in the opposite parity then
		vanishes at two distinct phases, so its affine coefficient vanishes
		identically; the two nonzero phases also recover both endpoint slices with
		the common parity.
		
		In the cyclic case, delete two equal coordinates of any supported word and
		choose a phase other than the at most one value cancelling its residual
		coefficient; the resulting contraction is even.  In the dihedral case, two
		opposite-parity supported words at arity at least six have two coordinates
		with the same ordered pair.  At most one phase cancels each residual
		coefficient, so \(|\mu_N|\ge3\) leaves a phase preserving both words.
		Deleting the chosen pair preserves their opposite parities, contradicting
		the common slice parity.
	\end{proof}
	
	For every \(f\) covered by \cref{lem:odd-dih-all-arity-parity}, Walsh
	duality gives the equivalent complement relation in the native
	\(K\)-coordinates,
	\[
	f(\bar x)=f(x)
	\quad\text{or}\quad
	f(\bar x)=-f(x),
	\]
	
	with only the plus sign in the cyclic subcase.
	
	\subsection{The odd-dihedral case}
	\label{proofsec:odd-dihedral}
	
	\subsubsection{Support and coefficients on even blocks}
	\label{subsec:odd-dih-phase-product-class}
	
	For a finite port set \(V\) and \(C\subseteq V\), let
	\(\one_C\in\Ftwo^V\) denote the incidence vector of \(C\).  For a partition
	\(\pi\) of \(V\) into nonempty even blocks, define the subspace
	\[
	L_\pi=\Span_{\Ftwo}\{\one_C:C\in\pi\}.
	\]
	Its cosets are precisely the sets obtained by toggling each block as one
	unit.
	
	The class \(\mathcal P_N(V)\) consists of the zero tensor and
	all tensors \(g\) for which there exist such a partition \(\pi\), an offset
	\(a\in\Ftwo^V\), a scalar \(c\in\AlgNums\setminus\{0\}\), and block ratios
	\(r_C\in(-1)^{|C|/2}\mu_N\) such that
	\begin{align*}
		\supp(g)&=a+L_\pi,
		\\
		g\!\left(a+\sum_{C\in\pi}t_C\one_C\right)
		&=c\prod_{C\in\pi}r_C^{t_C}
		\qquad\bigl((t_C)_{C\in\pi}\in\Ftwo^\pi\bigr).
	\end{align*}
	For \(C\subseteq V\), write \(a|_C\) for the restriction of \(a\) to \(C\),
	and \(\overline{a|_C}\) for its bitwise complement.
	Equivalently, with the tensor factors placed on their indicated port sets,
	\[
	g=c\bigotimes_{C\in\pi}
	\bigl(\delta_{a|_C}+r_C\delta_{\overline{a|_C}}\bigr).
	\]
	Thus, in the variables \((t_C)\), the coefficient function is the product
	\(c\prod_C r_C^{t_C}\).  A factor on one block has the displayed two-point
	form and the stated ratio restriction.  For nonzero \(g\), its support
	determines \(\pi\): the inclusion-minimal nonzero vectors of \(L_\pi\) are
	precisely the block indicators.  We omit \(V\) when clear.  The class
	\(\mathcal P_N\) differs from the standard product class \(\cP\) of
	\cref{subsec:standard-classes}; its even-block and coefficient-ratio
	restrictions remain part of its definition.
	
	\begin{lemma}[Closure of \(\mathcal P_N\) under contractions]
		\label{lem:phase-product-closure}
		The class \(\mathcal P_N\) is closed under nonzero scalar multiplication,
		port permutation, tensor product on disjoint port sets, and, up to the zero
		tensor, contraction of two ports against any kernel
		\(\mathsf K_s^+\), \(\mathsf K_s^-\), or its transpose from
		\cref{eq:odd-dih-kernels}, where \(s\in\mu_N\).  Consequently, if
		every positive-arity prime factor of a contraction has Walsh tensor in
		\(\mathcal P_N\), then the contraction itself has Walsh tensor in
		\(\mathcal P_N\).
	\end{lemma}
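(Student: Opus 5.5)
The first four closure properties follow directly from the tensor-product description
\[
g=c\bigotimes_{C\in\pi}\bigl(\delta_{a|_C}+r_C\delta_{\overline{a|_C}}\bigr),
\]
so almost all the work is in the contraction step. Scalars only change \(c\). A port permutation carries each even block to an even block of the same size, so the ratio restriction \(r_C\in(-1)^{|C|/2}\mu_N\) is unchanged. A tensor product on disjoint port sets takes the union of the two partitions, the concatenated offset, and the product of the constants. I would therefore work one block at a time.

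For a contraction on ports \(i,j\) there are two cases. If \(i\) and \(j\) lie in different blocks, the contraction depends only on the factor on \(C\cup C'\); otherwise it depends only on the factor on the single block \(C\ni i,j\). All other factors pass through unchanged.

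\emph{Same block.} Write the factor as \(\delta_b+r\delta_{\bar b}\) on \(C\), and let \(k\) be a kernel. The contraction equals
\[
k(b_i,b_j)\,\delta_{b_{\widehat{ij}}}+r\,k(\bar b_i,\bar b_j)\,\delta_{\overline{b_{\widehat{ij}}}},
\]
on the block \(C\setminus\{i,j\}\), which is even or empty.
\begin{enumerate}
\item If exactly one of the two coefficients is nonzero, this is a single point mass. A point mass on an even block splits into unaries, each a block of size one, which is odd. I therefore need to check that this case cannot occur.
\item The kernels \(\mathsf K_s^+\) are supported on the diagonal and the kernels \(\mathsf K_s^-\) on the antidiagonal. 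Complementing \((b_i,b_j)\) keeps equal bits equal and unequal bits unequal, so both coefficients vanish or both are nonzero. This rules out the single-point-mass case.
\item When both coefficients are nonzero, the new ratio is \(r\,k(\bar b_i,\bar b_j)/k(b_i,b_j)\). This ratio is \(r\) multiplied by \(-s\), by \((-s)^{-1}\), or by their negatives; the transpose kernels give the same set. Since \(|C|/2\) drops by one, the sign class flips exactly as required.
\item If \(|C|=2\), the result is a nullary scalar, which is absorbed into \(c\).
\end{enumerate}

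\emph{Different blocks.} Summing out \(x_i\) and \(x_j\) against \(k\) gives four terms, one for each choice of \(t_C,t_{C'}\in\Ftwo\). Each term is nonzero exactly when the kernel entry \(k\bigl(b_i\oplus t_C,\,b'_j\oplus t_{C'}\bigr)\) is nonzero. Because \(k\) is diagonal or antidiagonal, the surviving terms are exactly those with \(t_C\oplus t_{C'}\) equal to one fixed bit. The result is therefore supported on the merged block \((C\cup C')\setminus\{i,j\}\), which has even size \(|C|+|C'|-2\), and it is a two-point factor. Its ratio is \(r_Cr_{C'}\) multiplied by a ratio of two kernel entries, which is \(-s\) or its inverse up to sign. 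The half-size is \((|C|+|C'|)/2-1\), so the required sign \((-1)^{|C|/2+|C'|/2-1}\) again matches the extra minus sign. If the merged block is empty, the result is a scalar.

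The step needing the most care is this sign and phase bookkeeping: each contraction must change the sign class \((-1)^{|C|/2}\) exactly once while the phase stays in \(\mu_N\). I would settle it by listing explicitly the four kernels and their transposes from \cref{eq:odd-dih-kernels}. For the "Consequently" clause, the product of the prime factors' Walsh tensors is in \(\mathcal P_N\) by tensor-product closure. A Walsh tensor is the contraction's tensor with \(P^{-1}\) applied on each port, and this commutes with tensor factorization, so the contraction's Walsh tensor lies in \(\mathcal P_N\).
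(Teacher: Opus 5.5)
Your proposal is correct and follows the paper's proof essentially step for step. Both arguments treat scalars, port permutations and tensor products directly, split a contraction into the same-block and different-block cases with new ratio $-s^{\pm1}r_C$ or $-s^{\pm1}r_Cr_{C'}^{\pm1}$, and derive the final clause from prime factorization. The one thing to fix is the hedging ``or by their negatives'' and ``up to sign'': every kernel multiplier is exactly $-s^{\pm1}$. For example, $k(1,1)/k(0,0)=-s$ for $\mathsf K_s^+$ and $k(1,0)/k(0,1)=-s$ for $\mathsf K_s^-$, with the inverse for the opposite offset or the transpose. This must be stated sharply because $\mu_N\cap(-\mu_N)=\varnothing$ for odd $N$, so a multiplier $+s^{\pm1}$ would take the ratio out of the class. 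Likewise, in the two-block case the ratio can be $-s^{\pm1}r_C/r_{C'}$ rather than involving $r_Cr_{C'}$. This is harmless, since $r_{C'}^{-1}$ lies in the same class $(-1)^{|C'|/2}\mu_N$ as $r_{C'}$.
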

	
	\begin{proof}
		For a tensor product, take the disjoint union of the two even-block
		partitions and concatenate their offsets and block ratios.  Scalars and port
		permutations merely change the leading coefficient or relabel that data.
		
		Consider a two-port contraction of a nonzero tensor in \(\mathcal P_N\).  If the
		ports lie in the same block \(C\), the kernel with the wrong xor gives zero.
		For the supported xor, delete the two ports.  When ports remain in \(C\), its
		new block ratio is
		\[
		-s r_C\quad\hbox{or}\quad-r_C/s,
		\]
		and therefore belongs to
		\((-1)^{(|C|-2)/2}\mu_N\).  If \(|C|=2\), the block instead contributes a
		scalar, possibly zero, which is absorbed into the leading coefficient.
		
		If the ports lie in distinct blocks \(C,D\), the kernel affinely identifies
		their two block variables.  After complementing one residual offset when
		needed, replace those blocks by
		\[
		(C\setminus\{i\})\mathbin{\dot\cup}(D\setminus\{j\}).
		\]
		Its ratio is one of
		\[
		-s^{\pm1}r_Cr_D^{\pm1}
		\in(-1)^{(|C|+|D|-2)/2}\mu_N.
		\]
		All other blocks are unchanged.  Transposing \(\mathsf K_s^-\) only supplies
		a nonzero scalar and replaces \(s\) by \(s^{-1}\), so it is covered as well.
		Thus every nonzero contraction is again in \(\mathcal P_N\).
		The last assertion follows from prime factorization; nullary factors are
		absorbed into the scalar.
	\end{proof}
	
	\subsubsection{Four-port support and coefficients}
	\label{subsec:odd-dih-q4}
	
	Fix \(N\ge3\) and a quaternary signature \(q\) in the current set, and put
	\(\widetilde q=(P^{-1})^{\otimes4}q\).  When \(\widetilde q\) is supported
	on the even Hamming-weight positions in Walsh coordinates, write
	
	\begin{equation}
		p=\widetilde q_{0000},
		\qquad t=\widetilde q_{1111},
		\qquad x_{ab}=\widetilde q_{\one_{\{a,b\}}}.
		\label{eq:odd-dih-q4-coordinates}
	\end{equation}
	
	For every ordering \((i,j,k,l)\) of \([4]\), the equal and mixed phase-contraction
	matrices are, respectively,
	\begin{align*}
		H_{ij}^+(s)&=
		\begin{pmatrix}p-sx_{ij}&0\\0&x_{kl}-st\end{pmatrix},\\
		H_{ij}^-(s)&=
		\begin{pmatrix}
			0&x_{jl}-sx_{il}\\x_{jk}-sx_{ik}&0
		\end{pmatrix}.
	\end{align*}
	Thus condition~\ref{cond:height-one-contractions} for the cyclic group
	and \cref{lem:odd-dih-RN} give \(\mathcal R_N(p,x_{ij},x_{kl},t)\).
	For the dihedral group, the mixed contraction also gives
	\(\mathcal R_N(x_{jk},x_{ik},x_{jl},x_{il})\).
	
	When \(\widetilde q\) is supported on the odd Hamming-weight positions in Walsh coordinates, write
	\[
	u_i=\widetilde q(e_i),
	\qquad v_i=\widetilde q(\one^4\oplus e_i).
	\]
	
	For every such port ordering, condition~\ref{cond:height-one-contractions}
	for the dihedral group gives
	\[
	\mathcal R_N(u_k,v_l,u_l,v_k),
	\qquad
	\mathcal R_N(u_j,u_i,v_i,v_j).
	\]
	
	\begin{lemma}[Shared cyclic quaternary family]
		\label{lem:odd-cyc-q4-family}
		Let \(N\ge3\), let \(\Lambda\) be a finite algebraic signature set in
		\(K\)-coordinates, and, for every \(s\in\mu_N\), fix an actual ordered
		binary representative \(C_s\) over \(\Lambda\), in a specified orientation,
		whose Walsh matrix is
		\[
		\widetilde C_s=\kappa_sD_sZ,
		\qquad \kappa_s\in\AlgNums^\times.
		\]
		Let \(q\in\Lambda\) be quaternary, and
		suppose \(\widetilde q=(P^{-1})^{\otimes4}q\) is supported on the even Hamming-weight
		positions in Walsh coordinates.  Define \(p,t,x_{ab}\) by
		\cref{eq:odd-dih-q4-coordinates}.  Assume that, for every complementary pair,
		both orders of its two residual ports, and every \(s\in\mu_N\), contraction
		with \(C_s\) gives either the zero contraction or a nonsingular binary contraction whose
		Walsh transfer class belongs to
		\[
		C_N=\{[D_\eta]:\eta\in\mu_N\}.
		\]
		Organize the eight
		coordinates into four complementary pairs.  Put \(B_0=(p,t)\) and
		\(\mathfrak I=\{\{1,2\},\{1,3\},\{1,4\}\}\).
		For \(I=\{a,b\}\in\mathfrak I\), set \(x_I=x_{ab}\),
		put \(I^c=[4]\setminus I\), and define
		the other three complementary pairs \(B_I=(x_I,x_{I^c})\),
		\(I\in\mathfrak I\).  Here \(B_I\ne(0,0)\) initially means only that
		at least one coefficient is nonzero; the conclusion below makes both
		coefficients nonzero.  Then exactly one of the following holds:
		\begin{enumerate}
			\item If \(p,t\ne0\), then \(\lambda=t/p\in\mu_N\), and each pair \(B_I\)
			is zero or
			\begin{equation}
				B_I=(-p\lambda\alpha_I^{-1},-p\alpha_I),
				\qquad \alpha_I\in\mu_N.
				\label{eq:odd-cyc-q4-nonzero-endpoints}
			\end{equation}
			\item If \(p=t=0\), each pair \(B_I\) is zero or
			\[
			B_I=\rho_I(1,\alpha_I),
			\qquad \rho_I\ne0,
			\quad \alpha_I\in\mu_N.
			\]
		\end{enumerate}
		In particular, exactly one of \(p,t\) cannot be nonzero.
		Conversely, every even Walsh table in these alternatives satisfies the
		stated contraction condition under every phase contraction.
	\end{lemma}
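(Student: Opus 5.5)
The plan is to reduce everything to \cref{lem:odd-dih-RN}, applied once to each equal-type contraction $H^+_{ij}(s)$. First we compute the Walsh form of a contraction with $C_s$. Contracting ports $i,j$ of $q$ through an actual binary whose Walsh matrix is $\kappa_sD_sZ$ produces, in Walsh coordinates and up to the nonzero scalar $\kappa_s$ times the factor coming from $P^{\mathsf T}XP=2Z$, the kernel $\mathsf K_s^+$. The residual Walsh binary on $(k,l)$ is therefore $H^+_{ij}(s)$, that is $\diag(p-sx_{ij},\,x_{kl}-st)$. Its Walsh transfer is this matrix times $Z$ (again up to a scalar), because the edge is $2Z$ in Walsh coordinates. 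The hypothesis says this transfer is zero or lies in $C_N=\{[D_\eta]\}$. Reversing the residual port order only transposes a diagonal matrix, so it gives nothing new. \Cref{lem:odd-dih-RN} then yields $\mathcal R_N(p,x_{ij},x_{kl},t)$ for every ordering. Since $\{i,j\}$ and $\{k,l\}$ are interchangeable, this gives $\mathcal R_N(p,x_I,x_{I^c},t)$ for each $I\in\mathfrak I$, together with the companion relation with $I$ and $I^c$ swapped.

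Next we split into cases according to the endpoints. Suppose first that $p,t\ne0$. The anti-diagonal system is then impossible. The diagonal system forces $x_I=x_{I^c}=0$, so $B_I$ is zero with $t=\alpha p$. The proportional system gives $x_{I^c}=-\alpha p$ and $t=-\alpha x_I$. In both cases we still have to show that $\lambda=t/p$ is in $\mu_N$. In the proportional case we use the swapped relation $\mathcal R_N(p,x_{I^c},x_I,t)$. It can be neither diagonal (since $x$ is nonzero) nor anti-diagonal, so it is proportional as well: $x_I=-\beta p$ and $t=-\beta x_{I^c}$. Combining the two gives $t=\alpha\beta p$, so $\lambda=\alpha\beta\in\mu_N$. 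Moreover $x_I=-p\lambda\alpha^{-1}$, which is \cref{eq:odd-cyc-q4-nonzero-endpoints} with $\alpha_I=\alpha$. If every $B_I$ is zero, some pair still has to certify $\lambda\in\mu_N$, and the diagonal branch of any one relation does this. We also note that if one $x$ in a proportional pair vanished, then $p$ or $t$ would vanish; so in this case both coefficients of each nonzero pair are nonzero.

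Now suppose exactly one endpoint is zero, say $t=0\ne p$. The proportional system would give $0=-\alpha x_I$ with $x_{I^c}=-\alpha p\ne0$. The swapped relation then forces $x_{I^c}=0$ through the same route, which is a contradiction. The diagonal system needs $0=\alpha p$, which is impossible, and the anti-diagonal system needs $p=0$, also impossible. So this case cannot occur. The case $p=0\ne t$ is symmetric, and this proves the "exactly one of $p,t$ cannot be nonzero" clause. Finally suppose $p=t=0$. The proportional system forces the pair to be zero. The diagonal system forces everything to be zero. The anti-diagonal system gives $x_{I^c}=\alpha x_I$, and both coefficients are then nonzero unless the pair is zero. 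Exclusivity of the alternatives is immediate from the endpoint values.

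For the converse, we substitute each alternative back into $H^+_{ij}(s)Z$ and check the three systems of $\mathcal R_N$ directly. These systems are exactly the defining cases of \cref{lem:odd-dih-RN}, so that lemma's reverse direction gives zero or a transfer in $C_N$. The main obstacle I expect is keeping track of the scalars and transposes when passing to Walsh coordinates, namely confirming that the transfer is exactly $\diag(\cdot,\cdot)$ times $Z$ projectively. The case analysis itself is routine once both orientations of each complementary pair are used.
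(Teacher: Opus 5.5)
Your proof is correct and is essentially the argument the paper intends: the paper proves this lemma only by citing \cite[Lemma~7.2]{Liu2026DEQ2Holant}, but the paragraph just before it derives exactly your relations $\mathcal R_N(p,x_{ij},x_{kl},t)$ from the $H^+_{ij}(s)$ contractions via \cref{lem:odd-dih-RN}. The essential point is that contracting the complementary port pair $I^c$ supplies the swapped relation $\mathcal R_N(p,x_{I^c},x_I,t)$, which is what forces $\lambda\in\mu_N$ and excludes a single nonzero endpoint; you handle this correctly, and your converse checks out with ratios $\alpha_I,\lambda\alpha_I^{-1}$ in the first alternative and $\alpha_I,\alpha_I^{-1}$ in the second.
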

	
	\begin{proof}
		This is \cite[Lemma~7.2]{Liu2026DEQ2Holant}.
	\end{proof}
	
	\begin{proposition}[Dihedral quaternary classification]
		\label{prop:odd-dih-q4}
		Let \(N\ge3\) be odd, and let \(q\in\Gamma\) be a nonzero quaternary
		satisfying condition~\ref{cond:height-one-contractions} with
		\(H=G(\Gamma)=D_{2N}\) in the normalized coordinates above.  Put
		\(\widetilde q=(P^{-1})^{\otimes4}q\).  Then
		\(\widetilde q\in\mathcal P_N([4])\).  More precisely,
		\(\widetilde q\) has one Walsh parity and exactly one or two complementary
		pairs whose coefficient pairs are nonzero.  After choosing an orientation, there is a scalar
		\(\rho\in\AlgNums\setminus\{0\}\) such that, for one nonzero complementary
		pair \(\{x,\bar x\}\), its nonzero coefficients in the order
		\((\widetilde q_x,\widetilde q_{\bar x})\) are
		
		\begin{equation}
			\rho(1,\delta),
			\qquad \delta\in\mu_N,
			\label{eq:odd-dih-q4-one-block}
		\end{equation}
		
		or, for two nonzero complementary pairs \(\{x,\bar x\}\) and
		\(\{y,\bar y\}\), its nonzero coefficients in the order
		\((\widetilde q_x,\widetilde q_{\bar x},
		\widetilde q_y,\widetilde q_{\bar y})\) are
		
		\begin{equation}
			\rho(1,\delta,-\gamma,-\delta\gamma^{-1}),
			\qquad \delta,\gamma\in\mu_N.
			\label{eq:odd-dih-q4-two-block}
		\end{equation}
		
		Conversely, every Walsh table in
		\cref{eq:odd-dih-q4-one-block,eq:odd-dih-q4-two-block} satisfies the
		stated contraction condition under all phase and reflection contractions.
	\end{proposition}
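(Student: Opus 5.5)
The plan is to split by Walsh parity and then run a case analysis on the four complementary pairs of $\widetilde q$, using the relations $\mathcal R_N$ recorded just before \cref{lem:odd-cyc-q4-family}.

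\textbf{Step 1: one Walsh parity.} Apply \cref{lem:odd-dih-all-arity-parity} with $H=D_{2N}$ at arity four. By condition~\ref{cond:height-one-contractions}, every nonzero two-port contraction of $q$ against a representative in either kernel orientation is a nonsingular binary with transfer class in $D_{2N}$. In Walsh coordinates such a binary is diagonal or anti-diagonal, so it has one parity, and the lemma's hypothesis holds. Hence $\widetilde q$ is supported on one parity class, and we treat the even and odd cases separately.

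\textbf{Step 2: the even case.} Here each plus contraction $H_{ij}^+(s)$ is diagonal. Its Walsh transfer is therefore projectively some $D_\eta$, that is, a rotation class, and no reflection class can occur. So the hypothesis of \cref{lem:odd-cyc-q4-family} holds with $C_s$ the fixed rotation representatives. That lemma gives two alternatives.
\begin{enumerate}
\item $p,t\ne0$ and $\lambda=t/p\in\mu_N$. Each other pair is zero or $B_I=(-p\lambda\alpha_I^{-1},-p\alpha_I)$.
\item $p=t=0$. Each nonzero pair is $\rho_I(1,\alpha_I)$.
\end{enumerate}
Next use the dihedral mixed relations $\mathcal R_N(x_{jk},x_{ik},x_{jl},x_{il})$ for all orderings. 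These should eliminate every configuration with too many nonzero pairs: in alternative~1, any second nonzero $B_I$; in alternative~2, a third nonzero $B_I$.

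The mechanism is this. The entries of a mixed relation come from two distinct pairs $B_I,B_J$. The diagonal and anti-diagonal monomial systems would force some coordinate known to be nonzero to vanish. The proportional system forces a ratio equation. Substituting the parametrizations, that equation becomes $-1\in\mu_N$, or a product of $N$-th roots of unity equal to $-1$. This is impossible because $N$ is odd.

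I expect this elimination to be the main obstacle. It is a finite but fiddly bookkeeping over which indices $i,j,k,l$ land in which pair. I would first handle alternative~1 with $B_{\{1,2\}}\ne0$ and $B_{\{1,3\}}\ne0$, reading off $\mathcal R_N$ for the ordering $(1,2,3,4)$. I would then argue that the other configurations follow by symmetry of the orderings.

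The surviving tables are identified as follows.
\begin{itemize}
\item Alternative~1 with only $B_0$ nonzero gives the one-pair form \cref{eq:odd-dih-q4-one-block} with $\delta=\lambda$.
\item Alternative~1 with $B_0$ and one $B_I$ nonzero gives the two-pair form \cref{eq:odd-dih-q4-two-block}, with $\rho=p$, $\delta=\lambda$, $\gamma=\lambda\alpha_I^{-1}$.
\item Alternative~2 with one or two nonzero $B_I$ gives the same two forms. In the two-pair subcase the mixed relation supplies the factor $-\gamma$ linking the two pairs.
\end{itemize}

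\textbf{Step 3: the odd case.} Organize the pairs as $(u_i,v_i)$ and apply the relations $\mathcal R_N(u_k,v_l,u_l,v_k)$ and $\mathcal R_N(u_j,u_i,v_i,v_j)$ over all orderings. The same three-system case analysis should apply, again using $-1\notin\mu_N$ for odd $N$. It should show that at most two pairs are nonzero, that each nonzero pair has ratio in $\mu_N$, and that two nonzero pairs carry the relative factor $-\gamma$.

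\textbf{Step 4: membership in $\mathcal P_N([4])$.} A single nonzero pair $\{x,\bar x\}$ is a coset of $L_\pi$ for the one-block partition $\pi=\{[4]\}$. Its ratio is $\delta\in\mu_N=(-1)^{2}\mu_N$, as the definition requires for a block of size four. Two nonzero pairs $\{x,\bar x,y,\bar y\}$ form a coset of $L_\pi$ for the two-block partition given by $x\oplus y$. The four coefficients $\rho(1,\delta,-\gamma,-\delta\gamma^{-1})$ factor as a product with block ratios in $-\mu_N$, the class required for blocks of size two.

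\textbf{Converse.} It follows from \cref{lem:phase-product-closure}. Every kernel contraction of a table in $\mathcal P_N([4])$ lies in $\mathcal P_N$ of two ports. Such a binary is nonzero two-point with ratio in $-\mu_N$, and its Walsh transfer is therefore a rotation or reflection class in $D_{2N}$. Otherwise the contraction is zero.
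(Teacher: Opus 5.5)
Your proposal is correct and follows the paper's own proof. The ingredients are the same: one parity from \cref{lem:odd-dih-all-arity-parity}, in-pair ratios from \cref{lem:odd-cyc-q4-family}, the mixed relations together with $-1\notin\mu_N$ for odd $N$ to limit the number of nonzero pairs, the analogous argument in odd parity, and then the $\mathcal P_N([4])$ bookkeeping. Your converse via \cref{lem:phase-product-closure} is a tidier replacement for the paper's ``substitution''.

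The one step you leave as ``should'' is the odd-parity case, and it does go through. There is no analogue of \cref{lem:odd-cyc-q4-family} there, so you must also exclude pairs with exactly one nonzero coordinate. The relation $\mathcal R_N(u_k,v_l,u_l,v_k)$ handles this. If $u_k\ne0=v_k$, every other pair is forced to have $v_l=0$ and $u_l\in-\mu_N u_k$. Comparing three such pairs then gives a ratio lying in both $\mu_N$ and $-\mu_N$, a contradiction; the case $u_k=0\ne v_k$ is symmetric. The same relation, with any zero pair, also gives the in-pair ratios.
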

	
	\begin{proof}
		Equal kernels give a \(\mu_N\)-ratio on every nonzero complementary
		pair.  If \(p,t\ne0\), \cref{eq:odd-cyc-q4-nonzero-endpoints} puts all nonzero central
		coefficients in \(-p\mu_N\).  For two nonzero central coefficient pairs, the
		\(2\)-by-\(2\) coefficient matrix of the reflected two-port contraction has all
		four entries nonzero, and its only possible proportional alternative
		in \cref{lem:odd-dih-RN} requires a ratio in \(-\mu_N\), not \(\mu_N\).
		The cosets are disjoint for odd \(N\).
		
		With zero endpoints, three nonzero central coefficient pairs would require three
		pairwise ratios in \(-\mu_N\), although the product of two lies in
		\(\mu_N\).  Thus at most two complementary pairs have nonzero coefficients.  Applying the same argument
		to both orientations of the reflected two-port kernels
		\(\mathsf K_s^-\) gives the identical conclusion on the odd Hamming-weight positions.
		
		Substitution gives \cref{eq:odd-dih-q4-one-block,eq:odd-dih-q4-two-block}
		and proves their converse, including zeros.  One block uses
		\(\pi=\{[4]\}\); two blocks form a \(2+2\) partition coset with ratios
		\(-\gamma,-\delta\gamma^{-1}\in-\mu_N\), whose product is \(\delta\).
		Thus both parities lie in \(\mathcal P_N([4])\), with odd parity absorbed
		by the coset offset.
		
		A one-block tensor is a weighted GE\@.  Up to flips and permutation, a
		two-block support is \(\{0000,1111,1100,0011\}\), and
		\cref{eq:odd-dih-q4-two-block} gives
		\[
		\widetilde q_x\widetilde q_{\bar x}
		=\widetilde q_y\widetilde q_{\bar y},
		\]
		
		for the two complementary pairs \(\{x,\bar x\}\) and
		\(\{y,\bar y\}\) in the support.  Hence it is a product of two
		nonsingular weighted equality/disequality
		binaries.  Thus \(\widetilde q\in\cP\), as required by
		\cref{eq:common-Walsh-product-presentation}.
	\end{proof}
	
	\subsubsection{Products of binaries and the six-port case}
	\label{subsec:odd-dih-q6}
	
	Write \(\mathcal B_N(V)\subseteq\mathcal P_N(V)\) for zero together with
	the tensors whose partition consists entirely of pairs.  Every nonzero
	member is a product of nonsingular Walsh equality/disequality binaries,
	with ratio in \(-\mu_N\) on each pair.  For the resulting perfect matching
	\(M\), its support is \(a+L_M\), is complement-closed, and has
	cardinality \(2^{|V|/2}\).
	The contraction formulas in \cref{lem:phase-product-closure} show that
	\(\mathcal B_N\) is itself closed under all residual-group contractions:
	joining different pairs makes one pair, and closing a pair gives a scalar.
	
	\begin{theorem}[Six-port factorization from contractions]
		\label{thm:odd-dih-q6-base}
		Let \(N\ge3\) be odd.  If every nonzero oriented phase or reflection contraction
		of an arity-six signature \(f\) has Walsh tensor in \(\mathcal B_N([4])\),
		then \(f=0\) or \(f\) is a tensor product of three nonsingular binaries.
	\end{theorem}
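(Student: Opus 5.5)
We work entirely in Walsh coordinates, writing \(g=\widetilde f\) on \(\Ftwo^6\). Every oriented two-port contraction against a kernel \(\mathsf K_s^\pm\) from \cref{eq:odd-dih-kernels} is affine in \(s\). Each nonzero contraction lies in \(\mathcal B_N([4])\): it is a nonsingular equality/disequality product on a perfect matching of the four residual ports, with \(-\mu_N\) ratios. The plan has three steps: pin down the support of \(g\), then its coefficients, and finally read off the factorization.

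\textbf{Step 1 (parity and support).} First apply the parity argument of \cref{lem:odd-dih-all-arity-parity} at arity six. Every contraction lies in \(\mathcal B_N\), so it has one Walsh parity, and hence \(g\) has one parity. Next fix ports \(i,j\) and use the pencil \(g(00,y)-s g(11,y)\), together with the analogous mixed pencil. Since \(|\mu_N|\ge3\), at least two phases give nonzero contractions with the same support \(a+L_M\); the matching \(M\) is fixed, because a two-dimensional support coset of size four in \(\Ftwo^4\) determines its matching. Two distinct phases recover both endpoint slices \(g_{ij}^{00}\) and \(g_{ij}^{11}\), and similarly \(g_{ij}^{01}\) and \(g_{ij}^{10}\). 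So every slice is either zero or supported on a matching coset of size four, with ratios controlled by the \(\mathcal R_N\) relations of \cref{lem:odd-dih-RN}.

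\textbf{Step 2 (the combinatorial core, expected main obstacle).} From these slice constraints over all \(15\) port pairs, deduce that \(\supp(g)=a+L_{M_0}\) for a single perfect matching \(M_0\) of \([6]\). This would make the support complement-closed, of size \(8\), and block-toggled. I would argue by contradiction. Suppose two supported words \(x,y\) differ on a set not a union of \(M_0\)-pairs, where \(M_0\) is read off from one nonzero slice. Then choose a port pair \(\{i,j\}\) on which \(x\) and \(y\) agree and which lies inside a single \(M_0\)-pair, or straddles two. Choose a phase avoiding the at most one value cancelling each residual coefficient, which is possible since \(N\ge3\). The resulting contraction contains both residual words, but its support is not a matching coset, a contradiction. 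The remaining risk is a degenerate pattern, such as supports tied to a \(3+3\) or \(2+4\) block structure, where no such pair exists. Here I would enumerate the bit-flip/port-permutation orbits of candidate six-port supports. I expect this to be finite and possibly to need the exact certificate in the appendix. The key tool is that a four-block \(C\) in a \(2+4\) partition would produce a contraction with a four-port single block, which is not in \(\mathcal B_N\).

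\textbf{Step 3 (coefficients and factorization).} With support \(a+L_{M_0}\), write \(g(a+\sum t_C\one_C)=\phi(t)\) for \(t\in\Ftwo^3\). Consider contracting within a pair \(C\): it closes that pair and leaves a \(\mathcal B_N\) tensor on the other two pairs. Its coefficient function is \(\phi(0,\cdot)+\kappa\,\phi(1,\cdot)\) for a phase-dependent \(\kappa\), and it must be a product. Varying over the phases, and using the mixed contraction joining two different pairs, gives multiplicativity \(\phi(t)\phi(t')=\phi(t\wedge t')\phi(t\vee t')\) in each pair of coordinates. Together with nonvanishing, this yields \(\phi(t)=c\prod_C r_C^{t_C}\), so \(g\in\mathcal B_N([6])\). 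Inverting \(P\) then exhibits \(f\) as a tensor product of three nonsingular binaries; the zero case is immediate.
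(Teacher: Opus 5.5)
Your three-step plan (one Walsh parity, then the support, then a rank-one coefficient argument) follows the paper's proof. Step~3 is essentially the paper's determinant argument. For each axis of a \((2+2+2)\) coset, \(\det(c_{uv0}-sc_{uv1})\) has degree at most two in \(s\) and vanishes at all \(N\ge3\) phases, zero contractions included. Its constant and quadratic coefficients therefore make both end faces rank one, and no mixed contraction is needed.

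The genuine gap is in Step~1. The inference ``since \(|\mu_N|\ge3\), two phases give nonzero contractions with the same support'' is unjustified, and for \(N=3\) it is false for an individual pencil. Every nonzero member of \(\mathcal B_3([4])\) of a given parity is supported on a union of two of the four complement blocks. For three distinct blocks \(p,q,u\), the appendix vectors satisfy \(E_{pq}(r,t)-E_{pu}(r,v)+E_{qu}(t,v)=0\). After rescaling, these three tensors are the members \(A-sB\), \(s\in\mu_3\), of a genuine pencil:
\begin{itemize}
\item each lies in \(\mathcal B_3([4])\);
\item their supports are pairwise distinct;
\item \(\supp(A)\cup\supp(B)\) consists of three blocks, so it is not a matching coset.
\end{itemize}
Nothing visible from one port pair excludes this configuration. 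The paper rules it out only by using the constraints from all other port pairs at once, through the exact computation in \cref{prop:cert-odd-dih-q6-order-three}\textnormal{(ii)} and \cref{thm:cert-odd-dih-q6-interface}\textnormal{(i)}.

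For \(N\ge5\) your conclusion is true, but it needs the paper's counting argument. Every \(y\in U=\supp(A)\cup\supp(B)\) survives at no fewer than \(N-1\) phases, and each nonzero member has exactly four support points. Hence \((N-1)|U|\le4N\), and complement-closure gives \(|U|\le4\). For \(N=3\) the same count only gives \(|U|\le6\), which is exactly the three-block configuration above.

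Step~2 is also not yet an argument. A contraction on \(\{i,j\}\) carries its own matching, which need not be \(M_0\), so the survival of two residual words gives no contradiction by itself. The paper's local-to-global step is the exhaustive enumeration in \cref{prop:cert-odd-dih-q6-order-three}\textnormal{(iii)}. The admissible supports are the cosets of types \((6)\), \((4+2)\), \((2+2+2)\), together with the \(480\)-element orbit of the six-word non-coset \(S_\star\). The paper then excludes the first two types and \(S_\star\) by exhibiting an xor sector whose nonzero contraction has only two support points; for \(S_\star\) this is ports \(3,5\) with equal bits. Any direct proof of Step~2 must account for \(S_\star\), which your sketch does not anticipate.
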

	
	\begin{proof}
		Suppose \(f\ne0\), and write \(F=\widetilde f\).  By
		\cref{lem:odd-dih-all-arity-parity}, \(F\) has one parity.  Fix a port pair
		and fix \(x_i\oplus x_j=\xi\), with the two corresponding slices \(A,B\), so the normalized actual
		contractions are \(q_s=A-sB\).  Put \(U=\supp(A)\cup\supp(B)\).
		Each nonzero coordinate remains nonzero at at least \(N-1\) phases, while every
		nonzero contraction has four support points.  Thus
		\[
		(N-1)|U|\le4N.
		\]
		Moreover \(U=\bigcup_s\supp(q_s)\) is complement-closed.  For \(N\ge5\)
		it is therefore empty or consists of at most two complement pairs.  For
		\(N=3\), the same conclusion is exactly
		\cref{thm:cert-odd-dih-q6-interface}\textnormal{(i)}.  Part~\textnormal{(ii)}
		of that theorem now applies for every odd \(N\): the
		support of \(F\) is a partition coset of type \((6)\), \((4+2)\), or
		\((2+2+2)\), or belongs to the orbit of
		\begin{equation}
			S_\star=\{000000,000011,001100,110011,111100,111111\}.
			\label{eq:odd-dih-q6-nonproduct-support}
		\end{equation}
		Here and below, bit relabelings describe support coordinates only.  They
		permute the available xor kernels projectively, with possible phase
		inversion, so all contraction arguments cover the full indicated orbits.
		
		A \((6)\) support has a supported-xor contraction with exactly two
		nonzero entries, each with a unique lift.  This contradicts the four-point
		support required by \(\mathcal B_N([4])\).  On a \((4+2)\) support, use
		block variables and denote the four nonzero coefficients by \(c_{uv}\).
		Closing the binary block gives, after normalizing its orientation,
		\[
		(c_{00}-s c_{01})\delta_{0000}
		+(c_{10}-s c_{11})\delta_{1111}.
		\]
		At most two phases are forbidden by cancellation; since \(N\ge3\), some
		contraction again has exactly two nonzero entries.  No tensor factorization of the
		original four coefficients was assumed.  On \(S_\star\), contract ports
		\(3,5\) with \(\mathsf K_s^+\).  Only the two endpoints have equal bits
		there, giving
		\[
		F_{000000}\delta_{0000}-sF_{111111}\delta_{1111},
		\]
		again with two nonzero entries and unique lifts.
		
		Only the \((2+2+2)\) support remains.  Its eight coefficients
		\(c_{uvw}\) are nonzero.  Closing any one of its binary blocks leaves a
		\(2\)-by-\(2\) table whose determinant vanishes at every \(s\in\mu_N\):
		a nonzero contraction has the same four-point matching support, and a zero contraction
		also has zero determinant.  For example,
		\[
		\det\bigl(c_{uv0}-s c_{uv1}\bigr)_{u,v\in\Ftwo}=0.
		\]
		This polynomial has degree at most two and at least three roots.  Its
		constant and quadratic coefficients show that the two endpoint faces have
		rank one.  Doing this for all three axes makes every axis face minor zero.
		Since all eight entries are nonzero, the ratios along each axis are
		independent of the other two bits, so \(c_{uvw}=c r_1^u r_2^v r_3^w\).
		This is the required three-binary factorization, which pulls back through
		the analytic Walsh change of coordinates.
	\end{proof}
	
	\subsubsection{Matching pencils}
	\label{subsec:odd-dih-support-circuits}
	
	\begin{lemma}[Three matching products on a line]
		\label{lem:odd-dih-three-coset-circuit}
		Let three pairwise nonproportional tensor products of nonsingular diagonal
		or anti-diagonal binaries on the same \(m\) ports satisfy a linear relation
		with all coefficients nonzero.  Either their supports coincide and
		the three tensors lie in one
		Segre ruling, or their matchings differ only on four ports, where the three
		different pairings occur.  In the latter case all binary factors outside
		those four ports are common up to nonzero scalars.
	\end{lemma}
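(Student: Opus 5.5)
We view each of the three tensors as a matching product: \(g_k=\lambda_k\bigotimes_{e\in M_k}b^{(k)}_e\), \(k=1,2,3\), where each \(b^{(k)}_e\) is a nonsingular diagonal or anti-diagonal binary. Such a tensor is supported on a matching-code coset \(C(M_k,\epsilon_k)\) of size \(2^{m/2}\), with nonzero coefficients throughout. The relation \(\alpha_1g_1+\alpha_2g_2+\alpha_3g_3=0\) has all \(\alpha_k\ne0\). Hence every point of \(\supp(g_k)\) lies in the support of at least one other \(g_l\), so \(\supp(g_k)\subseteq\supp(g_l)\cup\supp(g_{l'})\) for each \(k\).

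\textbf{Step 1: coincident supports.} Suppose \(\supp g_1=\supp g_2=\supp g_3\). Then all three cosets coincide, so the matchings agree, \(M_1=M_2=M_3=M\), and so do the offsets \(\epsilon\). On each pair \(e\in M\), the binary factors are then all diagonal or all anti-diagonal, which leaves one free ratio per pair. Write \(g_k=\lambda_k\bigotimes_e(\delta_{a_e}+r^{(k)}_e\delta_{\bar a_e})\). Restrict the relation to the pair \(e\) with the other pair variables fixed. Pairwise nonproportionality forces some pair \(e_0\) with distinct ratios. Comparing the relation across all \(2^{m/2}\) block assignments gives a rank condition: the coefficient vectors \(\lambda_k\prod_e (r^{(k)}_e)^{t_e}\) are linearly dependent with nonzero coefficients. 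Three characters \(t\mapsto\prod_e(r_e^{(k)})^{t_e}\) on \(\Ftwo^{M}\) can satisfy such a relation only if they agree on all but one pair, because distinct characters are linearly independent unless they are identified on a hyperplane. Making this precise by fixing one coordinate at a time shows that the three tensors share every factor except the one on \(e_0\). They therefore lie in one Segre ruling.

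\textbf{Step 2: distinct supports.} Otherwise some two supports differ. The covering property \(\supp g_k\subseteq\supp g_l\cup\supp g_{l'}\) is then the main constraint.

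1. Two distinct cosets of matching codes of the same size intersect in at most half of either coset. This follows because the intersection is a coset of \(L_{M_k}\cap L_{M_l}\) or is empty. So each support must be covered exactly half by each of the other two.
2. Equality in this count forces \(L_{M_k}\cap L_{M_l}\) to have index two in \(L_{M_k}\).
3. The intersection \(L_{M}\cap L_{M'}\) is spanned by the indicators of the unions of pairs along the cycles of \(M\triangle M'\), whose components are alternating cycles. Its dimension is therefore \(m/2\) minus the total number of nontrivial cycle edges of \(M\) divided by two, plus the number of cycles. Index two thus means \(M\) and \(M'\) differ on exactly one alternating 4-cycle.
4. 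Applying this to all three pairs, the three matchings agree outside one common 4-set \(Q\), and on \(Q\) they realize the three distinct pairings of \(K_4\). Coincidence of two of them contradicts the covering count.

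\textbf{Step 3: common outer factors.} Outside \(Q\), restrict the relation to each fixed assignment of the \(Q\)-coordinates that lies in exactly one pair of supports. Then \(\alpha_kg_k=-\alpha_lg_l\) there. This identifies the outer factors of \(g_k\) and \(g_l\) up to a scalar on the common outer support, and on a matching-product support this determines each binary factor up to scalar. Iterating over the three pairs gives the claim.

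The main obstacle is the counting in Step 2. It requires care that the covering must be exact halves and that the cycle-structure formula for \(\dim(L_M\cap L_{M'})\) is correct, including the case where the offsets \(\epsilon\) make two cosets disjoint. In that case the covering forces the third support to contain both halves, and the third matching must then meet both in index two.
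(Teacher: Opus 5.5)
Your architecture is the paper's: every support point lies in a second support; coincident supports give a Segre ruling; otherwise each support splits into two half-size affine intersections, two matchings differ by a single alternating \(4\)-cycle, and the outer factors come from restricting the relation to a pairwise intersection, where only two tensors survive, followed by unique factorization on disjoint pairs. Steps~2 and~3 are essentially sound, apart from the small points below.

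\textbf{The gap is the justification in Step~1.} The functions \(t\mapsto\prod_e (r^{(k)}_e)^{t_e}\) are not characters of \(\Ftwo^{M}\) unless every ratio is \(\pm1\). Moreover, distinct characters are always linearly independent. So the principle you cite does not apply, and it says nothing about ``agreeing on all but one pair''. The promised ``one coordinate at a time'' argument is never supplied. What you need is that these functions are the simple tensors \(\lambda_k\bigotimes_e(1,r^{(k)}_e)\), and that three pairwise nonproportional simple tensors spanning a plane lie in one ruling. The paper proves exactly this. Along the line through two of them, every \(2\times2\) flattening minor is a quadratic in the line parameter. It vanishes at three distinct points, hence identically, so the whole line consists of simple tensors. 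If two points differed in two factors, flattening across one of those factors would have rank two at a generic point.

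Your slice idea can also be completed, but with two coordinates at once. Put \(c_k=\alpha_k\lambda_k\ne0\); then \(\sum_kc_k=\sum_kc_kr^{(k)}_i=\sum_kc_kr^{(k)}_j=\sum_kc_kr^{(k)}_ir^{(k)}_j=0\). Suppose \(r_i\) is nonconstant in \(k\). The first two equations make \(c\) proportional to \((1,1,1)\times r_i\), so \(r_i\) has three distinct entries, since a repeated entry would make a coordinate of \(c\) vanish. If \(r_j\) were also nonconstant, then \(r_j\in\Span\{(1,1,1),r_i\}\) with nonzero \(r_i\)-coefficient. The fourth equation would then annihilate \(c\) against \((1,1,1)\), \(r_i\) and the componentwise square of \(r_i\), a nonsingular Vandermonde system. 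Hence at most one pair has varying ratios.

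\textbf{Smaller points.}
\begin{itemize}
\item Before claiming that each support is covered exactly half by each of the others, dismiss the case of exactly two coincident supports. If \(S_k=S_l\), then \(S_{l'}\subseteq S_k\) with equal cardinality, so all three coincide.
\item ``Applying this to all three pairs'' does not by itself put the three \(4\)-cycles on one common \(4\)-set. If the cycles for \((1,2)\) and \((1,3)\) used different \(4\)-sets, then \(M_2\cup M_3\) would contain a \(6\)-cycle or two \(4\)-cycles, giving index four. The paper sidesteps this: exact halves give \(S_{l'}=S_k\mathbin\triangle S_l\). After removing the common outer edges, whose offsets agree because the supports meet, this is the symmetric difference of two local four-point supports, which is the coset of the third pairing.
\item The disjoint-coset case in your last paragraph cannot occur. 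The exact-half count already makes every pairwise intersection nonempty.
\end{itemize}
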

	
	\begin{proof}
		Let the supports be \(S_0,S_1,S_2\), matching cosets of dimension \(m/2\).
		No point can belong to just one support.  If two supports coincide, the
		third is contained in them and has the same cardinality, so all coincide.
		On their common block coordinates, every matrix flattening minor has degree at
		most two along the line.  Three rank-one points make these minors vanish
		identically.  A line of simple tensors varies in at most one factor: if
		two factors of two of its points were nonproportional, flattening across
		one of them would give rank two at a generic third point.  This is one
		Segre ruling.
		
		Otherwise each \(S_i\) is the union of its intersections with the other
		two supports.  Two proper affine subspaces cover an affine space only when
		they are disjoint parallel hyperplanes: each has at most half its points,
		so both must attain that bound.  Thus the pairwise intersections are
		hyperplanes and the triple intersection is empty.  The union of two
		perfect matchings is a disjoint union of alternating even cycles, counting
		a common edge as a two-cycle.  If there are \(k=m/2\) edges and \(b\)
		components, their common direction code has dimension \(b\).  Here
		\(b=k-1\), so exactly one component is a four-cycle and all other edges
		coincide.  The offsets on those common edges agree because the supports
		intersect.  After removing their free binary coordinates, \(S_2\) is the
		four-point symmetric difference of the other two local supports; its
		matching is the third pairing on the same four ports.  Finally, on each
		pairwise intersection only two tensors contribute to the dependence.
		Fixing one four-port assignment with nonzero coefficient leaves the full external matching
		cube, on which their products are proportional.  Uniqueness of the factors
		on disjoint pairs gives the asserted common factors.
	\end{proof}
	
	\begin{proposition}[Phase-independent matching and common factors]
		\label{prop:odd-dih-support-circuit-rigidity}
		Let \(N\ge3\) be odd, and let \(f\) have even arity \(n\ge8\).  Suppose
		every nonzero result of one or more successive two-port contractions using
		the kernels in
		\cref{eq:odd-dih-kernels} has Walsh tensor in \(\mathcal B_N\).  Then each
		two-port pencil can be written as \(q_s=H\otimes h_s\), where \(H\)
		is a fixed product of nonsingular binaries on disjoint pairs and \(h_s\)
		is a binary tensor on one fixed remaining port pair.  Zero members and proportional
		pencils are included.
	\end{proposition}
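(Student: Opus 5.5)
We fix a port pair \((i,j)\) of \(f\) and a supported xor value \(\xi\). Write \(A,B\) for the two Walsh slices of \(F=\widetilde f\) with \(x_i\oplus x_j=\xi\), so the normalized contractions form the pencil \(q_s=A-sB\), \(s\in\mu_N\). By hypothesis each nonzero \(q_s\) lies in \(\mathcal B_N\). It is therefore a matching product of nonsingular diagonal or anti-diagonal Walsh binaries on the \(n-2\) residual ports. If at most one \(q_s\) is nonzero, or all nonzero members are proportional, the pencil is \(H\otimes h\) with \(h\) arbitrary, and the claim is immediate. So the plan assumes that at least two nonproportional nonzero members exist. Since \(q_s\) is affine in \(s\) and \(|\mu_N|\ge3\), any three distinct phases give three members on one projective line, related by a linear relation whose coefficients are all nonzero. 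Here a zero member is handled as a degenerate relation that forces proportionality.

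The main step applies \cref{lem:odd-dih-three-coset-circuit} to every triple of nonzero members. In the first alternative, all supports coincide and the three members lie in one Segre ruling. Then all members share every binary factor except one pair, and the varying factor \(h_s\) lives on that fixed pair. In the second alternative, the matchings differ only on four ports \(W\), where the three pairings occur, and all factors outside \(W\) are common. I would then show that this four-port circuit alternative cannot occur when \(n\ge8\). For this, contract one further pair of ports lying outside \(W\); this is possible because \(n-2\ge6\) leaves at least one common pair outside \(W\). Choose the contraction so that it closes a common binary factor to a nonzero scalar, or joins a common pair to one port of \(W\). The iterated-contraction hypothesis then puts the resulting pencil, again a two-port contraction of a contraction of \(f\), in \(\mathcal B_N\) at arity \(n-4\ge4\). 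Joining a common pair to a port of \(W\) transports the three-pairing configuration to a smaller circuit. Iterating down to arity six lets us invoke \cref{thm:odd-dih-q6-base}, or the quaternary classification \cref{prop:odd-dih-q4}. Both of these allow at most matchings forming a Segre ruling and exclude a member of the three-pairing circuit whose coefficients are nonzero in the proportional pattern. The ratio restriction \(r_C\in-\mu_N\) together with oddness of \(N\) (so that \(\mu_N\cap-\mu_N=\varnothing\)) should provide the contradiction, as in the quaternary proof.

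Finally, consistency across triples gives a fixed common factor \(H\) and a fixed residual pair. Any two rulings that share two nonproportional points coincide, since a line is determined by two points. Hence every nonzero \(q_s\) lies in the same ruling, with the same fixed factors \(H\) and the same varying pair. Zero members are included trivially. I expect the exclusion of the four-port circuit alternative to be the main obstacle. It needs care to keep the further contraction nonzero at a suitable phase, which requires avoiding at most one cancelling phase per coefficient using \(N\ge3\). It also needs care to verify that the reduced configuration genuinely contradicts the arity-six or arity-four classification rather than merely reproducing an allowed Segre ruling.
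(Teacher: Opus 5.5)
Your reduction to the pencil \(q_s=A-sB\), your use of \cref{lem:odd-dih-three-coset-circuit}, and your handling of zero members and of the Segre-ruling case all agree with the paper. The gap is in excluding the four-port circuit alternative, which is the real content of the proposition.

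You propose to contract the pencil members further and then get a contradiction from \cref{prop:odd-dih-q4} or \cref{thm:odd-dih-q6-base} together with \(r_C\in-\mu_N\) and \(\mu_N\cap-\mu_N=\varnothing\). This cannot work, because information taken from the \((i,j)\)-pencil alone is consistent with the circuit. On \(W=\{1,2,3,4\}\), consider the Walsh tensors
\[
\begin{aligned}
T_0&=(\delta_{00}-\delta_{11})_{12}\otimes(\delta_{00}-\delta_{11})_{34},\\
T_1&=(\delta_{00}-\delta_{11})_{13}\otimes(\delta_{00}-\delta_{11})_{24},\\
T_2&=(\delta_{10}-\delta_{01})_{14}\otimes(\delta_{10}-\delta_{01})_{23}.
\end{aligned}
\]
Every block ratio is \(-1\in-\mu_N\), and \(T_0-T_1+T_2=0\). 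So for \(N=3\), with \(\omega\) a primitive cube root of unity, there is a pencil whose members at \(s=1,\omega,\omega^2\) are nonzero multiples of \(T_0,T_1,T_2\). Tensoring with a common binary product gives the same configuration at any arity, and every member, together with every further contraction of every member, lies in \(\mathcal B_N\). No ratio argument ``as in the quaternary proof'' can exclude this. For \(N\ge5\) the circuit is already excluded by counting: only three points of the line are matching products, while all \(N\) members are nonzero and pairwise nonproportional. So \(N=3\) is exactly the case at stake.

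The missing idea is to apply the hypothesis to a contraction of \(f\) itself, not to the pencil. Close every common pair outside \(W\) on \(f\), before closing \(i,j\), using its supported-xor kernel at phase \(1\). A block with ratio \(r\in-\mu_N\) contributes a multiple of \(1-r\), which is nonzero because \(N\) is odd; this is where oddness is used. Contractions on disjoint pairs commute, so the three pencil members become, up to nonzero scalars, the \((i,j)\)-contractions of a six-port tensor \(g\) on \(\{i,j\}\cup W\). Now \(g\) is a nonzero result of successive contractions of \(f\), so by hypothesis it is one fixed matching product. Closing the fixed pair \(\{i,j\}\) of a fixed matching product either deletes an edge or splices two edges, independently of \(s\). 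Hence all nonzero \((i,j)\)-contractions of \(g\) share one matching, which contradicts the three distinct pairings on \(W\). Applying \cref{thm:odd-dih-q6-base} to \(g\) would give the same factorization, but it is redundant. What matters is that the hypothesis is applied to \(g\), which carries information about \(f\) beyond its \((i,j)\)-pencil.
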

	
	\begin{proof}
		Write a pencil as \(q_s=A-sB\).  A zero member makes all other members
		proportional, which settles that case.  If the pencil is not projectively
		constant, any three distinct phases give a linear dependence with all three
		coefficients nonzero,
		so \cref{lem:odd-dih-three-coset-circuit} applies to these members of
		\(\mathcal B_N\).  In its common-support
		case, the quadratic flattening minors vanish identically, giving the fixed
		matching ruling for the whole pencil.
		
		In the second case of \cref{lem:odd-dih-three-coset-circuit}, the matchings
		differ on four ports and all other binary factors are common.  Close every such common pair, on the original gadget,
		with its supported-xor kernel at phase \(1\).  For a binary ratio
		\(r\in-\mu_N\), this gives a nonzero multiple of \(1-r\), nonzero because
		\(N\) is odd.  These operations multiply the three contractions by the same
		nonzero scalar.  At least one pair is closed since \(n\ge8\).  Before
		closing the original pencil ports, the resulting six-port tensor is a
		nonzero result of contractions of \(f\), hence is itself a binary matching
		product.  Its contractions cannot have three different matchings: closing ports
		of a fixed matching deletes an edge or merges two edges, independently of
		the phase.  This contradiction proves that the matching is independent of phase.
	\end{proof}
	
	\subsubsection{A common binary factor suffices}
	\label{subsec:odd-dih-cross-sector}
	
	\begin{lemma}[A binary factor from contractions]
		\label{lem:odd-dih-common-binary-factor}
		Under the hypotheses of \cref{prop:odd-dih-support-circuit-rigidity}, every
		nonzero \(f\) is tensor-decomposable.
	\end{lemma}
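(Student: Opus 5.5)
We work in Walsh coordinates with \(F=\widetilde f\) and aim to exhibit one port pair \(\{i,j\}\) on which \(F\) splits off a nonsingular diagonal or anti-diagonal binary factor. Pulling back through the analytic change of coordinates \(P\) then shows that \(f\) is tensor-decomposable. First, \cref{lem:odd-dih-all-arity-parity} gives that \(F\) has a single Walsh parity, so \(F\) is nonzero on some supported word.

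Next we choose a pencil. Fix a port pair \((i,j)\) and a supported xor value \(\xi\) with the two slices \(A,B\) nonzero, and consider \(q_s=A-sB\). We pick a phase and a pair for which this pencil is not identically zero; this is possible because at most one phase cancels any given residual coefficient and \(|\mu_N|\ge3\). By \cref{prop:odd-dih-support-circuit-rigidity}, \(q_s=H\otimes h_s\), where \(H\) is a fixed product of nonsingular binaries on a fixed matching of \(n-4\) of the remaining ports and \(h_s\) lives on one fixed remaining pair \(\{k,l\}\). Writing \(H=\beta\otimes H'\) with \(\beta\) a binary on some pair \(\{u,v\}\subseteq V\setminus\{i,j,k,l\}\) (nonempty since \(n\ge8\)), we get that \(\beta\) divides \(A-sB\) for all \(s\). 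Taking two distinct phases then shows that \(\beta\) divides both slices \(A=F_{ij}^{\xi\text{-slice},0}\) and \(B\).

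We repeat this for the other xor value \(\xi\oplus1\), and for the reflected kernels \(\mathsf K_s^-\), which mix the two xor sectors. By the parity constraint, at most the sector pattern compatible with the single parity of \(F\) contributes. This yields binaries \(\beta'\) dividing the remaining slices on possibly different pairs. The goal is to show that one can choose a single pair \(\{u,v\}\) and a single binary \(\beta\) dividing all four slices \(F_{ij}^{ab}\). Then \(F=\beta_{uv}\otimes G\), which is the desired decomposition.

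\textbf{Main obstacle.} This cross-sector matching of factors is the step we expect to be hardest. Our first approach is a second, independent pencil on a different port pair \((i',j')\) chosen inside \(H\). Every slice of \(F\) is then determined by contractions from two pencils, and the fixed matchings of the two pencils must agree away from their pencil pairs, since a further contraction of either is again in \(\mathcal B_N\) and closing ports cannot change the matching phase-dependently.

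We would argue this by contradiction: if the matchings on the two sectors disagreed, contracting a suitable common pair at phase \(1\) would produce a six-port contraction of \(f\) that is a linear combination of two matching products with different matchings. By \cref{lem:odd-dih-three-coset-circuit} and \cref{thm:odd-dih-q6-base}, such a combination is not in \(\mathcal B_N\). This forces the existence of a common pair \(\{u,v\}\) with proportional binary factors across all sectors. The ratio restrictions \(r\in-\mu_N\), together with \(1-r\ne0\) for odd \(N\), keep all the closing scalars nonzero throughout.
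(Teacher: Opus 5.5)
Your single-sector step is sound. Applying \cref{prop:odd-dih-support-circuit-rigidity} to the pencil \(A-sB\) of one xor sector of \((i,j)\) and comparing two phases shows that a binary factor \(\beta\) of the fixed part \(H\) divides both slices, hence the whole restriction \(U\) of \(F\) to that sector. If the other sector vanishes, this already proves the lemma.

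The substance of the lemma, however, is the cross-sector step, and there your proposal has a genuine gap rather than an argument. The mechanism you sketch does not apply. If a nonzero contraction \(G\in\mathcal B_N\) has \(i,j\) on different matching edges, each of its two \(x_i\oplus x_j\)-restrictions fuses those edges into a four-port generalized-equality block. So the summands you want to compare are not products of diagonal or anti-diagonal binaries, and \cref{lem:odd-dih-three-coset-circuit} is unavailable. Meanwhile \cref{thm:odd-dih-q6-base} yields a factorization, not a non-membership statement. The contradiction has to come from how one matching product restricts to the two \(ij\)-sectors.

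Moreover, agreement of matchings would not suffice. \(U+V\) acquires a factor only if \(U\) and \(V\) have proportional (not merely equally supported) binary factors on a common pair. You must also deal with the possibility that \(\{i,j\}\) is itself a binary block of a restriction. Two smaller slips: \(\mathsf K_s^-\) is just the \(x_i\oplus x_j=1\) sector of the same pair, not a sector-mixing kernel; and the single Walsh parity of \(F\) does not eliminate either sector.

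The missing ingredients are these.
\begin{enumerate}
\item Apply \cref{lem:odd-dih-RN} to \(h_s\) to get the full structure of each nonzero restriction: nonsingular binaries outside one block through \(i,j\), which is either \(\{i,j\}\) or a four-port GE block. A complement-closure argument on a further contraction gives \(A\ne0\) if and only if \(B\ne0\), so that block has both coefficients nonzero.
\item Close an external pair with an xor supported by both restrictions, at a phase where both results are nonzero (available since \(N\ge3\)). This gives \(G\in\mathcal B_N\) with both \(ij\)-xors present. Its two restrictions therefore share the four-port block through \(i,j\), the external matching with its offsets, and proportional binary factors.
\end{enumerate}
Applied to well-chosen closures, this observation does four things in turn. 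It rules out \(\{i,j\}\) as a binary block. It forces the two four-port blocks to coincide, by choosing a port outside \(\{i,j,a,b,c,d\}\), which uses \(n\ge8\). It forces the external matchings to coincide. Finally, closing a pair that avoids a given external pair \(R\) identifies the binary factor on \(R\) in \(U\) and \(V\), and this common factor divides \(U+V=F\).
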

	
	\begin{proof}
		Put \(F=\widetilde f\), fix ports \(i,j\), and for each
		\(\xi\in\{0,1\}\) consider the restriction to \(x_i\oplus x_j=\xi\), viewed as an
		\(n\)-ary tensor that is zero outside that constraint.
		Its two slices are \(A=F_{ij}^{0,\xi}\), \(B=F_{ij}^{1,1-\xi}\).
		Exactly one of \(A,B\) cannot be nonzero.  Indeed, if \(A\ne0\) and \(B=0\), take a
		supported word of \(A\), close any two external ports with its supported
		xor, and choose a phase avoiding cancellation of the projected word.  The
		resulting nonzero contraction has complement-closed support.  Its
		complementary word would require an assignment in \(B\), a contradiction.
		The case \(A=0\), \(B\ne0\) is identical.
		
		For each nonzero restriction, \cref{prop:odd-dih-support-circuit-rigidity} writes
		the pencil as \(H\otimes h_s\), where \(H\) is a fixed product of external
		binaries and
		\[
		h_s=(a-sb)\delta_e+(c-sd)\delta_{\bar e}
		\]
		on one external pair.  Projectively constant pencils are included by
		choosing any pair.  Every nonzero \(h_s\) has ratio in \(-\mu_N\), so
		\cref{lem:odd-dih-RN} applies.  Restoring the ports \(i,j\), its
		proportional alternative gives a binary block \(\{i,j\}\); either
		monomial alternative gives one four-port GE block containing \(i,j\).
		All other blocks are nonsingular binaries.  Both slices \(A,B\) are
		nonzero, so the restored block has both coefficients nonzero.  This is a
		factorization in Walsh coordinates of the restriction; it does not assert
		that either this restriction or either slice is realizable by a gadget.
		
		If only one of the two restrictions is nonzero, this factorization of \(F\) already has an
		external binary factor, since \(n\ge8\).  Otherwise denote the restrictions for \(\xi=0,1\) by \(U,V\).  We use the following observation repeatedly.  Close
		an external pair with an xor supported by both restrictions, and choose a phase
		at which both resulting contractions are nonzero.  Such a phase exists: a contraction of
		either displayed block product with a supported xor has unique lifts, except when an entire
		binary block is closed, in which case only one phase can cancel its
		scalar.  The nonzero contraction \(G\) is a binary matching product with
		both \(ij\) xors present.  Its two \(ij\)-xor restrictions have:
		\begin{enumerate}[label=\textnormal{(O\arabic*)}]
			\item the same four-port block containing \(i,j\);
			\item the same external matching and the same offset on each external pair;
			\item proportional full binary factors on each external pair, including
			equality of their ordered coefficient ratios.
		\end{enumerate}
		Indeed, \(i,j\) belong to different matching edges in \(G\).  Either xor
		restriction joins those edges into one four-port block and leaves every
		other binary factor unchanged.  In particular, this observation identifies
		coefficients as well as supports.
		
		First, neither \(U\) nor \(V\) can have \(\{i,j\}\) as a binary block.
		If \(U\) did, close ports in two distinct external binary blocks of
		\(U\).  Both xors occur in \(U\), so one is also supported by \(V\).
		The observation would require a four-port block through \(i,j\), whereas
		this contraction leaves their binary block unchanged.
		
		Write the four-port blocks of \(U,V\) as
		\(K=\{i,j,a,b\}\), \(L=\{i,j,c,d\}\).  Close \(a,b\) with its fixed
		xor in \(U\).  That restriction now has \(\{i,j\}\) as a binary block, so the
		observation implies that \(V\) does not support this xor.  Therefore
		\(a,b\) lie in the same block of \(V\), with the opposite xor.  Either
		\(L=K\), or \(\{a,b\}\) is an external binary block of \(V\).  By
		symmetry, in the latter case \(\{c,d\}\) is an external binary block of
		\(U\), and the two pairs are disjoint.  Choose a port \(e\) outside
		\(\{i,j,a,b,c,d\}\), possible since \(n\ge8\), and close \(a,e\).
		Both restrictions have unique lifts for either xor.  The four-port block through
		\(i,j\) in the contracted \(U\) contains \(b\), whereas that in the
		contracted \(V\) remains \(\{i,j,c,d\}\).  This contradicts the
		observation.  Thus \(K=L\).
		
		Let \(M_U,M_V\) be the external matchings on \(E=[n]\setminus K\).
		For \(e\in E\), closing \(a,e\) makes the four-port block through
		\(i,j\) equal to \(\{i,j,b,M_U(e)\}\) in one restriction and
		\(\{i,j,b,M_V(e)\}\) in the other.  The observation gives
		\(M_U(e)=M_V(e)\), so the external matchings coincide.  There are at least
		two external pairs.  For any one of them, \(R\), choose \(e\) on another
		pair and close \(a,e\).  The pair \(R\) remains external.  The observation
		now identifies its full binary factor in the two restrictions, hence in
		\(U,V\), since the contraction avoids \(R\).  Therefore \(U+V=F\) has
		this common factor.  Invertible Walsh coordinates preserve tensor
		factorization, proving the lemma.
	\end{proof}
	
	\begin{proof}[Proof of \cref{thm:odd-dihedral}]
		Apply \cref{lem:height-one-card-saturation}.  Its first two outcomes are
		the first two conclusions of the theorem.  In outcome~3, work in the
		equivalent set \(\Gamma=\cG^{\mathrm{sat}}\), with the same matrix and
		group.  If a nonbinary prime remains, choose one, \(f\), of minimum arity
		\(n\).  Odd and unary factors have already been handled by their respective
		complexity theorems.  Every positive-arity prime factor of a two-port contraction of \(f\)
		has smaller arity, and hence is binary by minimality.  Its transfer is in
		\(D_{2N}\), so every nonzero two-port contraction is in \(\mathcal B_N\) in Walsh
		coordinates.  Closure of \(\mathcal B_N\) gives the same assertion for
		all further contractions involving at least one port pair.
		
		If \(n=4\), \cref{prop:odd-dih-q4} gives either a four-port generalized
		equality, to which \cref{lem:walsh-ge-terminal} applies, or a product of
		two binaries, contradicting primeness.  If \(n=6\),
		\cref{thm:odd-dih-q6-base} contradicts primeness.  If \(n\ge8\),
		\cref{lem:odd-dih-common-binary-factor} gives the same contradiction.
		Thus either hardness or tractability has been proved, or every tensor-prime
		in \(\Gamma\) is
		binary.  In the latter case one Walsh basis puts all signatures in \(\cP\),
		and \(P^{\mathsf T}XP=2Z\in\cP\), giving outcome~3 and the stated common \(K\)-presentation.
	\end{proof}
	
	\subsection{The odd-cyclic case}
	
	\begin{theorem}[The odd-cyclic case]
		\label{thm:odd-cyclic}
		Let \((A,\cG)\) be a stage of the sequence in
		\cref{lem:normalization-ledger} at which \(\cG\) is finite, preprocessed,
		algebraic, and factor-closed and contains the transported height-one
		mountain.  Let \(N\ge3\) be odd, and suppose every nonzero directly
		realizable binary is nonsingular with finite projective transfer order, and
		\[
		P^{-1}G(\cG)P
		=C_N=\{[\diag(1,s)]:s\in\mu_N\}.
		\]
		Then at least one of the following holds.
		\begin{enumerate}
			\item \(\KHolant(\cG)\) is \(\SharpP\)-hard or \(\Tract(K\cG)\) holds.
			\item There is a later stage \((A^+,\cG^+)\) of the same sequence at which
			\(\cG^+\) is finite, preprocessed, algebraic, and factor-closed and
			still contains the correspondingly transported height-one mountain,
			such that
			\[
			\KHolant(\cG^+)\equivT\KHolant(\cG),
			\]
			every nonzero directly realizable ordered binary over \(\cG^+\) is
			nonsingular with finite projective transfer order, and either
			\[
			[A]^{-1}G(\cG)[A]
			\subsetneq
			[A^+]^{-1}G(\cG^+)[A^+]
			\]
			or, with both values defined, \(\nu(\cG^+)<\nu(\cG)\).
			In either case the full update conditions of
			\cref{def:height-one-outer-successor} hold.
			\item There is an equivalent finite preprocessed algebraic factor-closed
			set \(\cG^{\mathrm{sat}}\supseteq\cG\), in the same
			coordinates and with the same matrix \(A\), for which the fixed
			Walsh matrix \(P\) satisfies
			\[
			(P^{-1})^{\otimes\arity(f)}f\in\cP
			\quad(f\in\cG^{\mathrm{sat}}),
			\qquad
			P^{\mathsf T}XP=2Z\in\cP.
			\]
			Hence \(K\cG\) is \(\cP\)-transformable.
		\end{enumerate}
	\end{theorem}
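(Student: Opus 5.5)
The plan mirrors the proof of \cref{thm:odd-dihedral}, with the cyclic group in place of the dihedral group; only the plus kernels \(\mathsf K_s^+\) of \cref{eq:odd-dih-kernels} are now available.

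First apply \cref{lem:height-one-card-saturation} with \(H=G(\cG)\). Its first two outcomes are the first two conclusions of the theorem. In outcome~3, pass to \(\Gamma=\cG^{\mathrm{sat}}\), which has the same matrix \(A\) and the same group. If no nonbinary tensor-prime remains, every prime factor is a nonsingular binary with Walsh transfer \(D_\eta\). Its Walsh form \(\kappa D_\eta Z\) is diagonal and lies in \(\cP\), and \(P^{\mathsf T}XP=2Z\in\cP\), which gives outcome~3. Otherwise choose a nonbinary prime \(f\) of minimum arity \(n\). By minimality, every prime factor of a two-port contraction of \(f\) is a binary whose Walsh transfer lies in \(C_N\). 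In Walsh coordinates each such factor is therefore a diagonal (weighted equality) binary with ratio in \(-\mu_N\). So every nonzero contraction lies in the subclass of \(\mathcal B_N\) built only from equality-type pairs. By \cref{lem:phase-product-closure} this subclass is closed under further plus contractions. \cref{lem:odd-dih-all-arity-parity}, cyclic case, makes \(\widetilde f\) even-parity.

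For \(n=4\), apply \cref{lem:odd-cyc-q4-family} with \(C_s\) the fixed rotation representatives.
\begin{itemize}
\item \emph{Both endpoints nonzero.} There are two subcases.
  \begin{itemize}
  \item If every pair \(B_I\) vanishes, \(\widetilde f\) is a two-point Walsh GE with ratio \(\lambda\in\mu_N\), and \cref{lem:walsh-ge-terminal} settles it.
  \item If some \(B_I\ne0\), the support is \(\{0000,1111,\one_I,\one_{I^c}\}\) or larger. With exactly one nonzero \(B_I\), the relation \(pt=\lambda p^2=B_I\)'s product \(p^2\lambda\) shows \(\widetilde f\) is a product of two diagonal binaries, contradicting primeness.
  \end{itemize}
\item \emph{Two or three nonzero pairs \(B_I\), or \(p=t=0\).} Here I expect to use the contraction \(\widetilde f(00,\cdot)-s\widetilde f(11,\cdot)\) at a phase avoiding cancellation. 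The aim is a nonzero binary with both diagonal and anti-diagonal support (i.e.\ a singular or reflection-type transfer), which contradicts the group or yields outcome~2. Alternatively, when \(p=t=0\) and a single \(B_I\ne0\), \(\widetilde f\) is a two-block product or a disequality-type GE. In the latter case, a contraction realizes an anti-diagonal Walsh transfer, i.e.\ a reflection \([ZR_s]\notin G\), giving outcome~2 via the strict enlargement.
\end{itemize}

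For \(n=6\) and \(n\ge8\), the plan is to reuse \cref{thm:odd-dih-q6-base} and \cref{lem:odd-dih-common-binary-factor}. Their hypotheses only require contractions to land in \(\mathcal B_N\), which holds a fortiori. Their proofs use reflection contractions only as additional tests. So I would check that each step invoked, namely the support counting, the \((6)\)/\((4+2)\)/\(S_\star\) exclusions via plus contractions, the determinant argument, and \cref{prop:odd-dih-support-circuit-rigidity}, uses only \(\mathsf K_s^+\). The exception is where odd Walsh parity occurs, which parity evenness now excludes. In each case primeness is contradicted, leaving hardness/tractability or the all-binary conclusion.

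\textbf{Main obstacle.} The main obstacle is the quaternary case with several nonzero central pairs. Here \cref{lem:odd-cyc-q4-family} permits tables that are neither GE nor products, for instance three nonzero \(B_I\) with endpoints. The dihedral proof killed these with reflection kernels, which are unavailable here. I would first try closing two ports of two copies of \(f\) (a two-copy gadget) to produce a new quaternary or binary violating the family constraints or lying outside \(C_N\). Failing that, I would treat the remaining tables as \(\cA\)/\(\cL\)-type and invoke \cref{lem:equality-anchor} after an explicit normalization.
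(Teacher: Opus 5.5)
Your outer structure matches the paper: \cref{lem:height-one-card-saturation}, the all-binary conclusion via $\widetilde B\doteq\diag(1,-t)$ and $P^{\mathsf T}XP=2Z$, minimality of $f$, and even Walsh parity. At arity four the cases with no nonzero central pair (via \cref{lem:walsh-ge-terminal}) and with exactly one nonzero pair and $pt\ne0$ are also fine. For the latter, your rank-one check $x_Ix_{I^c}=p^2\lambda=pt$ is a valid shortcut for the paper's $X\otimes X$ computation. Two steps, however, have genuine gaps.

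\textbf{Arity four with several central pairs.} Your first idea, finding a single phase contraction that is singular or leaves $C_N$, cannot work. The converse clause of \cref{lem:odd-cyc-q4-family} says every table in its alternatives passes every phase contraction. These tables are therefore not contradictory; they must be settled by a dichotomy for the whole set. The missing tool is \cref{lem:odd-cyc-four-port-dressing}. Attaching $C_{r_i}$ at the four ports multiplies $\widetilde q_y$ by $\prod_i r_i^{y_i}$, and for odd $N$ the map to $(R,r_{12},r_{13},r_{14})$ is bijective. So you can normalize the endpoints to $(p,p)$ and every nonzero pair to $(-p,-p)$ when $pt\ne0$, or every nonzero pair to $(\sigma_I,\sigma_I)$ when $p=t=0$. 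Fourier inversion then shows that the actual dressed output is even with equal endpoints $2p(1-k)$, respectively $2c$ with $c=\sigma_1+\sigma_2+\sigma_3$. For $k=2,3$, respectively $c\ne0$, it is an eight-vertex signature with nonzero endpoints. Then \cref{cor:eight-vertex-complete-closure}, together with the gadget reduction and downward inheritance, gives outcome~1. Only $c=0$ needs more: two pairs factor as $Y\otimes Y$ in Walsh coordinates, and three pairs need the two-copy gadget $G_i$. That gadget is where your two-copy idea belongs: its phase contraction is singular or lies outside $C_N$. Your fallback through \cref{lem:equality-anchor} does not replace this. For $k=2$ the dressed table ($-1$ at the endpoints, $1$ on two central pairs, $3$ on the third) has a $2{+}2$ flattening of rank three. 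So neither it nor $f$ is a transformed $\Eq_4$, and calling one signature ``$\cA$/$\cL$-type'' gives no common presentation for $\Gamma$. Also, for $p=t=0$ with one nonzero pair, $\widetilde f$ is a Walsh two-point GE, handled directly by \cref{lem:walsh-ge-terminal}. Its phase contractions are diagonal or zero, so the anti-diagonal transfer you claim does not arise.

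\textbf{Arity at least six.} The dihedral results do not transfer ``a fortiori''. \cref{thm:odd-dih-q6-base}, \cref{thm:cert-odd-dih-q6-interface}, \cref{prop:odd-dih-support-circuit-rigidity} and \cref{lem:odd-dih-common-binary-factor} use, at every port pair, both sectors $x_i\oplus x_j=0$ and $x_i\oplus x_j=1$. The second sector is probed only through the reflection kernels $\mathsf{K}_s^-$, which are not realizable when the group is $C_N$. Even Walsh parity of $f$ does not remove these sectors: for example, the even word $110000$ has $x_2\ne x_3$. Hence their hypotheses cannot be verified, and the order-three support certificate does not apply. The paper instead proves \cref{thm:odd-cyc-higher-collapse} using plus kernels only:
\begin{itemize}
\item a phase-independent matching, \cref{lem:odd-cyc-one-matching};
\item emptiness of the graph of all-zero contractions, \cref{lem:odd-cyc-zero-graph-empty};
\item $a_\varnothing\ne0$, and the case $W_f\ne\varnothing$ contradicting primeness: $W_f$ is then a perfect matching of global equalities and \cref{lem:diagonal-slice-integration} factors $f$;
\item a quadratic in $s$ that would have to vanish at $N\ge3$ phases when $W_f=\varnothing$.
\end{itemize}
This shows the minimal nonbinary arity is four, after which \cref{thm:odd-cyc-quaternary-closure} finishes the argument.
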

	
	\subsubsection{Phase-independent matching}
	\label{subsec:odd-cyc-pencil}
	
	Throughout this odd-cyclic proof, \(\widetilde f\) denotes the fixed
	Walsh coordinates, and
	\(P^{-1}G(\Gamma)P=C_N=\{[D_s]:s\in\mu_N\}\), with one direct representative fixed for
	each class and \(N\ge3\) odd.
	
	For the conditional analysis below, work in outcome~3 of
	\cref{lem:height-one-card-saturation}, retaining its finite, preprocessed,
	algebraic, factor-closed set \(\Gamma\), the matrix \(A\), and the
	transported height-one mountain.  Odd-arity tensor-prime signatures have
	already been handled by \cref{thm:external-odd}.  Choose a nonbinary
	tensor-prime \(f\in\Gamma\) of minimum arity \(n\), which is even.
	Attaching the chosen binary representative to distinct ports \(i,j\) and
	closing through the native edges gives, up to a known nonzero
	representative scalar, the Walsh kernel
	\[
	\mathsf K_s=\diag(1,-s),
	\qquad s\in\mu_N.
	\]
	After dividing out that recorded scalar, the contraction has Walsh coefficients
	\begin{equation}
		C_{ij,s}(y)=A(y)-sB(y),
		\qquad
		A(y)=\widetilde f(00,y),
		\quad B(y)=\widetilde f(11,y).
		\label{eq:odd-cyc-card-pencil}
	\end{equation}
	By \cref{lem:height-one-card-saturation}\textnormal{(3)}, each indicated
	nonzero contraction and all its positive-arity tensor-prime factors belong
	to \(\Gamma\).  A nonbinary factor would have arity at most \(n-2\),
	contradicting minimality.  Every binary factor is nonsingular and its Walsh
	transfer class belongs to \(C_N\).  Thus
	condition~\ref{cond:height-one-contractions} holds for \(f\); each nonzero
	contraction is a product of such binaries on a perfect matching.  A transfer \(D_t\) has
	Walsh tensor
	\[
	D_tZ=\diag(1,-t).
	\]
	For a perfect matching \(M\) of the residual ports, put
	\begin{equation}
		E_M=\{y\in\Ftwo^{[n]\setminus\{i,j\}}:
		y_a=y_b\text{ for every }\{a,b\}\in M\}.
		\label{eq:odd-cyc-matching-code}
	\end{equation}
	Every nonzero contraction has support \(E_M\), with a nowhere-zero rank-one table
	on the matching variables.  By
	\cref{lem:odd-dih-all-arity-parity}, \(\widetilde f\) has even Walsh
	parity.
	
	\begin{lemma}[Phase-independent matching]
		\label{lem:odd-cyc-one-matching}
		Let \(N\ge3\) be odd, and let \(f\in\Gamma\) be a nonbinary
		tensor-prime signature of minimum arity under the assumptions of this
		subsection, including condition~\ref{cond:height-one-contractions} with
		\(P^{-1}G(\Gamma)P=C_N\).  Define \(C_{ij,s}\) and \(E_M\)
		by \cref{eq:odd-cyc-card-pencil,eq:odd-cyc-matching-code}.  For each pair
		\(i,j\), either every contraction vanishes, or \(C_{ij,s}=0\) for at most one
		phase \(s\in\mu_N\) and there is a fixed perfect matching \(M_{ij}\)
		such that
		\begin{equation*}
			C_{ij,s}\ne0
			\quad\Longrightarrow\quad
			\supp(C_{ij,s})=E_{M_{ij}}
			\qquad(s\in\mu_N).
		\end{equation*}
		This includes the boundary case \(N=3\).
	\end{lemma}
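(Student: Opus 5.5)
We fix the pair \(i,j\) and study the pencil \(C_{ij,s}=A-sB\) of \cref{eq:odd-cyc-card-pencil} over the \(N\ge3\) phases. The plan follows the proof of \cref{prop:odd-dih-support-circuit-rigidity}, with the cyclic restriction making every nonzero member a diagonal matching product supported on a matching code \(E_M\).

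First we handle degenerate pencils. If \(A\) and \(B\) are linearly dependent, either both vanish and every contraction is zero, or the pencil is projectively constant. In the constant case at most one phase annihilates it, and all nonzero members share one support \(E_M\), which proves the claim. If \(A,B\) are independent, no member vanishes. So in every nondegenerate case at most one phase gives \(C_{ij,s}=0\).

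Now assume the pencil is not projectively constant. Any three distinct phases \(s_1,s_2,s_3\) give members \(q_{s_k}\) satisfying a linear relation with all three coefficients nonzero, since they lie on a projective line. Each nonzero member is a tensor product of nonsingular diagonal Walsh binaries \(\diag(1,-t)\) on a perfect matching. Hence \cref{lem:odd-dih-three-coset-circuit} applies. In its first alternative the three supports coincide and the members lie in one Segre ruling. The flattening minors of \(A-sB\) are polynomials in \(s\) of degree at most two that vanish at three points, so they vanish identically. Every nonzero member is therefore rank one on one fixed matching, and its support is the common \(E_M\), which gives \(M_{ij}\).

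The main obstacle is excluding the second alternative, where three distinct pairings appear on four ports \(\{a,b,c,d\}\). In the cyclic case every binary is an equality, so each support \(E_M\) is a linear code. The lemma requires the third support to be the symmetric difference of the other two inside the four-port local codes. The diagonal codes \(\{y_a=y_b,y_c=y_d\}\) and \(\{y_a=y_c,y_b=y_d\}\) intersect in \(\{0000,1111\}\). Their symmetric difference is \(\{0011,1100,0101,1010\}\), which is not the code \(\{y_a=y_d,y_b=y_c\}\), because that code contains \(0000\).

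More directly, the covering argument in that lemma needs two of the supports to be disjoint parallel hyperplanes in the third. This is impossible when all supports are linear subspaces containing \(0\), since every pairwise intersection then contains \(0\) and the triple intersection is nonempty. This contradiction excludes the three-pairing case for every \(n\), including \(n=4,6\), where the closing argument of \cref{prop:odd-dih-support-circuit-rigidity} is unavailable. It also covers \(N=3\), because only three phases are ever used.

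Since any three phases force a common support, all nonzero members share one matching \(M_{ij}\), and the lemma follows.
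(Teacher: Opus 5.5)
Your proof is correct, but it takes a different route from the paper's. The paper never invokes \cref{lem:odd-dih-three-coset-circuit}. It observes that for two distinct nonzero phases \(\supp(A)\cup\supp(B)=\supp(A-sB)\cup\supp(A-tB)\), so with three phases the third matching code lies in the union of the other two. A linear subspace contained in a union of two subspaces lies in one of them, and all matching codes have the same dimension, so the third code equals one of the others. A word in the difference of two distinct codes would then vanish at two phases, which is impossible.

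You instead feed three members into the general coset lemma, which was built for the dihedral case. You rule out its four-port alternative using the cyclic feature that every support \(E_M\) is a linear code containing \(0\). That works, but the \emph{statement} of that alternative (three pairings on four ports, common outer factors) does not by itself contradict linearity. What you actually use is an intermediate conclusion of the lemma's \emph{proof}: the two pairwise intersections inside each support are disjoint hyperplanes, so the triple intersection is empty. Equivalently, the third local support is a coset with nonzero offset, as your symmetric-difference computation shows. Your argument therefore depends on that proof rather than on the lemma as stated, and it should say so.

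Both routes rest on the same two facts. First, in a three-term relation with nonzero coefficients, no word lies in exactly one support. Second, equal-dimensional linear codes cannot nontrivially cover one another. The paper's version is self-contained and works only with supports. The Segre-ruling and rank-one assertions in your first alternative are not needed for this lemma. Your split into dependent versus independent \(A,B\) is a slightly cleaner organization of the paper's three subcases.
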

	
	\begin{proof}
		If \(A-sB\) and \(A-tB\) vanish at distinct phases, then \(A=B=0\),
		so the first alternative holds.  Otherwise at most one phase gives a zero
		contraction.  Any two distinct nonzero contractions satisfy
		\begin{equation}
			\supp(A)\cup\supp(B)
			=\supp(A-sB)\cup\supp(A-tB),
			\label{eq:odd-cyc-two-card-union}
		\end{equation}
		because a scalar affine polynomial cannot vanish at two phases unless both
		coefficients vanish.
		
		If no contraction vanishes, choose three phases.  By
		\cref{eq:odd-cyc-two-card-union}, the third matching code lies in the union
		of the first two.  A vector space contained in \(U\cup V\) lies in one of
		them (otherwise sum vectors from its two exclusive parts), and all matching
		codes have the same dimension.  Thus the third code equals one of the first
		two.  Two phases with one code and a third with another would make a word in
		the latter difference vanish at two phases, contradicting the affine-pencil
		identity.  Hence all codes coincide.
		
		If \(A-s_0B=0\), then every other contraction is \((s_0-s)B\), and the conclusion
		is immediate.
	\end{proof}
	
	\subsubsection{Pairs with every phase contraction zero}
	\label{subsec:odd-cyc-zero-graph}
	
	Define
	\begin{equation*}
		Z_f^{\mathrm{cyc}}
		=
		\{\{i,j\}:C_{ij,s}=0\text{ for every }s\in\mu_N\}.
	\end{equation*}
	Two phases in \cref{eq:odd-cyc-card-pencil} show that an edge is exactly
	the support equation
	\begin{equation*}
		\widetilde f(x)\ne0
		\quad\Longrightarrow\quad
		x_i+x_j=1.
	\end{equation*}
	It follows that every connected component of
	\(Z_f^{\mathrm{cyc}}\) is complete bipartite.  An odd path forces unequal
	endpoint bits and hence supplies the corresponding edge; an even path
	forces equal bits and cannot also be a zero edge.
	
	\begin{lemma}[Every port pair has a nonzero phase contraction above arity four]
		\label{lem:odd-cyc-zero-graph-empty}
		Under the assumptions of \cref{lem:odd-cyc-one-matching}, if
		\(\arity(f)=n\ge6\), then
		\[
		Z_f^{\mathrm{cyc}}=\varnothing.
		\]
	\end{lemma}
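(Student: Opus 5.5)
We argue by contradiction: assume some pair \(\{i,j\}\in Z_f^{\mathrm{cyc}}\), and show that \(f\) then has a binary tensor factor, contradicting tensor-primeness. By the discussion preceding the lemma, the edge \(\{i,j\}\) means every supported Walsh word satisfies \(x_i\oplus x_j=1\), so \(\widetilde f(x)=\widetilde f_{ij}^{01}(x_{\widehat{ij}})\) on \(x_i=0\) and \(\widetilde f_{ij}^{10}\) on \(x_i=1\). Write \(A'=\widetilde f_{ij}^{01}\) and \(B'=\widetilde f_{ij}^{10}\). Since \(\widetilde f\) has even Walsh parity (\cref{lem:odd-dih-all-arity-parity}), both slices are supported on odd-weight words of the \(n-2\) residual ports.

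The first step is to show that no other pair through \(i\) can also be a zero edge in a way that absorbs everything. Because components of \(Z_f^{\mathrm{cyc}}\) are complete bipartite, we choose a port \(k\notin\{i,j\}\) not joined to \(i\) or \(j\) by a zero edge. If every third port were in the component, then the component would be a complete bipartite graph on all \(n\) ports, the support would be contained in a two-point set \(\{a,\bar a\}\), and \(\widetilde f\) would be a two-point Walsh tensor. \Cref{lem:walsh-ge-terminal} then settles the complexity, which is outcome~1 of the ambient theorem. So we may pick ports \(k,l\), with \(n\ge6\) guaranteeing room, such that \(\{i,k\}\) is not a zero edge.

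The main step is to contract a pair \(\{k,l\}\) disjoint from \(\{i,j\}\) with \(C_{kl,s}\ne0\). By \cref{lem:odd-cyc-one-matching}, its nonzero contractions have support \(E_{M_{kl}}\) with a nowhere-zero rank-one table. The contraction still satisfies \(y_i\oplus y_j=1\) on its support, while \(E_M\) consists of words equal on each matching edge. Hence \(\{i,j\}\) cannot be a matching edge, and every word of \(E_M\) with \(y_i\ne y_j\) is missing, which contradicts \(E_M\) being full on matching variables. To reach this contradiction we need a pair \(\{k,l\}\) disjoint from \(\{i,j\}\) with nonzero contraction. If every such pair were a zero edge, the complete-bipartite structure on at least four remaining ports forces parity conflicts: an odd cycle among three ports outside \(\{i,j\}\) is impossible because \(n-2\ge4\). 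Thus some residual pair has a nonzero pencil.

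I expect the main obstacle to be the interplay of the zero-edge graph with choosing \(\{k,l\}\). We must ensure that the contraction of \(\{k,l\}\) retains the constraint \(x_i\oplus x_j=1\) on residual ports, which is clear since \(i,j\) are untouched. We must also ensure that its matching code \(E_M\), which forces \(y_i=y_j\) whenever \(\{i,j\}\in M\) or allows both values otherwise, is incompatible with that constraint. In either case \(E_M\) contains a word with \(y_i=y_j\), giving a nonzero coefficient where the constraint demands zero. This is the contradiction, so \(Z_f^{\mathrm{cyc}}=\varnothing\).
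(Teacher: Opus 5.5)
Your core argument is correct, and it takes a genuinely shorter route than the paper's. The paper's proof has three stages. It uses the all-zero-word observation only for a pair inside one color class, to show that color classes have size at most two. It then excludes an isolated zero edge via the weight-two coefficients \(\widetilde f(e_1+e_k)\) and a \((2,3)\)-contraction. Finally, it converts the \(2+1\) and \(2+2\) components into a four-port GE factor, contradicting tensor-primeness.

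You instead apply the observation to every pair \(\{k,l\}\) disjoint from a zero edge \(\{i,j\}\). A nonzero contraction there is a product of Walsh-diagonal binaries \(\diag(1,-t)\), so its support \(E_M\) contains the all-zero residual word. That word violates the inherited constraint \(y_i\ne y_j\). Hence every pair among the \(n-2\ge4\) other ports would be a zero edge, and three pairwise disequalities empty the support, contradicting \(f\ne0\).

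Your route needs only \(f\ne0\) and the matching-support form of nonzero contractions. It also makes transparent why arity four is excluded: no triangle fits outside \(\{i,j\}\). The paper's case analysis yields nothing used later, since \cref{thm:odd-cyc-higher-collapse} quotes only the conclusion.

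Clean up four points.

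\begin{enumerate}
\item Your first step sends a component spanning all ports to \cref{lem:walsh-ge-terminal}. That is a complexity outcome, not a proof of the structural conclusion \(Z_f^{\mathrm{cyc}}=\varnothing\). It is also unnecessary: the triangle argument already covers that case, which is in fact impossible under the hypotheses.
\item The sentence saying that words of \(E_M\) with \(y_i\ne y_j\) are missing has the inequality reversed. The contradiction is the word \(\zero\in E_M\), which has \(y_i=y_j\), as your final paragraph correctly states.
\item A zero-edge triangle is impossible because it forces \(x_p\ne x_q\), \(x_q\ne x_u\), \(x_p\ne x_u\) on every supported word. It is not impossible ``because \(n-2\ge4\)''; the arity bound only guarantees that three ports exist outside \(\{i,j\}\).
\item The opening promise of a binary tensor factor should be replaced by the contradiction with \(f\ne0\) that you actually obtain.
\end{enumerate}
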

	
	\begin{proof}
		Each color class has size at most two: if one contained \(p,q,r\) and the
		other \(b\), a nonzero contraction on the nonzero pair \(p,q\) would have an
		all-zero residual coefficient while retaining the forced disequality
		\(r\ne b\), impossible for a nonsingular matching product.
		
		An isolated \(\{1\}\sqcup\{2\}\) is impossible.  For every \(k\ge3\),
		the all-zero word of a nonzero \((1,k)\)-contraction has the unique lift
		\(x_1=x_k=1\), whence
		\begin{equation*}
			\widetilde f(e_1+e_k)\ne0
			\qquad(k=3,\ldots,n).
		\end{equation*}
		A nonzero \((2,3)\)-contraction then has each \(e_1+e_k\), \(k\ge4\), by the
		unique lift \(x_2=x_3=0\).  Its support matching would contain every
		\(\{1,k\}\), impossible.
		
		Thus only \(2+1\) and \(2+2\) remain.  Label the size-two color
		\(\{1,3\}\) and the other \(\{2\}\) or \(\{2,4\}\).  In a \((1,3)\)
		contraction the forced equations give the unique lift
		\[
		x_1=x_3=1-x_2.
		\]
		Put
		\[
		h(x_2,\mathbf y)=\widetilde f(1-x_2,x_2,1-x_2,\mathbf y),
		\]
		where \(\mathbf y\) denotes the remaining coordinates; in the \(2+2\)
		case, also impose the support equation \(x_4=x_2\).
		Then
		\begin{equation*}
			C_{13,s}(x_2,\mathbf y)=u_s(x_2)h(x_2,\mathbf y),
			\qquad
			u_s(0)=-s,\quad u_s(1)=1.
		\end{equation*}
		Every contraction is nonzero and equals a nowhere-zero unary \(u_s\) times \(h\).
		Dividing analytically by \(u_s\) only modifies the matching factor incident
		with port \(2\), so \(h\) factors over that matching.  Reversing the unique
		lift gives a four-port GE factor on \(\{1,2,3,4\}\) in the \(2+2\) case,
		or on \(\{1,2,3,k\}\) in the \(2+1\) case, tensor the remaining binaries.
		For \(n\ge6\) this contradicts tensor-primeness.
	\end{proof}
	
	\subsubsection{Weight-two coefficients and the arity bound}
	\label{subsec:odd-cyc-weight-two}
	
	\begin{lemma}[Rank-one diagonal restrictions]
		\label{lem:diagonal-slice-integration}
		Let \(r\ge3\), and let \(h:\Ftwo^r\to\mathbb C^\times\) have full support and
		\(h(0)=1\).  Suppose that, for every \(i\ne j\), the restriction
		\(h|_{t_i=t_j}\), viewed as a function of the common bit and the other
		\(r-2\) bits, has tensor rank one.  If \(r\ge4\), then
		\[
		h(t)=\prod_{i=1}^r\rho_i^{t_i}
		\qquad(\rho_i\ne0).
		\]
		If \(r=3\), there are \(\rho_1,\rho_2,\rho_3,\kappa\ne0\) such that
		\begin{equation}
			h(e_i)=\rho_i,\qquad
			h(e_i+e_j)=\kappa\rho_i\rho_j,\qquad
			h(111)=\kappa\rho_1\rho_2\rho_3.
			\label{eq:diagonal-slice-q3-table}
		\end{equation}
	\end{lemma}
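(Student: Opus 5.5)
We argue purely algebraically with a logarithmic model; no gadgets are involved. The tensor-rank-one condition on a nowhere-zero table means multiplicativity: a positive function \(g\) on \(\Ftwo^{m}\) has rank one iff \(g(u)g(v)=g(u\wedge v)g(u\vee v)\) for all \(u,v\), equivalently all \(2\times2\) face minors vanish. Define the ``second differences'' \(\kappa_{ab}=h(e_a+e_b)/(h(e_a)h(e_b))\) and set \(\rho_i=h(e_i)\). We first normalize \(g(t)=h(t)/\prod_i\rho_i^{t_i}\); this still has \(g(0)=1\), full support, and the same rank-one restriction property, because multiplying by a product of unaries preserves rank one of every diagonal restriction. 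Now \(g(e_i)=1\) for all \(i\), and we must show \(g\equiv1\) for \(r\ge4\), and \(g\) has the displayed constant-\(\kappa\) pattern for \(r=3\).

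The key step is to read off the rank-one condition on the restriction \(t_i=t_j\) as a statement about the remaining coordinates. Restricted to \(t_i=t_j=0\), the function of the other \(r-2\) bits is a face of the rank-one table, hence multiplicative in those bits; in particular \(g(e_k+e_l)=g(e_k)g(e_l)=1\) whenever \(\{k,l\}\) is disjoint from \(\{i,j\}\). For \(r\ge4\) every pair \(\{k,l\}\) admits such a disjoint \(\{i,j\}\), so all weight-two values equal one. Next, the rank-one condition relating the common bit \(c=t_i=t_j\) with another bit \(t_k\) gives \(g(e_i+e_j+e_k)\cdot g(0)=g(e_i+e_j)\,g(e_k)\) (a \(2\times 2\) minor of the flattening separating \(c\) from \(t_k\), other bits zero), so weight-three values are one; more generally the restriction \(t_i=t_j\) is multiplicative in \((c,\mathbf y)\), whence \(g(\one_S)\) for \(S\supseteq\{i,j\}\) equals \(g(e_i+e_j)\prod_{k\in S\setminus\{i,j\}}g(e_k)\), and \(g(\one_S)\) with \(S\cap\{i,j\}=\varnothing\) factors similarly. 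Induction on \(|S|\): any \(S\) with \(|S|\ge2\) contains a pair \(\{i,j\}\), giving \(g(\one_S)=g(e_i+e_j)=1\). This proves \(h=\prod\rho_i^{t_i}\) for \(r\ge4\).

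For \(r=3\), only the weight-two step fails, since no pair is disjoint from another. Here the restriction \(t_i=t_j\) has two variables \((c,t_k)\), and rank one says \(g(111)g(0)=g(e_i+e_j)g(e_k)\), i.e. \(g(111)=g(e_i+e_j)\) for each of the three choices of \(\{i,j\}\). Hence all three weight-two values coincide; call the common value \(\kappa\), and \(g(111)=\kappa\). Undoing the normalization yields exactly \cref{eq:diagonal-slice-q3-table}. The only point needing care, and the one I would check first, is the precise identification of which minors the rank-one hypothesis on the diagonal restriction provides (the common bit is one tensor factor, so the faces mixing \(c\) with another coordinate are genuine minors); the rest is bookkeeping, and nonvanishing of all entries makes every division legitimate.
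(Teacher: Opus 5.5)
Your proof is correct and complete. The paper gives no argument for this lemma; it only cites Lemma~7.8 of the companion paper \cite{Liu2026DEQ2Holant}, so there is no in-text route to compare with, and your argument is a self-contained elementary proof.

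The argument has three steps:
\begin{enumerate}
\item Normalize by the unary product \(\prod_i\rho_i^{t_i}\) with \(\rho_i=h(e_i)\). Multiplying by a product of unaries preserves rank one of every diagonal restriction.
\item Read off a face minor of the restriction at \(t_i=t_j=0\). When \(r\ge4\), every pair \(\{k,l\}\) has a disjoint pair \(\{i,j\}\), so every weight-two value of \(g\) equals one.
\item Use the multiplicative form of the restriction \(t_i=t_j\) in the common bit and the remaining bits. This gives \(g(\one_S)=g(e_i+e_j)\prod_{k\in S\setminus\{i,j\}}g(e_k)=g(e_i+e_j)\) for every \(S\supseteq\{i,j\}\).
\end{enumerate}
Together these give \(g\equiv1\) for \(r\ge4\). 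For \(r=3\), the same \(2\times2\) minor gives \(g(111)=g(e_i+e_j)\) for all three pairs, which is exactly the displayed table.

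A few small corrections to the write-up:
\begin{itemize}
\item You only use the trivial direction, namely that a rank-one tensor satisfies the minor identities and its faces are rank one. The claimed equivalence is unnecessary, and ``positive function'' should read ``nowhere-zero function.''
\item The separate weight-three step, the remark about \(S\) disjoint from \(\{i,j\}\), and the word ``induction'' can all be dropped. The identity \(g(\one_S)=g(e_i+e_j)\) for \(S\supseteq\{i,j\}\) handles every \(|S|\ge2\) directly.
\item For \(r=3\), state explicitly that \(\kappa\ne0\) and \(\rho_i\ne0\) follow from full support.
\end{itemize}
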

	
	\begin{proof}
		This is \cite[Lemma~7.8]{Liu2026DEQ2Holant}.
	\end{proof}
	
	\begin{theorem}[The minimum nonbinary arity is four in the cyclic case]
		\label{thm:odd-cyc-higher-collapse}
		Let \(\Gamma\) be a finite preprocessed algebraic factor-closed signature
		set containing the transported height-one mountain.  Suppose every nonzero
		directly realizable binary is nonsingular and has finite projective transfer
		order, and, in the fixed
		Walsh coordinates,
		\[
		P^{-1}G(\Gamma)P
		=C_N=\{[\diag(1,s)]:s\in\mu_N\},
		\qquad N\ge3\text{ odd}.
		\]
		Let \(f\in\Gamma\) be an even-arity nonbinary tensor-prime signature of
		minimum arity among the nonbinary tensor-prime signatures in \(\Gamma\).
		For every ordered pair of distinct ports of \(f\), every fixed direct
		representative of a class in \(G(\Gamma)\), and either kernel orientation,
		assume that the resulting two-port contraction is zero or all of its positive-arity
		tensor-prime factors are nonsingular binaries \(B\) satisfying
		\[
		[P^{-1}(BX)P]\in C_N.
		\]
		Then \(\arity(f)\le4\), and hence \(\arity(f)=4\).
	\end{theorem}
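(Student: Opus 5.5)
The plan is to argue by contradiction: assume \(n=\arity(f)\ge6\) and show that \(\widetilde f\) is a tensor product of binaries, contradicting tensor-primeness. The hypotheses are exactly those of \cref{lem:odd-cyc-one-matching,lem:odd-cyc-zero-graph-empty}, and \cref{lem:odd-dih-all-arity-parity} gives even Walsh parity. Hence \(Z_f^{\mathrm{cyc}}=\varnothing\). For every pair \(\{i,j\}\), at most one phase kills \(C_{ij,s}\), and there is a fixed perfect matching \(M_{ij}\) of the residual ports with \(\supp(C_{ij,s})=E_{M_{ij}}\) whenever \(C_{ij,s}\ne0\). Since an affine polynomial in \(s\) that vanishes at two phases vanishes identically, this gives
\[
\supp(A)\cup\supp(B)=E_{M_{ij}},\qquad A=\widetilde f(00,\cdot),\ B=\widetilde f(11,\cdot).
\]
So every supported word of \(\widetilde f\) that agrees on \(i,j\) has equal bits on each pair of \(M_{ij}\).

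The first and main step is to glue the local matchings into one global perfect matching \(M\) of \([n]\) with \(\supp(\widetilde f)\subseteq E_M\) (equal bits on each pair of \(M\)). I would first show that \(\zero^n\) or \(\one^n\) is supported. Then I would examine the weight-two words \(e_a+e_b\) and the weight-four words, whose membership is read off from the codes \(E_{M_{ij}}\). The candidate is: \(\{a,b\}\in M\) iff \(\widetilde f(e_a+e_b)\ne0\), or equivalently iff \(\{a,b\}\in M_{ij}\) for some, hence every, \(\{i,j\}\) disjoint from \(\{a,b\}\).

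To show this is well defined I would compare \(M_{ij}\) with \(M_{ik}\). A supported word with \(x_i=x_j=x_k\) lies in both codes after deleting the respective pairs. Since \(n\ge6\) leaves at least four residual ports, a disagreement between the two matchings produces a supported word violating one of the codes. The argument is an alternating-cycle argument similar to \cref{lem:odd-dih-three-coset-circuit}. Even parity then excludes words with \(x_i\ne x_j\) on a pair of \(M\), via the forced equality of the residual bits. This consistency argument is where I expect the real difficulty, especially at \(n=6\), where only four residual ports remain and the three pairings of four points must be ruled out by hand.

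Given \(M\), write \(\widetilde f(x)=h(t)\) with one block bit \(t_C\) for each \(C\in M\), \(r=n/2\ge3\). I would next prove full support of \(h\). Contracting a pair \(\{i,j\}\in M\) gives \(A-sB\), which is nowhere zero on \(E_{M\setminus\{ij\}}\). Contracting ports from two different blocks \(C,D\) identifies \(t_C=t_D\) and is nowhere zero on the merged code. Together these show \(h\) never vanishes, and after scaling \(h(0)=1\).

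Contracting one port of \(C\) with one port of \(D\) realizes exactly the diagonal restriction \(h|_{t_C=t_D}\), up to the nowhere-zero factor \(-s\) or \(1\). This restriction is a rank-one nonsingular matching product, so the hypotheses of \cref{lem:diagonal-slice-integration} hold. For \(r\ge4\) that lemma gives \(h(t)=\prod\rho_C^{t_C}\), so \(\widetilde f\) is a product of Walsh binaries \(\delta_{00}+\rho_C\delta_{11}\). Pulling back through \(P\) contradicts primeness. For \(r=3\) (so \(n=6\)) the lemma only gives the twisted table \cref{eq:diagonal-slice-q3-table}. To remove \(\kappa\), I would use the within-block contraction of a pair \(C\in M\), which yields \(h(0,\cdot)-sh(1,\cdot)\) on the two remaining blocks. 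Rank one for three phases forces the two slices \(h(0,\cdot),h(1,\cdot)\) to be proportional, hence \(\kappa=1\), again a product. This gives \(n\le4\), and since \(f\) is nonbinary of even arity, \(n=4\).
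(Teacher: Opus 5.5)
There is a genuine gap, and it is in the step you flag yourself: gluing the local matchings \(M_{ij}\) into one global matching.

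The mechanism you describe uses only support information. It compares \(M_{ij}\) with \(M_{ik}\) and argues that a disagreement yields a supported word violating one of the codes. Support information alone cannot force a global matching. As a test case, take \(n=8\), index the ports by \(\Ftwo^3\), and put \(\widetilde f=\RMcore\).

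\begin{itemize}
\item For a pair \(\{i,j\}\) with \(h=i+j\), the words of \(RM(1,3)\) with \(x_i=x_j\) are exactly the affine functions constant on the cosets of \(\langle h\rangle\).
\item Deleting \(i,j\) maps these words bijectively onto \(E_M\) with \(M=\{\{t,t+h\}\}\). So \(\supp(A)\) and \(\supp(B)\) are disjoint with union \(E_M\).
\item Hence every \(C_{ij,s}\) is nonzero with support exactly \(E_{M_{ij}}\). The Walsh parity is even, \(Z_f^{\mathrm{cyc}}=\varnothing\), and everything you derive in your first paragraph holds.
\end{itemize}

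Yet there are no weight-two words, and \(M_{ij}\) depends on \(i+j\), so \(M_{ij}\) and \(M_{ik}\) share no edge. No supported word violates any code. Your proposed equivalence, ``\(\{a,b\}\in M_{ij}\) for some, hence every, disjoint \(\{i,j\}\)'', is false here. This tensor is excluded only by its coefficients. On the matching bits the contraction equals \(1\) on even-weight words and \(-s\) on odd-weight words, which is rank one only at \(s=1\).

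The missing idea is the paper's split on \(W_f=\{\{p,q\}:\widetilde f(e_p+e_q)\ne0\}\). Write \(a_S=\widetilde f(\one_S)\); first prove \(a_\varnothing\ne0\) and normalize \(a_\varnothing=1\).

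\begin{itemize}
\item If \(W_f\ne\varnothing\), support arguments of the kind you envisage show that \(W_f\) is a matching, that each of its edges is a global equality constraint, and that it is perfect.
\item If \(W_f=\varnothing\), fix \(\{i,j\}\) and two edges \(e,e'\) of \(M_{ij}\). The residual coefficients at \(\zero\) and \(\one_e\) are \(1\) and \(-s\,a_{\{i,j\}\cup e}\ne0\). The rank-one identity then gives \(a_{e\cup e'}-s\,a_{\{i,j\}\cup e\cup e'}=s^2a_{\{i,j\}\cup e}a_{\{i,j\}\cup e'}\) for all \(s\in\mu_N\). This is a nonzero quadratic with \(N\ge3\) roots, a contradiction.
\end{itemize}

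Without this coefficient step your plan cannot reach the global matching. Once the matching exists, your remaining steps coincide with the paper's.

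One smaller point at \(r=3\): a pencil of rank-one \(2\times2\) tables need not consist of proportional matrices. The conclusion still follows, because the constant term \(\rho_j\rho_k(\kappa-1)\) of \(\det\bigl(h(0,\cdot)-s\,h(1,\cdot)\bigr)\) must vanish, which gives \(\kappa=1\) directly.
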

	
	\begin{proof}
		The contraction hypothesis is condition~\ref{cond:height-one-contractions}
		for the stated group, so the two preceding lemmas apply.
		Assume that \(f\) has arity \(n\ge6\).  By
		\cref{lem:odd-cyc-zero-graph-empty}, every pair has a nonzero phase contraction.
		For \(S\subseteq[n]\), write \(a_S=\widetilde f(\one_S)\).
		We abbreviate \(a_{i_1\cdots i_k}=a_{\{i_1,\ldots,i_k\}}\).
		
		\emph{Step 1: the constant coefficient is nonzero.}
		If \(a_\varnothing=0\), the all-zero coefficient
		\(C_{ij,s}(0)=-sa_{ij}\) of a nonzero matching contraction forces
		\(a_{ij}\ne0\) for every pair.  Hence all contractions are nonzero.  For disjoint
		pairs \(ij,pq\),
		\[
		C_{ij,s}(e_p+e_q)=a_{pq}-sa_{ijpq}
		\]
		is nonzero at some phase.  The common matching from
		\cref{lem:odd-cyc-one-matching} would therefore contain every edge on its
		at least four residual ports, impossible.  Normalize henceforth
		\(a_\varnothing=1\), and put
		\[
		W_f=\{\{p,q\}:a_{pq}\ne0\}.
		\]
		
		\emph{Step 2: a nonempty \(W_f\) factors \(f\).}
		For deleted \(i,j\) and a disjoint edge \(pq\in W_f\), the coefficients
		\(1-sa_{ij}\) and \(a_{pq}-sa_{ijpq}\) exclude at most one phase each.
		Thus some contraction contains \(e_p+e_q\), and its common support matching
		contains \(pq\).  Deleting two ports outside any two adjacent
		\(W_f\)-edges would put both edges in one matching, so \(W_f\) is a
		matching.
		
		Moreover, every \(pq\in W_f\) is a global equality constraint.  Indeed, if
		some supported \(z\) had \(z_p\ne z_q\), choose equal coordinates \(i,j\)
		outside \(p,q\), possible by even parity and \(n\ge6\).  The residual word
		violates the \(pq\)-edge in every nonzero \((i,j)\)-contraction and therefore
		vanishes at all phases.  Two phases of
		\cref{eq:odd-cyc-card-pencil} then make both lifts vanish, including \(z\).
		Consequently
		\begin{equation}
			pq\in W_f\quad\Longrightarrow\quad
			\supp(\widetilde f)\subseteq\{z:z_p=z_q\}.
			\label{eq:odd-cyc-global-equality}
		\end{equation}
		
		Fix \(E_1=\{u,v\}\in W_f\) and \(R=[n]\setminus E_1\).  In an
		\((u,k)\)-contraction, the all-zero word has the unique lift
		\(x_u=x_k=x_v=0\); hence, for \(p,q\in R\setminus\{k\}\), its coefficient
		at \(e_p+e_q\) is exactly \(a_{pq}\).  Its matching therefore consists of
		the \(W_f\)-edges on \(R\setminus\{k\}\) and one edge incident with \(v\).
		Since that last edge cannot cover all \(n-2\ge4\) ports, the induced
		matching on \(R\) is nonempty.  Choose \(k\) on one of its edges, with
		partner \(\ell\).  In the \((u,k)\)-contraction, \(\ell\) has no incident
		\(W_f\)-edge left, so it must be the partner of \(v\).  Any vertex
		unmatched in \(W_f|_R\) would also have to be matched to \(v\), which is
		impossible.  Thus \(W_f=\{E_1,\ldots,E_r\}\) is perfect, where \(r=n/2\).
		
		Equation \eqref{eq:odd-cyc-global-equality} identifies \(\widetilde f\)
		with a coefficient function \(h(t_1,\ldots,t_r)\) on the common bits of the
		\(E_k\)'s.  Contracting one port from \(E_i\) and one from \(E_j\) imposes
		\(t_i=t_j\), and the remaining partners determine the deleted bits uniquely.
		The contraction is nonzero since its all-zero coefficient is one.  Its support is
		contained in the matching code obtained by merging \(E_i,E_j\).  Both this
		code and the contraction's matching support have \(2^{r-1}\) points, so they
		coincide.  Removing the nowhere-zero unary phase on the merged variable
		therefore gives a full-support rank-one diagonal slice of \(h\).  Every assignment in \(\Ftwo^r\) has two equal coordinates, so these
		restrictions cover \(\Ftwo^r\) and \(h\) has full support.  After \(h(0)=1\),
		\cref{lem:diagonal-slice-integration} factors \(h\) when \(r\ge4\).  If
		\(r=3\), use \cref{eq:diagonal-slice-q3-table}.  Let \(H_i(s)\) be the
		two-bit coefficient function obtained by closing \(E_i\) at phase \(s\).  For
		\(\{i,j,k\}=[3]\),
		\begin{equation*}
			\det H_i(s)=\rho_j\rho_k(\kappa-1)
			(1-\kappa s^2\rho_i^2).
		\end{equation*}
		If \(\kappa\ne1\), the contraction is nonzero and hence rank one for every phase,
		whereas \(s\mapsto s^2\) permutes the odd-order group \(\mu_N\); the last
		factor cannot vanish at two phases.  Thus \(\kappa=1\), and the six-port
		table also factors.  Both cases contradict tensor-primeness.
		
		\emph{Step 3: an empty \(W_f\) is impossible.}
		Fix \(i,j\).  Every contraction now has constant coefficient one and is nonzero.
		Choose two edges \(e,e'\) of its common residual matching.  Full matching
		support gives
		\(a_{\{i,j\}\cup e}a_{\{i,j\}\cup e'}\ne0\), while its two-bit rank-one
		identity is
		\begin{equation*}
			a_{e\cup e'}-s a_{\{i,j\}\cup e\cup e'}
			=s^2a_{\{i,j\}\cup e}a_{\{i,j\}\cup e'}
			\qquad(s\in\mu_N).
		\end{equation*}
		This nonzero quadratic cannot vanish at the \(N\ge3\) distinct phases.
		The contradiction completes the proof.
	\end{proof}
	
	\subsubsection{The quaternary cyclic case}
	\label{subsec:odd-cyc-quaternary}
	
	We now treat arity four without assuming that the quaternary is a mountain.
	The following lemma gives the output obtained by attaching binary
	representatives to the four ports.
	
	\begin{lemma}[Attaching cyclic binary representatives to four ports]
		\label{lem:odd-cyc-four-port-dressing}
		Let \(N\ge3\), let \(\Lambda\) be a finite algebraic signature set in
		\(K\)-coordinates, and, for every \(s\in\mu_N\), fix an actual ordered
		binary representative \(C_s\) over \(\Lambda\), in a specified orientation,
		whose Walsh matrix is
		\[
		\widetilde C_s=\kappa_sD_sZ,
		\qquad \kappa_s\in\AlgNums^\times.
		\]
		Let \(q\in\Lambda\) be quaternary.
		For \(r_1,r_2,r_3,r_4\in\mu_N\), attach \(C_{r_i}\) to port \(i\) of
		\(q\) through one native \(X\)-edge and leave the other port of each
		representative exposed.  After division by the recorded nonzero scalar
		\(2^4\prod_{i=1}^4\kappa_{r_i}\), the resulting actual output \(h\) satisfies
		\begin{equation*}
			\widetilde h_y
			=\widetilde q_y\prod_{i=1}^4r_i^{y_i}.
		\end{equation*}
		Consider the map
		\begin{equation*}
			(r_1,r_2,r_3,r_4)
			\longmapsto
			(R,r_{12},r_{13},r_{14}),
			\qquad
			R=\prod_{i=1}^4r_i,
			\quad r_I=\prod_{i\in I}r_i,
		\end{equation*}
		Write \((\mu_N)^2:=\{z^2:z\in\mu_N\}\).  A target
		\((R,u,v,w)\in\mu_N^4\) occurs exactly when
		\begin{equation}
			\frac{uvw}{R}\in(\mu_N)^2.
			\label{eq:cyclic-dressing-image}
		\end{equation}
		For odd \(N\), this map is bijective.
	\end{lemma}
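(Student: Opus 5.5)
Two separate things have to be checked: the tensor identity for the dressed output \(h\), and the arithmetic description of the image of the exponent map together with its bijectivity for odd \(N\).

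\emph{The dressing identity.}  Attaching \(C_{r_i}\) at port \(i\) through one native edge multiplies that port by the matrix \(C_{r_i}X\), acting on the ordered block. I will orient the representative so that this holds; if the other orientation is the one specified, the transpose gives the same diagonal form, since \(D_sZ\) is diagonal. Passing to Walsh coordinates, the edge \(X\) becomes \(P^{\mathsf T}XP=2Z\). The port operator is then \(\widetilde C_{r_i}\cdot 2Z\) up to transposition, which equals
\[
2\kappa_{r_i}D_{r_i}Z^2=2\kappa_{r_i}D_{r_i}.
\]
I will verify this directly from \(h=(C_{r_1}X\otimes\cdots\otimes C_{r_4}X)q\) by conjugating each factor by \(P\). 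This uses only \(P^{-1}(CX)P=P^{-1}C(P^{-1})^{\mathsf T}\,P^{\mathsf T}XP\) and the definition \(\widetilde C=(P^{-1})^{\otimes2}C\). Since \(D_r=\diag(1,r)\) acts on a Walsh coordinate \(y_i\) by \(r_i^{y_i}\), dividing by \(2^4\prod\kappa_{r_i}\) gives \(\widetilde h_y=\widetilde q_y\prod r_i^{y_i}\). The only delicate point is tracking the transpose and orientation, and the main obstacle is confirming that the factor \(Z\) cancels rather than doubling to a sign.

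\emph{The image of the map.}  The map \((r_1,\dots,r_4)\mapsto(R,r_{12},r_{13},r_{14})\) is a homomorphism \(\mu_N^4\to\mu_N^4\). I will compute directly:
\[
uvw/R=r_1^3r_2r_3r_4/(r_1r_2r_3r_4)=r_1^2\in(\mu_N)^2,
\]
so every image point satisfies \cref{eq:cyclic-dressing-image}. Conversely, given \((R,u,v,w)\) with \(uvw/R=\rho^2\), set \(r_1=\rho\), \(r_2=u/\rho\), \(r_3=v/\rho\), \(r_4=w/\rho\). Then \(r_{12}=u\), \(r_{13}=v\), \(r_{14}=w\), and
\[
R=\rho\cdot uvw/\rho^3=uvw/\rho^2,
\]
which is the given \(R\). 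This proves the characterization.

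\emph{Bijectivity for odd \(N\).}  When \(N\) is odd, squaring is an automorphism of the cyclic group \(\mu_N\), so \((\mu_N)^2=\mu_N\). The condition is then vacuous and the map is surjective onto \(\mu_N^4\). A surjective self-map of a finite set is bijective. Alternatively, the kernel is trivial: \(r_1^2=1\) forces \(r_1=1\) since \(N\) is odd, and then \(r_2=r_3=r_4=1\).
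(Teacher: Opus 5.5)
Your proposal is correct and complete. The conjugation $P^{-1}(C_{r_i}X)P=\widetilde C_{r_i}\,P^{\mathsf T}XP=2\kappa_{r_i}D_{r_i}$ is unchanged under the other orientation because $\widetilde C_{r_i}$ is diagonal. Together with the identity $uvw/R=r_1^2$, the explicit preimage $r_1=\rho$, $r_2=u/\rho$, $r_3=v/\rho$, $r_4=w/\rho$, and the bijectivity of squaring on $\mu_N$ for odd $N$, this establishes every assertion.

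The paper gives no independent argument and simply cites \cite[Lemma~7.3]{Liu2026DEQ2Holant}. Your direct computation is the natural verification of that lemma.
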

	
	\begin{proof}
		This is \cite[Lemma~7.3]{Liu2026DEQ2Holant}.
	\end{proof}
	
	\begin{theorem}[The quaternary odd-cyclic case]
		\label{thm:odd-cyc-quaternary-closure}
		Let \(\Gamma\) be a finite preprocessed algebraic factor-closed signature
		set containing the transported height-one mountain.  Suppose every nonzero
		directly realizable ordered binary is nonsingular and has finite projective
		transfer order, and, in the fixed Walsh coordinates,
		\[
		P^{-1}G(\Gamma)P
		=C_N=\{[\diag(1,s)]:s\in\mu_N\},
		\qquad N\ge3\text{ odd}.
		\]
		Let \(0\ne q\in\Gamma\) be quaternary.  For every ordered pair of distinct
		ports of \(q\), every fixed direct representative of a class in
		\(G(\Gamma)\), and either kernel orientation, assume that the resulting
		two-port contraction is zero or all of its positive-arity tensor-prime factors are
		nonsingular binaries \(B\) with
		\[
		[P^{-1}(BX)P]\in C_N.
		\]
		Then at least one of the following holds.
		\begin{enumerate}
			\item \(\KHolant(\Gamma)\) is \(\SharpP\)-hard or
			\(\Tract(K\Gamma)\) holds.
			\item Up to a permutation of its ports,
			\[
			q\doteq a\otimes b,
			\]
			where \(a,b\) are nonzero positive-arity signatures of arity strictly
			below four, and
			\[
			\KHolant(\Gamma,a,b)\leT\KHolant(\Gamma).
			\]
		\end{enumerate}
	\end{theorem}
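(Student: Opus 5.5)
Our plan is to pass to Walsh coordinates and read the contraction hypothesis through \cref{lem:odd-cyc-q4-family}. The hypothesis is condition~\ref{cond:height-one-contractions} for \(C_N\), so \cref{lem:odd-dih-all-arity-parity} applies. It shows that \(\widetilde q\) has even Walsh parity; equivalently, \(q(\bar x)=q(x)\) in \(K\)-coordinates. With the coordinates \(p,t,x_{ab}\) of \cref{eq:odd-dih-q4-coordinates}, \cref{lem:odd-cyc-q4-family} splits the analysis into two families.
\begin{itemize}
\item[(a)] \(p,t\ne0\), \(t/p=\lambda\in\mu_N\), and each complementary pair \(B_I\) is zero or \((-p\lambda\alpha_I^{-1},-p\alpha_I)\).
\item[(b)] \(p=t=0\), and each \(B_I\) is zero or \(\rho_I(1,\alpha_I)\).
\end{itemize}
In each family we sort by the number \(k\) of nonzero complementary pairs. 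The whole proof is then a finite case analysis on \(k\) and on the phases \(\alpha_I\).

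\emph{Step 1: simplify by dressing.} \cref{lem:odd-cyc-four-port-dressing} lets us attach the cyclic representatives \(C_{r_i}\) to the four ports. Since \(N\) is odd, the map \((r_i)\mapsto(R,r_{12},r_{13},r_{14})\) is bijective. So we can realize, by a direct gadget, any rescaling \(\widetilde q_y\mapsto\widetilde q_y\prod_i r_i^{y_i}\). This multiplies \(t\) by \(R\) and each \(x_I\) by \(r_I\). In family~(a) we choose the dressing to make \(\lambda=1\) and every \(\alpha_I=1\). In family~(b) we choose it to make every \(\alpha_I=1\).

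\emph{Step 2: sort the normalized tables.} After Step~1 the possible tables are few.
\begin{itemize}
\item Family~(a), \(k=0\): \(\widetilde q\doteq\delta_{0000}+\delta_{1111}\). This is the two-point Walsh factor of \cref{lem:walsh-ge-terminal}, and \(c_1/c_0\) is a root of unity. That lemma yields the quaternary of \cref{lem:equality-anchor} with \(U=K\), which gives outcome~1.
\item Family~(b), \(k=1\): \(\widetilde q\doteq\delta_x+\delta_{\bar x}\) with \(\wt(x)=2\). Again \cref{lem:walsh-ge-terminal} applies.
\item Family~(a), \(k=1\): the pattern \(p(1,1)\) on \(\{0000,1111\}\) together with \(-p(1,1)\) on one pair \(\{\one_I,\one_{I^c}\}\) satisfies \(\widetilde q_{0000}\widetilde q_{1111}=\widetilde q_{\one_I}\widetilde q_{\one_{I^c}}\). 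Its support is a \(2+2\) matching coset, so \(\widetilde q\) is a product of two diagonal or antidiagonal binaries.
\item Family~(b), \(k=2\): the analogous product identity holds, and the same conclusion follows.
\item In the remaining configurations (family~(a) with \(k\ge2\), family~(b) with \(k=3\)), we compute the \(H^+\) contraction matrices of \cref{subsec:odd-dih-q4}. The expected answer is again either a product or a weighted generalized equality after dressing. We would check this directly against \cref{lem:odd-dih-RN}. If some table survives as a prime signature, we would show it is affine after \(P\) and transfer tractability, or realize a nonsingular weighted equality and finish via \cref{thm:eq2-kholant-dichotomy}.
\end{itemize}

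\emph{Step 3: go back to the original \(q\).} Any factorization of a dressed output \(h\) pulls back by undoing the invertible attached binaries: each \(C_r\) is invertible up to scalar, since the inverse class \(C_{r^{-1}}\) is available. Hence \(q\doteq a\otimes b\) whenever \(h\) factors. \cref{thm:external-factor} then gives \(\KHolant(\Gamma,a,b)\leT\KHolant(\Gamma)\); its common-one-sided exception is excluded by the transported height-one mountain, which has both charge signs. Similarly, a generalized-equality form of \(h\) is an adjoined gadget output, so \cref{lem:walsh-ge-terminal} applies to the enlarged set, and hardness transfers back to \(\Gamma\).

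The main obstacle is Step~2 for family~(a) with two or three nonzero central pairs. Dressing fixes only four independent phases, and it is not clear that the contraction relations alone force such a table to be a product. We would first try the reflected-kernel argument of \cref{prop:odd-dih-q4}. That argument is unavailable here, since only plus kernels exist, so instead we would try to exhibit a further two-copy gadget whose contraction has a transfer outside \(C_N\) or is singular. If that fails, we would show directly that the surviving table is \(\cA\)- or \(\cP\)-transformable together with \(\Gamma\), giving outcome~1.
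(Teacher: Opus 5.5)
\textbf{Gap: Step 2 does not close.} Your dressing normalization is fine. So are family~(a) with $k\le1$, family~(b) with $k=1$, and the pull-back in Step~3. Step~2 fails in two places.

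First, family~(b) with $k=2$ does not factor in general. \Cref{lem:odd-cyc-q4-family} leaves the scalars $\rho_I$ free, and dressing only multiplies Walsh coefficients by elements of $\mu_N$. After dressing, the two nonzero pairs are therefore $(\sigma_1,\sigma_1)$ and $(\sigma_2,\sigma_2)$ with arbitrary nonzero $\sigma_1,\sigma_2$. The support is a $2+2$ coset, but rank one on it requires $\sigma_1^2=\sigma_2^2$, which nothing forces.

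Second, you have no argument for family~(a) with $k=2,3$ or family~(b) with $k=3$, and the answer you expect is wrong. For example, family~(a) with $k=3$ dresses to the native tensor $h=-4p(\delta_{\zero^4}+\delta_{\one^4})+4p\sum_{\wt(y)=2}\delta_y$. This is tensor-prime and not a generalized equality. Rechecking $H^+$ contractions cannot help: the converse clause of \cref{lem:odd-cyc-q4-family} says every table in its alternatives already satisfies the hypothesis. Nor can you prove $\cA$- or $\cP$-transformability instead. Outcome~1 is a hard-or-tractable statement about all of $\Gamma$, and $q$ alone cannot decide it.

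\textbf{The missing idea: read off the native entries.} Compute the native $K$-coordinate entries of the dressed output by Fourier inversion; odd weights vanish in both families.
\begin{itemize}
\item In family~(a) with $k$ nonzero central pairs, $h(\zero^4)=h(\one^4)=2p(1-k)$. Also $h(\one_J)=2p(k-1)$ if $B_J\ne(0,0)$, and $h(\one_J)=2p(k+1)$ if $B_J=(0,0)$.
\item In family~(b), $h(\zero^4)=h(\one^4)=2c$ with $c=\sigma_1+\sigma_2+\sigma_3$, and $h(\one_{I_\ell})=2(2\sigma_\ell-c)$.
\end{itemize}
Whenever these endpoints are nonzero (family~(a) with $k=2,3$, or family~(b) with $c\ne0$), $h$ is an actual gadget output. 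It is an eight-vertex signature with both endpoints nonzero. So \cref{cor:eight-vertex-complete-closure} settles the enlarged set, and the conclusion transfers back to $\Gamma$.

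Only $c=0$ remains. With two nonzero pairs, $c=0$ forces $\sigma_2=-\sigma_1$, and up to port permutation $\widetilde h\doteq Y\otimes Y$. This is the genuine factorization case.

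With three nonzero pairs you need one more gadget: join two ports of one copy of $h$ to two ports of another, placing pair $i$ so that $\sigma_j^2+\sigma_k^2\ne0$. The resulting Walsh table is complement-symmetric with equal nonzero endpoints $4\sigma_i^2$. So \cref{lem:odd-cyc-q4-family}, together with odd $N$, would force each nonzero central pair to equal $(-4\sigma_i^2,-4\sigma_i^2)$. That gives $\sigma_j^2+\sigma_k^2=\sigma_i^2=(\sigma_j+\sigma_k)^2$, contradicting $\sigma_j\sigma_k\ne0$. Hence some phase contraction of this gadget is singular or has transfer outside $C_N$. That yields hardness via \cref{lem:binary-interpolation}, or contradicts the definition of $G(\Gamma)$.
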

	
	\begin{proof}
		The stated contractions satisfy
		condition~\ref{cond:height-one-contractions} for the cyclic group.
		In the Walsh tensor \(\widetilde q\), put
		\[
		p=\widetilde q_{0000},\qquad
		t=\widetilde q_{1111},\qquad
		x_I=\widetilde q_{\one_I}\quad(|I|=2).
		\]
		For each such \(I\), put \(B_I=(x_I,x_{I^c})\), so
		\(B_{I^c}\) has the same two coefficients in reverse order.
		By \cref{lem:odd-dih-all-arity-parity}, \(\widetilde q\) is even; the
		shared family \cref{lem:odd-cyc-q4-family} excludes exactly one nonzero
		Walsh endpoint.  Suppose first that \(p,t\ne0\), and put \(\lambda=t/p\).
		For each \(B_I\ne(0,0)\), it gives
		\((x_I,x_{I^c})=(-p\lambda\alpha_I^{-1},-p\alpha_I)\).
		For each zero pair choose \(\alpha_I\in\mu_N\) arbitrarily, using one
		index from each complementary pair as in \cref{lem:odd-cyc-q4-family}.
		By \cref{lem:odd-cyc-four-port-dressing}, prescribe
		\(R=\lambda^{-1}\) and \(r_I=\alpha_I\lambda^{-1}\).  Let \(h\) be the
		normalized representative of the gadget output, obtained by division
		by the recorded nonzero gadget scalar.  Its Walsh coefficient array
		\(\widetilde h\) has endpoints
		\((p,p)\), value \((-p,-p)\) on each complementary pair with \(B_I\ne(0,0)\),
		and zero on every pair with \(B_I=(0,0)\).
		
		Here the eight even-parity Walsh positions form the endpoint pair and
		three central pairs \(\{\one_I,\one_{I^c}\}\), where \(|I|=2\).
		In the inverse Walsh formula of \cref{eq:Walsh-coordinates}, a pair
		\(\{y,\bar y\}\) with common coefficient \(b\) contributes
		\[
		b\bigl((-1)^{x\cdot y}+(-1)^{x\cdot\bar y}\bigr).
		\]
		This vanishes when \(\wt(x)\) is odd.  At either endpoint all signs are
		positive.  At \(x=\one_J\), \(|J|=2\), the endpoint pair and the
		pair \(\{\one_J,\one_{J^c}\}\) have positive signs, and the other two
		central pairs have negative signs.  Thus, if exactly \(k\) of the three pairs \(B_I\) are
		nonzero,
		\[
		\begin{aligned}
			h(\zero^4)=h(\one^4)&=2p(1-k),\\
			h(\one_J)&=
			\begin{cases}
				2p(k-1),&\text{if \(B_J\ne(0,0)\)},\\
				2p(k+1),&\text{if \(B_J=(0,0)\)}.
			\end{cases}
		\end{aligned}
		\]
		This computes the K-coordinate entries of the same normalized output; Fourier
		inversion is not an additional gadget operation.  Thus \(k=1\) gives a
		nonzero multiple of \(X\otimes X\), up to a port permutation, whose
		factorization pulls back through the invertible binary attachments just
		constructed.  For
		\(k=0\), the original Walsh tensor is already a higher-arity GE, so
		\cref{lem:walsh-ge-terminal} gives outcome~1.  For \(k=2,3\), the output in
		native coordinates is even and has two nonzero endpoints.  Adjoin this actual
		quaternary and apply \cref{cor:eight-vertex-complete-closure}.  Its
		conclusion transfers to \(\Gamma\) by the gadget
		reduction and downward inheritance, again giving outcome~1.
		
		It remains that \(p=t=0\).  If just one complementary pair has nonzero coefficients,
		\cref{lem:walsh-ge-terminal} gives outcome~1 directly; no reflection is
		needed.  Otherwise label the three complementary pairs
		\(\{\one_{I_\ell},\one_{I_\ell^c}\}\), \(\ell=1,2,3\).
		Write each nonzero pair \(B_I\) as
		\(\rho_I(1,\alpha_I)\), choose the unique \(\beta_I\in\mu_N\) with
		\(\beta_I^2=\alpha_I\); for a zero pair choose \(\beta_I\in\mu_N\)
		arbitrarily.  Apply \cref{lem:odd-cyc-four-port-dressing} with
		\(R=1,r_I=\beta_I\).  Again divide
		the actual output by its recorded nonzero gadget scalar and denote the
		normalized tensor by \(h\).  Its nonzero Walsh coefficient pairs
		become \((\sigma_I,\sigma_I)\), where
		\(\sigma_I=\rho_I\beta_I\); set \(\sigma_I=0\) when \(B_I=(0,0)\), and write
		\(\sigma_\ell=\sigma_{I_\ell}\).  These attachments again give an invertible operation on each port.  Put \(c=\sigma_1+\sigma_2+\sigma_3\).  The same sign
		calculation, now with zero Walsh endpoints, gives
		\[
		\begin{aligned}
			h(\zero^4)=h(\one^4)&=2c,\\
			h(\one_{I_\ell})=h(\one_{I_\ell^c})
			&=2(2\sigma_\ell-c)\qquad(\ell=1,2,3),
		\end{aligned}
		\]
		and all odd-weight entries vanish.  If \(c\ne0\), this is a
		complement-symmetric mountain; after division by \(2c\), its endpoints
		are one and its three central coefficient-pair values are
		\[
		r_\ell=2\sigma_\ell/c-1,
		\qquad 1+r_1+r_2+r_3=1+2c/c-3=0.
		\]
		Up to the recorded nonzero scalar, this mountain is the output of invertible binary
		attachments to \(q\), so it may be
		adjoined by \cref{eq:finite-equivalent-enlargement}.  It is
		endpoint-nondegenerate, and
		\cref{cor:eight-vertex-complete-closure} makes the enlarged problem either
		tractable or \(\SharpP\)-hard.  Downward inheritance in the first case and
		the gadget reduction in the second transfer that conclusion to
		\(\Gamma\), giving outcome~1.
		
		Assume \(c=0\).  Exactly one nonzero pair is impossible.  Two nonzero Walsh pairs have
		values \(\sigma,-\sigma\); after permuting ports, \(\widetilde h\) factors as
		\[
		\sigma
		\begin{pmatrix}0&1\\-1&0\end{pmatrix}_{14}\otimes
		\begin{pmatrix}0&1\\-1&0\end{pmatrix}_{23},
		\]
		so the factor-reduction equivalence applies.  Finally suppose that all three \(\sigma_i\)
		are nonzero.  Choose distinct \(i,j,k\) with
		\(\{i,j,k\}=\{1,2,3\}\) and \(\sigma_j^2+\sigma_k^2\ne0\), which exists since otherwise subtraction
		would give \(2\sigma_i^2=0\).  Put pair \(i\) at \(12\mid34\), and join
		ports \(3,4\) of one copy to ports \(1,2\) of another by native edges.  The
		actual output is
		\begin{equation*}
			\frac14\widetilde G_i=
			\begin{pmatrix}
				\sigma_i^2&0&0&0\\
				0&-(\sigma_j^2+\sigma_k^2)&-2\sigma_j\sigma_k&0\\
				0&-2\sigma_j\sigma_k&-(\sigma_j^2+\sigma_k^2)&0\\
				0&0&0&\sigma_i^2
			\end{pmatrix}.
		\end{equation*}
		If every phase contraction of \(G_i\) were zero or had transfer class in
		\(C_N\) in Walsh coordinates, its equal nonzero endpoints and
		\cref{lem:odd-cyc-q4-family} would give \(\lambda=1\); complement symmetry
		and odd \(N\) then force every nonzero central coefficient to be
		\(-\sigma_i^2\).  In particular,
		\[
		\sigma_j^2+\sigma_k^2=\sigma_i^2=(\sigma_j+\sigma_k)^2,
		\]
		because \(\sigma_i=-(\sigma_j+\sigma_k)\), contradicting
		\(\sigma_j\sigma_k\ne0\).  Hence an actual phase contraction of \(G_i\) is a
		nonzero binary which is singular or has Walsh transfer class outside
		\(C_N\).  A singular binary gives unary factors and the odd-arity theorem;
		a nonsingular binary of infinite projective transfer order is handled by
		\cref{lem:binary-interpolation}.  Otherwise its transfer has finite
		projective order, contradicting the definition of \(G(\Gamma)\).
		Every hardness or tractability conclusion above gives outcome~1 through the
		cited theorem and the equivalent-enlargement reduction.  In a
		factor case, the attached binaries act invertibly, so the displayed
		factorization pulls back to \(q\); the transported mountain excludes the
		common-one-sided exception, and the factor-reduction equivalence gives the
		reduction in outcome~2.
	\end{proof}
	
	\begin{proof}[Completion of \cref{thm:odd-cyclic}]
		Apply \cref{lem:height-one-card-saturation}.  Its first two outcomes are the
		first two conclusions of the theorem.  In outcome~3, the listed two-port
		contractions and all their positive-arity tensor-prime factors are retained
		in the equivalent set supplied by outcome~3.
		Other fixed gadget outputs and any required factors are adjoined separately
		by \cref{eq:finite-equivalent-enlargement,thm:external-factor}, as in the
		quaternary closure proof.  If a nonbinary
		prime remains, choose one of minimum arity.  The argument preceding
		\cref{lem:odd-cyc-one-matching} proves
		condition~\ref{cond:height-one-contractions} for this signature.
		\Cref{thm:odd-cyc-higher-collapse} bounds its arity by four, and
		\cref{thm:odd-cyc-quaternary-closure} then gives hardness, tractability, or
		a tensor factorization.  The last possibility contradicts the choice of a
		tensor-prime signature, so outcome~1 holds.
		
		In the remaining case every tensor-prime signature is binary.
		By \cref{lem:height-one-card-saturation}\textnormal{(3)}, each has transfer
		class in \(G(\Gamma)\) and Walsh tensor proportional to \(D_rZ\).
		Hence \cref{eq:common-Walsh-product-presentation} holds,
		which proves the third conclusion of \cref{thm:odd-cyclic}.
	\end{proof}
	
	\begin{theorem}[Hardness, tractability, or a strict update]
		\label{thm:height-one-transfer-interface}
		Let \(\Gamma_0\) be the initial signature set, and let
		\((A,\Gamma)\) be a stage of the sequence in
		\cref{lem:normalization-ledger}; in particular,
		\[
		A\Gamma_0\subseteq\Gamma,
		\qquad A^{\mathsf T}XA=\mu_AX\quad(\mu_A\ne0),
		\qquad \KHolant(\Gamma)\equivT\KHolant(\Gamma_0).
		\]
		Suppose \(\Gamma\) is finite, preprocessed, algebraic, and factor-closed,
		and contains the height-one mountain
		\[
		B=\begin{pmatrix}c_0&u\\v&c_1\end{pmatrix},
		\qquad c_0c_1\ne0.
		\]
		Then one of the following occurs:
		\begin{enumerate}
			\item \(\KHolant(\Gamma)\) is \(\SharpP\)-hard, or
			\(\Tract(K\Gamma)\) holds;
			\item there are \(A^+\in GO(X)\) and a finite preprocessed algebraic
			factor-closed set \(\Gamma^+\), continuing the same
			coordinate-transport sequence, such that
			\[
			A^+\Gamma_0\subseteq\Gamma^+,
			\qquad
			(A^+)^{\mathsf T}XA^+=\mu_{A^+}X\quad(\mu_{A^+}\ne0),
			\qquad
			\KHolant(\Gamma^+)\equivT\KHolant(\Gamma),
			\]
			and, for \(D=A^+A^{-1}\), the transported binary
			\(DBD^{\mathsf T}\) remains in \(\Gamma^+\) up to its known nonzero
			scalar.  In this alternative every nonzero directly realizable ordered
			binary over both \(\Gamma\) and \(\Gamma^+\) is nonsingular with
			finite projective transfer order, so
			\(G(\Gamma)\) and \(G(\Gamma^+)\) are defined.  Moreover, either
			\[
			[A]^{-1}G(\Gamma)[A]
			\subsetneq
			[A^+]^{-1}G(\Gamma^+)[A^+],
			\]
			or both values of \(\nu\) are defined and
			\[
			\nu(\Gamma^+)<\nu(\Gamma).
			\]
			Both quantities satisfy the non-strict comparisons and fixed-field
			conditions of \cref{def:height-one-outer-successor}.
		\end{enumerate}
	\end{theorem}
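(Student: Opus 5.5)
The plan is a case split on the binary transfers realizable over \(\Gamma\), feeding each surviving case into results already proved.

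\emph{Step 1: degenerate binaries.} First test whether some nonzero directly realizable ordered binary over \(\Gamma\) is singular, or is nonsingular with a transfer of infinite projective order. The transported mountain \(B\) has nonzero diagonal entries, so both charge signs occur and \cref{lem:binary-interpolation} applies. In the singular case factor closure gives nonzero unary factors. In the infinite-order case the lemma gives a rank-one binary, and the singular case then gives unaries. Either way \cref{lem:common-unary-exit} (equivalently \cref{thm:external-odd}) yields outcome~1, and the conclusion transfers back to \(\Gamma\) through the equivalent enlargement and \cref{lem:downward-inheritance}.

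\emph{Step 2: the finite-group split.} Otherwise \cref{lem:bounded-torsion} makes \(G(\Gamma)\) a finite subgroup of \(\PGL_2(L_0)\), and \cref{thm:finite-group-split} applies to the mountain \(B\).
\begin{itemize}
\item If it yields a nonsingular weighted equality \(\diag(p,q)\), take \(D=\diag(a,b)\) with \(a^2p=b^2q=1\), exactly as in the proof of \cref{cor:eight-vertex-complete-closure}. This \(D\) lies in \(GO(X)\) and \(D\Gamma\) contains \(I\), so \cref{thm:eq2-kholant-dichotomy} decides \(D\Gamma\). Then \cref{lem:GOX-tract-invariance} and \cref{lem:normalization-ledger} transport tractability or hardness back to \(\Gamma\), giving outcome~1.
\item Otherwise apply the legal normalization \(D\in GO(X)\) from outcome~2 of \cref{thm:finite-group-split}, recorded as a global-normalization step of \cref{lem:normalization-ledger}. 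This puts \(G(D\Gamma)\) in odd cyclic or odd dihedral form; in the cyclic case, conjugation by \(P\) gives the Walsh form \(C_N\) used in \cref{thm:odd-cyclic}. The transported mountain \(DBD^{\mathsf T}\) remains present up to its recorded scalar, and factor closure is preserved because a \(GO(X)\) map acts portwise.
\end{itemize}

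\emph{Step 3: the odd cyclic and odd dihedral cases.} Apply \cref{thm:odd-cyclic} or \cref{thm:odd-dihedral} to the stage \((DA,D\Gamma)\).
\begin{itemize}
\item Their first outcome transports back to outcome~1.
\item Their third outcome gives \(\cP\)-transformability of \(K(D\Gamma)\). \Cref{lem:normalization-ledger} pulls the common presentation back, so \(\Tract(K\Gamma)\) holds, again outcome~1.
\item Their second outcome supplies a later stage \((A^+,\Gamma^+)\) with a strict increase of the pulled-back group or a strict decrease of \(\nu\). This is outcome~2 of the present theorem, with \(D=A^+A^{-1}\).
\end{itemize}
One must check that \(A^+\in GO(X)\). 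Every step of the ledger multiplies \(A\) by a \(GO(X)\) element or leaves it unchanged. Moreover, \(A\) itself satisfies \(A^{\mathsf T}XA=\mu_AX\), so \(A\in GO(X)\), and the composite \(A^+\) is in \(GO(X)\) as well.

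\emph{Main obstacle: bookkeeping across the normalization.} I expect the hardest part to be bookkeeping rather than mathematics. The strict-progress comparison must be stated relative to the original stage \((A,\Gamma)\), not the normalized stage \((DA,D\Gamma)\). The ledger lemma shows that a global normalization leaves \(\overline G\) unchanged, and \(\nu\) is invariant under the portwise invertible map \(D\). Hence the strict inequality obtained relative to \((DA,D\Gamma)\) is the same inequality relative to \((A,\Gamma)\). The non-strict comparisons and fixed-field conditions are inherited from \cref{def:height-one-outer-successor}, exactly as asserted in the cited theorems.
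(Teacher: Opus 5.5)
Your proposal is correct and follows essentially the same route as the paper. It disposes of singular or infinite-order binaries through \cref{lem:binary-interpolation} and the unary dichotomy, then splits the finite transfer group into the weighted-equality case (diagonal \(GO(X)\)-normalization and \cref{thm:eq2-kholant-dichotomy}) or the odd cyclic/dihedral forms handled by \cref{thm:odd-cyclic,thm:odd-dihedral}, noting that the \(GO(X)\)-normalization leaves the pulled-back group and \(\nu\) unchanged. The paper also dispatches odd-arity members and the case with no nonbinary tensor-prime signature at the outset; these are absorbed by the dichotomies and third outcomes you cite. Your one slip is cosmetic: \(G(\Gamma)\) lies in \(\PGL_2\) of the field generated by the entries of \(\Gamma\), and it is the pulled-back group \([A]^{-1}G(\Gamma)[A]\) that lies in \(\PGL_2(L_0)\).
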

	
	\begin{proof}
		Maintain factor closure using the factor-reduction equivalence.  Each
		nontrivial factorization produces positive-arity factors of strictly smaller
		arity, so this process terminates.  An odd-arity member is handled by the
		odd-signature theorem; a rank-one binary supplies unaries and hence the same
		conclusion.  If no nonbinary tensor-prime signature remains, the
		arity-at-most-two alternative gives tractability.
		
		All equivalent augmentations and global transformations satisfy
		\cref{lem:normalization-ledger}.  Whenever the groups are defined, their
		pullbacks are nested.  The old tensor-prime signatures, up to one common
		invertible transformation, remain in the set, so \(\nu\) cannot increase.  A strict
		group inclusion or decrease of \(\nu\) gives outcome~2.  An augmentation
		preserving both quantities is retained with the same matrix and group;
		\cref{lem:height-one-card-saturation} proves separately that the required
		recursive contraction and factor extraction is finite.
		
		If \(B\) is singular, or if \(BX\) is nonsingular of infinite projective
		order, \cref{lem:binary-interpolation} supplies the rank-one binary used
		in the preceding paragraph.  The same argument applies to every other directly realizable binary that
		is singular or whose nonsingular transfer has infinite projective order.  We may therefore assume
		that every
		nonzero directly realizable binary is nonsingular and of finite projective transfer order.
		Then \cref{lem:bounded-torsion,thm:finite-group-split} give either
		a nonsingular weighted equality in the current set or, after one legal
		\(GO(X)\)-normalization, an odd cyclic or odd dihedral
		residual form.  In the equality case, write
		\(E=\diag(p,q)\in\Gamma\), where \(pq\ne0\), and choose
		a diagonal \(D\in GO(X)\) with \(DED^{\mathsf T}\doteq I\).  The resulting
		legal global normalization satisfies the hypotheses of
		\cref{thm:eq2-kholant-dichotomy}.  Its hard conclusion composes back to
		\(\Gamma\), while its tractable conclusion pulls back through the problem
		equivalence and \cref{lem:normalization-ledger}; hence this is outcome~1.
		
		Apply \cref{thm:odd-cyclic,thm:odd-dihedral} in the two residual forms.
		Their first conclusions give outcome~1, and their updates satisfy
		outcome~2 by \cref{def:height-one-outer-successor}.  Their remaining
		conclusions give \cref{eq:common-Walsh-product-presentation}, which
		\cref{lem:normalization-ledger} transports back to \(\Tract(K\Gamma)\).
		This again gives outcome~1 and completes the case analysis.
	\end{proof}
	
	\begin{proof}[Proof of \cref{cor:height-one-complete}]
		Maintain \((A,\Gamma)\) and the invariant of
		\cref{lem:normalization-ledger} over the fixed initial number field, and
		apply \cref{thm:height-one-transfer-interface}.  Its first outcome is the
		required hardness or tractability conclusion.  The second outcome gives
		an update satisfying \cref{def:height-one-outer-successor}; by
		\cref{lem:height-one-progress-well-founded}, only finitely many such
		updates are possible.  Between these updates, factor closure terminates by
		strict arity descent, and the recursive contractions and factor extraction
		terminate by \cref{lem:height-one-card-saturation}.  Repeated application
		therefore reaches the first outcome.  Finally,
		\cref{lem:normalization-ledger} transports the conclusion back to the
		initial signature set.
	\end{proof}
	
	\section{Mountain Height Two}
	\label{sec:height-two}
	
	A height-two mountain is a quaternary signature with
	\(M(0000)M(1111)\ne0\) that vanishes on odd Hamming weight.  It is an eight-vertex
	signature with those nonzero endpoints.  If it is tensor-prime,
	\cref{thm:quaternary-kholant-dichotomy} applies.  Otherwise the
	parity and endpoint conditions force a \(2+2\) product of nonsingular
	weighted equalities; the factor-reduction equivalence places such an equality
	in the current signature set.  A legal diagonal \(GO(X)\)-normalization then
	sends it to \(I\).  Then \cref{thm:eq2-kholant-dichotomy} closes that
	case.  Both arguments are included in
	\cref{cor:eight-vertex-complete-closure}.
	
	\begin{theorem}[Dichotomy with a height-two mountain]
		\label{cor:height-two-complete}
		Let \(\Gamma\) be a finite algebraic signature set containing a height-two
		mountain.  Then
		\[
		\Tract(K\Gamma)
		\qquad\text{or}\qquad
		\KHolant(\Gamma)\text{ is }\SharpP\text{-hard}.
		\]
	\end{theorem}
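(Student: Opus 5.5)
The plan is to reduce directly to \cref{cor:eight-vertex-complete-closure}. By \cref{def:mountain} with \(d=2\), a height-two mountain \(M\in\Gamma\) is a quaternary signature in \(K\)-coordinates. Its support lies in \(HW_4^=\cup\{0000,1111\}\), that is, in the weight-two layer together with the two endpoints. It also satisfies \(M(0000)M(1111)\ne0\). Every word in this support has even Hamming weight, so \(M\) vanishes on odd weight and is an eight-vertex signature with both endpoint values nonzero. This is exactly the hypothesis of \cref{cor:eight-vertex-complete-closure}.

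Applying that corollary to \(\Gamma\) itself gives the dichotomy at once. If \(\Tract(K\Gamma)\) holds, the first alternative of the theorem is satisfied. Otherwise \(\KHolant(\Gamma)\) is \(\SharpP\)-hard. No normalization ledger or update sequence is needed, because the corollary is stated for an arbitrary finite algebraic set containing such an \(M\).

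For completeness I would recall how the corollary covers both branches, and check that its hypotheses hold here.

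\begin{enumerate}
\item If \(M\) is tensor-prime, \cref{thm:quaternary-kholant-dichotomy} applies directly.
\item If \(M\) is decomposable, the nonzero endpoints rule out a \(1+3\) split. The odd-weight zeros then force a \(2+2\) split into nonsingular diagonal binaries.
\item The factor closure of \cref{lem:factor-saturation} adjoins such a binary \(E=\diag(p,q)\). Its common-one-sided exception cannot occur, because \(M\) itself supports the charges \(\pm2\).
\item The diagonal normalization \(D=\diag(a,b)\in GO(X)\), with \(a^2p=b^2q=1\), sends \(E\) to \(I\). Then \cref{lem:GOX-tract-invariance} and \cref{thm:eq2-kholant-dichotomy} finish the decomposable case.
\end{enumerate}

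The only point needing care, and so the main obstacle, is that the corollary is applied to the full current set \(\Gamma\) rather than to \(\{M\}\) alone. That is precisely how it is stated, so the reduction is immediate.
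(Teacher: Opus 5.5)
Your proposal is correct and is the paper's own argument. The paper's proof likewise notes that a height-two mountain is an eight-vertex signature with both endpoint values nonzero, and applies \cref{cor:eight-vertex-complete-closure} to \(\Gamma\) itself. Your recap of the tensor-prime branch and the \(2+2\) diagonal-factor branch with its \(GO(X)\) normalization matches the proof of that corollary.
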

	
	\begin{proof}
		The mountain is an eight-vertex signature with two nonzero endpoints, so apply
		\cref{cor:eight-vertex-complete-closure}.  Hardness already
		reduces to \(\KHolant(\Gamma)\), while tractability is the
		predicate required here.
	\end{proof}
	
	\section{Mountain Height Three}
	\label{sec:height-at-least-three}
	
	\begin{theorem}[Height-three mountain dichotomy]
		\label{thm:height-three-complete}
		Let \(\Gamma\) be a finite algebraic set of even-arity signatures such that
		\(\chi_r(x)\in3\mathbb Z\) for every arity-\(r\) member \(f\in\Gamma\)
		and every \(x\in\supp(f)\).  Suppose that \(\Gamma\) contains a mountain
		\[
		M=a\delta_{\zero^6}+b\delta_{\one^6}+Q,
		\qquad ab\ne0,\qquad Q\text{ EO}.
		\]
		Then \(\Tract(K\Gamma)\) holds or \(\KHolant(\Gamma)\) is
		\(\SharpP\)-hard.
	\end{theorem}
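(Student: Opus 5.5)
We would reduce to the height-one or height-two cases already settled, using contractions of the mountain $M$ that lower the arity while keeping the endpoint terms alive. Because every supported half-charge lies in $3\mathbb Z$, charge conservation places every gadget output of arity $2k$ on half-charges in $3\mathbb Z\cap[-k,k]$. The binary and quaternary outputs are therefore EO. In particular, every directly realizable binary is a weighted disequality $(0,\alpha,\beta,0)$, and every quaternary output is an EO signature. First, by \cref{lem:charge-bridge}'s preamble and (REO1), the set is two-sided, so factor extraction is available. If $\neg\Tract(K\Gamma)$, \cref{thm:hc-relative-eo-terminal} applies to every EO signature we realize. It forces every such signature to be a product of EO--ARS binaries $(0,a,\overline a,0)$ on a perfect matching, since otherwise the problem is hard. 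We assume $\neg\Tract(K\Gamma)$ throughout and aim for hardness.

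\textbf{Loop contractions.} A self-loop on two ports $i,j$ of $M$ kills both endpoints, so it produces only a quaternary EO signature $Q_{ij}$. This is the contraction $\partial^X_{(i,j)}Q$, and by the previous paragraph it must lie in $\mathcal B_{\rm ARS}$. Running over all port pairs, and also over loops through the available binaries $(0,a,\overline a,0)$, strongly constrains $Q$. We would show that either $Q$ is itself a matching product of EO--ARS binaries on $[6]$, or some loop output leaves $\mathcal B_{\rm ARS}$. The second case gives hardness. The first case makes $Q$ supported on a matching code $C(M_0,\one)$ with ARS product coefficients.

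\textbf{Two-copy contractions.} Join three ports of one copy of $M$ to three ports of another copy through $X$-edges. The endpoint terms pair $\zero^3$ with $\one^3$, giving a six-ary output $M'$. This output contains $ab\,\delta_{\zero^3\one^3}$-type terms and their complements, together with mixed $Q$--$Q$ and endpoint--$Q$ terms. Choose the two exposed triples so that after permutation the output has nonzero endpoints $ab$ on $(\zero^6,\one^6)$ and its central part is determined by $Q$. Alternatively, keep two ports exposed on each side and join four, which yields a quaternary. Charge forces the quaternary to be EO, and the endpoint terms cancel unless the arrangement is unbalanced.

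The useful quaternary comes from joining four ports of one copy to four ports of the other, leaving two ports on each copy. The resulting quaternary is EO; when $Q$ has the matching form, we compute it explicitly. Suppose it is a product of ARS binaries in $\mathcal B_{\rm ARS}$ for all choices. The endpoint contribution $ab(\delta_{0011}+\delta_{1100})$ must then combine with the $Q$-terms into a matching product. We expect that this forces a relation $|a|=|b|$ together with a specific phase relation to the binary coefficients of $Q$. In that case, a $GO(X)$ diagonal normalization places $M$ in a common affine or product $K$-presentation, so $\Tract(K\Gamma)$ holds, which contradicts our assumption. Otherwise some output is an EO quaternary outside $\mathcal B_{\rm ARS}$ or a complementary quaternary $p\delta_s+q\delta_{\bar s}$. \Cref{lem:hc-rel-complementary-unary} and \cref{lem:common-unary-exit} then give hardness.

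\textbf{Main obstacle.} The hardest step is the exhaustive case check on the matching $M_0$ of $Q$ relative to the joined port sets, including the possibility $Q=0$, in which case $M$ is a weighted $\GE_6$. That last subcase is immediate: for $Q=0$, $M$ is a pure generalized equality of arity $6\ge4$, and \cref{lem:pure-ge} settles it. For nonzero $Q$, we would try first to choose the joined ports to be a union of matching edges of $M_0$. This keeps the $Q$-part simple and exposes the endpoint terms cleanly. If that fails, we would fall back on a finite exact certificate listing the matching orbits on six ports.
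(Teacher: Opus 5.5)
\textbf{There is a genuine gap: the two steps that produce the hard quaternary are unsupported or wrong.} Your outer frame does match the paper's. You assume $\neg\Tract(K\Gamma)$, use $\chi\in3\mathbb Z$ to make every binary and quaternary gadget output EO, and finish with \cref{thm:hc-relative-eo-terminal} once some realized EO quaternary lies outside $\mathcal B_{\rm ARS}$. Your $Q=0$ case via \cref{lem:pure-ge} is also fine.

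\textbf{First step.} You claim the self-loops force $Q$ to be a matching product unless some loop output leaves $\mathcal B_{\rm ARS}$. You give no argument, and for the fifteen native loops the claim is false. The full-support vector in \cref{rem:cert-high-charge-q6-one-loop-insufficient} is not a matching product. Yet every $\partial_{ij}Q$ has one vanishing complementary pair, and two nonzero pairs whose products agree and whose entries have equal modulus. So all fifteen loop outputs lie even in $\mathcal B_{\rm ARS}$; for instance $\partial_{12}Q=(0,1,-i,0)_{34}\otimes(0,-2i,2,0)_{56}$. You give no reason why loops through extracted binaries would do better.

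The paper instead adds the $5400$ labelled two-copy contractions $C_{A,B,\pi}$ of \cref{def:hc-q6-joint-deck}. It then proves, by an exhaustive exact certificate over \emph{arbitrary} coefficient vectors $(q_S)$ (\cref{prop:cert-high-charge-q6-joint-deck}: $3^{15}$ signed branches reduced to $24$ orbits, closed by Z3 queries), the following: if all loop and two-copy outputs lie in $\mathcal B_\times^0([4])$, then $\supp(Q)\subseteq\{S,U\setminus S\}$. A loop $\partial_{ij}M$ with $i\in S$, $j\notin S$ is then nonzero with support at most two, a contradiction. Your fallback certificate over matching orbits presupposes exactly the matching structure that is not available.

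\textbf{Matching case.} Here you look for a tractable branch that does not exist. Suppose $Q$ is a matching product on $\{w_0w_1,w_2w_3,w_4w_5\}$. Join first-copy ports $w_2,w_3,w_4,w_5$ to second-copy ports $w_1,w_4,w_3,w_5$, and expose $(w_0^{(1)},w_1^{(1)},w_0^{(2)},w_2^{(2)})$.
\begin{itemize}
\item With both copies central, the alternating components are a single edge and a path of length $9$, both odd, giving four nonzero entries.
\item An endpoint--central term vanishes, because the central copy has a matching edge with both ports wired to the constant copy.
\item The two endpoint--endpoint terms give $ab$ at $0011$ and $1100$.
\end{itemize}
So all six EO entries are nonzero (\cref{lem:hc-transversal-cube-terminal}). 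The supported triples $1100\oplus1010\oplus0110=0000$ and $0011\oplus0101\oplus1001=1111$ give $\neg\EOTract(\{h\})$. The output is therefore hard on its own, with no condition on $|a|$ or $|b|$.

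Separately, your intended conclusion ``a $GO(X)$ normalization puts $M$ in an affine or product presentation, so $\Tract(K\Gamma)$ holds'' is not a valid inference even in principle. $\Tract$ requires a single basis for the edge and every member of $\Gamma$ at once, and no analysis of the mountain alone can supply that.
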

	
	The aim is to adjoin a quaternary EO signature outside the binary
	matching-product class.  Then \cref{thm:hc-relative-eo-terminal} gives the
	conclusion for the entire signature set.  A two-copy construction handles a
	matching-product central layer; the other central layers are covered by
	the six-port loop and two-copy calculations below.
	
	\subsection{Matching Products and Two-Copy Constructions}
	
	Let \(V\) be an ordered set of even cardinality.  For distinct \(i,j\in V\),
	write \(\partial_{ij}f\) for the contraction with a native \(X\)-edge
	\[
	(\partial_{ij}f)(x_{V\setminus\{i,j\}})
	=f(0,1,x_{V\setminus\{i,j\}})
	+f(1,0,x_{V\setminus\{i,j\}}).
	\]
	We also write \(\partial_e\) when \(e=\{i,j\}\).
	Define the matching-product class
	\[
	\mathcal B_\times(V)=
	\left\{\lambda\bigotimes_{\{u,v\}\in P}
	\begin{pmatrix}0&p_{uv}\\q_{uv}&0\end{pmatrix}_{u,v}:
	\begin{array}{l}
		P\text{ is a perfect matching of }V,\\
		\lambda\prod_{\{u,v\}\in P}p_{uv}q_{uv}\ne0
	\end{array}\right\},
	\]
	and put \(\mathcal B_\times^0=\mathcal B_\times\cup\{0\}\).
	Each factor has a specified orientation; reversing its ports exchanges its
	two nonzero entries.  In particular,
	\(\mathcal B_{\rm ARS}\subseteq\mathcal B_\times\).
	An \emph{\(X\)-matching product} is a nonzero scalar multiple of
	\(\bigotimes_{e\in P}X_e\).  Such binary factors are tensor-prime, so
	uniqueness of tensor-prime factorization identifies both their port pairs
	and their projective values.  We abbreviate this uniqueness as UPF.

	\begin{lemma}[A matching-product central layer yields a hard quaternary]
		\label{lem:hc-transversal-cube-terminal}
		Let \(d\ge3\), and let \(\Gamma\) be a finite algebraic signature set
		containing
		\[
		M=a\delta_{\zero^{2d}}+b\delta_{\one^{2d}}+Q,
		\qquad ab\ne0,
		\]
		where, for a perfect matching
		\(P=\{\{u_1,v_1\},\ldots,\{u_d,v_d\}\}\) of the ordered ports,
		\[
		Q=\lambda\bigotimes_{k=1}^{d}
		\begin{pmatrix}0&p_k\\q_k&0\end{pmatrix}_{u_k,v_k},
		\qquad \lambda\prod_{k=1}^{d}p_kq_k\ne0.
		\]
		There is a fixed two-copy gadget over \(\{M\}\), using only native
		\(X\)-edges, whose ordered boundary signature \(h\) is EO and has the six
		weight-two entries
		\[
		\bigl(h_{0011},h_{0101},h_{0110},h_{1001},h_{1010},h_{1100}\bigr)
		=(ab,c_1,c_2,c_3,c_4,ab),
		\qquad c_1c_2c_3c_4\ne0.
		\tag{WT.18a}
		\]
		In particular,
		\[
		\KHolant(\Gamma,h)\leT\KHolant(\Gamma).
		\]
		The signature \(h\) is not a tensor product of two nonsingular
		anti-diagonal binaries.  Moreover, its supported triples include
		\[
		1100\oplus1010\oplus0110=0000,
		\qquad
		0011\oplus0101\oplus1001=1111.
		\]
		The first identity rules out the positive sign and the second rules out
		the negative sign in the support clause of \(\EOTract\).  Hence
		\(\neg\EOTract(\{h\})\), and \cref{thm:external-eo} makes
		\(\KHolant(\{h\})\) \(\SharpP\)-hard.  Since
		\[
		\KHolant(\{h\})
		\leT\KHolant(\Gamma,h)
		\leT\KHolant(\Gamma),
		\]
		the problem \(\KHolant(\Gamma)\) is \(\SharpP\)-hard.
	\end{lemma}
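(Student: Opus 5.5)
The plan is to build $h$ from two copies of $M$ joined along every matched pair except the first, and then read off its entries from the three possible kinds of term.

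\emph{Gadget.} Take two copies of $M$. In each copy, leave both ports $u_1,v_1$ of the first matched pair exposed. For every $k=2,\ldots,d$, join $u_k$ of copy one to $u_k$ of copy two by a native $X$-edge, and likewise $v_k$ to $v_k$. Order the boundary as $(u_1,v_1)$ of copy one followed by $(u_1,v_1)$ of copy two. The output $h$ is a fixed direct $X$-gadget over $\{M\}$, so $\KHolant(\Gamma,h)\leT\KHolant(\Gamma)$ by \cref{eq:direct-gadget-reduction}.

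\emph{Classifying contributing pairs.} Let $x$ and $y$ be the local words of the two copies. The $X$-edges force $y=\bar x$ on the $2d-2$ joined ports. I expect the only real care to be needed in this case analysis.
\begin{itemize}
\item If $x=\zero^{2d}$, then $y$ has weight at least $2d-2>d$, using $d\ge3$. Since $M$ is supported only on weights $0,d,2d$, this forces $y=\one^{2d}$. The contribution is $ab$ at output word $0011$; symmetrically $x=\one^{2d}$ contributes $ab$ at $1100$.
\item An endpoint of one copy paired with a central word of the other is impossible, by the same weight count.
\item Two central words must each be supported by $Q$. Hence every matched pair of each copy reads $01$ or $10$, and in particular the exposed pair of each copy does.
\end{itemize}
So the central–central terms land exactly on the four words $0101,0110,1001,1010$ and never on $0011$ or $1100$. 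This shows that $h$ is EO with the two displayed $ab$ entries.

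\emph{The entries $c_1,\ldots,c_4$.} On a joined pair $k$, the two copies read either $x=(0,1),y=(1,0)$ or the reverse. Each choice contributes $p_kq_k$, so pair $k$ contributes $2p_kq_k\ne0$. The exposed pairs of the two copies are independent. Each exposed pair contributes $p_1$ on reading $01$ and $q_1$ on reading $10$. Hence
\[
(c_1,c_2,c_3,c_4)=\lambda^2\prod_{k\ge2}(2p_kq_k)\,(p_1^2,\ p_1q_1,\ q_1p_1,\ q_1^2),
\]
all nonzero, which is (WT.18a).

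\emph{Remaining claims.} The support of $h$ has six points. A tensor product of two nonsingular anti-diagonal binaries has support of size four, so $h$ is not of that form. The two displayed xor identities are read directly from the support. The first identity produces $0000\notin\supp(h)$ with charge $q_4(0000)<0$, which violates the support clause of $\EOTract$ for $\varepsilon=+1$. The second produces $1111\notin\supp(h)$ with positive charge, which violates it for $\varepsilon=-1$. Hence $\neg\EOTract(\{h\})$, and \cref{thm:external-eo} gives hardness of $\KHolant(\{h\})$. Restriction together with the gadget reduction transfers this hardness to $\KHolant(\Gamma)$.
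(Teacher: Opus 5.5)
Your proof is correct. It uses the same two-copy strategy as the paper, but with a different wiring.

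The paper fixes \(w_0,w_{2d-1}\) and transposes \((w_1,w_2),(w_3,w_4),\ldots,(w_{2d-3},w_{2d-2})\). It exposes the matched pair \((w_0,w_1)\) on the first copy but the unmatched pair \((w_0,w_2)\) on the second. With both copies central, the matching edges and wires then form one isolated edge and a single alternating path of odd length \(4d-3\), so every supported boundary word has a unique extension.

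Your parallel wiring instead produces \(d-1\) alternating four-cycles, each with two extensions of equal weight \(p_kq_k\). You therefore need the easy observation that these extensions cannot cancel, which gives the factor \(2p_kq_k\).

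The two arguments also differ for the mixed endpoint--central terms. The paper notes that some central matching edge has both ports wired to the constant copy and is forced equal. Your count \(2<d<2d-2\) uses only that \(\supp(M)\) lies on weights \(0,d,2d\).

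The paper's wiring gives the ``one nonzero extension per supported word'' property that it quotes again in \cref{thm:hc-q6-complete}. Your wiring gives a closed form in which the central part of \(h\) is visibly a scalar multiple of the first anti-diagonal factor tensored with itself, together with a shorter exclusion argument. Neither extra property is needed for the statement.
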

	
	\begin{proof}
		Write the matching of \(Q\) as
		\(P=\{\{w_0,w_1\},\ldots,\{w_{2d-2},w_{2d-1}\}\}\).  On two copies of
		the mountain, define a port bijection \(p\) by fixing
		\(w_0,w_{2d-1}\) and transposing
		\[
		(w_1,w_2),(w_3,w_4),\ldots,(w_{2d-3},w_{2d-2}).
		\]
		Leave \((w_0,w_1)\) external on the first copy and \((w_0,w_2)\)
		external on the second, in the external order
		\((w_0^{(1)},w_1^{(1)},w_0^{(2)},w_2^{(2)})\).  For every
		\(r\notin\{0,1\}\), join the
		first-copy port \(w_r\) to the second-copy port \(p(w_r)\) by one native
		\(X\)-edge.  With both copies central, the matching edges and wires form
		two alternating components: the edge between the first external pair and
		a path of length $4d-3$ between the second.  Both lengths are odd, so each
		weight-one/weight-one boundary pattern has a unique extension and a
		nonzero coefficient $c_i$.
		
		If both copies use endpoints, the native wires force the two constant words
		to be complementary.  This gives exactly the other two EO patterns, with
		unique extensions of value \(ab\).  A mixed endpoint--central extension is
		impossible: for $d\ge3$, some central matching edge has both ports wired to
		the constant copy and is forced equal.  Thus each nonzero coefficient in \emph{(WT.18a)} has exactly one nonzero extension.
		
		The six-word support excludes membership in \(\mathcal B_\times\).  Moreover
		\[
		1100\oplus1010\oplus0110=0000,
		\qquad
		0011\oplus0101\oplus1001=1111.
		\]
		These two supported triples rule out the two common signs in the support
		clause of \(\EOTract\).  The weighted EO dichotomy therefore makes the
		output hard.  By the equivalence convention, the output may be adjoined to
		the signature set containing the mountain.
	\end{proof}
	
	\subsection{Six-Port Loop and Two-Copy Contractions}
	\label{subsec:hc-q6-base}
	
	The finite calculation for both families of contractions is recorded in
	the appendix; the reduction uses only its stated conclusion.
	
	\begin{definition}[Loop and two-copy contractions of a six-port mountain]
		\label{def:hc-q6-joint-deck}
		Let \(U=[6]\), and let
		\[
		M=a\delta_{\zero^6}+b\delta_{\one^6}+Q,
		\qquad
		Q=\sum_{S\in\binom U3}q_S\delta_{\one_S},
		\qquad ab\ne0.
		\]
		For each unordered pair \(\{i,j\}\in\binom U2\), let
		\(L_{ij}=\partial_{ij}M\) be the four-port EO signature obtained by one
		native \(X\)-self-loop, with the residual ports in increasing order.  For
		\(R\in\binom{U\setminus\{i,j\}}2\),
		\begin{equation*}
			L_{ij}(\one_R)=q_{R\cup\{i\}}+q_{R\cup\{j\}}.
			\tag{Q6.3}\label{eq:hc-q6-loop-deck}
		\end{equation*}
		These are the fifteen one-copy contractions.
		
		For a two-copy contraction, choose unordered two-element external sets
		\(A,B\in\binom U2\), give them their canonical increasing orders
		\[
		A=(a_0,a_1),
		\qquad
		B=(b_0,b_1),
		\]
		and retain the first/second labels of the two copies of \(M\).
		Put \(I_A=U\setminus A\) and \(I_B=U\setminus B\), and choose a
		bijection \(\pi:I_A\to I_B\).  Leave \(A\) external on the first copy
		and \(B\) external on the second copy, in the order
		\((a_0,a_1,b_0,b_1)\), and join every \(r\in I_A\) to
		\(\pi(r)\in I_B\) by a native \(X\)-edge.  For
		\(x,y\in\{0,1\}^2\), put
		\[
		A_x=\{a_r:x_r=1\},
		\qquad
		B_y=\{b_r:y_r=1\}.
		\]
		The resulting four-port EO signature \(C_{A,B,\pi}\) has, when
		\(\wt(x)+\wt(y)=2\),
		\begin{equation*}
			C_{A,B,\pi}(x,y)
			=ab\,\one[(x,y)\in\{(00,11),(11,00)\}]
			+\sum_{\substack{T\subseteq I_A\\|T|=3-\wt(x)}}
			q_{A_x\cup T}\,
			q_{B_y\cup(I_B\setminus\pi(T))}.
			\tag{Q6.4}\label{eq:hc-q6-two-copy-deck}
		\end{equation*}
		There are \(15^2\cdot4!=5400\) canonically labelled two-copy contractions.
		The right-hand side of \eqref{eq:hc-q6-two-copy-deck} is the coefficient of
		the complete direct contraction; no individual summand is treated as a
		separately realizable signature.
	\end{definition}
	
	\begin{lemma}[Support forced by six-port contractions]
		\label{lem:hc-q6-joint-rigidity}
		Let \(M\), \(L_{ij}\), and \(C_{A,B,\pi}\) be as in
		\cref{def:hc-q6-joint-deck}.  Suppose that every one of the fifteen
		signatures \(L_{ij}\) and every one of the \(5400\) canonically labelled
		signatures \(C_{A,B,\pi}\) belongs to
		\(\mathcal B_\times^0([4])\) on its four ordered external ports.  Then
		\[
		\operatorname{supp}(Q)\subseteq\{S,U\setminus S\}
		\qquad\text{for some }S\in\binom U3.
		\tag{Q6.5}
		\]
		This holds for arbitrary complex values of the \(q_S\) and every
		\(ab\ne0\).
	\end{lemma}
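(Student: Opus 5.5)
The plan is to work only with the twenty coefficients \(q_S\), \(S\in\binom U3\), and with the constraints that membership in \(\mathcal B_\times^0([4])\) imposes on the explicit formulas \eqref{eq:hc-q6-loop-deck} and \eqref{eq:hc-q6-two-copy-deck}. A four-port EO signature lies in \(\mathcal B_\times^0([4])\) exactly when it is zero, or when there is a perfect matching \(\{\{u,v\},\{w,z\}\}\) of \([4]\) such that the support is the four weight-two words with \(x_u\ne x_v\), \(x_w\ne x_z\), and the two words that are constant on both pairs vanish. So on the six weight-two positions, one complementary position pair is zero and the other two pairs are nonzero and satisfy the rank-one relation \(h_xh_{\bar y}=h_yh_{\bar x}\). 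In the all-zero case every entry is zero. Each of the \(15+5400\) signatures therefore gives a disjunction of polynomial conditions, namely zero or one of three matching patterns with a rank-one identity. The proof is a case elimination on the support pattern of \(Q\).

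\emph{Step 1 (loop contractions).} Use \eqref{eq:hc-q6-loop-deck}. For each \(\{i,j\}\), the six values \(q_{R\cup i}+q_{R\cup j}\) must fit one of the admissible patterns. Since every summand \(q_S\) occurs in many loop decks, these conditions form a linear-plus-quadratic system. I would first use the loop decks to show that \(Q\) cannot contain two triples \(S,S'\) with \(|S\cap S'|=2\) unless certain sums vanish. The aim is a short list of candidate support orbits under \(S_6\) together with complementation \(S\mapsto U\setminus S\). The decks are invariant under this action, with possible transposition and swapping \(a,b\).

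\emph{Step 2 (two-copy contractions).} For each surviving orbit, choose \(A,B,\pi\) so that the endpoint term \(ab\) in \eqref{eq:hc-q6-two-copy-deck} appears at \((00,11)\) and \((11,00)\). These two positions are complementary, so they must be the vanishing pair or a nonzero pair. If \(\pi\) is chosen so that the sum over \(T\) at \((00,11)\) vanishes identically, the position is nonzero. This forces the vanishing pair to be one of the other two complementary pairs. It also gives rank-one identities of the form \(a b\cdot(\cdots)=(\cdots)(\cdots)\) that involve the \(q_S\). Combining several bijections \(\pi\) and several choices of \(A,B\), with the constant \(ab\ne0\), should contradict any support meeting more than one complementary pair \(\{S,U\setminus S\}\). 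The mechanism is that an extra triple produces a nonzero entry in a position that some choice of \(\pi\) forces to vanish. This mirrors the proof of \cref{lem:hc-transversal-cube-terminal}, where endpoint words and central words combine to produce six nonzero entries.

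\emph{Main obstacle.} The main difficulty is Step 2 in the generic case, where many \(q_S\) are nonzero and cancellations inside the sums over \(T\) could conspire. The conditions are polynomial rather than purely support-based, so a clean hand argument may not exist. I would first attempt the orbit reduction by hand. I would then close the remaining cases as a finite exact computation, in line with the paper's convention of appendix certificates: for each orbit of candidate supports and each choice of admissible pattern per deck, test whether the resulting polynomial system is infeasible over \(\mathbb C\), for instance by Gröbner bases showing that \(1\) lies in the ideal after adjoining \(ab\,t-1\). The statement's phrase ``arbitrary complex values'' indicates that the intended proof is exactly such an ideal-membership certificate, with the hand analysis used only to keep the case count small.
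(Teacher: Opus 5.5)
Your overall shape matches the paper's. There, too, the proof of this lemma is a computer-assisted finite check (\cref{prop:cert-high-charge-q6-joint-deck}): it branches on the fifteen loop arrays, quotients by $S_6$ and complementation, and then discharges the loop and two-copy conditions by machine. As written, however, your plan would not yield a valid certificate.

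First, the condition you would impose is wrong. For the matching $\{\{1,2\},\{3,4\}\}$ the surviving words are $0101,1010,0110,1001$, and the rank-one condition is $h_{0101}h_{1010}=h_{0110}h_{1001}$, i.e.\ $h_xh_{\bar x}=h_yh_{\bar y}$: equal products over the two surviving complementary pairs. Your $h_xh_{\bar y}=h_yh_{\bar x}$ is not implied by membership in $\mathcal B_\times^0([4])$, so infeasibility of a system containing it proves nothing.

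Second, the case split must be on linear relations, not supports. Loop conditions alone permit full support (\cref{rem:cert-high-charge-q6-one-loop-insufficient}). Your Step~2 device cannot reach the dense cases either: the $T$-sum at $(00,11)$ is $\sum_{r\in I_A}q_{I_A\setminus\{r\}}\,q_{B\cup\{\pi(r)\}}$, which is a nonzero polynomial for every $\pi$. The missing observation is that each $L_{ij}$ must have a vanishing complementary pair of entries. By \eqref{eq:hc-q6-loop-deck} this is exactly the two signed linear equations $q_S=-q_{S\mathbin\triangle\{i,j\}}$ and $q_{U\setminus S}=-q_{U\setminus(S\mathbin\triangle\{i,j\})}$, for some $S$ meeting $\{i,j\}$ once. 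This gives $3^{15}$ linear branches, which collapse to $1868$ subspaces and $24$ orbits.

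Third, choosing ``one admissible pattern per deck'' over $5415$ arrays is an astronomically large case split. The paper branches only on the loops. Inside each family it imposes the two-copy conditions without branching, using only necessary conditions: the zero-iff pairing of complementary entries, $P_0P_1P_2=0$, and $P_iP_j(P_i^2-P_j^2)=0$. Even so it needs $61$ Z3 queries, with special reductions for the four dense orbits.

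Once these points are repaired, your Gr\"obner/Nullstellensatz route over $\mathbb Q$ would gain something real: an independently checkable certificate. The paper's \texttt{UNSAT} answers, obtained on a real/imaginary split without proof objects, rest on trusting Z3. The cost is that the zero-iff conditions are not Zariski-closed. They need auxiliary variables (e.g.\ $z_5=uz_0$, $z_0=vz_5$) or further branching, and whether the resulting Gr\"obner computations are feasible is untested.
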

	
	\begin{proof}
		With \(t=ab\), the signatures \(L_{ij}\) and \(C_{A,B,\pi}\) in
		\cref{eq:hc-q6-loop-deck,eq:hc-q6-two-copy-deck} are precisely the loop
		and labelled two-copy arrays covered by
		\cref{thm:cert-high-charge-q6-interface}.  Its conclusion is
		\emph{(Q6.5)}.
	\end{proof}
	
	\begin{theorem}[A non-matching quaternary output from a six-port mountain]
		\label{thm:hc-q6-complete}
		Let \(\Gamma\) be a finite algebraic signature set containing
		\[
		M=a\delta_{\zero^6}+b\delta_{\one^6}+Q,
		\qquad ab\ne0,
		\qquad Q\text{ EO}.
		\]
		There is a nonzero quaternary EO signature
		\(h\notin\mathcal B_\times\) such that
		\[
		\KHolant(\Gamma,h)\leT\KHolant(\Gamma).
		\]
		The signature \(h\) can be chosen as one of the fifteen one-copy contractions
		\(L_{ij}\), one of the \(5400\) labelled two-copy contractions
		\(C_{A,B,\pi}\) from \cref{def:hc-q6-joint-deck}, or the explicit
		two-copy output of \cref{lem:hc-transversal-cube-terminal}.
	\end{theorem}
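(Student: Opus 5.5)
We argue by contradiction: suppose every one of the fifteen loop contractions \(L_{ij}\) and every one of the \(5400\) labelled two-copy contractions \(C_{A,B,\pi}\) lies in \(\mathcal B_\times^0([4])\). Each of these is a fixed direct \(X\)-gadget over \(\{M\}\), so by the equivalence convention any of them may be adjoined to \(\Gamma\) with \(\KHolant(\Gamma,h)\leT\KHolant(\Gamma)\). Every output is EO, because one self-loop or one two-copy wiring preserves the external total charge and the residual arity is four. It therefore suffices to produce one nonzero output outside \(\mathcal B_\times\).

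Under the contradiction hypothesis, \cref{lem:hc-q6-joint-rigidity} gives
\[
\supp(Q)\subseteq\{S,U\setminus S\}
\]
for some three-set \(S\). This gives three cases. First, suppose both \(q_S\) and \(q_{U\setminus S}\) are nonzero. After a port permutation, which only relabels the gadgets, put \(S=\{1,3,5\}\). Then \(Q=q_S\delta_{101010}+q_{U\setminus S}\delta_{010101}\). This is not a product on a matching, since its support has two points and not \(2^3\). We would therefore rather look for a directly computed loop contraction. The loop \(L_{12}\) gives \(L_{12}(\one_R)=q_{R\cup\{1\}}+q_{R\cup\{2\}}\). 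This is nonzero only for \(R=\{3,5\}\) (value \(q_S\)) and for \(R=\{4,6\}\) (value \(q_{U\setminus S}\)). So \(L_{12}\) is supported on the two complementary words \(1010,0101\) of the residual ports \((3,4,5,6)\). A matching product with nonzero entries has four support points, so \(L_{12}\notin\mathcal B_\times^0\), which contradicts the hypothesis.

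Second, suppose exactly one of \(q_S\), \(q_{U\setminus S}\) is nonzero, say \(Q=q_S\delta_{\one_S}\). The same loop on a pair \(\{i,j\}\) with \(i\in S\), \(j\notin S\) gives a nonzero quaternary with a single support word. This again lies outside \(\mathcal B_\times^0\). Third, suppose \(Q=0\). Then \(M=a\delta_{\zero^6}+b\delta_{\one^6}\). Here \(Q\) is formally a matching product only if it is nonzero, so we cannot apply \cref{lem:hc-transversal-cube-terminal} directly. Instead we take a two-copy contraction \(C_{A,B,\pi}\). By \eqref{eq:hc-q6-two-copy-deck} it equals \(ab(\delta_{0011}+\delta_{1100})\), which is nonzero with two support points and hence not in \(\mathcal B_\times^0\).

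Each case contradicts the hypothesis, so some listed contraction is a nonzero EO quaternary outside \(\mathcal B_\times\). If \(Q\) is already a matching product, \cref{lem:hc-transversal-cube-terminal} with \(d=3\) supplies its explicit output directly. That case is consistent with the conclusion \eqref{eq:hc-q6-loop-deck}–(Q6.5) being vacuous there, and the lemma covers it without using the certificate.

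The main obstacle is the rigidity step itself, \cref{lem:hc-q6-joint-rigidity}. It rests on the finite certificate \cref{thm:cert-high-charge-q6-interface}, and I would simply invoke it. The remaining care is in checking the case computations above against the exact formulas \eqref{eq:hc-q6-loop-deck} and \eqref{eq:hc-q6-two-copy-deck}, including port orders. In particular, in each surviving case I must exhibit a contraction that is nonzero, not merely one that is non-matching.
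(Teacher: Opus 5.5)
Your proposal is correct and follows essentially the same route as the paper. You assume all fifteen loop and \(5400\) two-copy contractions lie in \(\mathcal B_\times^0([4])\), invoke \cref{lem:hc-q6-joint-rigidity} to get \(\supp(Q)\subseteq\{S,U\setminus S\}\), and then exhibit a nonzero contraction with at most two support points: either \(\partial_{ij}M\) with \(i\in S\), \(j\notin S\) (your first two cases at once), or \(C_{A,B,\pi}=ab(\delta_{0011}+\delta_{1100})\) when \(Q=0\). The only differences are that the paper disposes of \(Q=0\) before invoking the rigidity lemma, and that your closing remark about the matching-product case is redundant, since the contradiction argument already covers every \(Q\).
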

	
	\begin{proof}
		We select the signature from one contraction \(L_{ij}\) or one contraction
		\(C_{A,B,\pi}\) when \(Q\notin\mathcal B_\times\), and from the
		explicit two-copy output of
		\cref{lem:hc-transversal-cube-terminal} otherwise.
		If $Q=0$, a contraction \(C_{A,B,\pi}\) has exactly the two endpoint--endpoint
		coefficients $ab$, hence support two rather than four; adjoin this contraction to
		\(\Gamma\).
		
		Suppose $Q\ne0$.  If a loop or
		two-copy contraction lies outside $\mathcal B_\times^0$, adjoin it to \(\Gamma\).
		Otherwise \cref{lem:hc-q6-joint-rigidity} gives
		$\operatorname{supp}(Q)\subseteq\{S,U\setminus S\}$.  Choose $i\in S$
		and $j\notin S$.  The two possible nonzero coefficients of
		$\partial_{ij}M$ have the unique extensions $S,U\setminus S$; hence this
		contraction is nonzero with support at most two, contradicting its
		membership in $\mathcal B_\times^0$.
		
		In particular, if
		\[
		Q=\lambda\bigotimes_{r=0}^2
		\begin{pmatrix}0&p_r\\q_r&0\end{pmatrix}_{w_{2r},w_{2r+1}},
		\qquad \lambda p_0q_0p_1q_1p_2q_2\ne0,
		\]
		the case $d=3$ of \cref{lem:hc-transversal-cube-terminal} gives the
		six-word output with one nonzero extension per supported word \emph{(WT.18a)}, which belongs to
		\(\Gamma\) by \cref{eq:finite-equivalent-enlargement}.
	\end{proof}

	\begin{proof}[Proof of \cref{thm:height-three-complete}]
		Apply zero/nullary preprocessing and assume \(\neg\Tract(K\Gamma)\).
		By \cref{thm:hc-q6-complete}, a finite equivalent enlargement contains a
		nonzero quaternary EO signature \(h\notin\mathcal B_\times\).
		It has charge zero, so the enlargement preserves the charge hypothesis;
		it retains the two-sided mountain and remains nontractable by
		\cref{lem:downward-inheritance}.
		Since \(\mathcal B_{\rm ARS}\subseteq\mathcal B_\times\),
		\cref{thm:hc-relative-eo-terminal} proves hardness, which
		transfers to the original \(\Gamma\).
	\end{proof}
	
	\Needspace{16\baselineskip}
	\section{Mountain Height at Least Four}
	\label{sec:height-at-least-four}
	
	\begin{theorem}[Mountain dichotomy at every height at least four]
		\label{thm:height-at-least-four}
		Let \(d\ge4\), and let \(\Gamma\) be a finite algebraic set of even-arity
		signatures such that
		\[
		\chi_r(x)\in d\mathbb Z
		\qquad(f\in\Gamma,\ \arity(f)=r,\ x\in\supp(f)).
		\]
		Suppose that \(\Gamma\) contains a mountain
		\[
		M=a\delta_{\zero^{2d}}+b\delta_{\one^{2d}}+Q,
		\qquad ab\ne0,\qquad Q\text{ EO}.
		\]
		Then \(\Tract(K\Gamma)\) holds or \(\KHolant(\Gamma)\) is
		\(\SharpP\)-hard.
	\end{theorem}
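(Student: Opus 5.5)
The strategy mirrors the height-three case. Assume \(\neg\Tract(K\Gamma)\) after preprocessing. The goal is to adjoin, by a fixed direct \(X\)-gadget over \(\{M\}\), a nonzero quaternary EO signature \(h\notin\mathcal B_\times\). Once such an \(h\) is available, the conclusion follows in three steps. First, adjoining \(h\) preserves the charge hypothesis, since \(h\) has charge zero. Second, the two-sided mountain guarantees \textnormal{(REO1)}, and downward inheritance keeps \textnormal{(REO2)}. Third, \(\mathcal B_{\rm ARS}\subseteq\mathcal B_\times\) lets \cref{thm:hc-relative-eo-terminal} give hardness.

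The first branch is when \(Q\) is itself a matching product. If \(Q\in\mathcal B_\times([2d])\), then \cref{lem:hc-transversal-cube-terminal} applies directly, since it is stated for every \(d\ge3\). If \(Q=0\), a two-copy contraction leaving two ports external on each copy has only the two endpoint--endpoint coefficients \(ab\). Its support therefore has two points, so it lies outside \(\mathcal B_\times\) and is hard.

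The remaining case is \(0\ne Q\notin\mathcal B_\times\). Here I would use self-loops and two-copy contractions to reduce the arity, arguing by contradiction. Suppose every loop \(\partial_{ij}M\) and every two-copy contraction of \(M\) with \(2\) external ports per copy (or with \(d-2\) external ports, keeping arity four) lies in \(\mathcal B^0_\times\). First I would show that iterated self-loops \(\partial_{e_1}\cdots\partial_{e_{d-2}}M\) of the central layer are always matching products or zero. These quaternary outputs are EO; the endpoints drop out after one loop, because a loop kills both constant words. Next, fix a ``relative port normalization''. Choose a pair \(\{i,j\}\) with \(\partial_{ij}Q\ne0\), and use a global diagonal \(GO(X)\)-rescaling so that the anti-diagonal factors appearing in these contractions become \(X\). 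Then use UPF to read off, for each pair, a perfect matching of the remaining ports. Comparing two different loops \(\partial_{ij}\partial_{kl}\) in either order forces compatibility of these matchings. From that compatibility I would conclude that \(\operatorname{supp}(Q)\) is a single matching coset \(C(P,\one)\) with a rank-one coefficient table, i.e.\ \(Q\in\mathcal B_\times\), contradicting the case assumption. The final step is a global phase recovery: undo the normalization with \cref{lem:normalization-ledger}. This recovers the actual ratios \(p_k,q_k\) and completes the reduction to the first branch.

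The main obstacle is making the rigidity step actually force a single matching. For \(d\ge5\) I expect the residual arity \(2d-2\ge8\) to give enough room. Two independent loops will have disjoint support, and the three-matchings-on-a-line argument in the style of \cref{lem:odd-dih-three-coset-circuit} should exclude a phase-dependent matching. This parallels \cref{prop:odd-dih-support-circuit-rigidity}: a two-copy output built on the matching-product structure is already non-matching, so hardness follows directly.

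At \(d=4\) the residual is six ports, where the six-port exceptional supports (like \(S_\star\)) appear. Here the matching data can be consistent without being a product: the candidate is the Reed--Muller support \(RM(1,3)\) restricted to weight four, i.e.\ \(\RMcore\). For \(d=4\) I would therefore classify the possible matching systems on eight ports and isolate the case \(Q\doteq\RMcore|_{\EO}\) up to normalization. In that case I would invoke \cref{thm:q8-normalized-bell-lift} to adjoin binary equality \(I\), and then finish with \cref{thm:eq2-kholant-dichotomy}. I expect this classification, which will likely need a finite certificate, to be the hardest part.
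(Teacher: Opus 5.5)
\textbf{Verdict: genuine gaps.} Your $Q=0$ branch and your $Q\in\mathcal B_\times$ branch are fine, and your outline matches the paper's structure. That structure is: make every pair contraction of the central layer an $X$-matching product; reconstruct a matching product when $d\ge5$ and apply \cref{lem:hc-transversal-cube-terminal}; isolate the Reed--Muller core when $d=4$. The step carrying the weight, however, fails as you describe it.

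\textbf{The normalization cannot be global.} A global $\diag(\alpha,\beta)\in GO(X)$ multiplies every weight-$d$ entry of $M$ by the same factor $(\alpha\beta)^d$. An anti-diagonal element of $GO(X)$ likewise only rescales and complements the central layer. So no global $GO(X)$ map changes $Q$ beyond a scalar and a complement, and it cannot turn a factor $(0,p,q,0)$ with $p\ne q$ into a multiple of $X$. Different matched pairs also carry different ratios. Global maps move only the endpoint coefficients, which is how the paper uses $D(s)$ at height four.

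What you need is a \emph{per-port} normalization by actual gadgets:
\begin{enumerate}
\item Extract the binary factors of one nonzero contraction $\partial_{12}Q$ with \cref{thm:external-factor}; the two-sided mountain excludes the one-sided exception.
\item Attach a copy of each factor through $X$ to one end of its matched pair on the complete mountain. This acts on that port alone by an invertible diagonal map, turns the factor into a multiple of $X$, and keeps both endpoints nonzero.
\item Propagate this to every pair. This needs the case split of \cref{lem:hc-port-normalization}: whether ports $1,2$ stay matched together in all other contractions, with a second extraction on $\{1,2\}$ in one case and the splice identity \eqref{eq:hc-binary-splice} in the other. It also needs \cref{lem:hc-x-line-propagation}.
\end{enumerate}
No undoing through \cref{lem:normalization-ledger} is required afterwards. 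The normalized mountain $M^\circ$ is an actual signature of an equivalent enlargement, and the two-copy lemma is applied to it directly.

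\textbf{The rigidity step for $d\ge5$ has no working tool.} \Cref{lem:odd-dih-three-coset-circuit} needs a pencil $A-sB$ of contractions at several phases. Here each pair has a single native $X$-loop, so there is no line of contractions to exploit. The paper instead uses the Shao--Cai reconstruction \cref{lem:sc-x-product-reconstruction}, which requires exact ARS and real $X$-product contractions. \Cref{lem:hc-global-phase} supplies these:
\begin{enumerate}
\item For disjoint pairs, $\partial_e\partial_f=\partial_f\partial_e$ forces $\lambda_e/\lambda_f\in\{1,2,1/2\}$.
\item Hence a single scalar $c$ makes all contractions real and complement-invariant.
\item Then $G-\overline G$ and $G-G^{\rm rev}$ are EO with all pair contractions zero, so they vanish by \cref{lem:wt-zero-loop}.
\end{enumerate}

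\textbf{Two steps are missing at $d=4$.} First, the shared-edge alternative must be sent to \cref{lem:q8-scope-collision}. Second, the Bell reduction \cref{thm:q8-normalized-bell-lift} needs exactly $c\RMcore$, but the matching classification only yields $A\delta_{\zero^8}+B\delta_{\one^8}+cR_8^{\rm cen}$. You must first either prove $AB=c^2$ or obtain hardness, via the information-set two-copy gadget of \cref{lem:q8-canonical-endpoint-repair}. Only then does $D(s)$ with $s^8=c/A$ give $c\RMcore$. Finally, $S_\star$ belongs to the odd-dihedral six-port analysis and plays no role here.
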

	
	The common step attaches extracted binary signatures to selected ports so
	that every self-loop contraction becomes an \(X\)-matching product.
	A single scalar then makes the central tensor satisfy ARS.  For \(d\ge5\),
	the matching-product reconstruction of Shao and Cai makes
	\cref{lem:hc-transversal-cube-terminal} applicable.  We prove this general
	case first.  At height four, a classification of the matching family gives
	the Reed--Muller central layer.  We then normalize the endpoint coefficients
	and obtain equality by a Turing reduction.
	
	\begin{remark}[Relation to the real-valued Holant argument]
		\label{rem:sc20-high-height}
		The port normalization below follows the contraction and binary-attachment
		strategy of \cite[full version, Lemma~8.1]{ShaoCai2020}.  Their transformed
		ambient language satisfies ARS globally.  Here the ambient language is
		complex-valued: charge conservation makes the self-loop outputs EO, and
		\cref{thm:hc-relative-eo-terminal} forces their binary ARS factorization
		unless hardness has already been proved.  ARS alone implies neither EO support nor this
		factorization.
		
		We therefore prove the relative normalization and phase recovery in
		\cref{lem:hc-port-normalization,lem:hc-global-phase}, before applying
		the algebraic reconstruction restated in
		\cref{lem:sc-x-product-reconstruction}.  The central layer is not adjoined
		as a signature.  Hardness comes from our actual mountain gadgets and, at
		height four, the relative Bell reduction; it does not require extending
		the global-ARS hardness arguments to the whole language.
	\end{remark}
	
	\subsection{Common Normalization and Phase Recovery}
	\label{subsec:hc-normalization}
	
	For the proof of \cref{thm:height-at-least-four}, apply zero/nullary
	preprocessing and assume \(\neg\Tract(K\Gamma)\).
	If \(Q=0\), \cref{lem:pure-ge} already gives hardness.  Hence it suffices to
	consider \(Q\ne0\).
	Every native self-loop on the complete mountain kills its endpoints:
	\[
	\partial_{ij}M=\partial_{ij}Q.
	\tag{HN.1}\label{eq:hc-central-card}
	\]
	More generally, charge conservation makes every native gadget output of
	positive arity less than \(2d\) EO.  By
	\cref{thm:hc-relative-eo-terminal}, any such nonzero output outside
	\(\mathcal B_{\rm ARS}\) gives hardness.  Otherwise its factors are
	nonsingular binaries
	\[
	B(z)=\begin{pmatrix}0&z\\\bar z&0\end{pmatrix},
	\qquad z\in\AlgNums^\times,
	\]
	up to nonzero scalars.
	These factors may be adjoined using \cref{thm:external-factor}: the
	retained mountain is two-sided, excluding the common-one-sided exception.
	We absorb each finite equivalent enlargement into \(\Gamma\) by
	\cref{eq:finite-equivalent-enlargement}.  The extracted binaries have
	charge zero, and \cref{lem:downward-inheritance} preserves nontractability.
	
	Attaching the first port of an ordered \(B(z)\) to a mountain port through
	\(X\) acts on that port by
	\[
	T_z=B(z)^{\mathsf T}X=\diag(\bar z,z),
	\qquad T_zB(z)=|z|^2X.
	\tag{HN.2}\label{eq:hc-port-mating}
	\]
	This invertible diagonal map preserves support and both nonzero endpoints.
	For consistently oriented factors,
	\[
	B(z)^{\mathsf T}XB(w)\doteq X
	\quad\Longleftrightarrow\quad
	w/z\in\mathbb R^\times
	\quad\Longleftrightarrow\quad B(w)\doteq B(z).
	\tag{HN.3}\label{eq:hc-binary-splice}
	\]
	Indeed the two off-diagonal entries are \(\bar z w\) and \(z\bar w\).
	In this case \(T_z^{\mathsf T}XT_w\) is also a nonzero scalar multiple of
	\(X\).  All these operations use the extracted ordered binaries on the
	complete mountain.  The conjugate entry in \(B(z)\) is already part of
	that binary; no conjugate mountain is required.
	
	\begin{lemma}[Vanishing pair contractions]
		\label{lem:wt-zero-loop}
		Let \(h\) have arity at least three.  If \(\partial_{ij}h=0\) for every
		pair \(i\ne j\), then \(h(x)=0\) whenever
		\(0<\wt(x)<\arity(h)\).
		In particular, an EO signature with all pair contractions zero is zero.
	\end{lemma}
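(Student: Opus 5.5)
The plan is to read each vanishing contraction as an antisymmetry relation and then use a three-step cycle of such relations to force every interior value to equal its own negative.

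Write \(n=\arity(h)\ge3\). For distinct ports \(i,j\) and a residual word \(z\), the hypothesis \(\partial_{ij}h=0\) says exactly
\[
h\bigl(z^{i\leftarrow0,j\leftarrow1}\bigr)=-h\bigl(z^{i\leftarrow1,j\leftarrow0}\bigr).
\]
Let \(\tau_{ij}x\) denote \(x\) with its \(i\)th and \(j\)th coordinates interchanged. The displayed identity then gives
\[
h(\tau_{ij}x)=-h(x)\qquad\text{whenever }x_i\ne x_j.
\]
So each transposition of two \emph{unequal} bits changes the sign of the value, and it stays inside the Hamming layer of \(x\).

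Now fix \(x\) with \(0<\wt(x)<n\). Since \(x\) has both a \(0\) and a \(1\), and \(n\ge3\), some value occurs at least twice. Hence there are three ports \(i,j,k\) on which \(x\) reads \(100\) or \(110\).

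\begin{itemize}
\item In the first case, apply the three swaps \(100\mapsto010\mapsto001\mapsto100\), namely \(\tau_{ij}\), then \(\tau_{jk}\), then \(\tau_{ik}\).
\item In the second case, apply \(110\mapsto101\mapsto011\mapsto110\), namely \(\tau_{jk}\), then \(\tau_{ij}\), then \(\tau_{ik}\).
\end{itemize}

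In each case every step exchanges two unequal bits, all other coordinates are untouched, and the chain returns to \(x\). The three sign changes give \(h(x)=(-1)^3h(x)\), hence \(h(x)=0\). This proves the vanishing claim for every interior weight.

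For the EO statement, an EO signature is supported on \(HW^=\). At every arity \(n\ge3\), and in particular at even arity at least four, this central layer satisfies \(0<n/2<n\), so the previous paragraph kills all of its support and the signature is zero. The hypothesis \(n\ge3\) is really needed here. In arity two, the single contraction only gives \(h(01)=-h(10)\); for example \((0,1,-1,0)\) is a nonzero EO binary with zero self-loop. So the EO clause is meant under the standing hypothesis \(\arity(h)\ge3\).

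The only point requiring care is checking that each swap in the two chains acts on a genuinely unequal pair; once that is verified, the rest is sign bookkeeping.
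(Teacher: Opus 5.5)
Your proof is correct and is essentially the argument the paper relies on. The paper simply cites the three-equation linear argument of \cite[Lemma~2.7]{CaiFuShao2020ARS}, and your cycle of three sign-changing transpositions on a \(100\) or \(110\) pattern is exactly that argument. Your caveat that the EO clause needs the standing hypothesis \(\arity(h)\ge3\), witnessed by \(Y=(0,1,-1,0)\), is accurate. It is consistent with the paper's uses of the lemma, all of which are at arity at least six.
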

	
	\begin{proof}
		This is \cite[Lemma~2.7]{CaiFuShao2020ARS}; its three-equation
		argument is purely linear and uses neither ARS nor nonvanishing contractions.
	\end{proof}
	
	\begin{lemma}[Propagation of \(X\)-matching products]
		\label{lem:hc-x-line-propagation}
		Let \(F\) be EO on an even port set \(V\), with \(|V|\ge8\), and suppose
		every nonzero \(\partial_eF\) lies in \(\mathcal B_{\rm ARS}\).
		Either of the following conditions implies that every \(\partial_eF\)
		is an \(X\)-matching product.
		\begin{enumerate}
			\item Every \(\partial_eF\) with \(e\cap\{1,2\}=\varnothing\) is an
			\(X\)-matching product.
			\item Both \(\partial_{12}F\) and \(\partial_{ij}F\) are
			\(X\)-matching products,
			\(\{1,2\}\cap\{i,j\}=\varnothing\), and the matching of
			\(\partial_{ij}F\) pairs \(1,2\) with distinct mates \(u,v\).
		\end{enumerate}
		The remaining pair contractions are allowed to be zero in the hypotheses.
	\end{lemma}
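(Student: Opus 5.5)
The plan is to exploit the commutation of two-port contractions on disjoint pairs, \(\partial_e\partial_{e'}F=\partial_{e'}\partial_eF\), together with uniqueness of tensor-prime factorization (UPF).

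Fix \(F\) as in the statement and a pair \(e\). If \(\partial_eF=0\) there is nothing to prove, so assume \(\partial_eF\ne0\). Then \(\partial_eF\in\mathcal B_{\rm ARS}\) has a matching \(P_e\) of \(V\setminus e\) and binary factors \((0,a,\bar a,0)\). Such a factor is a scalar multiple of \(X\) exactly when \(a/\bar a\) is real, that is, when \(a\) is real up to the overall phase.

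The basic computation is the following. For a pair \(e'\) disjoint from \(e\), if \(e'\in P_e\) then \(\partial_{e'}\partial_eF\) multiplies by \(a+\bar a\) and deletes that factor. If \(e'\notin P_e\), the contraction merges the two factors through \(e'\) into one binary factor whose entries are products of the form \(a\bar b\) (with the appropriate conjugation pattern). In either case the result is a product of the surviving binaries of \(P_e\), computed explicitly. The same tensor also equals \(\partial_e\partial_{e'}F\), which is computed from the factorization of \(\partial_{e'}F\). Whenever this double contraction is nonzero, UPF identifies factors across the two expressions: every factor of \(\partial_eF\) not touching \(e'\) equals, projectively, a factor of \(\partial_{e'}F\) not touching \(e\).

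For condition 1, take any pair \(e\) meeting \(\{1,2\}\). Since \(|V|\ge8\), for each factor \(f\) of \(\partial_eF\) I choose a pair \(e'\) disjoint from \(\{1,2\}\cup e\cup f\) for which the double contraction is nonzero. Then \(\partial_{e'}F\) is an \(X\)-product by hypothesis, so \(f\doteq X\). Requiring \(e'\) to avoid the pair of ports of \(f\) is what lets UPF transport the factor intact. Doing this for every factor of \(P_e\) shows \(\partial_eF\) is an \(X\)-matching product, with the nonzero overall scalar harmless.

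The nonvanishing of the double contraction is where the work lies. If every admissible \(\partial_{e'}\partial_eF\) vanished, then, viewing \(\partial_eF\) as a signature, all its pair contractions avoiding a small set would be zero. I would rule this out either with the explicit product formula (a merge gives the nonzero factor entries \(a\bar b\ne0\), so merges never vanish), or, if only closings occur, by applying \cref{lem:wt-zero-loop} to a suitable restriction. So the key point is to pick \(e'\) to cross two factors of \(P_e\) other than \(f\), which produces a merge and hence a nonzero result. This needs at least two factors besides \(f\) that avoid \(\{1,2\}\cup e\), and \(|V|\ge8\) leaves at least three factors of \(P_e\), of which at most two meet \(\{1,2\}\).

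For condition 2, I reduce to condition 1 by showing that \(\partial_eF\) is an \(X\)-product for every \(e\) disjoint from \(\{1,2\}\). The hypothesis that \(\partial_{ij}F\) pairs \(1,2\) with distinct mates \(u,v\) lets me apply \(\partial_{12}\) to \(\partial_{ij}F\): this merges the factors \(\{1,u\}\) and \(\{2,v\}\) into a nonzero \(X\)-type factor on \(\{u,v\}\). So \(\partial_{12}\partial_{ij}F\) is a nonzero \(X\)-product. Commuting, \(\partial_{ij}\partial_{12}F\) is nonzero too, and the \(X\)-structure of \(\partial_{12}F\) is then propagated by the same UPF comparison. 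Concretely, for a general \(e\) disjoint from \(\{1,2\}\), I compare \(\partial_e\partial_{12}F\) with \(\partial_{12}\partial_eF\). If \(P_e\) puts \(1,2\) in different factors, this is a merge, so the double contraction is nonzero, and UPF with the \(X\)-product \(\partial_{12}F\) forces every factor of \(\partial_eF\) avoiding \(\{1,2\}\) to be \(X\), while the merged factor is forced as well. The two factors through \(1\) and \(2\) must still be settled; for these I use a second comparison with \(\partial_{ij}F\), or chain through an intermediate pair.

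The main obstacle is the case where \(P_e\) pairs \(1\) with \(2\) directly, so that \(\partial_{12}\) only closes a factor and may vanish (\(a+\bar a=0\)). There I would route through \(\{i,j\}\): compare \(\partial_e\) with \(\partial_{ij}\), using a pair \(e\) that crosses \(P_{ij}\) to get a nonzero merge. Once all \(e\) disjoint from \(\{1,2\}\) are handled, condition 1 finishes the proof.
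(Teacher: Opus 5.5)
Your mechanism (commute disjoint contractions, then apply UPF) matches the paper, and so does your reduction of condition~2 to condition~1. Two steps fail as written, however.

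\textbf{Nonvanishing.} The claim ``if \(\partial_eF=0\) there is nothing to prove'' is wrong. An \(X\)-matching product is by definition a \emph{nonzero} multiple of \(\bigotimes X_t\), and the hypotheses explicitly allow zero contractions. The nonvanishing \(\lambda_e\ne0\) in \eqref{eq:hc-normalized-deck} is exactly what \cref{lem:hc-global-phase} later uses. Your route to a nonzero double contraction, choosing \(e'\) to merge two factors of \(P_e\), presupposes \(\partial_eF\ne0\) and cannot supply it.

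The missing observation is that contracting any pair of an \(X\)-matching product gives a nonzero \(X\)-matching product: closing an edge contributes \(2\), and splicing two edges contributes \(1\). So whenever \(r\cap e=\varnothing\) and \(\partial_rF\) is an \(X\)-matching product, \(0\ne\partial_e\partial_rF=\partial_r\partial_eF\). This gives \(\partial_eF\ne0\) and a nonzero double contraction at once, with no counting on \(P_e\). Your count is also off: for \(|V|=8\) and \(e=\{1,k\}\), only one factor besides \(f\) avoids \(\{1,2\}\cup e\), though a merge through the mate of \(2\) still exists.

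The same observation dissolves your ``main obstacle''. If \(P_e\) pairs \(1,2\) by \((0,a,\bar a,0)\), then \(\partial_{12}\partial_eF=\partial_e\partial_{12}F\ne0\) forces \(a+\bar a\ne0\). A minor slip as well: \((0,a,\bar a,0)\doteq X\) iff \(a/\bar a=1\), not merely real, since \(a/\bar a=-1\) gives \(Y\).

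\textbf{Condition~2.} It is left unproved exactly where its hypothesis matters. For \(\partial_eF=C(x_1,x_r)D(x_2,x_s)H\), the \(12\)-comparison only shows that the splice of \(C,D\) is \(X\), i.e.\ \(C\doteq D\) by \eqref{eq:hc-binary-splice}. You still need one of \(C,D\) to survive a comparison with a known \(X\)-product. Initially the only known \(X\)-product pair other than \(12\) is \(ij\), and it touches both factors when \(\{i,j\}=\{r,s\}\).

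The distinct-mates hypothesis excludes this for \(e\subseteq W=V\setminus\{1,2,i,j\}\) with \(e\ne\{u,v\}\). Otherwise \(\partial_{ij}\partial_eF\) would pair \(1,2\), while \(\partial_e\partial_{ij}F\) pairs \(1,2\) only for \(e=\{u,v\}\). Your use of the distinct mates, that \(\partial_{12}\partial_{ij}F\) is a nonzero \(X\)-product, is vacuous, since that holds for any \(X\)-product.

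Two further cases need their own treatment:
\begin{itemize}
\item The exceptional pair \(e=\{u,v\}\) needs a separate argument. The paper uses three disjoint known \(X\)-product pairs \(12,ij,rs\) with \(rs\subseteq W\setminus\{u,v\}\); every binary factor avoids one of them.
\item Pairs \(e=\{i,k\},\{j,k\}\) cannot be compared with \(\partial_{ij}\) at all. They must be handled last, using pairs \(r\subseteq W\) already established.
\end{itemize}
Your phrase ``a second comparison with \(\partial_{ij}F\), or chain through an intermediate pair'' does not supply this case split or ordering.
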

	
	\begin{proof}
		Contracting any pair in an \(X\)-matching product gives another nonzero
		\(X\)-matching product: it either removes a matching edge with scalar
		\(2\), or splices two edges with scalar \(1\).  Thus, whenever a deleted
		pair \(e\) is disjoint from a pair \(r\) for which
		\(\partial_rF\) is an \(X\)-matching product,
		\[
		0\ne\partial_e\partial_rF=\partial_r\partial_eF.
		\tag{HN.4}\label{eq:hc-nonzero-double}
		\]
		In particular \(\partial_eF\ne0\).  A factor whose ports avoid \(r\)
		survives in this double contraction, so UPF makes it projectively \(X\).
		
		For condition~1, only contractions with \(e\cap\{1,2\}\ne\varnothing\)
		remain.  A pair \(r\) disjoint from \(e\), with \(\partial_rF\) an
		\(X\)-matching product, first proves nonvanishing.
		For each binary factor on ports \(a,b\), choose a pair \(r\) avoiding
		\(\{1,2\}\cup e\cup\{a,b\}\).  This excludes at most five ports, so such
		an \(r\) exists and \(\partial_rF\) is an \(X\)-matching product.  The factor survives in
		\eqref{eq:hc-nonzero-double}, proving condition~1.
		
		For condition~2, put \(W=V\setminus\{1,2,i,j\}\), where
		\(u,v\in W\) and \(|W|\ge4\).  For
		\(e\in\binom W2\setminus\{\{u,v\}\}\), nonvanishing follows using the
		\(12\)-contraction.  The ports \(1,2\) cannot be paired in \(\partial_eF\):
		that pair would survive deletion of \(ij\), whereas deleting \(e\) from
		the \(X\)-matching product \(\partial_{ij}F\) pairs \(1,2\) only if \(e=\{u,v\}\).
		Consequently write
		\[
		\partial_eF=C(x_1,x_r)D(x_2,x_s)H.
		\]
		The \(12\)-double contraction makes \(H\) an \(X\)-product and the splice of
		\(C,D\) projectively \(X\).  By \eqref{eq:hc-binary-splice},
		\(C\doteq D\) in these orientations.
		Both factors cannot be affected by deletion of \(ij\), since this would
		splice \(1,2\) together, contradicting the other contraction order.
		A surviving factor is projectively \(X\) by comparison with the
		\(X\)-matching product \(\partial_{ij}F\).  Hence \(C,D\doteq X\), and the whole contraction is an
		\(X\)-matching product.
		
		For \(e=\{u,v\}\), choose two distinct ports \(r,s\in W\setminus\{u,v\}\).
		The contractions at the three disjoint pairs \(12,ij,rs\) are now
		\(X\)-matching products.
		The \(uv\)-contraction is nonzero by the \(12\)-comparison.  Each of its binary
		factors avoids at least one of these three pairs, and hence is
		projectively \(X\) by \eqref{eq:hc-nonzero-double}.
		Thus every \(\partial_eF\) with \(e\subseteq W\) is an
		\(X\)-matching product.
		
		Next let \(e=\{i,k\}\) or \(\{j,k\}\), with \(k\in W\).
		It is nonzero by the \(12\)-comparison.  For a factor on \(a,b\), use
		\(r=12\) if \(\{a,b\}\) avoids \(12\); otherwise choose
		\[
		r\in\binom{W\setminus(\{k\}\cup\{a,b\})}{2}.
		\]
		At most two ports of \(W\) are excluded, so the choice exists.
		Here \(r\) is disjoint from both \(e\) and the factor's ports, and
		\(\partial_rF\) is an \(X\)-matching product.  The factor remains in
		\eqref{eq:hc-nonzero-double}, so it is proportional to \(X\).
		Along with the already established \(X\)-matching product
		\(\partial_{ij}F\), this covers all deleted pairs avoiding
		\(12\).  Condition~1 completes the proof.
	\end{proof}
	
	\begin{lemma}[Relative port normalization at every arity at least eight]
		\label{lem:hc-port-normalization}
		Under the hypotheses of \cref{thm:height-at-least-four}, assume
		\(\neg\Tract(K\Gamma)\) and \(Q\ne0\).
		Then either \(\KHolant(\Gamma)\) is \(\SharpP\)-hard, or a finite equivalent
		enlargement of \(\Gamma\), still supported on half-charges in
		\(d\mathbb Z\), contains an actual mountain
		\[
		M^\circ=A\delta_{\zero^V}+B\delta_{\one^V}+F,
		\qquad |V|=2d,\quad AB\ne0,\quad 0\ne F\text{ EO},
		\]
		such that
		\[
		\partial_eF=\lambda_e\bigotimes_{t\in P_e}X_t,
		\qquad \lambda_e\in\AlgNums^\times
		\qquad\left(e\in\binom V2\right).
		\tag{HN.5}\label{eq:hc-normalized-deck}
		\]
		Here \(P_e\) is a perfect matching of \(V\setminus e\).
		No nonvanishing assumption is imposed on the original contractions of \(Q\).
	\end{lemma}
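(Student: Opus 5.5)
We would start from the preprocessing already recorded before the lemma.  Every native self-loop $\partial_{ij}M=\partial_{ij}Q$ by \eqref{eq:hc-central-card}, and it is EO by charge conservation.  By \cref{thm:hc-relative-eo-terminal}, each nonzero $\partial_{ij}Q$ is either outside $\mathcal B_{\rm ARS}$, giving hardness, or a scalar multiple of a product of binaries $B(z)$ on a perfect matching of $V\setminus\{i,j\}$.  In the latter case \cref{thm:external-factor} adjoins each factor $B(z)$; the two-sided mountain rules out the exception.  Before normalizing we need enough nonzero contractions.  \cref{lem:wt-zero-loop} shows that not all $\partial_{ij}Q$ vanish, since $Q\ne0$ is EO.  We would fix one nonzero contraction $\partial_{12}Q=\lambda\bigotimes_{t\in P}B(z_t)_t$ and take $P=\{\{u_k,v_k\}\}$ as the reference matching on $V\setminus\{1,2\}$.

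The normalization attaches, to one port of each pair $t\in P$, the ordered binary $B(z_t)$ through $X$.  By \eqref{eq:hc-port-mating} this acts on that port by the invertible diagonal $T_{z_t}$ and turns the factor on $t$ into $|z_t|^2X$.  The action preserves support and both endpoint coefficients, so the resulting gadget $M^\circ$ is again a mountain with $AB\ne0$, and its half-charges stay in $d\mathbb Z$.  Contractions commute with these port maps.  Hence every self-loop contraction $\partial_eF$ of the new central layer is the old contraction with diagonal maps on some ports.  It is still EO and lies in $\mathcal B_{\rm ARS}$ or gives hardness, again by \cref{thm:hc-relative-eo-terminal}.  Each of its binary factors has the form $B(w)$ up to scalar.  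After normalization $\partial_{12}F$ is an $X$-matching product.  Ports $1,2$ still need a choice: we would also normalize one of them, using a second nonzero contraction $\partial_{ij}F$ with $\{i,j\}\cap\{1,2\}=\varnothing$ that pairs $1$ and $2$ with distinct mates $u,v$.  Attaching the appropriate $B(w)$ to port $1$ converts the factor of $\partial_{ij}F$ at $1$ into $X$.  The factor at port $2$ is then forced by \eqref{eq:hc-binary-splice}, by comparing the double contraction $\partial_{12}\partial_{ij}F$, which is an $X$-product.  At that point condition~2 of \cref{lem:hc-x-line-propagation} applies, since $|V|=2d\ge8$, and yields \eqref{eq:hc-normalized-deck} for every $e$.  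The nonvanishing of every $\lambda_e$ is part of that lemma's conclusion, through \eqref{eq:hc-nonzero-double}.

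The main obstacle is the existence of the configuration demanded by condition~2, namely a disjoint pair $\{i,j\}$ with $\partial_{ij}F\ne0$ and $1,2$ matched to distinct mates.  We would first try to find it among the contractions $\partial_{ij}$ with $\{i,j\}$ a pair of $P$.  Such a contraction is nonzero, because $\partial_{ij}\partial_{12}F$ is a nonzero multiple of an $X$-product with $\{i,j\}$ an edge.  Its matching cannot pair $1$ with $2$, since that pair would survive the further $12$-contraction only as a scalar, which contradicts the arity count (the arity is at least $6$ after the contraction).  If that still fails, we would relabel, choosing the reference pair $\{1,2\}$ differently among nonzero contractions; failure for all choices should force $Q$ into a degenerate support already excluded by \cref{lem:wt-zero-loop}.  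Each attachment and extraction step is a finite equivalent enlargement under \cref{eq:finite-equivalent-enlargement}, and \cref{lem:downward-inheritance} keeps nontractability, so the enlarged set satisfies the stated requirements.
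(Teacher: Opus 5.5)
Your overall plan matches the paper's: normalize the factors of one nonzero \(\partial_{12}Q\), then finish with \cref{lem:hc-x-line-propagation}. Two steps fail as written, however. The first is in the condition-2 route, where you attach a matching copy only at port~\(1\). This changes the \(12\)-self-loop. With \(T_w=\diag(\bar w,w)\) on port~\(1\), the new \(\partial_{12}F\) is the contraction of the old tensor against the kernel \(T_w^{\mathsf T}X=B(\bar w)\). That kernel is \(\doteq X\) only when \(w\) is real, i.e.\ only when \(C\) was already \(\doteq X\). So \(\partial_{12}F\) stops being an \(X\)-matching product, and the double-contraction comparison forces nothing. In the new \(\partial_{ij}F\) the port-\(2\) factor is still \(D\doteq C\doteq B(w)\), and \(\partial_{12}\partial_{ij}F\) now carries \(D\) on \((u,v)\). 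The correct order is the reverse. Use \eqref{eq:hc-binary-splice} \emph{before} normalizing, on the original \(X\)-product \(\partial_{12}\partial_{ij}F\), to get \(C\doteq D\). Then attach matching copies at \emph{both} ports \(1\) and \(2\). The effective \(12\)-kernel becomes \(T_C^{\mathsf T}XT_D\doteq X\), so \(\partial_{12}F\) and \(\partial_{ij}F\) are both \(X\)-matching products, and condition~2 applies.

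The second failure is your argument that the condition-2 configuration exists. Suppose \(\partial_{ij}F=C(x_1,x_2)H\). Then \(\partial_{12}\partial_{ij}F\) is a scalar times \(H\), of arity \(2d-4\), exactly like \(\partial_{ij}\partial_{12}F\); there is no arity contradiction. The case where \(1,2\) are paired in every \(\partial_eF\) with \(e\cap\{1,2\}=\varnothing\) genuinely occurs. For instance, it happens whenever \(F\) is itself an \(X\)-matching product containing the edge \(\{1,2\}\), which is precisely the configuration the lemma must reach. Your relabelling fallback is unsupported: nothing connects ``failure for every reference pair'' to \cref{lem:wt-zero-loop}, and a new reference pair would require redoing the first normalization. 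The missing idea is to treat this case directly. Write \(\partial_eF=C_e(x_1,x_2)H_e\). The \(12\)-comparison makes \(H_e\) an \(X\)-product and the contraction of \(C_e\) nonzero. For disjoint \(e,f\) avoiding \(\{1,2\}\), both orders of the double contraction are nonzero and retain the factor on \(\{1,2\}\), so UPF gives \(C_e\doteq C_f\). The disjoint-pair graph on the \(2d-2\ge6\) ports of \(V\setminus\{1,2\}\) is connected, so all \(C_e\) equal one \(C\). Attaching a matching copy of \(C\) at port~\(1\) alone now makes every contraction avoiding \(\{1,2\}\) an \(X\)-matching product. Condition~1 of \cref{lem:hc-x-line-propagation} then finishes, and it does not need \(\partial_{12}F\) to stay an \(X\)-product. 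So one-port normalization is exactly right in this case, and wrong in the case where you used it.
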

	
	\begin{proof}
		At every stage retain the complete mountain, adjoin the finitely many
		contractions or factors used, and apply \cref{thm:hc-relative-eo-terminal} again.
		The charge and nontractability invariants described above remain valid.
		Thus every current nonzero contraction is an ARS matching product, or
		hardness has already been proved.
		
		By \cref{lem:wt-zero-loop}, some contraction of \(Q\) is nonzero;
		relabel it as \(\partial_{12}Q\).
		Extract its \(d-1\) binary factors and extend one end of each, on the
		complete mountain, by a matching copy as in \eqref{eq:hc-port-mating}.
		The resulting mountain \(M_0\) has nonzero endpoints and central layer
		\(F_0\), with \(\partial_{12}F_0\) an \(X\)-matching product.
		For every pair \(e\) avoiding \(12\),
		\(\partial_{12}\partial_eF_0=\partial_e\partial_{12}F_0\ne0\).
		Hence all these contractions are nonzero.
		
		Suppose first that \(1,2\) form a matching edge in every such contraction.
		Write \(\partial_eF_0=C_e(x_1,x_2)H_e\).
		The \(12\)-comparison shows that the contraction of \(C_e\) is nonzero
		and \(H_e\) is an \(X\)-product.  If \(e,f\) are disjoint pairs avoiding
		\(12\), their double contraction is nonzero: in the \(e\)-order, deletion of
		\(f\) acts only on the \(X\)-product \(H_e\).
		The factor on \(12\) survives in both orders, so \(C_e\doteq C_f\).
		The graph of disjoint pairs on \(V\setminus\{1,2\}\) is connected.
		Indeed, intersecting distinct pairs exclude only three of its at least
		six ports, leaving a pair disjoint from both.
		Thus all \(C_e\) are projectively one ordered factor \(C\).
		Extract \(C\) from one actual contraction and extend port \(1\) of \(M_0\) by
		a matching copy.  Every contraction whose deleted pair avoids \(12\) is now an
		\(X\)-matching product.  After reapplying
		\cref{thm:hc-relative-eo-terminal}, condition~1 of
		\cref{lem:hc-x-line-propagation} makes every pair contraction an
		\(X\)-matching product.
		
		Otherwise choose a pair \(ij\) avoiding \(12\) whose contraction has distinct
		mates \(u,v\) for \(1,2\), and write
		\[
		\partial_{ij}F_0=C(x_1,x_u)D(x_2,x_v)H.
		\]
		The \(12\)-comparison makes \(H\) an \(X\)-product and gives
		\(C\doteq D\) by \eqref{eq:hc-binary-splice}.
		Extract both factors from this actual contraction and extend ports \(1,2\) of
		\(M_0\) by matching copies.  The new \(ij\)-contraction is an \(X\)-matching product.
		The effective \(12\)-kernel is \(T_C^{\mathsf T}XT_D\doteq X\), so the
		\(12\)-contraction remains an \(X\)-matching product as well.  The
		diagonal maps preserve the distinct mates of ports \(1,2\).  Reapply
		\cref{thm:hc-relative-eo-terminal} to the current contractions, then use condition~2 of \cref{lem:hc-x-line-propagation}.
		
		All extensions are invertible and preserve the central support and the
		nonzero endpoints.  Their number depends only on the fixed source
		signature.  Factor extraction and direct gadget substitution give the
		claimed finite equivalence.  Only contractions of the central layer are
		adjoined; the central layer itself is used only in the algebraic argument.
	\end{proof}
	
	\begin{lemma}[A common phase for EO contractions]
		\label{lem:hc-global-phase}
		Let \(F\) be algebraic and EO on an even port set \(V\), with \(|V|\ge6\), and suppose
		\eqref{eq:hc-normalized-deck} holds with every \(\lambda_e\ne0\).
		There is \(c\in\AlgNums^\times\) such that \(G=c^{-1}F\) is real-valued
		and invariant under bitwise complement, and every coefficient
		\(\lambda_e/c\) in \eqref{eq:hc-normalized-deck} is positive real.
		In particular \(G\) satisfies exact ARS.
	\end{lemma}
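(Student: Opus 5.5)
The plan is to derive both symmetries of \(F\) from the contraction data \eqref{eq:hc-normalized-deck} alone, using \cref{lem:wt-zero-loop} as a uniqueness principle. The operator \(\partial_e\) has real coefficients, so it commutes with bitwise complement and with complex conjugation. Hence an EO tensor all of whose pair contractions agree with those of another EO tensor must equal it. This requires \(|V|\ge3\), which holds.

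\emph{Step 1: complement invariance.} Put \(F'(x)=F(\bar x)\); it is EO on \(V\). For every pair \(e\), \((\partial_eF')(y)=(\partial_eF)(\bar y)\). An \(X\)-matching product \(\bigotimes_{t\in P_e}X_t\) is invariant under complementing all its inputs. Therefore \(\partial_eF'=\partial_eF\) for every \(e\). Thus \(F-F'\) is an EO signature with all pair contractions zero, and \cref{lem:wt-zero-loop} gives \(F=F'\).

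\emph{Step 2: a common phase for the \(\lambda_e\).} Take disjoint pairs \(e,f\). Contracting the pair \(e\) in the \(X\)-matching product \(\partial_fF\) does one of two things:
\begin{itemize}
\item if \(e\) is a matching edge, it removes that edge with scalar \(2\);
\item otherwise it splices two edges with scalar \(1\).
\end{itemize}
In either case the result is a nonzero \(X\)-matching product, so
\[
\partial_e\partial_fF=\lambda_f\,\sigma_{e,f}\bigotimes X,
\qquad \sigma_{e,f}\in\{1,2\}.
\]
Since \(\partial_e\partial_fF=\partial_f\partial_eF\), the same computation in the other order and uniqueness of the nonzero value give \(\lambda_e/\lambda_f\in\{1/2,1,2\}\subset\mathbb R_{>0}\). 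The disjointness graph on \(\binom V2\) (a Kneser graph) is connected when \(|V|\ge6\). Indeed, for any two distinct pairs, at most three of at least six ports are used, so some pair is disjoint from both. Hence all \(\lambda_e\) are positive real multiples of \(c:=\lambda_{e_0}\) for a fixed \(e_0\). This \(c\) is algebraic and nonzero, because \(\partial_{e_0}F\) is an algebraic tensor with an entry equal to \(\lambda_{e_0}\).

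\emph{Step 3: realness.} Put \(G=c^{-1}F\). Then
\[
\partial_e\overline G=\overline{\partial_eG}=\overline{(\lambda_e/c)}\bigotimes X=\partial_eG,
\]
since \(\lambda_e/c\) is real and \(X\) is real. As before, \(G-\overline G\) is EO with all contractions zero, so \(G\) is real. Complement invariance carries over from Step 1. Combining the two, \(G(x)=G(\bar x)=\overline{G(\bar x)}\), which is exact ARS.

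I expect Step 2 to be the only delicate point. One must check that the double contraction is genuinely nonzero with a positive scalar in both orders, and that the disjointness graph is connected; the latter is exactly where the hypothesis \(|V|\ge6\) enters.
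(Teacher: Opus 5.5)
Your proof is correct and follows the paper's argument. The paper also obtains \(\lambda_e/\lambda_f\in\{1/2,1,2\}\) by commuting two disjoint contractions of \(X\)-matching products, uses connectivity of the disjoint-pair graph for \(|V|\ge6\), and applies \cref{lem:wt-zero-loop} to the conjugation and complement differences. The only cosmetic difference is that you prove complement invariance for \(F\) before normalizing, rather than for \(G\) afterwards; this is equivalent because the property is unaffected by scaling.
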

	
	\begin{proof}
		For disjoint pairs \(e,f\), the two contraction orders give
		\[
		\partial_f\partial_eF=m_{ef}\lambda_e X_{P'},
		\qquad
		\partial_e\partial_fF=m_{fe}\lambda_f X_{P''},
		\qquad m_{ef},m_{fe}\in\{1,2\}.
		\]
		Here \(X_P=\bigotimes_{t\in P}X_t\) has only \(0,1\) entries.
		Both tensors are nonzero, so commutation identifies their supports and
		gives
		\[
		\lambda_e/\lambda_f=m_{fe}/m_{ef}\in\{1,2,1/2\}.
		\]
		The graph of disjoint pairs on \(V\) is connected: two intersecting pairs
		have a common disjoint pair since \(|V|\ge6\).
		Fix one pair \(e_0\) and take \(c=\lambda_{e_0}\).
		Every contraction of \(G=c^{-1}F\) is therefore real and
		complement-invariant, with a positive coefficient.
		
		Let \(\overline G\) denote coefficientwise conjugation and
		\(G^{\rm rev}(x)=G(\bar x)\).
		Contraction commutes with both operations.  Hence every pair contraction
		of \(G-\overline G\) and of \(G-G^{\rm rev}\) vanishes.
		Both differences are EO, so \cref{lem:wt-zero-loop} makes them zero.
		This is an algebraic conclusion and does not assert that \(G\) is a
		realizable signature.
	\end{proof}
	
	We use precisely the following algebraic part of the real-valued
	classification.
	
	\begin{lemma}[Shao--Cai matching-product reconstruction
		{\cite[full version, Lemma~2.11]{ShaoCai2020}}]
		\label{lem:sc-x-product-reconstruction}
		Let \(G\) be an EO tensor of even arity \(n\ge8\) satisfying exact ARS.
		Suppose
		\[
		\partial_eG=\mu_e\bigotimes_{t\in P_e}X_t,
		\qquad \mu_e\in\mathbb R^\times
		\qquad\left(e\in\binom{[n]}2\right),
		\]
		where each \(P_e\) is a perfect matching of the residual ports.
		If \(n\ge10\), then \(G\) is an \(X\)-matching product.
		The same conclusion holds for \(n=8\) if \(P_A\cap P_B\ne\varnothing\)
		for some \(A\ne B\) with \(A\cap B\ne\varnothing\); in that case the
		\(X\)-factor on that shared edge is also a factor of \(G\).
	\end{lemma}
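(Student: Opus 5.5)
The plan is to reconstruct a single global matching from the family \(\{P_e\}\), guess \(G\doteq X_P\), and then show that the difference has all pair contractions zero. \Cref{lem:wt-zero-loop} will finish, since the difference is EO. Throughout, I use only the commutation identity \(\partial_f\partial_eG=\partial_e\partial_fG\) for disjoint pairs \(e,f\). I also use the fact, recorded in the proof of \cref{lem:hc-x-line-propagation}, that contracting a pair in an \(X\)-matching product either deletes a matching edge, with scalar \(2\), or splices two edges, with scalar \(1\).

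\emph{Step 1 (scalars).} I will argue exactly as in \cref{lem:hc-global-phase}. Both double contractions are nonzero \(X\)-matching products, so for disjoint \(e,f\) we get \(\mu_e/\mu_f\in\{1,2,1/2\}\), the ratio being determined by whether \(f\in P_e\) and whether \(e\in P_f\). Commutation also gives the matching identity \(\sigma_f(P_e)=\sigma_e(P_f)\) on \(V\setminus(e\cup f)\). Here \(\sigma_f\) denotes deletion of \(f\) when \(f\) is an edge of the matching, and splicing of the two edges through \(f\) otherwise.

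\emph{Step 2 (a global matching; the main obstacle).} I will define \(P\) as follows. A pair \(\{a,b\}\) belongs to \(P\) if it lies in \(P_e\) for every \(e\) disjoint from \(\{a,b\}\) that does not "split" it. I then need to show two things. First, \(P\) is a perfect matching. Second, for every \(e\), \(P_e\) equals \(P\setminus\{e\}\) if \(e\in P\), and equals the splice of the two \(P\)-edges through \(e\) otherwise. I would first try a local-to-global argument. For a fixed edge \(t\in P_e\), and any \(f\) disjoint from \(e\cup t\) with \(t\) not involved in the splice, \(t\) survives in \(\sigma_f(P_e)=\sigma_e(P_f)\). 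Hence \(t\in P_f\) or \(t\) arises in \(P_f\) from a splice through \(e\).

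With \(n\ge10\), the residual set \(V\setminus(e\cup f)\) has at least six ports, so each such comparison involves at least three edges. Varying \(f\) over the connected graph of disjoint pairs then pins down a consistent edge set, and a counting argument shows every port is covered exactly once. For \(n=8\), the residual has only four ports. There, the three pairings of a four-set are indistinguishable by these comparisons, and this is exactly where the extra hypothesis enters. A shared edge \(t\in P_A\cap P_B\) with \(A\cap B\ne\varnothing\) survives both contractions, so it is a common factor candidate. The argument then reduces to arity six on \(V\setminus t\), where ARS and reality fix the remaining coefficients. I expect this combinatorial case analysis to be the delicate part.

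\emph{Step 3 (scalar and conclusion).} Put \(H=\mu X_P\). Choose \(\mu\) so that \(\partial_{e_0}H=\partial_{e_0}G\) for one fixed \(e_0\). The ratio rule from Step 1 agrees with the deletion/splice scalars \(2\) and \(1\) of \(X_P\). Connectivity of the disjoint-pair graph therefore propagates the equality \(\partial_eH=\partial_eG\) to every \(e\). The tensor \(G-H\) is EO and all of its pair contractions vanish, so \cref{lem:wt-zero-loop} gives \(G=H\). Exact ARS and real \(\mu_e\) are used only to ensure that the scalar \(\mu\) and the factor on the shared edge are real multiples of \(X\), rather than some other anti-diagonal binary.
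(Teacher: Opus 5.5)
The paper gives no proof of this lemma; it cites Shao--Cai, so your outline has to stand on its own, and it does not. Step~3 is sound. Suppose you knew that $P_e=\sigma_e(P)$ for one perfect matching $P$ and every pair $e$. Then commutation together with $\partial_eX_{P_f}\ne0$ propagates the scalar along disjoint pairs, and \cref{lem:wt-zero-loop} kills the EO difference $G-H$. That step never uses ARS or reality; the factors are $X$ by hypothesis, so your stated role for them is moot.

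The gap is Step~2, which is the whole content of the lemma and is only asserted. Your definition of $P$ is not well posed, since \emph{does not split it} presupposes $P$. The propagation also cannot work as described. Each comparison $\sigma_f(P_e)=\sigma_e(P_f)$ yields only the disjunction ``$t\in P_f$, or $t$ is the splice edge through $e$ in $P_f$''. Connectivity of the disjoint-pair graph neither eliminates the second alternative nor supplies a first edge that persists. This is a real obstruction, not a technicality. The family $P_{pq}=\{\{t,t+p+q\}\}$ of $R_8^{\rm cen}$ satisfies every commutation identity, yet $P_A\cap P_B=\varnothing$ for all intersecting $A\ne B$, and no global matching induces it. So the identities alone do not produce $P$. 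A proof for $n\ge10$ must use the extra ports at an identified point, for instance to force an overlap $P_A\cap P_B\ne\varnothing$ with $A\cap B\ne\varnothing$; ``three edges per comparison plus counting'' does not do this. Your $n=8$ diagnosis is also inaccurate: the four-port residual comparisons do distinguish the three pairings. The real issue is that collision-free commuting families exist there.

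The $n=8$ clause is left open too. Calling the shared edge $t=\{u,v\}$ a ``candidate'' does not show $X_t\mid G$. The appeal to ARS on six ports names no mechanism, and on six ports the matching comparisons are vacuous anyway. The clause can be closed elementarily as follows.
\begin{enumerate}
\item Write $A=\{a,b\}$ and $B=\{a,c\}$. For $e$ avoiding $\{a,b,c\}\cup t$, commuting against $A$ and against $B$ gives splice alternatives that would force $b$ and $c$ to share a mate, so $t\in P_e$.
\item For each remaining $e\subseteq V\setminus t$, taken in a suitable order, use two already-settled pairs disjoint from $e\cup t$. Their splice alternatives would give $u,v$ two different pairs of mates, so again $t\in P_e$.
\item Run the three-equation argument behind \cref{lem:wt-zero-loop} on the sector $x_u=x_v$, using only ports outside $t$; it kills $G$ there.
\item Apply \cref{lem:wt-zero-loop} to the six-port EO difference of the $(x_u,x_v)=(0,1)$ and $(1,0)$ slices. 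This gives $G=X_t\otimes G'$.
\item For $w\notin t$, $\partial_{uw}G$ is exactly $G'$ with port $w$ renamed $v$. Hence $G'$ is an $X$-matching product by hypothesis.
\end{enumerate}
For $n\ge10$ an analogous finish would work, but only after you prove that such an overlap, or the global matching, exists. That is precisely the step your proposal does not supply.
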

	
	\subsection{Height at Least Five}
	\label{subsec:height-at-least-five}
	
	\begin{proof}[Proof of \cref{thm:height-at-least-four} for \(d\ge5\)]
		As above, assume \(\neg\Tract(K\Gamma)\) and \(Q\ne0\).
		Apply \cref{lem:hc-port-normalization}.  Outside its hard outcome there
		is an actual mountain
		\(M^\circ=A\delta_{\zero^{2d}}+B\delta_{\one^{2d}}+F\)
		with \(AB\ne0\) and every pair contraction of \(F\) an
		\(X\)-matching product.
		By \cref{lem:hc-global-phase}, \(G=c^{-1}F\) is EO with exact ARS
		and nonzero real \(X\)-product contractions.  Since \(2d\ge10\),
		\cref{lem:sc-x-product-reconstruction} makes \(G\), and hence \(F\),
		an \(X\)-matching product.
		Apply \cref{lem:hc-transversal-cube-terminal} to the complete
		\(M^\circ\).  Its two-copy output is hard, and the finite equivalence
		transfers hardness to the original \(\Gamma\).
	\end{proof}
	
	\subsection{Height Four}
	\label{subsec:height-four}
	
	The eight-port exception is the central-layer Reed--Muller tensor
	\[
	R_8^{\rm cen}(x)=
	\one\left[x=(a+u\cdot t)_{t\in\Ftwo^3}
	\text{ for some }a\in\Ftwo,\ 0\ne u\in\Ftwo^3\right].
	\tag{WT.20}
	\]
	This tensor has fourteen supported words and is not a matching product, but each contraction
	is an \(X\)-product.  For deleted ports \(p,q\), the surviving pairs are
	\(\{t,t+p+q\}\): each choice of one endpoint from every remaining pair
	has a unique extension with
	\(u\cdot(p+q)=1\).  A matching edge can therefore belong to both \(P_A\) and \(P_B\),
	for distinct \(A,B\), only when \(A\cap B=\varnothing\), so the eight-port overlap clause of
	\cref{lem:sc-x-product-reconstruction} does not apply.
	
	For an eight-port set \(V\) and a bijection \(\iota:V\to\Ftwo^3\), write
	\[
	(\iota_*F)((x_t)_{t\in\Ftwo^3})
	=F((x_{\iota(v)})_{v\in V}).
	\]
	All canonical Reed--Muller statements below use this explicit port
	relabelling.
	
	\subsubsection{Translation Form of the Matching Family}
	
	\begin{lemma}[A shared edge forces a matching product]
		\label{lem:q8-scope-collision}
		Let \(F\) be an arity-eight EO tensor satisfying
		\eqref{eq:hc-normalized-deck}.
		If \(P_A\cap P_B\ne\varnothing\) for some \(A\ne B\) with
		\(A\cap B\ne\varnothing\), then \(F\) is an \(X\)-matching product.
		Consequently, any finite algebraic signature set containing an actual
		mountain
		\(M^\circ=A\delta_{\zero^8}+B\delta_{\one^8}+F\), with \(AB\ne0\),
		has a \(\SharpP\)-hard \(K\)-Holant problem.
	\end{lemma}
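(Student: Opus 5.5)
The first assertion reduces to the Shao--Cai reconstruction once the phase is fixed, and the second to the two-copy gadget. Given \eqref{eq:hc-normalized-deck} with all \(\lambda_e\ne0\) on an eight-port set, \cref{lem:hc-global-phase} (which needs only \(|V|\ge6\)) supplies \(c\in\AlgNums^\times\) such that \(G=c^{-1}F\) is real, complement-invariant and exactly ARS. Moreover, every contraction \(\partial_eG=(\lambda_e/c)X_{P_e}\) has a positive real coefficient. The matchings \(P_e\) are unchanged by this rescaling.

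With this normalization, \(G\) satisfies every hypothesis of \cref{lem:sc-x-product-reconstruction} at \(n=8\), including the overlap clause. That clause is exactly the assumption that \(P_A\cap P_B\ne\varnothing\) for some intersecting \(A\ne B\). The reconstruction therefore makes \(G\) an \(X\)-matching product, and so \(F=cG\) is one as well. I would check one point here: the reconstruction lemma is stated for real ARS tensors with real \(\mu_e\), and the global-phase lemma delivers precisely these. So no complex-valued extension of the Shao--Cai argument is needed.

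For the consequence, let \(\Gamma\) contain \(M^\circ=A\delta_{\zero^8}+B\delta_{\one^8}+F\) with \(AB\ne0\). By the first part,
\[
F=\lambda\bigotimes_{t\in P}X_t
\]
for a perfect matching \(P\) of the eight ports and \(\lambda\ne0\). This is the situation of \cref{lem:hc-transversal-cube-terminal} with \(d=4\ge3\), taking \(p_k=q_k=1\). That lemma builds a fixed two-copy gadget over \(\{M^\circ\}\) whose quaternary EO output \(h\) fails \(\EOTract\) because of its two supported triples. It therefore gives \(\SharpP\)-hardness of \(\KHolant(\{h\})\), and hence of \(\KHolant(\Gamma)\) via \(\KHolant(\{h\})\leT\KHolant(\Gamma,h)\leT\KHolant(\Gamma)\).

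The main obstacle I expect is verifying that the hypotheses of \cref{lem:sc-x-product-reconstruction} hold verbatim, in particular that its overlap condition means the same thing for the \(P_e\) after the phase normalization. Since \(G\) differs from \(F\) only by a scalar, the matchings coincide and the condition transfers unchanged.
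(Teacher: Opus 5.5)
Your proposal is correct and is essentially the paper's proof. It applies \cref{lem:hc-global-phase}, then the eight-port overlap clause of \cref{lem:sc-x-product-reconstruction} to make \(F\) an \(X\)-matching product, and finally \cref{lem:hc-transversal-cube-terminal} with \(d=4\) and \(p_k=q_k=1\) to the complete mountain. Your extra checks are exactly the points the paper's three-line proof leaves implicit: that the scalar rescaling leaves the matchings \(P_e\) unchanged, and that it makes the contraction coefficients nonzero reals as the reconstruction lemma requires.
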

	
	\begin{proof}
		Apply \cref{lem:hc-global-phase} and then the eight-port clause of
		\cref{lem:sc-x-product-reconstruction}.
		This makes \(F\) an \(X\)-matching product.
		The complete mountain now satisfies
		\cref{lem:hc-transversal-cube-terminal}.
	\end{proof}

	\begin{lemma}[Translation form of the matching family]
		\label{lem:q8-collision-free-atlas}
		Let \(F\) be an arity-eight EO tensor on a port set \(V\), and suppose
		that for every \(u\ne v\) one has
		\[
		\partial_{uv}F
		=\lambda_{uv}\bigotimes_{e\in P_{uv}}X_e,
		\qquad \lambda_{uv}\ne0,
		\]
		where \(P_{uv}\) is a perfect matching of \(V\setminus\{u,v\}\).
		Suppose \(P_A\cap P_B=\varnothing\) whenever \(A\ne B\) and
		\(A\cap B\ne\varnothing\).  Then there is a bijection
		\(\iota:V\to\Ftwo^3\) such that, after relabelling each port \(v\) by
		\(\iota(v)\), for $p\ne q$ and $h=p+q$ one has
		\[
		P_{pq}
		=\bigl\{\{t,t+h\}:t\in \Ftwo^3\setminus\{p,q\}\bigr\},
		\tag{Q8.14}
		\]
		where duplicate unordered pairs are suppressed.
	\end{lemma}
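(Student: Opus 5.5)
The plan is to extract from \eqref{eq:hc-normalized-deck} a combinatorial structure on the $28$ pairs of $V$, show it is a one-factorization of $K_8$ with a rigid intersection pattern, and identify that pattern with the affine geometry of $\Ftwo^3$.

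\emph{Step 1: the commutation identity.} For disjoint pairs $e,f$ both contraction orders give nonzero $X$-matching products, since contracting any pair of an $X$-matching product is nonzero, exactly as in the proof of \cref{lem:hc-x-line-propagation}. So $\partial_f\partial_eF=\partial_e\partial_fF$, and UPF identifies their matchings. Contracting $f$ in $\bigotimes_{t\in P_e}X_t$ does one of two things.
\begin{itemize}
\item If $f\in P_e$, it deletes the edge $f$ with scalar $2$.
\item If $f$ joins two edges $\{f_1,a\},\{f_2,b\}$ of $P_e$, it splices them into $\{a,b\}$ with scalar $1$.
\end{itemize}
Comparing the scalars $m_{ef}\lambda_e=m_{fe}\lambda_f$ is not enough, because the $\lambda$'s are free. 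I would instead compare the matchings directly. If $f\in P_e$ but $e\notin P_f$, the two matchings on $V\setminus(e\cup f)$ are $P_e\setminus\{f\}$ and the splice of $P_f$ at $e$. In that case the two edges of $P_e\setminus\{f\}$ must already lie in $P_f$, apart from the spliced one. This produces an edge shared by $P_e$ and $P_f$ together with a third pair meeting one of them. The aim is to turn this, via a further contraction, into a shared edge of two intersecting pairs, which the hypothesis forbids. This step, establishing the symmetry $f\in P_e\Leftrightarrow e\in P_f$, is where I expect the main case analysis. I would first try to derive it by choosing an auxiliary pair $g$ meeting $e$ and comparing $P_g$ with $P_e$ through triple commutation.

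\emph{Step 2: parallel classes.} Put $M_e=\{e\}\cup P_e$, a perfect matching of $V$. For $f\in P_e$, commutation with deletion on both sides gives $P_e\setminus\{f\}=P_f\setminus\{e\}$, so $M_f=M_e$. Hence the relation ``$f\in M_e$'' is an equivalence relation, and the $28$ pairs split into $7$ perfect matchings: a one-factorization $\mathcal M$ of $K_8$.

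\emph{Step 3: the four-cycle condition.} Take disjoint $e,f$ in different classes. Then the commutation identity is a splice in both orders, and equality of the spliced matchings on $V\setminus(e\cup f)$ says that $M_e\cup M_f$ is a union of two alternating $4$-cycles, one of them through $e$ and $f$. Applying this to all such pairs, every two factors of $\mathcal M$ have union equal to two $4$-cycles. The collision-free hypothesis excludes the alternative in which the residual edges coincide.

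\emph{Step 4: identification with $\Ftwo^3$.} A one-factorization of $K_8$ in which every two factors form two $4$-cycles is the affine parallelism of $AG(3,2)$. To see this, the $4$-cycles give $14$ blocks of size $4$, each pair of points lying in exactly three blocks, with the closure property of affine planes; this is the $3$-$(8,4,1)$ design, and it is unique. Concretely, fix $0\in V$ and three points $p_1,p_2,p_3$ with $\{0,p_i\}$ in distinct classes. Define $\iota$ by sending $0$, $p_i$, the mates along the $4$-cycles, and so on, to the corresponding points $0$, $e_i$, $e_i+e_j$, $e_1+e_2+e_3$. Then check via the $4$-cycle rule that each class becomes $\{\{t,t+h\}\}$ for a fixed $h\ne0$. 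Finally $P_{pq}=M_{pq}\setminus\{\{p,q\}\}$ with $h=p+q$ gives (Q8.14).
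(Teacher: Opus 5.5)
There is a genuine gap at Step~1, and it is exactly where the hypothesis ``$P_A\cap P_B=\varnothing$ whenever $A\ne B$, $A\cap B\ne\varnothing$'' has to be used.

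The commutation identity $\kappa_f(P_e)=\kappa_e(P_f)$ for disjoint $e,f$ is correct; it is the paper's (Q8.14a). Steps~2--4 also follow from it once the symmetry $f\in P_e\Leftrightarrow e\in P_f$ is known. But you only announce that symmetry, and it cannot come from commutation alone. Take $F=\bigotimes_{t\in M}X_t$ with $M=\{12,34,56,78\}$. Every hypothesis of the lemma except collision-freeness holds, $P_e=\kappa_e(M)$, and $56\in P_{13}=\{24,56,78\}$ while $13\notin P_{56}=\{12,34,78\}$. Your suggested mechanism (a shared edge of $P_e,P_f$ for disjoint $e,f$ ``turned into'' a collision) is not carried out, and the hypothesis says nothing directly about disjoint pairs. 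As written, the proposal does not prove the lemma.

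The step can be closed by a short forced chain. After relabelling, take $e=12$, $f=34$, $P_{12}=\{34,56,78\}$.
\begin{enumerate}
\item Since $12\notin P_{34}$, the identity $\kappa_{34}(P_{12})=\kappa_{12}(P_{34})$ gives $P_{34}=\{15,26,78\}$ after relabelling inside $\{5,6,7,8\}$.
\item Commutation then forces $P_{78}=\{15,26,34\}$ and $P_{56}=\{12,34,78\}$.
\item Collision-freeness against $P_{12}$ for the pairs $15,26,13$, together with commutation, forces (up to $7\leftrightarrow8$) $P_{15}=\{26,37,48\}$, $P_{26}=\{15,37,48\}$, and $P_{13}=\{24,57,68\}$.
\item Finally $\kappa_{78}(P_{13})=\{24,56\}\ne\{26,45\}=\kappa_{13}(P_{78})$, a contradiction.
\end{enumerate}

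Two smaller corrections. In Step~3, the alternative with coinciding residual edges gives $g\in P_e\cap P_f$ for disjoint $e,f$, which collision-freeness does not exclude. It is excluded by Step~2, since $g$ would then lie in both classes $M_e\ne M_f$. In Step~4, having $\{0,p_i\}$ in distinct classes is automatic; what you actually need is $p_3$ outside the block through $0,p_1,p_2$. Alternatively:
\begin{itemize}
\item check that each triple lies in exactly one of the $14$ blocks ($14\cdot4=56$);
\item invoke uniqueness of the Steiner system $S(3,4,8)$;
\item note that the factor through $ab$ contains the complementary pair of each of the three planes through $a,b$, i.e.\ every $\{t,t+a+b\}$.
\end{itemize}

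With these repairs your argument is a genuinely different, computer-free route. The paper derives the same identity (Q8.14a) and then delegates the whole classification to an exhaustive backtracking search: nine leaves, seven induced by a global matching (each colliding) and two translation families. Your structural argument instead explains why the parallelism of $AG(3,2)$ is forced.
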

	
	\begin{proof}
		We apply uniqueness of tensor-prime factorization and a finite matching
		classification.  For disjoint deleted pairs
		$e,r$, let $\kappa_r(P_e)$ be the two-edge matching left after contracting
		$r$ in the $X$-product on $P_e$: delete $r$ when $r\in P_e$, and
		otherwise splice the mates of the two ends of $r$ and retain the untouched
		edge.  Both operations are nonzero for $X$ factors.  Commutativity of the
		two loops and unique factorization give the exact constraint
		\[
		\kappa_r(P_e)=\kappa_e(P_r)
		\qquad(e\cap r=\varnothing).
		\tag{Q8.14a}
		\]
		
		The condition on \(P_A\cap P_B\) and \emph{(Q8.14a)} are exactly the
		hypotheses of \cref{thm:cert-q8-collision-free-interface}, which supplies
		the required bijection and \emph{(Q8.14)}.  Thus the identity obtained by commuting two direct contractions forces
		the matching family in \emph{(Q8.14)}; an arbitrary partition of the
		complete graph into perfect matchings would not suffice.
	\end{proof}
	
	\begin{lemma}[The matching family determines the central $RM(1,3)$ coefficients]
		\label{lem:q8-canonical-support}
		Let \(F\) be an arity-eight EO tensor on a port set \(V\) such that
		\[
		\partial_{uv}F
		=\lambda_{uv}\bigotimes_{e\in P_{uv}}X_e,
		\qquad \lambda_{uv}\ne0,
		\]
		for every $u\ne v$, where $P_{uv}$ is a perfect matching of
		$V\setminus\{u,v\}$.  Suppose \(P_A\cap P_B=\varnothing\) whenever
		\(A\ne B\) and \(A\cap B\ne\varnothing\).  Choose a bijection
		\(\iota:V\to\Ftwo^3\) supplied by
		\cref{lem:q8-collision-free-atlas}, put \(F^\iota=\iota_*F\), and
		write
		\[
		c_S=F^\iota(\one_S),
		\qquad S\in\binom{\Ftwo^3}4.
		\]
		Then
		\[
		\supp(F^\iota)
		=\left\{\one_S:
		S\in\binom{\Ftwo^3}4,\ 
		\bigoplus_{s\in S}s=0\right\},
		\tag{Q8.17}
		\]
		the set of the fourteen affine planes.  Moreover all fourteen nonzero
		coefficients in \emph{(Q8.17)} are equal.
	\end{lemma}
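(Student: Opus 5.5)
The plan is to read the support and coefficients of \(F^\iota\) directly off the known contractions, using the explicit formula \(\partial_{pq}F^\iota(y)=F^\iota(y^{p\leftarrow0,q\leftarrow1})+F^\iota(y^{p\leftarrow1,q\leftarrow0})\).  Since \(F\) is EO of arity eight, every supported word has weight four, so \(F^\iota=\sum_{|S|=4}c_S\delta_{\one_S}\).

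\textbf{Step 1: contractions as sums of coefficients.}  For \(p\ne q\) and a residual weight-three word \(\one_R\) with \(R\subseteq\Ftwo^3\setminus\{p,q\}\), the contraction gives
\[
(\partial_{pq}F^\iota)(\one_R)=c_{R\cup\{p\}}+c_{R\cup\{q\}}.
\]
By \emph{(Q8.14)} and \eqref{eq:hc-normalized-deck}, this value equals \(\lambda_{pq}\) when \(R\) meets each pair \(\{t,t+h\}\), \(h=p+q\), in exactly one point, and \(0\) otherwise.  The three pairs avoiding \(p,q\) are the remaining cosets of \(\{0,h\}\).

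\textbf{Step 2: support.}  A four-set \(S\) with \(\bigoplus S\ne0\) is not an affine plane.  For such an \(S\), I will find \(p\in S\) and \(q\notin S\) with the following property: for \(R=S\setminus\{p\}\), the set \(R\) contains a whole pair \(\{t,t+p+q\}\), and the swapped set \(R\cup\{q\}\) is also a nonplane.  Then the relation gives \(c_S=-c_{R\cup\{q\}}\).  Propagating these relations should produce a cycle of odd length, or alternatively a chain ending at a set whose coefficient is already forced to vanish.  Either outcome forces \(c_S=0\).

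For a plane \(S\) and any \(q\notin S\), \(R=S\setminus\{p\}\) is transversal for \(h=p+q\) exactly when \(R\cup\{q\}\) is again a plane.  The relation then reads \(c_S+c_{S'}=\lambda_{pq}\) between two planes, and the nonplane relations form a separate linear subsystem.  I expect a cleaner route, which I will try first: work with all the \(c_S\) for nonplanes and show that the homogeneous relations among them have only the zero solution.  This is a finite linear system over \(\binom{8}{4}-14=56\) unknowns, and it can be certified by a rank computation of the kind recorded in the appendix.  A conceptual check is also available.  The map \(S\mapsto R\cup\{q\}\) keeps the sum \(\bigoplus S\), shifted by \(p+q=h\) in a controlled way.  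Each nonplane class should admit an odd alternating cycle, giving \(2c_S=0\).

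\textbf{Step 3: equality of the plane coefficients.}  Each relation \(c_S+c_{S'}=\lambda_{pq}\) between planes \(S,S'\) sharing three points, together with \(\lambda_{pq}\ne0\), links the fourteen planes.  To show all coefficients are equal, I will use the commutation argument of \cref{lem:hc-global-phase}: for disjoint \(e,f\), \(\lambda_e/\lambda_f\in\{1,2,1/2\}\).  Combining this with the triangles in the plane-adjacency graph should fix every \(\lambda\) to a common value \(\lambda\) and every \(c_S\) to \(\lambda/2\).  Concretely, take three planes \(S_1,S_2,S_3\) that are pairwise adjacent through three pairs \((p,q)\) with equal \(\lambda\).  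Their sums satisfy \(c_1+c_2=c_2+c_3=c_1+c_3\), so \(c_1=c_2=c_3\).  Connectivity of the adjacency graph then propagates the equality.  Nonvanishing follows because \(\lambda\ne0\).

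\textbf{Main obstacle.}  I expect the hardest step to be Step 2, namely establishing \(c_S=0\) off the planes without an unilluminating enumeration.  If no uniform odd-cycle argument emerges, I will fall back on an exact finite certificate for the linear system over \(\mathbb Q(\lambda)\).  Step 3's bookkeeping of the \(\lambda_{pq}\) is secondary but also needs care.
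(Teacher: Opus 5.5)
\textbf{Steps 1 and 2 are sound.} They follow essentially the paper's route. The homogeneous relations $c_S+c_{S'}=0$ join nonplanes to nonplanes, and they force every nonplane coefficient to vanish. The paper obtains this by noting that the resulting graph is the basis-exchange graph of the lifted rank-four binary matroid, hence connected, and by exhibiting the triangle $0124$---$0125$---$0245$. Your rank-computation fallback would certify the same fact.

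\textbf{The gap is in your description of the plane relations, and hence in Step 3.} For a plane $S$ one has $\bigoplus_{s\in S}s=0\in\{0,h\}$. So for every $p\in S$ and $q\notin S$, the residual word $\one_{S\setminus\{p\}}$ is \emph{always} transversal to $P_{pq}$. Meanwhile $S'=S\mathbin\triangle\{p,q\}$ has xor $p+q\ne0$, so it is \emph{always} a nonplane. Your criterion ``transversal exactly when $R\cup\{q\}$ is again a plane'' is the correct statement for a \emph{nonplane} $S$, not for a plane.

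Consequently there are no plane--plane relations at all. Indeed, two distinct affine planes of $\Ftwo^3$ share at most two points, so ``planes sharing three points'' do not exist. Every inhomogeneous relation $c_S+c_{S'}=\lambda_{pq}$ links a plane to a nonplane. This breaks Step 3 in three ways:
\begin{itemize}
\item your triangle argument $c_1+c_2=c_2+c_3=c_1+c_3$ has nothing to act on;
\item your predicted value $\lambda/2$ is wrong;
\item the ratios $\lambda_e/\lambda_f\in\{1,2,1/2\}$ from \cref{lem:hc-global-phase} do not by themselves equalize the $\lambda_{pq}$.
\end{itemize}

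\textbf{The missing step is short once Step 2 is in hand.} With all nonplane coefficients zero, each relation at a plane reads $c_S=\lambda_{pq}\ne0$ for every $p\in S$ and $q\notin S$. This gives nonvanishing on all fourteen planes.

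Equality then comes from the plane--pair incidence. Two nonparallel planes $S_1,S_2$ meet in two points, and their union has six points. Choosing $p\in S_1\cap S_2$ and $q\notin S_1\cup S_2$ gives $c_{S_1}=\lambda_{pq}=c_{S_2}$. Parallel planes are complementary, so they have the same incident pairs, and $\lambda_{pq}$ is symmetric in $p,q$.

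This is how the paper concludes, via connectivity of the incidence graph $H$ in \cref{prop:cert-q8-canonical-support}. It uses nothing about the $\lambda$'s beyond $\lambda_{pq}\ne0$, and it shows as a byproduct that all $\lambda_{pq}$ equal the common plane coefficient.
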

	
	\begin{proof}
		Relabel the ports through \(\iota\), continue to denote the resulting tensor
		by \(F^\iota=\iota_*F\), and write \(V=\Ftwo^3\).  For
		$S\in\binom V4$, $p\in S$, and $q\notin S$, put
		\[
		S'=S\mathbin\triangle\{p,q\},\qquad
		T=S\setminus\{p\},\qquad
		\sigma(S)=\bigoplus_{s\in S}s,\qquad h=p+q.
		\]
		The coefficient of the actual contraction at $\one_T$ is the complete
		two-extension sum
		\[
		(\partial_{pq}F^\iota)(\one_T)=c_S+c_{S'}.
		\]
		The word \(\one_T\) selects one endpoint from each edge of the
		matching \emph{(Q8.14)} if and only if
		\[
		\sigma(S)\in\{0,h\}.
		\tag{Q8.19}
		\]
		Indeed, in $V/\langle h\rangle$ the fibre $\{p,q\}$ contains one point of
		$S$, while the other occupancies are $1,1,1$ or a permutation of $2,1,0$.
		Their quotient xor is respectively zero or the sum of the two odd fibres,
		proving \emph{(Q8.19)}.
		Coefficients at these words equal $\lambda_{pq}$ and all others vanish.
		Thus the \(c_S\) and nonzero \(\lambda_{pq}\) satisfy the swap rule in
		\cref{thm:cert-q8-canonical-support-interface}.  That theorem eliminates
		all coefficients on four-sets that are not affine planes and identifies
		the fourteen plane coefficients
		with one \(c\ne0\), proving \emph{(Q8.17)}.
	\end{proof}
	
	Equivalently, the central-layer Reed--Muller tensor from \emph{(WT.20)} is
	\[
	R_8^{\rm cen}(x)
	=\one\left[\wt(x)=4,
	\bigoplus_{t:x_t=1}t=0\right].
	\]
	It consists of the fourteen nonconstant words of $RM(1,3)$.
	Consequently the full core defined in \cref{eq:RM-core} is
	\begin{equation*}
		\RMcore(x)
		=\delta_{\zero^8}(x)+\delta_{\one^8}(x)+R_8^{\rm cen}(x)
		=\one[x\in RM(1,3)].
	\end{equation*}
	Thus \(\RMcore\) is the full sixteen-word Reed--Muller signature.
	
	\subsubsection{Normalizing the Endpoint Coefficients}
	
	The preceding lemma has made all central coefficients equal.
	One information-set contraction detects the remaining endpoint invariant.
	
	\begin{lemma}[Normalizing the endpoint coefficients of \(\RMcore\)]
		\label{lem:q8-canonical-endpoint-repair}
		Let
		\[
		M^\circ=A\delta_{\zero^8}+B\delta_{\one^8}+cR_8^{\rm cen},
		\qquad ABc\ne0.
		\]
		Then either the two-copy native-\(X\) construction below yields a nonzero
		arity-six EO signature outside
		\(\mathcal B_\times\), or
		\[
		AB=c^2
		\tag{Q8.25}\label{eq:q8-canonical-endpoint}
		\]
		and, for any algebraic \(s\) with \(s^8=c/A\),
		\[
		D(s)^{\otimes8}M^\circ=c\RMcore,
		\qquad
		D(s)=\diag(s,s^{-1}),
		\qquad
		D(s)^{\mathsf T}XD(s)=X.
		\tag{Q8.26}\label{eq:q8-canonical-normalization}
		\]
		The first outcome uses two source vertices, four information-set edges, and
		one final \(X\)-closure.
	\end{lemma}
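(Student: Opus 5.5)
The plan has two parts. First we design the two-copy gadget and compute its output as a function of \(A,B,c\). Then we show that the output lies in \(\mathcal B_\times\) only when \eqref{eq:q8-canonical-endpoint} holds, and in that case we verify \eqref{eq:q8-canonical-normalization} directly.

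Fix an information set \(J\subseteq\Ftwo^3\) of size four for \(RM(1,3)\), for instance \(J=\{000,100,010,001\}\). Every codeword is determined by its restriction to \(J\), and every pattern on \(J\) occurs exactly once. Take two copies of \(M^\circ\) and join port \(t\) of the first copy to port \(t\) of the second by a native \(X\)-edge for each \(t\in J\); these are the four information-set edges. Each surviving term is then a pair \((x,y)\) of supported words with \(y|_J=\overline{x|_J}\). Because complementation preserves \(RM(1,3)\) and \(J\) is an information set, this forces \(y=\bar x\), with \(x\) ranging over all sixteen codewords. The eight remaining ports carry \((x|_{J^c},\bar x|_{J^c})\), with weight \(M^\circ(x)M^\circ(\bar x)\). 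This weight is \(AB\) for the two endpoint words \(x\in\{\zero^8,\one^8\}\) and \(c^2\) for the fourteen central words. Finally, close one first-copy port \(p\in J^c\) to one second-copy port \(p'\in J^c\) by a native \(X\)-edge. This leaves an arity-six tensor \(h\), with three exposed ports from each copy.

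To make the output explicit, I would pick \(p'\) to be the second-copy port of the same label \(p\). Since the second copy carries \(\bar x_p\), the \(X\)-closure demands \(x_p\ne\bar x_p\), which always holds. So the closure merely deletes the coordinate pair and leaves a sum over all sixteen codewords. Then \(h(z,\bar z)=w(z)\), where \(z\) ranges over the restrictions of codewords to the three ports \(J^c\setminus\{p\}\). That restriction map is two-to-one, and by charge conservation the output is EO. The coefficient \(w(z)\) sums the weights of the two codewords above \(z\). For \(z=000\) these are \(\zero^8\) and one central word, so \(w(000)=AB+c^2\); likewise \(w(111)=AB+c^2\), and \(w(z)=2c^2\) otherwise. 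This is a product of three anti-diagonal binaries pairing each exposed port with its copy, so it lies in \(\mathcal B_\times\) and detects nothing. The choice of closure therefore needs adjusting.

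The fix I would try is to close \(p\) on the first copy against a different label \(p'\ne p\) in \(J^c\) on the second copy. That imposes the constraint \(x_p\ne\bar x_{p'}\), that is \(x_p=x_{p'}\), and leaves four exposed ports on each side minus the two closed ones, so six ports in total. The support is now the projection of the eight codewords with \(x_p=x_{p'}\); this set contains \(\zero^8\), \(\one^8\), and six central words. I would tabulate the coefficient on each projected word, each a sum of the weights \(AB\) or \(c^2\). Then I would check membership in \(\mathcal B_\times\) using the rank-one condition across a matching, as in the \(2\times2\) minor arguments of \cref{thm:odd-dih-q6-base}. The expected result is that this condition reduces to a single minor equal to \(AB-c^2\) up to a nonzero factor. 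I expect finding the right pair of closed ports and checking this minor to be the main obstacle; if it fails, I would try other choices of \(J\) and of the closure.

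Once \(AB=c^2\) is known, the normalization is a direct computation. \(D(s)^{\otimes8}\) multiplies the word \(x\) by \(s^{8-2\wt(x)}\). So the central layer of weight four is unchanged, the all-zero coefficient becomes \(s^8A=c\), and the all-one coefficient becomes \(s^{-8}B=AB/(c)\cdot A/A\). By \eqref{eq:q8-canonical-endpoint} this equals \(c^2/c=c\). The identity \(D(s)^{\mathsf T}XD(s)=X\) is immediate, which gives \eqref{eq:q8-canonical-normalization}.
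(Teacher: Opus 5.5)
Your first closure, joining $p^{(1)}$ to $p^{(2)}$ after the four information-set edges, is exactly the paper's construction. You also computed its output correctly: $w(000)=w(111)=AB+c^2$ and $w(z)=2c^2$ otherwise. The gap is your conclusion that this output ``is a product of three anti-diagonal binaries'' and so lies in $\mathcal B_\times$. Support on the matching code, with each exposed port opposite its copy, is necessary for membership but not sufficient. Membership also requires the coefficient function $w$ on the three pair-choice bits to be a product $\lambda\prod_i\alpha_i^{z_i}\beta_i^{1-z_i}$, so every $2\times2$ table obtained by fixing one bit must be singular. Fixing $z_3=0$ gives
\[
\det\begin{pmatrix}AB+c^2&2c^2\\2c^2&2c^2\end{pmatrix}=2c^2(AB-c^2),
\]
so $w$ factors only when $AB=c^2$. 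Two small observations then finish the argument along this route. If $AB=-c^2$, the output is nonzero but has only six support words, whereas every nonzero six-port member of $\mathcal B_\times$ has exactly $2^3$. If $AB\ne-c^2$, the full eight-word support forces the three opposite pairs as the only possible matching, and the nonzero minor excludes membership. This is precisely the paper's proof; by discarding the construction you lost the decisive step.

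Your fallback closure, joining $p^{(1)}$ to $p'^{(2)}$ with $p\ne p'$, does work. However, the proposal only says you \emph{expect} the right minor and hedges about trying other choices, so as written the dichotomy is not established. Here is how it goes. Write $J^c=\{p,p',r,s\}$; it is itself an information set, since its xor is $111\ne0$. Each output word $(x_{p'},x_r,x_s;\bar x_p,\bar x_r,\bar x_s)$ with $x_p=x_{p'}$ then has a unique lift. The support is the full matching code on the pairs $\{p'^{(1)},p^{(2)}\},\{r^{(1)},r^{(2)}\},\{s^{(1)},s^{(2)}\}$. The coefficients are $AB$ at the two antipodal choices and $c^2$ elsewhere, giving the minor $c^2(AB-c^2)$. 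This variant even avoids the separate $AB=-c^2$ case, and it still uses two source vertices, four information-set edges and one closure.

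Your normalization step is correct, though the expression for $s^{-8}B$ should simply read $AB/c=c$.
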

	
	\begin{proof}
		Index the ports by \(V=\Ftwo^3\) and take the complementary affine
		information sets
		\[
		I_0=\{000,100,010,001\},
		\qquad
		I_1=V\setminus I_0.
		\]
		Join corresponding \(I_0\)-ports of two copies of \(M^\circ\) by four
		native \(X\)-edges.  An affine word is determined by its values on either
		information set.  Hence the internal equations force the two
		\(RM(1,3)\) words to be complementary on all eight coordinates.  The
		remaining ports form four opposite pairs indexed by \(I_1\), and their
		coefficient function on the four independent pair-choice bits is
		\[
		H(y)=
		\begin{cases}
			AB,&y=0^4\text{ or }1^4,\\
			c^2,&\text{otherwise}.
		\end{cases}
		\]
		Close one of these four pairs by a native \(X\)-edge.  On the three
		remaining matching bits the two antipodal coefficients are \(AB+c^2\),
		whereas the other six coefficients are \(2c^2\).  If \(AB=-c^2\), the
		output is nonzero but its support omits two of the eight choices of one
		endpoint from each remaining pair.  Otherwise that
		support uniquely determines its three opposite pairs, and a rank-one
		coefficient table would have all \(2\times2\) minors vanish after fixing
		any one of the three pair-choice bits.  Choosing a minor through
		an antipodal word instead gives determinant
		\[
		\det\begin{pmatrix}
			AB+c^2&2c^2\\
			2c^2&2c^2
		\end{pmatrix}
		=2c^2(AB-c^2).
		\]
		Thus \(AB\ne c^2\) gives the first outcome, and the remaining case is
		\eqref{eq:q8-canonical-endpoint}.
		
		Finally, \(D(s)^{\otimes8}\) multiplies a weight-\(k\) coefficient by
		\(s^{8-2k}\).  It fixes the central layer, sends \(A\) to \(As^8=c\),
		and sends \(B\) to \(Bs^{-8}=AB/c=c\).  This proves
		\eqref{eq:q8-canonical-normalization}; direct multiplication gives the
		displayed \(GO(X)\) identity.
	\end{proof}
	\subsubsection{Obtaining Equality from the Reed--Muller Signature}
	
	We use the Bell contraction identity and the occurrence reduction of
	\cite[full version, Definition~8.7 and Lemma~8.9]{ShaoCai2020}.
	The proof below supplies the ambient-relative step for our complex-valued
	high-charge language; it does not assume global ARS.
	Recall that \(\RMcore\) is the full sixteen-word \(RM(1,3)\) signature,
	and put \(\mathcal B=\{I,X,Y,Z\}\), with the ordered matrices of
	\cref{eq:bell-matrix-dictionary}.
	
	\begin{lemma}[The Reed--Muller Bell reduction under high charge]
		\label{thm:q8-normalized-bell-lift}
		Let \(\Gamma\) be a finite algebraic set of even-arity signatures such that
		\[
		\chi_r(x)\in4\mathbb Z
		\qquad
		\bigl(f\in\Gamma,\ \arity(f)=r,\ x\in\supp(f)\bigr).
		\]
		Suppose \(\Gamma\) contains \(c\RMcore\) for a known \(c\ne0\).  Then
		\[
		\KHolant(\Gamma,\mathcal B)\leT\KHolant(\Gamma).
		\tag{Q8.31}\label{eq:q8-relative-bell-reduction}
		\]
		Consequently, if \(\neg\Tract(K\Gamma)\), then
		\(\KHolant(\Gamma)\) is \(\SharpP\)-hard.
	\end{lemma}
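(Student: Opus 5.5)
The plan is to follow the Shao--Cai occurrence reduction for \(f_8\) (\cite[Definition~8.7 and Lemma~8.9]{ShaoCai2020}), but to carry it out entirely with native \(X\)-gadgets over \(\Gamma\), and with the single-occurrence machinery of \cref{lem:universal-directed-pair,lem:single-occurrence-replication}. We do not appeal to any global ARS property of \(\Gamma\).

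First I will establish the Bell contraction identities for \(\RMcore\). Attach an arbitrary Bell binary \(B_\gamma\in\mathcal B\) to two ports \(p,q\) of \(\RMcore\) through native \(X\)-edges and close the loop. I then compute the six-port output. The \(RM(1,3)\) support turns this into a sum over affine functions \(\ell\) of phases \((-1)^{\ell(p)\cdot(\cdot)}\) restricted to \(\ell(p)\oplus\ell(q)\) fixed. From this I will read off that the output is a Bell product on the matching \(\{\{t,t+p+q\}\}\), with the Bell labels on each pair determined affinely by \(\gamma\). This is the content of \cite[Theorems~13--14]{CaiFuShao2020Entanglement}, and it is finitely checkable.

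The key consequence is a family of fixed gadgets with a specific form. Each takes one dangling Bell binary at a designated port pair of an \(\RMcore\)-vertex. The gadget outputs a tensor product of a Bell binary on a designated pair (playing the role of \(g(P)\)) with further Bell binaries (playing the role of \(h(Y)\)). In some of these outputs, \(h\) ranges over \(I,Z,Y\), not only \(X\).

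Second, I will produce the first Bell binaries. The starting point is \(X\), which is always available as the native edge: an \(\RMcore\)-vertex with a loop through \(X\) already lies in \(\Gamma\). Composing two such contractions should give two-port gadgets whose outputs are \(Z\) or \(I\), up to known scalars. Where a direct gadget cannot produce a needed binary \(g\) but one occurrence suffices, I will use \cref{lem:single-occurrence-replication}. Its three conditions are supplied as follows. Condition 1 comes from an explicit one-occurrence reduction. Condition 2 is the Bell chain just described, where attaching \(g\) regenerates \(g\) together with a new copy of \(h\). Condition 3 is a closed instance with nonzero value, for example \(g\) closed against \(X\) or \(\RMcore\) fully contracted with Bell binaries; its value can be computed from the \(RM(1,3)\) weight enumerator. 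Iterating this over \(\gamma\in\Ftwo^2\) adjoins all of \(\mathcal B\) and proves \eqref{eq:q8-relative-bell-reduction}.

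Third, the consequence. With \(I\in\Gamma\) after the enlargement, \cref{thm:eq2-kholant-dichotomy} applies to \(\Gamma\cup\mathcal B\). Since \(\neg\Tract(K\Gamma)\), the enlargement is also nontractable by \cref{lem:downward-inheritance}, so it is hard. Hardness then transfers back to \(\Gamma\) through the reduction.

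The main obstacle is condition 1, the one-occurrence access to the first non-native Bell binary. Here the charge hypothesis \(\chi_r\in4\mathbb Z\) must be used. Every gadget over \(\Gamma\) has all half-charges divisible by \(4\), so \(Z\) and \(I\), whose supports carry half-charge \(\pm1\), cannot be realized directly. This is where the \(\Delta^{\le1}\) device of \cref{lem:universal-directed-pair} enters: one \(\Delta\)-vertex supplies exactly the charge defect. I would first try to show that one \(\Delta\)-vertex attached to an \(\RMcore\) Bell gadget yields one occurrence of \(Z\), in the style of the chain argument in \cref{lem:hc-rel-complementary-unary}. After that, the charge-neutral replication chain spreads this single occurrence to unrestricted access.
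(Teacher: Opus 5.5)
There is a genuine gap, and it lies exactly where you locate the main obstacle. The charge hypothesis makes your route impossible, not merely difficult.

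The signature \(\Delta=\DeltaZero\otimes\DeltaOne\) is supported on the single word \(01\), so its half-charge is \(0\). A gadget containing one \(\Delta\)-vertex therefore still has all boundary half-charges in \(4\mathbb Z\), and every binary it realizes is anti-diagonal. So \(\Delta\) supplies no charge defect. Neither it nor the composed \(X\)-loop contractions of your second step can ever output \(Z\) or \(I\); that step already contradicts your final paragraph.

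Worse, \(I\) and \(Z\) have half-charges \(\pm1\), \(X\) and \(Y\) have \(0\), and every member of \(\Gamma\) has half-charges in \(4\mathbb Z\). Hence every instance over \(\Gamma\cup\mathcal B\) with an odd number of \(I\)- and \(Z\)-vertices has value zero. Condition~3 of \cref{lem:single-occurrence-replication} can therefore never hold for \(g\in\{I,Z\}\), and single-occurrence replication cannot give unrestricted access to them. Your proposed witnesses vanish anyway: closing \(g\) by a native edge gives \(g(0,1)+g(1,0)=0\) for every \(g\in\{I,Y,Z\}\). Finally, condition~1 (``an explicit one-occurrence reduction'') is left unspecified, although for \(Y\) it is precisely the nontrivial content.

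The missing idea is to evaluate the surviving instances directly rather than replicate.

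\begin{itemize}
\item Use the Bell identity in the opposite direction. Attaching \(b\in\mathcal B\) to two ports of \(c\RMcore\) leaves \(b^{\otimes3}\), so three \(b\)-vertices can be replaced by one \(b\)-vertex and one core vertex. Repeating this leaves at most two disjoint occurrences.
\item For \(Y\), two occurrences are computed by two native closures, via the exact identity \(Y_{12}\otimes Y_{34}=X_{14}\otimes X_{23}-X_{13}\otimes X_{24}\). One occurrence is either identically zero over \(\Gamma\), or is computed by applying that identity to \(H\otimes G\) for a fixed binary gadget \(G\) with \(\langle G,Y\rangle\ne0\).
\item For \(I\), one occurrence is zero by charge. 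For two occurrences, the hypothesis \(\chi_r\in4\mathbb Z\) makes the four-port residual over \(\Gamma\cup\{Y\}\) an EO tensor. The identity \(
(I_{12}\otimes I_{34})|_{HW_4^=}
=\tfrac12\bigl(X_{13}\otimes X_{24}+X_{14}\otimes X_{23}-X_{12}\otimes X_{34}\bigr)|_{HW_4^=}
\) then evaluates it with three oracle closures.
\item Finally, \(YXI=Z\) supplies \(Z\).
\end{itemize}

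Your final step, using downward inheritance and \cref{thm:eq2-kholant-dichotomy}, is fine once \eqref{eq:q8-relative-bell-reduction} has been established this way.
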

	
	\begin{proof}
		Throughout, boundary tensors refer to direct native-\(X\) gadgets over
		the indicated signature set, with ordered boundary ports.  The gadgets
		may be disconnected; components without boundary contribute scalars.
		
		\smallskip
		\noindent\emph{Occurrence reduction.}
		For each ordered \(b\in\mathcal B\), attaching \(b\) to two ports of
		\(c\RMcore\) leaves, up to a known nonzero scalar and a fixed ordering of
		the six remaining ports, \(b^{\otimes3}\).
		This is the identity used in
		\cite[full version, Lemma~8.9]{ShaoCai2020}.
		It remains valid in native \(X\)-coordinates, since
		\(XbX=\pm b\).  Thus any three \(b\)-vertices can be replaced by one
		\(b\)-vertex and one core vertex, reducing their number by two.
		Repeating this operation leaves at most two occurrences, with linear
		growth in the instance size and a tracked nonzero scalar.
		This is a tensor identity, so arbitrary complex labels in the remaining
		instance do not affect the reduction.
		
		For \(b=Y\) or \(I\), a self-loop on a \(b\)-vertex has value zero,
		while two such vertices sharing an edge collapse to a native \(X\)-wire,
		up to their recorded orientation signs.  If they share both edges, they
		form a component of known scalar value.  We may therefore assume that
		the remaining \(b\)-vertices are disjoint.  Removing \(k\le2\) of them
		leaves a direct gadget with \(2k\) ordered boundary ports.
		
		\smallskip
		\noindent\emph{Adjoining \(Y\).}
		The following identity holds on all sixteen four-bit assignments:
		\[
		Y_{12}\otimes Y_{34}
		=X_{14}\otimes X_{23}-X_{13}\otimes X_{24}.
		\tag{Q8.32}\label{eq:q8-y-square-closures}
		\]
		Hence two disjoint \(Y\)-occurrences can be evaluated by the difference
		of two native-\(X\) closures of the residual gadget.
		The effective kernel of a \(Y\)-vertex attached through two native edges
		is \(XYX=-Y\), so the two signs cancel here.
		
		It remains to evaluate one occurrence.  Write
		\(\langle A,B\rangle=\sum_x A(x)B(x)\) for the bilinear coefficient
		pairing.  If every binary boundary tensor \(H\) over \(\Gamma\) satisfies
		\(\langle H,Y\rangle=0\), every such instance has value zero.
		Otherwise fix one direct binary gadget \(G\) over \(\Gamma\) with
		\[
		\kappa=\langle G,Y\rangle\ne0.
		\]
		Its size and the nonzero algebraic number \(\kappa\) depend only on the
		fixed language \(\Gamma\), not on the input instance; its truth table is
		computed by finite exact summation.
		For the residual binary \(H\) of an input, apply
		\eqref{eq:q8-y-square-closures} to the disconnected gadget \(H\otimes G\).
		Two oracle closures compute
		\[
		\langle H\otimes G,Y\otimes Y\rangle
		=\kappa\langle H,Y\rangle.
		\]
		Dividing by the known \(\kappa\), and restoring the sign from \(XYX=-Y\),
		gives the required one-occurrence value.
		The case distinction is a statement about the fixed language: in the
		second case one fixed witness gadget is part of the reduction, and in the
		first case the zero rule applies to every input.
		Together with occurrence reduction, this proves
		\[
		\KHolant(\Gamma,Y)\leT\KHolant(\Gamma).
		\tag{Q8.33}
		\]
		
		\smallskip
		\noindent\emph{Adjoining \(I\).}
		Put \(\Lambda=\Gamma\cup\{Y\}\).  The added signature has charge zero,
		so all supported half-charges in \(\Lambda\) still lie in \(4\mathbb Z\).
		Charge conservation across native edges therefore makes every binary
		boundary tensor over \(\Lambda\) anti-diagonal and every quaternary
		boundary tensor EO.  No factorization or ARS assumption on these tensors
		is required.
		
		Use occurrence reduction for \(I\).  One disjoint occurrence evaluates
		to zero, since \(XIX=I\) is orthogonal to every anti-diagonal binary for
		the bilinear pairing.  For two disjoint occurrences, let \(H\) be the
		four-port residual tensor, with the removed vertices on \(12\) and \(34\).
		On its EO support,
		\[
		\left.(I_{12}\otimes I_{34})\right|_{HW_4^=}
		=
		\left.
		\frac{X_{13}\otimes X_{24}
			+X_{14}\otimes X_{23}
			-X_{12}\otimes X_{34}}{2}
		\right|_{HW_4^=}.
		\tag{Q8.34}\label{eq:q8-equality-eo-closures}
		\]
		Thus three native-\(X\) closures, with respective coefficients
		\(1/2,1/2,-1/2\), evaluate the desired contraction.  These are
		\(\KHolant(\Lambda)\)-oracle queries, valid for arbitrary complex
		coefficients of \(H\).  We have proved
		\[
		\KHolant(\Gamma,Y,I)\leT\KHolant(\Gamma,Y).
		\tag{Q8.35}
		\]
		Finally \(YX I=Z\) supplies the remaining Bell signature by a direct
		gadget, and an \(X\)-vertex merely subdivides a native wire.
		This proves \eqref{eq:q8-relative-bell-reduction}.
		
		If \(\neg\Tract(K\Gamma)\), restriction gives
		\(\neg\Tract(K(\Gamma\cup\{Y,I\}))\).
		The binary-equality \(K\)-Holant dichotomy
		\cref{thm:eq2-kholant-dichotomy}, namely
		\cite[Theorem~1.2 and Lemma~2.13]{Liu2026DEQ2Holant},
		makes the enlarged problem hard.  The two displayed reductions transfer
		this hardness to \(\KHolant(\Gamma)\).
	\end{proof}

	\begin{proof}[Proof of \cref{thm:height-at-least-four} for \(d=4\)]
		Assume \(\neg\Tract(K\Gamma)\) and \(Q\ne0\), as in
		\cref{subsec:hc-normalization}.
		Apply \cref{lem:hc-port-normalization} and suppose hardness has not
		already been proved.  If \(P_A\cap P_B\ne\varnothing\) for some
		\(A\ne B\) with \(A\cap B\ne\varnothing\), hardness follows from
		\cref{lem:q8-scope-collision}.  Otherwise
		\cref{lem:q8-collision-free-atlas,lem:q8-canonical-support} give a
		bijection \(\iota:V\to\Ftwo^3\) under which the actual mountain is
		\[
		M^\circ=A\delta_{\zero^8}+B\delta_{\one^8}+cR_8^{\rm cen},
		\qquad ABc\ne0.
		\]
		The first outcome of \cref{lem:q8-canonical-endpoint-repair} is an actual
		nonzero EO signature outside \(\mathcal B_\times\), so
		\cref{thm:hc-relative-eo-terminal} gives hardness.
		Otherwise \eqref{eq:q8-canonical-normalization} supplies
		\(D(s)=\diag(s,s^{-1})\in GO(X)\) with
		\[
		D(s)^{\otimes8}M^\circ=c\RMcore.
		\]
		Apply \(D(s)\) to the entire current signature set and again call it
		\(\Gamma\).  This whole-set equivalence preserves support and charge,
		and \cref{lem:GOX-tract-invariance} preserves nontractability.
		Now \(c\RMcore\in\Gamma\), so \cref{thm:q8-normalized-bell-lift}
		completes the proof.
	\end{proof}
	
	\section{Synthesis and Proof of the Dichotomy}
	
	\subsection{Combining the Height Cases}
	\label{sec:mountain-synthesis}
	
	The height analyses now give a dichotomy for the entire signature set;
	every tractable conclusion concerns one common presentation of that set.
	
	\begin{theorem}[Dichotomy for sets containing a mountain]
		\label{thm:mountain-interface}
		Let \(\cG\) be a finite algebraic set of even-arity signatures containing a
		mountain of height \(d\).  If \(d\ge3\), assume that
		\(\chi_r(x)\in d\mathbb Z\) for every \(f\in\cG\) of arity \(r\) and
		every \(x\in\supp(f)\), as for the gcd-height mountain produced by
		\cref{lem:charge-bridge}.  Then
		\[
		\Tract(K\cG)
		\qquad\text{or}\qquad
		\KHolant(\cG)\text{ is }\SharpP\text{-hard}.
		\]
	\end{theorem}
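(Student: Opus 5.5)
The plan is a case split on the height \(d\) of the given mountain, dispatching each case to the dichotomy already proved for that height. Along the way I will check that the standing hypotheses of each cited result are met by \(\cG\) itself.

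For \(d=1\), the mountain is an ordered binary with both diagonal entries nonzero. \Cref{cor:height-one-complete} applies verbatim to \(\cG\) and gives \(\Tract(K\cG)\) or hardness. No charge hypothesis is needed here, and that theorem imposes none.

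For \(d=2\), the mountain is a quaternary signature supported on \(HW_4^=\cup\{0000,1111\}\) with \(M(0000)M(1111)\ne0\). It therefore vanishes on odd weight and is an eight-vertex signature with both endpoint values nonzero, so \cref{cor:height-two-complete} (equivalently \cref{cor:eight-vertex-complete-closure}) applies directly.

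For \(d=3\), I use the assumed divisibility \(\chi_r(x)\in3\mathbb Z\) on all supports, together with the mountain in the form
\[
M=a\delta_{\zero^6}+b\delta_{\one^6}+Q,\qquad ab\ne0,
\]
where \(Q\) is its central EO part. This is exactly the hypothesis of \cref{thm:height-three-complete}, which I invoke.

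For \(d\ge4\), the same divisibility hypothesis with general \(d\), and the same decomposition of \(M\) into endpoints plus an EO central layer, are precisely the hypotheses of \cref{thm:height-at-least-four}.

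The only point needing care is bookkeeping rather than new mathematics. The height theorems are stated for finite algebraic sets, possibly requiring even arity, and their proofs apply zero/nullary preprocessing internally. Since \(\Tract(K\cG)\) is defined after that preprocessing, both sides of the dichotomy are unaffected. I must also confirm that \(\Tract\) in each cited conclusion is the same predicate \(\Tract(K\cG)\) on the original set, not on some normalized enlargement. Each cited theorem already transports its conclusion back, via \cref{lem:normalization-ledger}, \cref{lem:downward-inheritance} and \cref{lem:GOX-tract-invariance}, so the four conclusions combine into the stated disjunction.

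I expect no real obstacle. The step to double-check is that the definition of a mountain matches each theorem's form \(M=a\delta_{\zero^{2d}}+b\delta_{\one^{2d}}+Q\) with \(Q\) EO, which follows immediately from \cref{def:mountain}.
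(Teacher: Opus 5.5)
Your proposal is correct and is the same proof as the paper's: a case split on $d=1$, $d=2$, $d=3$, and $d\ge4$, dispatched to \cref{cor:height-one-complete}, \cref{cor:height-two-complete}, \cref{thm:height-three-complete}, and \cref{thm:height-at-least-four}, whose hypotheses follow directly from \cref{def:mountain} and the assumed divisibility. Your extra bookkeeping is harmless but unnecessary, since each cited theorem already states its conclusion as $\Tract(K\cG)$ or hardness for the given set itself.
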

	
	\begin{proof}
		Apply \cref{cor:height-one-complete} for \(d=1\),
		\cref{cor:height-two-complete} for \(d=2\),
		\cref{thm:height-three-complete} for \(d=3\), and
		\cref{thm:height-at-least-four} for \(d\ge4\).
		These cases exhaust all heights.
	\end{proof}
	
	\subsection{Proof of the Main Theorem}
	\label{sec:assembly}
	
	Recall that \cref{thm:main} gives an \(\FP\)-versus-\(\SharpP\)-hard
	dichotomy characterized by the tractability predicate \(\Tract\) defined
	in \cref{sec:predicate}.  We now complete its proof by combining the
	preceding results to establish the corresponding dichotomy in
	\(K\)-coordinates.  For a signature set \(\cG\) in these coordinates,
	write \(\Tract_K(\cG):=\Tract(K\cG)\).
	
	\begin{theorem}[Structural classification in \(K\)-coordinates]
		\label{thm:K-structural}
		Let \(\cG\) be a finite algebraic signature set.  If
		\(\Tract_K(\cG)\) holds, then \(\KHolant(\cG)\in\FP\); otherwise
		\(\KHolant(\cG)\) is \(\SharpP\)-hard.  The applicable side is decidable
		from exact algebraic input.
	\end{theorem}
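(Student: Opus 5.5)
The tractable side and decidability are immediate: \cref{thm:predicate-algorithms} decides \(\Tract(K\cG)\) and places \(\Holant(K\cG)\) in \(\FP\) when it holds, and \cref{eq:K-equivalence} identifies this problem with \(\KHolant(\cG)\). So the work is hardness under \(\neg\Tract_K(\cG)\). I would first preprocess as in \cref{subsec:zero-nullary}, which changes neither side. I would then split by arity and charge structure and send each branch to an existing dichotomy.

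\emph{Step 1 (odd arity and reductions to EO).} If some member has odd arity, then \(K\cG\) contains a nonzero odd-arity signature, and \cref{thm:external-odd} gives the conclusion directly. So assume \(\cG\) is all-even. If \(\cG\) is common one-sided, charge conservation forces every contributing assignment onto central words at every vertex. Then \(\KHolant(\cG)\equivT\KHolant(\cG|_{\EO})\), and \cref{thm:external-eo} classifies it; this matches alternative~\ref{case:one-sided} of \(\Tract\). If \(\cG\) is two-sided and single-layer, \cref{thm:external-single-weight} applies and matches alternatives~\ref{case:one-sided}--\ref{case:single-weight}.

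\emph{Step 2 (mountain).} In the remaining case \(\cG\) is two-sided and multilayer, and all supported half-charges are integers. Apply \cref{lem:charge-bridge} with \(d\) the gcd of the nonzero supported half-charges. In its second outcome both point masses are available, and the lemma itself gives hardness via \cref{lem:common-unary-exit}. In its first outcome we obtain an arity-\(2d\) signature \(M\) with nonzero endpoints and support on weights \(0,d,2d\), which is a mountain of height \(d\). Adjoining \(M\) keeps every supported half-charge in \(d\mathbb Z\), because gadgets preserve charge and \(M\) already has charges \(\pm d,0\). Downward inheritance (\cref{lem:downward-inheritance}) preserves \(\neg\Tract_K\). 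Then \cref{thm:mountain-interface} applies to \(\cG\cup\{M\}\) and gives hardness, which transfers back through \(\KHolant(\cG,M)\leT\KHolant(\cG)\).

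\emph{Main obstacle.} The delicate point is exhaustiveness of the case split. A set that is multilayer and two-sided but whose only nonzero half-charges are \(\pm\) multiples of \(d\) must still reach \cref{lem:charge-bridge}; for example, a set where only EO words occur in some members but charge occurs elsewhere. I would check that ``two-sided'' as used in \cref{lem:charge-bridge} means exactly that both signs occur among supported half-charges, so that the lemma applies. I would also check that the single-layer two-sided case is not accidentally routed through the mountain argument, where \cref{lem:charge-bridge} may return only point masses; that route is consistent anyway, since point masses give hardness. Finally, I would confirm that the predicate seen by each cited dichotomy is \(\Tract(K\cG)\) itself (for the enlarged set), and use downward inheritance to pull tractability back.
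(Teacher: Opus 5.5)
Your proposal is correct and follows the paper's proof. It preprocesses, sends odd arity to \cref{thm:external-odd}, common one-sided sets to \cref{thm:external-eo}, and two-sided single-layer sets to \cref{thm:external-single-weight}. The remaining two-sided multilayer case goes through \cref{lem:charge-bridge} to \cref{thm:mountain-interface}, with downward inheritance and the gadget reduction transferring the conclusion.

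The only difference is that the paper first factor-closes the set and computes \(d\) afterwards, so that it can exhibit alternative~\ref{case:arity-two} when no higher-arity prime remains. You may safely skip this step. Both \cref{lem:charge-bridge} and \cref{thm:mountain-interface} are stated for arbitrary finite all-even sets, and each terminal result is itself a dichotomy relative to \(\Tract\).
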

	
	\begin{proof}
		The tractable implication and its effective recognition follow from
		\cref{thm:predicate-algorithms}.  For the converse, assume that
		\(\KHolant(\cG)\) is not \(\SharpP\)-hard; we prove
		\(\Tract_K(\cG)\).
		Apply zero and nullary preprocessing and let \(\Lambda\) be the resulting
		positive-arity signature set.  We use the equivalence convention throughout;
		\cref{lem:normalization-ledger} transports both common presentations and
		complexity conclusions back to \(\cG\).
		
		Odd arity is closed first by \cref{thm:external-odd}.  In the all-even case,
		test common one-sided support before factor extraction; this is
		alternative~\ref{case:one-sided} by charge conservation and
		\cref{thm:external-eo}.  If every member is single-layer, the remaining
		two-sided case is alternative~\ref{case:single-weight} by
		\cref{thm:external-single-weight}.  Otherwise choose two signature--support-point pairs with
		\begin{equation*}
			\begin{aligned}
				f_\pm&\in\Lambda,
				\qquad x_\pm\in\supp(f_\pm),\\
				\chi_{\arity(f_+)}(x_+)&>0>
				\chi_{\arity(f_-)}(x_-).
			\end{aligned}
		\end{equation*}
		The two signs need not occur in the same signature.
		
		Apply \cref{lem:factor-saturation} and continue to write \(\Lambda\) for
		the factor-closed set.  Both selected signature--support-point pairs remain present.  An
		odd-arity factor returns to \cref{thm:external-odd}; if no higher-arity
		prime remains,
		alternative~\ref{case:arity-two} holds.
		
		We are left with a two-sided, multilayer all-even signature set containing a
		higher-arity prime.  Let \(d\) be the gcd of its nonzero supported
		half-charges, computed on the current set \(\Lambda\) after factor closure.
		Apply \cref{lem:charge-bridge} to \(\Lambda\); the two selected pairs
		verify its sign hypothesis.  In
		its pin outcome, take \(\DeltaZero,\DeltaOne\in\Lambda\) by the equivalence
		convention and apply the odd-arity theorem.  Otherwise that lemma places in
		\(\Lambda\) a mountain of the charge gcd \(d\), with
		\[
		\chi_{\arity(f)}(x)\in d\mathbb Z
		\qquad(f\in\Lambda, x\in\supp(f)).
		\]
		Thus \cref{thm:mountain-interface} applies.  Its hard outcome is excluded,
		so \(\Tract_K(\Lambda)\), and hence \(\Tract_K(\cG)\), holds.
		
		For completeness, if the proof gives a common \(K\)-presentation
		\(W\), the ordinary-coordinate basis is \(M=KW\).  Since
		\(K^{\mathsf T}K=X\),
		\begin{equation*}
			M^{-1}(K\cG)=W^{-1}\cG,
			\qquad M^{\mathsf T}M=W^{\mathsf T}XW,
		\end{equation*}
		so one basis handles both the edge and the entire signature set.
	\end{proof}
	
	\begin{proof}[Proof of \cref{thm:main}]
		Decidability and polynomial-time evaluation in every tractable case follow
		from \cref{thm:predicate-algorithms}.  For hardness, put
		\(\cG=K^{-1}\cF\).  By \cref{eq:K-equivalence},
		\begin{equation*}
			\Holant(\cF)\equivT\KHolant(\cG).
		\end{equation*}
		By the definition of \(\Tract_K\),
		\begin{equation*}
			\Tract_K(\cG)=\Tract(K\cG)=\Tract(\cF).
		\end{equation*}
		If \(\neg\Tract(\cF)\), then \(\neg\Tract_K(\cG)\), so
		\cref{thm:K-structural} implies that \(\KHolant(\cG)\) is
		\(\SharpP\)-hard.  The displayed problem equivalence transfers this
		hardness to \(\Holant(\cF)\).  All reductions are polynomial-time Turing
		reductions, as required.
	\end{proof}

	\section{Conclusion}
	\label{sec:conclusion}
	
	We have proved an \(\FP\)-versus-\(\SharpP\)-hard dichotomy for Boolean
	Holant on general graphs with arbitrary finite sets of algebraic
	complex-valued signatures, under polynomial-time Turing reductions.
	The classification uses the same seven conditions as the odd-arity
	dichotomy, and this criterion is decidable from exact algebraic input.
	
	For the all-even cases, charge conservation and tensor-factor extraction
	either lead to a known classification or produce a mountain signature.
	The analysis by height, together with the existing odd-arity,
	Eulerian-orientation, binary-equality, and quaternary dichotomies, yields
	the full result.  In each holographic tractable case, one basis works
	simultaneously for the edge and every signature in the set.

	\appendix
	\section{Exact Finite Certificates}
	\label{sec:exact-finite-certificates}
	
	This appendix gives the exact finite calculations used in the
	odd-dihedral order-three and high-charge six- and eight-port arguments.
	Each detailed proposition is followed by the consequence used in the main
	proof.  We retain the exhaustive counts, equations, verification details,
	and the distinction between the finite calculations and the reductions
	proved in the text.
	
	The accompanying programs reconstruct and check the finite objects from
	the definitions below. Reproduction instructions and the correspondence
	between manuscript statements and programs are given under Code Availability.
	The algebraic reductions and direct-gadget arguments remain in the text.
	
	Throughout this appendix, ranks and linear systems are computed exactly
	over the indicated characteristic-zero field.  A projective vector is
	normalized by making its first nonzero coordinate one, except when a
	primitive integer representative with first nonzero coordinate positive
	is explicitly requested.  For a nonzero rational projective vector, this
	integer representative is obtained by clearing denominators, dividing all
	coordinates by their greatest common divisor, and choosing the sign so
	that the first nonzero coordinate is positive.  No numerical tolerance or
	random sampling is used.
	
	As a supplementary check, the script
	\texttt{verify\_high\_charge\_q8\_normalized.py} verifies the
	rank of each relevant \(560\)-by-\(98\) integer matrix over
	\(\mathbb Q\): a nonzero vector in its kernel gives rank at most \(97\),
	while rank \(97\) modulo the prime \(1{,}000{,}000{,}007\) exhibits a
	nonzero integer minor and gives rank at least \(97\).  For its integer
	relation matrix, an explicit kernel and a nonzero \(13\)-by-\(13\) minor
	of absolute value \(16\) give rank \(13\); rank \(9\) modulo two then
	forces the nonzero Smith invariant factors to be nine \(1\)'s and four
	\(2\)'s.  These are exact rank and divisibility arguments over the
	integers, not heuristic finite-field specializations.
	
	\subsection{Odd-dihedral and high-charge certificates}
	\label{proofsec:finite-certificates}
	
	\subsubsection{The order-three odd-dihedral six-port certificate}
	\label{proofsubsec:cert-odd-dih-q6}
	
	Fix \(\omega\in\AlgNums\) with \(\omega^2+\omega+1=0\), and identify a
	residue \(r\in\mathbb F_3\) with its representative \(0,1,2\) whenever it
	appears as an exponent \(\omega^r\).  For
	\(\epsilon\in\Ftwo\), let
	\[
	\mathcal B_\epsilon
	=
	\bigl\{\{x,\bar x\}:x\in\Ftwo^4,\
	\wt(x)\equiv\epsilon\pmod2\bigr\}
	\]
	be the four complement blocks of parity \(\epsilon\).  Choose an orientation
	\((x_p,\bar x_p)\), \(0\le p\le3\), of these four blocks, and for
	\(x\in\Ftwo^4\) let \(e_x\in\mathbb C^{\Ftwo^4}\) be the coordinate vector
	given by \(e_x(y)=\one[y=x]\).  For
	\(r,t\in\mathbb F_3\), put
	\begin{equation}
		V_{p,r}=\omega^r e_{x_p}+\omega^{2r}e_{\bar x_p},
		\qquad
		E_{pq}(r,t)=V_{p,r}-V_{q,t}.
		\label{eq:cert-odd-dih-q4-atlas-vectors}
	\end{equation}
	Define the finite projective set
	\begin{equation*}
		\mathcal Q_3^\epsilon
		=
		\{[V_{p,r}]:0\le p\le3,\ r\in\mathbb F_3\}
		\,\cup\,
		\{[E_{pq}(r,t)]:0\le p<q\le3,\ r,t\in\mathbb F_3\}.
	\end{equation*}
	
	For a six-port tensor \(F\), an unordered pair \(i<j\), and
	\(a,b\in\Ftwo\), write \(F_{ij}^{ab}\) for the quaternary slice obtained by
	setting the two selected ports to \(a,b\), with the residual ports kept in
	increasing order.  For \(\xi\in\Ftwo\) and \(s\in\mu_3\), define
	\begin{equation}
		C_{ij,\xi,s}(F)
		=
		F_{ij}^{0,\xi}-sF_{ij}^{1,1-\xi}.
		\label{eq:cert-odd-dih-q6-card}
	\end{equation}
	For \(\eta\in\Ftwo\), if
	\(\supp(F)\subseteq\{x\in\Ftwo^6:\wt(x)\equiv\eta\pmod2\}\), then this
	contraction has residual parity \(\eta\oplus\xi\).
	
	For a support \(S\subseteq\Ftwo^6\), define
	\begin{equation*}
		U_{ij}^{\xi}(S)
		=
		\{x_{\widehat{\{i,j\}}}:x\in S,\ x_i\oplus x_j=\xi\}.
	\end{equation*}
	The \((i,j,\xi)\) \emph{xor sector} consists of the assignments satisfying
	\(x_i\oplus x_j=\xi\).
	We will use the condition that \(U_{ij}^{\xi}(S)\) is empty or is the
	union of one or two complete blocks from a single parity class
	\(\mathcal B_\epsilon\).  In the table below, the middle column lists
	the numbers of complement blocks supporting the three projective points;
	for example, \(1/1/2\) means that two points each use one block and the
	third uses two.  For three distinct blocks \(p,q,u\), the points
	\([E_{pq}(r,t)]\), \([E_{pu}(r,v)]\), and \([E_{qu}(t,v)]\) are collinear
	because \(E_{pq}(r,t)-E_{pu}(r,v)+E_{qu}(t,v)=0\).
	
	We call a pencil \(A-sB\) \emph{projectively nonconstant} when
	\(A,B\) are linearly independent.  Otherwise it is zero or has constant
	projective class wherever it is nonzero.
	Consider labelled pencils \(q_s=A-sB\), \(s\in\mu_3\), modulo one common
	nonzero scalar, such that all three \(q_s\) are nonzero and their three
	distinct projective classes are points of \(\mathcal Q_3^\epsilon\)
	assigned bijectively to the phase labels.  Only pencils satisfying these
	conditions are counted below.
	
	\begin{proposition}[Order-three odd-dihedral six-port classification]
		\label{prop:cert-odd-dih-q6-order-three}
		The following statements hold.
		
		\begin{enumerate}[label=\textnormal{(\roman*)}]
			\item Each set \(\mathcal Q_3^\epsilon\) has \(66\) points.  Its projective
			lines containing at least three of its points are exactly
			\begin{center}
				\small
				\begin{tabular}{ccl}
					\toprule
					number & supported blocks & geometry\\
					\midrule
					\(4\)   & \(1/1/1\) & one fixed complement block\\
					\(36\)  & \(2/2/2\) & one fixed Segre ruling\\
					\(54\)  & \(1/1/2\) & a secant between two blocks\\
					\(108\) & \(2/2/2\) & the displayed relation on three blocks\\
					\bottomrule
				\end{tabular}
			\end{center}
			Every such line contains exactly three points of \(\mathcal Q_3^\epsilon\).
			Hence there are \(202\) such lines and \(202\cdot6=1212\) labelled
			pencils satisfying the stated conditions in each parity.
			
			\item Let \(\eta\in\Ftwo\), let
			\(\supp(F)\subseteq\{x\in\Ftwo^6:\wt(x)\equiv\eta\pmod2\}\), and suppose that every contraction
			in \cref{eq:cert-odd-dih-q6-card} is zero or has projective class in the
			appropriate set \(\mathcal Q_3^{\eta\oplus\xi}\).  For no fixed
			\(i<j\) and \(\xi\) does the projectively nonconstant family
			\(s\mapsto C_{ij,\xi,s}(F)\) lie on a line from the third or fourth row of
			part~\textnormal{(i)}.  Consequently, for every such \(F\), each projectively nonconstant
			family \(s\mapsto C_{ij,\xi,s}(F)\) either is supported on one complement block or
			lies in a fixed Segre ruling.
			
			\item Exactly \(873\) supports \(S\subseteq\Ftwo^6\) make every
			\(U_{ij}^{\xi}(S)\) empty or a union of one or two complete complement
			blocks of a single parity.  They are the empty support; the \(392\)
			distinct cosets, counted without duplication,
			\[
			a+L_\pi,
			\qquad
			L_\pi=\Span_{\Ftwo}\{\one_C:C\in\pi\},
			\]
			where \(a\in\Ftwo^6\) and \(\pi\) is a partition of the six ports into even
			blocks; and the
			\(480\)-element bit-flip/port-permutation orbit of
			\begin{equation*}
				S_\star=
				\{000000,000011,001100,110011,111100,111111\}.
			\end{equation*}
			The distinct affine supports occur in the respective numbers
			\[
			32,\qquad240,\qquad120
			\]
			for partition types \((6)\), \((4+2)\), and \((2+2+2)\), where a partition
			type is the multiset of its block sizes.  Every nonempty
			support in the classification has one parity.
		\end{enumerate}
	\end{proposition}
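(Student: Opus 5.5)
This proposition is a finite exact computation, and I would prove it as such, organized so that each part reduces to an exhaustive enumeration over a small explicitly constructed set with exact arithmetic in \(\mathbb Q(\omega)\). Part~(i) comes first. I would construct the \(4\cdot3=12\) points \([V_{p,r}]\) and the \(6\cdot9=54\) points \([E_{pq}(r,t)]\) in each parity. Their supports are one block or two distinct blocks, so no coincidences occur between the two families. This gives \(66\) distinct projective points; distinctness is checked by normalizing the first nonzero coordinate.

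To find the lines, I would enumerate all \(\binom{66}{2}\) pairs, compute the span of each pair, and collect the points of \(\mathcal Q_3^\epsilon\) on it by exact \(3\times3\) minor tests. Lines are deduplicated as sets of points. The lines with at least three points are then classified by the multiset of supported block counts and by the stated geometry:
\begin{itemize}
\item lines inside one block's two-dimensional space;
\item fixed Segre rulings of \(E\)-points;
\item secants \(V_{p,r},V_{q,t},E_{pq}(r,t)\);
\item the three-block relation displayed before the proposition.
\end{itemize}
The conceptual sanity check is that each \(E_{pq}\) line through two blocks meets the \(V\)-points only in the obvious secant. The counts \(4,36,54,108\), the property that each such line has exactly three points, and the total \(202\cdot3!=1212\) labelled pencils then follow by direct tally.

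For part~(ii) I would argue by contradiction at the level of one contraction family, but the key input is structural. If \(C_{ij,\xi,s}(F)\) lies on a type \(1/1/2\) or three-block line, then the contraction support varies with \(s\). I would then take the contractions at other port pairs that share residual coordinates and derive a second pencil whose classes cannot lie in \(\mathcal Q_3^{\eta\oplus\xi}\). Since a direct human argument here is delicate, my concrete plan is computational. For each of the \(1212\) labelled pencils \((A,B)\) in the third or fourth row, and each port pair and xor, I would reconstruct the two slices \(F^{0,\xi}_{ij},F^{1,1-\xi}_{ij}\) from \(A,B\) (the pencil determines them up to the common scalar). I would then treat the remaining slices of \(F\) as unknowns and ask whether all other \(C_{kl,\xi',s}\) can be zero or lie in \(\mathcal Q_3\). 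Because each other contraction's class lies in a finite set, this becomes a finite case split over assignments of points (or zero) to contractions, each yielding a linear system in the unknown coefficients up to scale. Exact rank computation shows every branch is inconsistent with the prescribed nonconstant pencil. Symmetry under port permutation and bit flip, which permute the kernels projectively, would cut the search down to one representative \((i,j)\). The "consequently" clause is then immediate from part~(i): only the first two rows remain.

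Part~(iii) is a pure support enumeration. Since every admissible \(S\) is closed under the stated sector conditions, I would first show that \(S\) has one parity and is complement-closed, using the \(U^\xi_{ij}\) conditions. I would then enumerate by backtracking over complement-closed subsets of a fixed parity class (\(2^{16}\) candidates per parity), testing the fifteen pairs and both xors. Among the survivors, I would identify the cosets \(a+L_\pi\) by computing the affine span and checking closure, and compute the bit-flip/port-permutation orbit of \(S_\star\). This gives the counts \(1+392+480=873\) with the split \(32/240/120\).

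I expect the main obstacle to be part~(ii). The enumeration there is not over supports alone but over coefficient compatibilities, and the finite case split must be organized so that it terminates with exact rank certificates. Fixing the pencil and propagating constraints only through contractions sharing residual ports seems the most economical first attempt.
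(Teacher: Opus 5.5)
\textbf{Gap in part (ii).} You fix the two slices of the chosen sector from the pencil, which is correct. You then treat the opposite sector as sixteen free unknowns and case-split over ``zero or one of the $66$ point classes'' for each of the $87$ remaining contractions. As described, that is on the order of $67^{87}$ branches. You give no organization that makes it terminate, and you say yourself that this organization is the open point. So ``exact rank computation shows every branch is inconsistent'' is not yet a certificate.

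The missing idea is to apply the hypothesis, and part (i) itself, to the complementary sector $\zeta=\xi\oplus1$ of the \emph{same} pair $\{i,j\}$. Its contractions $C_{ij,\zeta,s}$ form a pencil whose members are all zero or in $\mathcal Q_3^{\eta\oplus\zeta}$. It is therefore one of three kinds:
\begin{enumerate}
\item one of the $1212$ labelled nonconstant pencils, fixed up to one scalar $\lambda$;
\item projectively constant, with slices $cv,dv$ for one of the $66$ points $v$;
\item zero.
\end{enumerate}
The two sectors of one pair determine $F$ completely. Hence $F$ ranges over finitely many families with at most two parameters. For each remaining contraction and each target point, the admissible set is all, none, or one value of $\lambda$ (respectively an affine set of $(c,d)$), and the exact intersections turn out empty. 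This is how the paper makes (ii) a small exact check. If you branch first on the three contractions $C_{ij,\zeta,s}$, your scheme collapses to it.

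The paper also uses the phase twists $R_t$, available because $\det H\not\equiv0\pmod 3$, together with the $S_3$ of phase permutations, to reduce rows 3--4 to two representative pencils. That step is optional. Note also that rows 3--4 give $(54+108)\cdot6=972$ labelled pencils, not $1212$.

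\textbf{Parts (i) and (iii).} Part (i) is the paper's computation. Part (iii) takes a different but valid route.
\begin{itemize}
\item Single parity follows by pigeonhole. Two supported words $x,y$ of opposite parity have $(x\oplus y)_i=(x\oplus y)_j$ for some pair. Then $x,y$ lie in one sector of $\{i,j\}$ with residuals of opposite parity, which the condition forbids.
\item Complement-closure, which you only promise, also holds. For $x\in S$ and any pair, the complement of $x_{\widehat{\{i,j\}}}$ lies in $U^{x_i\oplus x_j}_{ij}(S)$, and its only lifts in that sector are $\bar x$ and $\bar x\oplus\one_{\{i,j\}}$. So if $\bar x\notin S$, all fifteen words $\bar x\oplus\one_{\{i,j\}}$ lie in $S$. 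After a flip to $x=\zero^6$, the sector $x_1=x_2$ then contains $0000$, $1111$ and all six weight-two residuals, i.e.\ four blocks, a contradiction.
\end{itemize}
This reduces the search to $2\cdot2^{16}$ candidates. That is smaller and simpler than the paper's intersection of $1045^2$ local candidates on one port pair, checked against the other fourteen pairs.
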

	
	\begin{proof}
		The vectors in \cref{eq:cert-odd-dih-q4-atlas-vectors} give twelve
		one-block and fifty-four two-block projective points.  Normalize every
		nonzero vector at its first nonzero coordinate.  For every one of the
		\(\binom{66}{2}\) point pairs, take the exact reduced-row-echelon form of
		its two-dimensional span and group equal row spaces.  The groups containing
		at least three points of \(\mathcal Q_3^\epsilon\) all have size three and give the table in
		part~\textnormal{(i)}.
		
		The table also has a direct block interpretation.  The three points on one
		fixed block give four one-block lines.  On a fixed pair of blocks, the nine
		two-block points form a \(3\)-by-\(3\) Segre grid in the transformed
		phase coordinates \(u=r+t\), \(v=r-t\) in \(\mathbb F_3\).
		Fixing \(u\) or \(v\), not the original \(r\) or \(t\) in
		\cref{eq:cert-odd-dih-q4-atlas-vectors}, gives its two rulings and hence
		\(6\cdot2\cdot3=36\) lines.  Every \(E_{pq}(r,t)\) lies on the secant
		through \(V_{p,r}\) and \(V_{q,t}\), giving \(6\cdot3^2=54\) secants.
		For three distinct blocks,
		\[
		E_{pq}(r,t)-E_{pu}(r,v)+E_{qu}(t,v)=0
		\]
		gives \(\binom43 3^3=108\) lines on three blocks.  Conversely, in a dependent
		triple every block occurring in one vector must occur in another.  According
		as the triple uses one, two, or three blocks, it is one of the displayed
		types; four blocks would leave a block occurring only once.
		
		Assigning the phases \(1,\omega,\omega^2\) to the three points on one of these \(202\) lines gives a
		unique affine pencil up to one common nonzero scalar.  Indeed,
		after the unique dependence is scaled, solve
		\[
		q_1=A-B,\qquad q_\omega=A-\omega B
		\]
		and verify \(q_{\omega^2}=A-\omega^2B\).  All six phase assignments are
		distinct.  This proves the last assertion of part~\textnormal{(i)}.
		
		For part~\textnormal{(ii)}, after a bit flip to even parity, fix
		\(x_0=0000\), \(x_1=0011\), \(x_2=0101\), and \(x_3=0110\).  For
		\(t=(t_1,\ldots,t_4)\in\mathbb F_3^4\), define
		\((R_tq)(x)=\omega^{\sum_{j=1}^4t_jx_j}q(x)\).  Under \(R_t\),
		the four oriented block labels change by \(Ht\),
		where
		\[
		H=
		\begin{pmatrix}
			1&1&1&1\\
			1&1&-1&-1\\
			1&-1&1&-1\\
			1&-1&-1&1
		\end{pmatrix},
		\qquad
		\det H=-16\ne0\pmod3.
		\]
		The phase permutations are generated by \(s\mapsto\alpha s\) and
		\(s\mapsto\alpha/s\), with
		\(\alpha\in\mu_3\); these maps form \(D_6\cong S_3\) and realize all six
		permutations of \(\mu_3\).  Together with port permutations and bit flips,
		these actions reduce the two-block secant and three-block cases to the following
		representative pencils for a fixed restriction \(x_i\oplus x_j=\xi\).
		A semicolon separates
		the four oriented complement blocks, and \(0\) denotes the pair \((0,0)\):
		\begin{align*}
			A_{\rm sec}&=(1,1;\omega,\omega;0;0),&
			B_{\rm sec}&=(\omega^2,\omega^2;\omega,\omega;0;0),
			\\
			A_{\rm tri}&=(1,1;\omega,\omega;\omega^2,\omega^2;0),&
			B_{\rm tri}&=(\omega,\omega;1,1;\omega^2,\omega^2;0).
		\end{align*}
		Each displayed pair \((A_*,B_*)\) denotes the pencil
		\(q_s=A_*-sB_*\), \(s\in\mu_3\).
		Since \(H\) is invertible over \(\mathbb F_3\), the maps \(R_t\) shift the
		block labels independently.  Port permutations and bit flips act transitively
		on the block choices and the two total parities.  Together with the phase permutations, this proves that every orbit of
		the two-block secant and three-block cases under these operations is
		represented.
		
		Fix the selected pair \(i<j\) and sector \(\xi\).  It remains to exclude
		six-port tensors \(F\) satisfying the hypotheses of part~\textnormal{(ii)}
		and realizing either selected representative pencil.  Put
		\(\zeta=\xi\oplus1\).  Under the restriction \(x_i\oplus x_j=\zeta\),
		a projectively nonconstant pencil is one
		of the \(1212\) labelled pencils from
		part~\textnormal{(i)}.
		Its slices \(C=F_{ij}^{0,\zeta}\) and \(D=F_{ij}^{1,1-\zeta}\) are fixed
		up to a common scalar \(\lambda\in\AlgNums^\times\).  For each of the
		fourteen port pairs other than the selected pair, and for each xor sector and
		phase, substitution into \cref{eq:cert-odd-dih-q6-card} gives a quaternary
		contraction \(q_0+\lambda q_1\), where the computed vectors
		\(q_0,q_1\in\AlgNums^{\Ftwo^4}\) are independent of \(\lambda\).  For a
		fixed representative \(v\in\AlgNums^{\Ftwo^4}\setminus\{0\}\) of a point
		of the appropriate set \(\mathcal Q_3^\epsilon\), the set of \(\lambda\)
		for which this vector is zero or belongs to
		\(\Span\{v\}\) is all of the field, empty, or one
		point; exact two-column row reduction determines which case occurs.
		Intersecting the resulting finite unions over all remaining pairs, both xor
		sectors, and all three phases leaves no nonzero \(\lambda\) for either
		representative.
		
		For an opposite pencil with constant projective class, write \(C=cv\) and
		\(D=dv\), where \(v\in\AlgNums^{\Ftwo^4}\setminus\{0\}\) represents one
		of the \(66\) points of the appropriate \(\mathcal Q_3^\epsilon\) and \((c,d)\in\AlgNums^2\setminus\{(0,0)\}\).
		For each remaining contraction, substitution gives
		\(q_0+cq_1+dq_2\), where the computed
		\(q_0,q_1,q_2\in\AlgNums^{\Ftwo^4}\) are independent of \(c,d\).  Its
		preimage over all target points of \(\mathcal Q_3^\epsilon\) is a finite union of affine linear
		systems in \((c,d)\).  Exact RREF,
		followed by intersection over all contractions in the hypothesis of
		part~\textnormal{(ii)}, leaves no
		nonzero pair \((c,d)\).  The identically zero opposite sector is checked
		separately and also fails.  All matrices have entries in
		\(\mathbb Q(\omega)\); their ranks and consistency are unchanged after
		extension to \(\AlgNums\).  Reversing the two ports when \(\xi=0\) changes nothing, whereas for
		\(\xi=1\) reversal at \(s\) gives a nonzero scalar multiple of the contraction
		at \(s^{-1}\), so unordered port pairs exhaust the ordered conditions.
		
		For part~\textnormal{(iii)}, fix one port pair and one xor sector.  A union satisfying the stated condition is empty, one of the eight one-block unions across
		the two parities, or one of the twelve two-block unions.  Each residual word
		in a nonempty union occurs only in \(F_{ij}^{0,\xi}\), only in
		\(F_{ij}^{1,1-\xi}\), or in both.  Hence the exact finite superset of local candidates has
		\[
		1+8\cdot3^2+12\cdot3^4=1045
		\]
		members.  Combine the two disjoint xor sectors on the first port pair,
		giving \(1045^2\) candidates.  Retain a candidate precisely when, on
		each of the other fourteen pairs, each of the two unions is empty or is
		the union of one or two complete complement blocks of its own parity.  Encode a support by a
		\(64\)-bit integer whose bit \(k\) corresponds to the \(k\)-th word in
		lexicographic order on \(\{0,1\}^6\), starting at \(k=0\).  Exact
		intersection of these encoded support lists leaves \(873\) distinct supports.
		
		Independently generate every partition into nonempty even blocks and every
		coset of its block-indicator space.  This gives
		\[
		32+240+120=392
		\]
		distinct supports.  Generating all bit flips and port permutations of
		\(S_\star\) gives \(480\) further supports.  These two families and the
		empty support are disjoint, and their union is exactly the \(873\)-support
		intersection.  Since every block indicator has even weight and
		\(S_\star\) has even parity, every nonempty support has one parity.
		
		The exact calculation constructs the sets \(\mathcal Q_3^\epsilon\), their lines containing at least three points, labelled pencils,
		affine preimages, local support lists, partition cosets, and exceptional
		orbit from the displayed definitions.  It loads no precomputed data or
		stored accepted-case list and uses no random choice, floating-point comparison,
		finite-field specialization, or external solver.
	\end{proof}
	
	\begin{remark}[Scope of the odd-dihedral certificate]
		\label{rem:cert-odd-dih-q6-scope}
		\Cref{prop:cert-odd-dih-q6-order-three} is a normalized finite statement.
		The algebraic quaternary classification, the construction of the realizable
		kernels in \cref{eq:odd-dih-kernels}, and the reduction to
		\cref{eq:cert-odd-dih-q6-card} remain in the main proof.  The certificate does
		not prove the \(N\ge5\) support bound, the exclusions of contractions supported on two points, the
		coefficient factorization for support that selects one endpoint of each
		matching edge, the arbitrary-arity common-factor argument, or the reduction
		for the entire signature set.
	\end{remark}
	
	\Needspace{7\baselineskip}
	\begin{theorem}[Six-port support classification]
		\label{thm:cert-odd-dih-q6-interface}
		The order-three classification has the following consequences.
		\begin{enumerate}[label=\textnormal{(\roman*)}]
			\item Let \(F\) be a six-port tensor of one parity, and suppose that every
			contraction \(C_{ij,\xi,s}(F)\), for every port pair, xor, and
			\(s\in\mu_3\), is zero or belongs to \(\mathcal B_3([4])\).
			Then, for every fixed pair and xor, with
			\(A=F_{ij}^{0,\xi}\), \(B=F_{ij}^{1,1-\xi}\),
			\(\supp(A)\cup\supp(B)\) is empty or is the union of at most two
			complement pairs in its residual parity.
			\item Let \(F\) be a six-port tensor of one parity.  If the support union
			in part~\textnormal{(i)} is empty or consists of at most two
			complement pairs for every pair and xor, then \(\supp(F)\) is empty,
			a partition coset of type \((6)\), \((4+2)\), or \((2+2+2)\), or a
			bit-flip/port-permutation image of \(S_\star\) in
			\cref{eq:odd-dih-q6-nonproduct-support}.
		\end{enumerate}
	\end{theorem}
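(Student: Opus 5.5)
The plan is to derive both parts from \cref{prop:cert-odd-dih-q6-order-three}, after identifying $\mathcal B_3([4])$ with the finite projective sets $\mathcal Q_3^\epsilon$. First, for each parity $\epsilon$ of residual words, I check that the nonzero members of $\mathcal B_3([4])$ of parity $\epsilon$ are projectively exactly the two-block points $[E_{pq}(r,t)]$ of $\mathcal Q_3^\epsilon$. A member of $\mathcal B_3([4])$ is a product of two Walsh equality/disequality binaries with ratios in $-\mu_3$. Its support is therefore a $2+2$ partition coset, i.e.\ two complement blocks of one parity. The coefficient table on such a support is $(1,\rho_1,\rho_2,\rho_1\rho_2)$ with $\rho_k\in-\mu_3$, which after orientation should match $V_{p,r}-V_{q,t}$. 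The computation is $\omega^r e+\omega^{2r}\bar e$ against $-(\omega^t e'+\omega^{2t}\bar e')$, up to a scalar. Conversely, every $E_{pq}(r,t)$ factors in this way, since $\omega^{r}\omega^{2r}$ and $\omega^t\omega^{2t}$ are cubes and the product relation holds. So the hypothesis of part~(i) puts every contraction into $\mathcal Q_3^{\eta\oplus\xi}$, and part~(ii) of the proposition applies. The points $[V_{p,r}]$ give no trouble, since they merely enlarge the target set.

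For part~(i), I fix $i<j$ and $\xi$ and consider the pencil $q_s=A-sB$, $s\in\mu_3$. If $A,B$ are dependent, then $q_s$ is zero or has a constant projective class in $\mathcal B_3$. In that case the union of the supports is the support of that one class, which consists of two complement blocks, or is empty. If $A,B$ are independent, at most one phase gives zero. When no phase gives zero, the three classes are distinct and collinear in $\mathcal Q_3^\epsilon$. By proposition (ii) the line is of the one-block type or the fixed-Segre-ruling type. The one-block type is incompatible with $\mathcal B_3$ membership, since every point there uses one block; such points are still covered by the conclusion anyway. In the Segre-ruling case all three points use the same two blocks. The union $\supp(A)\cup\supp(B)=\bigcup_s\supp(q_s)$ is then those two blocks, by the affine-pencil argument that a coordinate vanishing at two phases vanishes identically.

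If exactly one phase gives zero, then $A=s_0B$, so the pencil is projectively constant. That is a contradiction with independence. Hence every case yields at most two complement pairs in the residual parity, which proves part~(i).

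Part~(ii) is exactly proposition~(iii). The hypothesis says each $U_{ij}^\xi(\supp F)$ is empty or a union of one or two complete blocks of a single parity. The one-parity assumption on $F$ fixes that parity as $\eta\oplus\xi$. Since $U_{ij}^\xi(\supp F)=\supp(A)\cup\supp(B)$, this matches the statement. The $873$ supports listed there are precisely the empty support, the partition cosets of types $(6)$, $(4+2)$, $(2+2+2)$, and the orbit of $S_\star$.

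The main obstacle is the first identification step. One must get the orientation and sign conventions right so that $\mathcal B_3$ tables correspond exactly to $E_{pq}(r,t)$, and not to $V_{p,r}+V_{q,t}$. I would verify this on one representative block pair and transport it by the $R_t$ symmetry used in the proposition.
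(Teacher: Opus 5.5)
Your proposal is correct and follows essentially the same route as the paper. You place every nonzero contraction in $\mathcal Q_3^{\eta\oplus\xi}$, then use \cref{prop:cert-odd-dih-q6-order-three}(ii) to confine projectively nonconstant pencils to one-block or Segre-ruling lines, recover $\supp(A)\cup\supp(B)$ as $\bigcup_{s}\supp(A-sB)$, and read part~(ii) off the enumeration in part~(iii). Your one addition is the explicit check that nonzero members of $\mathcal B_3([4])$ are the two-block points $[E_{pq}(r,t)]$, with the minus sign in $E_{pq}$ matching the $-\mu_3$ block ratios. The paper asserts this without comment; the check does hold, and only the inclusion direction is needed.
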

	
	\begin{proof}
		For part~\textnormal{(i)}, every nonzero contraction over all port pairs and xors
		has projective class in the corresponding set \(\mathcal Q_3^\epsilon\).  By
		\cref{prop:cert-odd-dih-q6-order-three}\textnormal{(ii)}, a projectively nonconstant
		pencil lies on a one-block line or one fixed Segre ruling.  Its support
		union consequently contains at most two complete complement pairs.
		Coordinatewise,
		\[
		\supp(A)\cup\supp(B)=\bigcup_{s\in\mu_3}\supp(A-sB),
		\]
		since a nonzero affine coefficient has at most one root.  This also handles
		zero and projectively constant pencils.  For part~\textnormal{(ii)}, the
		hypothesis says precisely that every local union
		\(U_{ij}^{\xi}(\supp(F))\) is empty or consists of one or two complete
		complement blocks of a single parity, as required in the enumeration of
		\cref{prop:cert-odd-dih-q6-order-three}\textnormal{(iii)}.  That enumeration
		gives the asserted alternatives.  Neither statement asserts coefficient
		factorization merely from the support shape; the required coefficient
		argument and the exclusions of contractions supported on two points are proved in
		\cref{thm:odd-dih-q6-base}.
	\end{proof}
	
	\subsubsection{High-charge six-port loop and two-copy arrays}
	\label{proofsubsec:cert-high-charge-q6}
	
	Let
	\[
	\mathcal W_4=(0011,0101,0110,1001,1010,1100).
	\]
	The arrays below use this order.  The Python loop generator lists its
	entries in reverse order; both
	\eqref{eq:cert-q6-zero-pairs} and \eqref{eq:cert-q6-pair-products} are
	invariant under \(z_k\mapsto z_{5-k}\).  The two-copy generator uses
	\(\mathcal W_4\) directly.
	Let \(U=[6]\), let \(t\in\mathbb C^\times\), and assign arbitrary complex
	coefficients \(q_S\) to the twenty triples \(S\in\binom U3\).  For each
	deleted pair \(i<j\), order the residual ports as
	\(v_0<v_1<v_2<v_3\); for \(w\in\mathcal W_4\), write
	\(w=(w_0,w_1,w_2,w_3)\) and put
	\(R_w=\{v_r:w_r=1\}\), and define the six EO entries
	\begin{equation}
		L_{ij}(w)=q_{R_w\cup\{i\}}+q_{R_w\cup\{j\}},
		\qquad w\in\mathcal W_4.
		\label{eq:cert-q6-loop-array}
	\end{equation}
	For unordered two-element external sets \(A,B\in\binom U2\), use the
	canonical increasing orders \(A=(a_0,a_1)\), \(B=(b_0,b_1)\) and the
	external order \((a_0,a_1,b_0,b_1)\).  The first and second copies retain
	their first/second copy labels.  Put \(I_A=U\setminus A\),
	\(I_B=U\setminus B\), and let
	\(\pi:I_A\to I_B\) be a bijection.  Define the six EO entries, indexed by
	the concatenated words \(xy\in\mathcal W_4\), where
	\(x=(x_0,x_1)\) and \(y=(y_0,y_1)\), by
	\begin{equation}
		C_{A,B,\pi}(x,y)
		=t\,\one[(x,y)\in\{(00,11),(11,00)\}]
		+\sum_{\substack{T\subseteq I_A\\|T|=3-\wt(x)}}
		q_{A_x\cup T}\,
		q_{B_y\cup(I_B\setminus\pi(T))},
		\label{eq:cert-q6-two-copy-array}
	\end{equation}
	where \(A_x=\{a_r:x_r=1\}\), and similarly for \(B_y\).
	
	For any quaternary EO array
	\(z=(z_0,z_1,z_2,z_3,z_4,z_5)\) indexed in the order \(\mathcal W_4\),
	put \(P_0=z_0z_5\), \(P_1=z_1z_4\), and \(P_2=z_2z_3\).
	
	\begin{proposition}[High-charge six-port coefficient classification]
		\label{prop:cert-high-charge-q6-joint-deck}
		Suppose every one of the fifteen arrays in
		\cref{eq:cert-q6-loop-array} and all
		\(15^2\cdot4!=5400\) labelled arrays in
		\cref{eq:cert-q6-two-copy-array} satisfy
		\begin{align}
			z_0=0&\Longleftrightarrow z_5=0,&
			z_1=0&\Longleftrightarrow z_4=0,&
			z_2=0&\Longleftrightarrow z_3=0,
			\tag{Q6.6}\label{eq:cert-q6-zero-pairs}\\
			P_0P_1P_2&=0,&
			P_iP_j(P_i^2-P_j^2)&=0\quad(0\le i<j\le2).
			\tag{Q6.7}\label{eq:cert-q6-pair-products}
		\end{align}
		Then
		\[
		\{R\in\tbinom U3:q_R\ne0\}\subseteq\{S,U\setminus S\}
		\qquad\text{for some }S\in\binom U3.
		\]
		This holds for arbitrary complex \(q_S\) and every \(t\ne0\).
	\end{proposition}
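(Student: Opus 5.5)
The conditions \eqref{eq:cert-q6-zero-pairs}--\eqref{eq:cert-q6-pair-products} are exactly what membership of a quaternary EO array in \(\mathcal B_\times^0([4])\) gives. Such a product on a matching of four ports is supported on one complementary pair \(\{z_k,z_{5-k}\}\). So at most one \(P_k\) is nonzero, and the \(P\)-equations are necessary consequences of that. I would therefore not aim for a slick argument. The plan is a structured elimination on the support pattern of \((q_S)\), reducing the two-copy polynomial conditions to a finite, exactly checkable system that the companion script verifies.

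Step one is to use only the fifteen loop arrays \eqref{eq:cert-q6-loop-array}. These are linear in \(q\), and the zero-pair equivalences together with \(P_0P_1P_2=0\) impose strong combinatorial constraints. I would enumerate all \(2^{20}\) zero/nonzero patterns of \((q_S)\) modulo the \(S_6\times\)complement symmetry; complement acts because \(q_S\mapsto q_{U\setminus S}\) is induced by the bit flip in \(K\)-coordinates. For each pattern I would record which loop entries are forced nonzero, forced zero, or are sums of two nonzero coefficients. For the last kind, I would branch on whether the sum \(q_{R\cup\{i\}}+q_{R\cup\{j\}}\) cancels.

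Each branch then gives a system of linear equations (the cancellations) and polynomial equations (the products \(P_k\) and their squares). Modulo the linear equations, this leaves a small number of free coefficients.

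Step two brings in the two-copy arrays \eqref{eq:cert-q6-two-copy-array}. These are quadratic in \(q\) and carry the constant \(t\) only on the two endpoint words \(0011\) and \(1100\). The key choices are the labelled contractions with \(A\cap B=\varnothing\) and \(\pi\) chosen so that a single summand survives on a given word. On such a word the entry is a single product \(q_Rq_{R'}\), whose vanishing or nonvanishing is controlled by the support pattern. The endpoint entries \(t+\sum q q\) are paired against them by the zero-pair equivalence.

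I expect each surviving pattern other than \(\{S,U\setminus S\}\) to produce a contraction where one complementary pair has exactly one nonzero entry. That violates \eqref{eq:cert-q6-zero-pairs}. Failing that, it should produce two nonzero \(P_i,P_j\) with \(P_i^2\ne P_j^2\).

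The main obstacle is the cancellation branches. There the loop data are consistent with larger supports, and \(t\) can cancel a sum of products. For those I would compute a Gröbner basis, or exact elimination, of the ideal generated by all 5415 arrays' equations, with \(t\) adjoined together with an inverse variable \(t\,u-1\). The target is to show that the saturation by the nonvanishing of any two coefficients \(q_R,q_{R'}\) with \(R'\notin\{R,U\setminus R\}\) is the unit ideal. Equivalently, \(q_Rq_{R'}\) lies in the radical of that ideal. This is the exact finite computation I would delegate to the certificate script, using symmetry to reduce to one representative pair \((R,R')\) per orbit, namely \(|R\cap R'|=1\) or \(2\).
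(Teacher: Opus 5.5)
Your skeleton matches the paper's. The fifteen loop arrays, being linear in \(q\), cut the coefficient space into finitely many linear branches. These branches are reduced modulo port permutations and simultaneous complementation. Each branch is then settled by an exact decision procedure on the full system, including the two-copy arrays.

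The execution differs in two places.
\begin{itemize}
\item \emph{Branching.} The paper never enumerates the \(2^{20}\) support patterns. Instead, \(P_0P_1P_2=0\) together with \eqref{eq:cert-q6-zero-pairs} forces an \emph{entire} complementary pair of every loop array to vanish. Hence every solution satisfies, for each loop, the two signed equations \(q_S=-q_{S\mathbin\triangle\{i,j\}}\) and \(q_{U\setminus S}=-q_{U\setminus(S\mathbin\triangle\{i,j\})}\). The resulting \(3^{15}\) branches collapse to \(1868\) subspaces in \(24\) orbits. You had the ingredients for this but did not draw the consequence, so your pattern-plus-cancellation branching reaches comparable linear branches with far more bookkeeping.
\item \emph{Decision procedure.} The paper decides each orbit with Z3 after splitting variables into real and imaginary parts. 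You propose saturation and Gröbner bases over \(\mathbb C\) instead. Your route would give independently checkable Nullstellensatz certificates, which removes the paper's acknowledged dependence on Z3's unverified \texttt{UNSAT} answers.
\end{itemize}

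Your route has two costs.
\begin{itemize}
\item The equivalences in \eqref{eq:cert-q6-zero-pairs} define a constructible set, not a Zariski-closed one. They must be encoded by Rabinowitsch variables, or by branching on every two-copy entry whose vanishing the branch does not fix. An ideal built only from the polynomial conditions \eqref{eq:cert-q6-pair-products} encodes a weaker hypothesis, so its saturation need not be the unit ideal. Z3 handles this Boolean structure natively.
\item On two-copy arrays the conditions \(P_iP_j(P_i^2-P_j^2)=0\) have degree \(16\) in \(q\). A saturation over all \(5415\) arrays is therefore heavy. In practice you would need the per-branch reduction and a targeted choice of arrays, as the paper's \(61\) queries suggest.
\end{itemize}

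You also have three slips to correct.
\begin{itemize}
\item A nonzero member of \(\mathcal B_\times([4])\) is supported on \emph{two} complementary pairs, with exactly two of the \(P_k\) nonzero and equal. Condition \eqref{eq:cert-q6-pair-products} is designed to allow this, and your later test ``\(P_i^2\ne P_j^2\)'' correctly assumes it. So ``at most one \(P_k\) nonzero'' must not be imposed anywhere.
\item Condition \eqref{eq:cert-q6-zero-pairs} pairs the two \(t\)-entries \(z_0,z_5\) with each other. They interact with the single-product entries only through \eqref{eq:cert-q6-pair-products}, not through the zero-pair equivalence.
\item Your list of anticipated violations should include all three \(P_k\) being nonzero. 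For example, the two-copy array of \cref{rem:cert-high-charge-q6-one-loop-insufficient} at \(t=2\) is \((2,2,-2,-2,2,2)\). It satisfies \eqref{eq:cert-q6-zero-pairs} and every \(P_iP_j(P_i^2-P_j^2)=0\), and fails only \(P_0P_1P_2=0\).
\end{itemize}
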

	
	\begin{proof}
		For each of the fifteen loop arrays,
		\cref{eq:cert-q6-zero-pairs,eq:cert-q6-pair-products} force at least one
		of its three complementary EO-entry pairs to vanish.  Selecting such a
		pair gives, for a suitable \(S\in\binom U3\) containing exactly one of
		\(i,j\), two signed equations
		\begin{equation*}
			q_S=-q_{S\mathbin\triangle\{i,j\}},
			\qquad
			q_{U\setminus S}=-q_{U\setminus(S\mathbin\triangle\{i,j\})}.
		\end{equation*}
		Thus every coefficient vector lies in at least one of exactly
		\(3^{15}=14\,348\,907\) branches defined by signed linear equations.  Signed
		union--find deduplicates them to \(1868\) linear subspaces; modulo the \(S_6\) port action
		and simultaneous complementation, they form exactly \(24\) orbits.
		For the C++ enumeration, number ports from \(0\) to \(5\) and order
		triple coordinates by increasing \(\sum_{k\in S}2^k\).  In each nonzero
		component \(C\) of the signed equations, write its coordinates as
		\(q_i=\varepsilon_i u_C\), where \(\varepsilon_i\in\{1,-1\}\),
		choose \(\varepsilon_{\min C}=1\), and use \(\min C\) to name
		the free parameter \(u_C\).
		After this sign normalization, choose the lexicographically least encoding
		under port permutations and simultaneous complementation, taking exactly
		one representative per orbit.
		
		For each family, substitute independent complex parameters into the
		displayed polynomials
		\cref{eq:cert-q6-loop-array,eq:cert-q6-two-copy-array} and impose only
		\cref{eq:cert-q6-zero-pairs,eq:cert-q6-pair-products}.  Split every complex
		parameter into real and imaginary parts.  A selected nonzero parameter may
		be normalized to \(1\), because all equations are homogeneous under
		\((q,t)\mapsto(cq,c^2t)\) for \(c\in\mathbb C^\times\).  Exact satisfiability checks for the resulting real polynomial systems
		find no solution with two distinct nonzero coefficient pairs
		\(\{q_S,q_{U\setminus S}\}\).
		
		The resulting exhaustive calculation is
		\[
		14{,}348{,}907\text{ branches}
		\longrightarrow 1868\text{ signed spaces}
		\longrightarrow 24\text{ orbits}.
		\]
		All generated expressions have integer coefficients, and all reductions
		performed before the solver use exact rational polynomial arithmetic.  The
		all-zero orbit is immediate.
		The exact real-polynomial queries for the remaining twenty-three nonzero
		orbit families were evaluated with Z3~4.14.0.  The answers are
		\texttt{UNSAT}, meaning that no satisfying assignment exists. The
		calculation uses 61 queries: 51 for projective charts of sparse orbits
		1--19 and 10 for the implications used in dense orbits 20--23.
		For sparse orbits, the script checks that parameter sets omitted by the
		chosen normalizations are supported on at most one complementary block and hence cannot meet the counterexample condition.  The dense
		branches explicitly reduce to a previously checked sparse family, the zero
		vector, or one complementary block.  No parameter sample replaces a
		quantified polynomial query.
		
		The six-port wrapper requires Z3~4.14.0 and records the compiler and
		solver versions and source-file hashes. Before any solver query, the
		Python stage reindexes the C++ output and checks that both stages give
		exactly the same orbit representatives. The programs regenerate the
		orbit representatives and exact real-polynomial queries on each run.
		The suite option \texttt{--q6-audit-dir} saves the generated inputs,
		parameter encoding, branch labels, and solver responses for inspection;
		the package README gives the reproduction commands.

		The runs neither request nor check solver proof objects.
		The \texttt{UNSAT} answers therefore rely on the correctness of Z3;
		exporting or rerunning the inputs does not remove this dependence.
		Subject to this dependence, the exhaustive calculation establishes
		the proposition over arbitrary complex coefficients.
	\end{proof}
	
	\begin{remark}[Scope of the six-port certificate]
		\label{rem:cert-high-charge-q6-scope}
		\Cref{prop:cert-high-charge-q6-joint-deck} starts from the coefficient
		arrays \cref{eq:cert-q6-loop-array,eq:cert-q6-two-copy-array}.  It does not
		prove that these arrays are direct loop or two-copy gadgets, does not
		make any summand of \cref{eq:cert-q6-two-copy-array} separately realizable,
		and does not establish a reduction for the entire signature set.  Those
		arguments remain in \cref{subsec:hc-q6-base,thm:hc-q6-complete}.
	\end{remark}
	
	\begin{theorem}[Consequence for loop and two-copy arrays]
		\label{thm:cert-high-charge-q6-interface}
		Let \(U=[6]\), let \(t\in\mathbb C^\times\), and let
		\((q_S)_{S\in\binom U3}\) be arbitrary complex coefficients.  Form the
		fifteen loop arrays in \cref{eq:cert-q6-loop-array} and all \(5400\)
		labelled two-copy arrays in \cref{eq:cert-q6-two-copy-array}.  If every one
		of these arrays belongs to \(\mathcal B_\times^0([4])\), then
		\[
		\{R\in\tbinom U3:q_R\ne0\}
		\subseteq\{S,U\setminus S\}
		\qquad\text{for some }S\in\binom U3.
		\]
	\end{theorem}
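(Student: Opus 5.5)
The plan is to reduce \cref{thm:cert-high-charge-q6-interface} to \cref{prop:cert-high-charge-q6-joint-deck}. The only thing to check is that membership of a quaternary EO array in \(\mathcal B_\times^0([4])\) implies the zero-pair conditions \eqref{eq:cert-q6-zero-pairs} and the product conditions \eqref{eq:cert-q6-pair-products}. Once that implication is in hand, every one of the \(15\) loop arrays and \(5400\) labelled two-copy arrays satisfies the proposition's hypotheses. The proposition then gives \(\{R:q_R\ne0\}\subseteq\{S,U\setminus S\}\) for arbitrary complex \(q_S\) and every \(t\ne0\).

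First, the entries of \eqref{eq:cert-q6-loop-array} and \eqref{eq:cert-q6-two-copy-array} are exactly the six weight-two entries of a four-port EO tensor in the order \(\mathcal W_4\). The complementary words \(0011/1100\), \(0101/1010\) and \(0110/1001\) sit at index pairs \((0,5)\), \((1,4)\) and \((2,3)\). So \(P_0,P_1,P_2\) are the products over the three complement pairs.

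Second, I classify the members of \(\mathcal B_\times^0([4])\).
\begin{itemize}
\item The zero array satisfies everything trivially.
\item A nonzero member is \(\lambda\,(0,p,q,0)_{uv}\otimes(0,p',q',0)_{wz}\) for one of the three perfect matchings of \([4]\), with \(\lambda pp'qq'\ne0\).
\item Its support consists of the four words choosing one endpoint in each matched pair. These are exactly the two complement pairs not equal to \(\{\one_{\{u,v\}},\one_{\{w,z\}}\}\).
\end{itemize}
Hence exactly one complement pair vanishes, and each of the other two pairs has both entries nonzero. This gives \eqref{eq:cert-q6-zero-pairs} and \(P_0P_1P_2=0\).

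The two surviving products are \(\lambda^2pqp'q'\), computed in two ways. Take the pair \(\{x,\bar x\}\) where \(x\) picks the \(0\)-end on \(uv\) and the \(0\)-end on \(wz\). Its product is \(\lambda p\,p'\cdot\lambda q\,q'\) (with the orientation conventions absorbed). The other surviving pair gives \(\lambda p\,q'\cdot\lambda q\,p'\). These are equal, so \(P_i=P_j\) and \(P_iP_j(P_i^2-P_j^2)=0\). Every mixed term involving the vanishing product is zero. This step is a short, explicit check, and sign or orientation bookkeeping is the only place an error could arise. Since only equality of the two products is used, the orientation choices are irrelevant.

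The main obstacle is not in this reduction but inside \cref{prop:cert-high-charge-q6-joint-deck}, which is already stated and certified exhaustively (branch enumeration, orbit reduction, and Z3 \texttt{UNSAT} queries). I would cite it directly. One further check is needed: the arrays are defined identically in both statements, including the canonical orders of \(A,B\) and the complete enumeration of \(\pi\), and the proposition's conclusion holds for all complex \(q_S\) and \(t\ne0\). With that confirmed, the theorem follows.
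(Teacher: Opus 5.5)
Your proposal is correct and follows the paper's proof: the paper also observes that every member of \(\mathcal B_\times^0([4])\) satisfies \eqref{eq:cert-q6-zero-pairs} and \eqref{eq:cert-q6-pair-products}, and then invokes \cref{prop:cert-high-charge-q6-joint-deck}. Your computation spells out the verification the paper leaves implicit. For a nonzero member, exactly one complement pair vanishes, and the two surviving complement-pair products both equal \(\lambda^2\prod_{\{u,v\}}p_{uv}q_{uv}\), whatever orientation is chosen on each factor.
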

	
	\begin{proof}
		Every member of \(\mathcal B_\times^0([4])\) satisfies
		\cref{eq:cert-q6-zero-pairs,eq:cert-q6-pair-products}; hence
		\cref{prop:cert-high-charge-q6-joint-deck} applies and yields the assertion.
	\end{proof}
	
	\begin{remark}[Why the two-copy six-port arrays are necessary]
		\label{rem:cert-high-charge-q6-one-loop-insufficient}
		Order each triple as \(S=\{s_0<s_1<s_2\}\), and order the twenty triples by
		the lexicographic order of \((s_0,s_1,s_2)\).  In that order, the
		full-support coefficient vector
		\[
		(1,-1,-1,1,i,-i,-1,1,-i,-i,-i,-i,-1,1,-i,i,-1,1,1,-1)
		\]
		does not define a member of \(\mathcal B_\times([6])\), although all fifteen
		arrays \cref{eq:cert-q6-loop-array} lie in
		\(\mathcal B_\times^0([4])\).  For this
		same vector, one labelled array \cref{eq:cert-q6-two-copy-array} is
		\[
		(t,2,-2,-2,2,t),
		\]
		where \(t\ne0\); its six nonzero entries put it outside that class.  Thus
		the one-loop conditions alone cannot imply
		the conclusion of \cref{thm:cert-high-charge-q6-interface}.
	\end{remark}
	
	\subsubsection{Eight-port matching families}
	\label{proofsubsec:cert-q8-matching-atlas}
	
	Let \(V=\{0,1,\ldots,7\}\), and for distinct labels \(a,b\) write
	\(ab\) for the unordered pair \(\{a,b\}\).  For every pair
	\(e\in\binom V2\), let
	\(P_e\) be a perfect matching of \(V\setminus e\).  For disjoint
	\(e,r\in\binom V2\), write \(r=\{u,v\}\) and define \(\kappa_r(P_e)\) as follows.  If
	\(r\in P_e\), delete that edge.  Otherwise let \(u',v'\) be the mates of
	\(u,v\), remove \(uu'\) and \(vv'\), add \(u'v'\), and retain the unique
	untouched edge.  In both cases \(\kappa_r(P_e)\) is a two-edge matching on
	\(V\setminus(e\cup r)\).
	
	\begin{proposition}[Eight-port matching classification]
		\label{prop:cert-q8-collision-free-atlas}
		Suppose
		\begin{equation}
			\kappa_r(P_e)=\kappa_e(P_r)
			\qquad(e\cap r=\varnothing)
			\label{eq:cert-q8-commuting-cards}
		\end{equation}
		for every pair of disjoint deleted pairs.  After relabeling so that
		\(P_{01}=\{23,45,67\}\), exactly nine labelled families satisfy these
		constraints.  Seven are induced by a global perfect matching \(M\) on eight
		ports, meaning \(P_e=\kappa_e(M)\) under the same deletion-and-splicing
		rule.  Each of these seven families has \(P_A\cap P_B\ne\varnothing\)
		for some distinct pairs \(A,B\) with \(A\cap B\ne\varnothing\).
		With labels \(0,\ldots,7\) interpreted as three-bit vectors, one remaining
		family is given by the matchings below and the other is obtained by swapping
		\(6,7\), where
		the two-character string \(ab\) abbreviates the unordered pair \(\{a,b\}\):
		\begin{equation*}
			\begin{array}{cc}
				\toprule
				h&\{\{t,t+h\}:t\in\mathbb F_2^3\}\\ \midrule
				001&01,23,45,67\\
				010&02,13,46,57\\
				011&03,12,47,56\\
				100&04,15,26,37\\
				101&05,14,27,36\\
				110&06,17,24,35\\
				111&07,16,25,34.\\
				\bottomrule
			\end{array}
			\tag{Q8.14b}
			\label{eq:cert-q8-translation-atlas}
		\end{equation*}
		For a deleted pair \(p,q\), removing the edge \(pq\) from the row
		\(h=p+q\) gives \(P_{pq}\).
	\end{proposition}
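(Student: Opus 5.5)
The classification is a finite combinatorial statement about families of matchings on eight labelled points, so I will prove it by a structured exhaustive search rather than by hand. First normalize by a port relabelling so that \(P_{01}=\{23,45,67\}\); this is always possible because \(S_8\) acts transitively on pairs together with a perfect matching of the complement. Then fix the remaining data incrementally. There are \(\binom82=28\) pairs, and each \(P_e\) is one of the \(15\) perfect matchings of the six remaining points. The constraint \cref{eq:cert-q8-commuting-cards} is a binary constraint between \(P_e\) and \(P_r\) for disjoint \(e,r\). I will therefore run a backtracking constraint-satisfaction search. Order the pairs so that each new pair is disjoint from as many already-assigned pairs as possible; for example, all pairs disjoint from \(01\) come first. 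At each step I prune any candidate matching violating \eqref{eq:cert-q8-commuting-cards} with an assigned disjoint pair.

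The pruning is strong. Given \(P_{01}\), the equation \(\kappa_r(P_{01})=\kappa_{01}(P_r)\) already fixes the two-edge residue of \(P_r\) on \(V\setminus(\{0,1\}\cup r)\) for every \(r\subseteq\{2,\dots,7\}\). This leaves only the three matchings of \(V\setminus r\) compatible with a prescribed restriction pattern after splicing at \(01\). I expect the search tree to be small and to terminate with exactly nine complete labelled families.

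For the identification I will generate candidate families independently. For each of the \(105\) perfect matchings \(M\) of \(V\) with \(\kappa_{01}(M)=\{23,45,67\}\), form \(P_e=\kappa_e(M)\) and check \eqref{eq:cert-q8-commuting-cards}. This check is a direct consequence of the commutation of deleting and splicing on a single global matching. The number of such \(M\) should be exactly seven: either \(01\in M\), giving \(M=\{01,23,45,67\}\), or \(0,1\) are matched into two of the edges \(\{23,45,67\}\) in a way consistent with the splice. Next I verify that the two translation families from \eqref{eq:cert-q8-translation-atlas}, and their \(6\leftrightarrow7\) swap, satisfy the constraint. For these families the contracted matching \(\kappa_r(P_{pq})\) is the translation matching by \(h=p+q\) or \(h'=r_1+r_2\) restricted appropriately; commutation reduces to the affine identity that both equal \(\{\{t,t+h\},\{t,t+h'\}\}\)-type residues on \(\Ftwo^3\setminus(e\cup r)\), which one checks directly. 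Comparing the enumerated set with these \(7+2\) families, as sets of labelled families, proves the count nine.

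Finally I check the overlap claim for the seven globally induced families. If \(P_e=\kappa_e(M)\), take any edge \(m\in M\) and two intersecting pairs \(A,B\) both disjoint from \(m\). Deletion or splicing at \(A\) and \(B\) does not touch \(m\), so \(m\in P_A\cap P_B\). For the two translation families, the collision-free property follows from the remark after \emph{(WT.20)}.

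The main obstacle is ensuring the search is genuinely exhaustive and the count is labelled rather than up to symmetry. I would handle this by having the program output all complete assignments without symmetry reduction beyond the single normalization of \(P_{01}\), and by checking that the independently generated \(7+2\) families coincide with the enumerated list as labelled objects.
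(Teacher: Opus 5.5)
Your plan is correct and matches the paper's proof, which is the same exhaustive backtracking search: fix \(P_{01}\), give each of the other pairs its fifteen candidate matchings, and impose \eqref{eq:cert-q8-commuting-cards} whenever both sides are known (236 nodes, nine leaves), then inspect the leaves; your independent construction of the \(7+2\) families and your direct argument that an edge \(m\in M\) survives in \(P_A\cap P_B\) for intersecting \(A,B\) avoiding \(m\) are sound additions to that inspection. One small slip: once \(P_{01}\) is fixed, each \(P_r\) with \(r\subseteq\{2,\dots,7\}\) has five candidates, not three (the matching containing \(01\), plus four in which \(0,1\) are matched to the two ends of one residue edge), but this only affects your pruning estimate.
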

	
	\begin{proof}
		Once \(P_{01}\) is fixed, each of the other twenty-seven port pairs has the
		fifteen perfect matchings on its six residual ports as candidates.
		An exhaustive backtracking search imposes
		\cref{eq:cert-q8-commuting-cards} whenever both sides become known.  The
		search has \(236\) search nodes and exactly nine leaves.  Direct
		inspection of each leaf gives the stated \(7+2\) split and the table in
		\cref{eq:cert-q8-translation-atlas}.
		
		The scripts \texttt{verify\_high\_charge\_q8\_atlas.py} and
		\texttt{verify\_high\_charge\_q8\_normalized.py} reconstruct the same nine
		leaves from the displayed definitions, and the latter also checks the
		\(236\)-node count and the linear systems for the original tensor's
		seventy weight-four coefficients and twenty-eight contraction scales.
		The two programs reuse the same matching-generation logic and are
		therefore not independent implementations. This proposition is only
		the finite matching classification.  The
		direct double-contraction
		identity that yields \cref{eq:cert-q8-commuting-cards}, and the reduction
		that excludes families with \(P_A\cap P_B\ne\varnothing\) for distinct
		\(A,B\) satisfying \(A\cap B\ne\varnothing\), remain in the main proof.
	\end{proof}
	
	\begin{theorem}[Translation form of the matching family]
		\label{thm:cert-q8-collision-free-interface}
		Let \(V\) be an eight-element port set and, for every
		\(e\in\binom V2\), let \(P_e\) be a perfect matching of \(V\setminus e\).
		Suppose
		\[
		\kappa_r(P_e)=\kappa_e(P_r)
		\qquad(e\cap r=\varnothing).
		\]
		If \(P_A\cap P_B=\varnothing\) whenever \(A\ne B\) and
		\(A\cap B\ne\varnothing\), then there is a bijection
		\(\iota:V\to\mathbb F_2^3\) such that, after relabelling, for \(p\ne q\)
		and \(h=p+q\),
		\[
		P_{pq}
		=
		\bigl\{\{t,t+h\}:t\in\mathbb F_2^3\setminus\{p,q\}\bigr\},
		\]
		where duplicate unordered pairs are suppressed.
	\end{theorem}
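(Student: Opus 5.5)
This statement is the interface form of \cref{prop:cert-q8-collision-free-atlas}, so the plan is to reduce to that finite classification. First we normalize one matching. Pick any pair, relabel $V$ as $\{0,\ldots,7\}$ so that this pair becomes $01$, and then label its residual matching as $P_{01}=\{23,45,67\}$. Relabelling ports transports both hypotheses. The commuting identity $\kappa_r(P_e)=\kappa_e(P_r)$ is defined purely combinatorially from the matchings, so it is equivariant under port bijections. Likewise, the collision-free condition ``$P_A\cap P_B=\varnothing$ whenever $A\ne B$ and $A\cap B\ne\varnothing$'' is invariant under any bijection. So the relabelled family still satisfies both hypotheses.

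Next we apply the proposition. \Cref{prop:cert-q8-collision-free-atlas} says that, after this relabelling, the family is one of exactly nine labelled families. Seven of them are induced by a global perfect matching, and each of those has some $P_A\cap P_B\ne\varnothing$ with $A\cap B\ne\varnothing$. Our collision-free hypothesis excludes all seven. The family is therefore either the translation family \eqref{eq:cert-q8-translation-atlas} or its image under the transposition $6\leftrightarrow7$.

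Now we construct $\iota$. In the first case, take $\iota$ to send each label $k$ to its three-bit binary vector. The table then states exactly that $P_{pq}$ is the row $h=p+q$ with the edge $pq$ removed, that is, $\{\{t,t+h\}:t\notin\{p,q\}\}$, which is the claimed formula. In the second case, compose with the swap: set $\iota(6)=111$ and $\iota(7)=110$. Since that family is the image of the first under this relabelling, the same formula holds. Finally, compose $\iota$ with the initial relabelling bijection to obtain a bijection from the original $V$.

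The only delicate point is making sure that the relabelling used to reach $P_{01}=\{23,45,67\}$ is legitimate, that is, that no information is lost by it. Both hypotheses are purely combinatorial and $S_8$-equivariant, and the conclusion is stated up to a bijection, so this step goes through. The mathematical content lies in the exhaustive search behind the proposition, which I take as given.
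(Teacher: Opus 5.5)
Your proposal is correct and follows the same route as the paper: normalize to $P_{01}=\{23,45,67\}$, invoke the nine-family classification of \cref{prop:cert-q8-collision-free-atlas}, use the hypothesis on $P_A\cap P_B$ to exclude the seven families induced by a global matching, and read off $\iota$ from the translation table. Your explicit equivariance check and the choice $\iota(6)=111$, $\iota(7)=110$ for the swapped family spell out what the paper compresses into the phrase ``differ only by a relabelling.''
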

	
	\begin{proof}
		Relabel \(V\) so that \(P_{01}=\{23,45,67\}\).  By
		\cref{prop:cert-q8-collision-free-atlas}, seven of the nine labelled
		families violate the condition on \(P_A\cap P_B\).  The remaining two
		differ only by a relabelling and give the displayed matching family.
	\end{proof}
	
	\begin{proposition}[Coefficients on affine planes]
		\label{prop:cert-q8-canonical-support}
		Identify \(V=\mathbb F_2^3\), with the labels \(0,\ldots,7\) interpreted as
		their three-bit binary expansions, and for
		\(S\in\binom V4\) put
		\(\sigma(S)=\bigoplus_{s\in S}s\).  Let \(c_S\in\mathbb C\), and let
		\(\lambda_{pq}\in\mathbb C^\times\) for every unordered pair
		\(\{p,q\}\in\binom V2\).  Suppose that, whenever \(p\in S\), \(q\notin S\),
		\(h=p+q\), and \(S'=S\mathbin\triangle\{p,q\}\),
		\begin{equation*}
			c_S+c_{S'}=
			\begin{cases}
				\lambda_{pq},&\sigma(S)\in\{0,h\},\\
				0,&\sigma(S)\notin\{0,h\}.
			\end{cases}
			\tag{Q8.19a}\label{eq:cert-q8-canonical-swap-system}
		\end{equation*}
		Then the four-sets with nonzero coefficients are exactly the fourteen
		affine planes of \(V\), and there is one \(c\in\mathbb C^\times\) such that
		\begin{equation*}
			c_S=
			\begin{cases}
				c,&\sigma(S)=0,\\
				0,&\sigma(S)\ne0.
			\end{cases}
			\tag{Q8.19b}\label{eq:cert-q8-canonical-coefficients}
		\end{equation*}
		Call a four-subset a \emph{nonplane} when it is not an affine plane.
		Let \(G\) be the graph on the fifty-six nonplanes, joining \(S\) to
		\(S\mathbin\triangle\{p,q\}\) whenever \(p\in S\), \(q\notin S\), and
		the latter four-set is also a nonplane.  Let \(H\) be the bipartite
		incidence graph between planes and unordered port pairs, with an edge
		between \(S\) and \(\{p,q\}\) exactly when
		\(\lvert S\cap\{p,q\}\rvert=1\).  Then \(G\) is connected and contains
		an odd cycle, and \(H\) is connected.
	\end{proposition}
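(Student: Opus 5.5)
The plan is to read \eqref{eq:cert-q8-canonical-swap-system} as a sparse linear system in the seventy unknowns \(c_S\). Two combinatorial facts about \(\sigma\) will settle it: the graph \(G\) forces every nonplane coefficient to vanish, and the graph \(H\) forces every plane coefficient to be one common nonzero value. The basic identity is that for \(p\in S\), \(q\notin S\), \(h=p+q\), one has \(\sigma(S')=\sigma(S)+h\). In characteristic two, a four-set is an affine plane exactly when \(\sigma(S)=0\).

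\emph{Step 1 (nonplanes vanish).} Let \(S,S'\) be adjacent in \(G\), so both are nonplanes. Then \(\sigma(S)\ne0\). Also \(\sigma(S)\ne h\), since otherwise \(\sigma(S')=0\) and \(S'\) would be a plane. Hence the right side of \eqref{eq:cert-q8-canonical-swap-system} is zero, and \(c_S=-c_{S'}\) along every edge of \(G\). If \(G\) is connected, all nonplane coefficients agree up to a sign determined by path parity. An odd cycle then gives \(c_S=-c_S\), so all fifty-six nonplane coefficients vanish.

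\emph{Step 2 (planes are constant and nonzero).} Take a plane \(S\) and a pair \(\{p,q\}\) meeting \(S\) in exactly one point. Then \(S'\) has \(\sigma(S')=h\ne0\), so \(S'\) is a nonplane and \(c_{S'}=0\) by Step~1. Since \(\sigma(S)=0\), the equation gives \(c_S=\lambda_{pq}\ne0\). Thus along every edge of \(H\) the plane coefficient equals the pair weight. If \(H\) is connected, all fourteen \(c_S\) and all twenty-eight \(\lambda_{pq}\) equal one \(c\ne0\), which is \eqref{eq:cert-q8-canonical-coefficients}. The remaining equations, where a nonplane \(S\) with \(\sigma(S)=h\) is swapped into a plane, read \(0+c=\lambda_{pq}\) and are consistent with this. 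The support statement follows.

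\emph{Step 3 (graph facts).} Everything reduces to the three graph assertions, and I expect connectivity of \(G\) to be the main obstacle.

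For the odd cycle, note that \(S\triangle\{p,q\}\) and \(S\triangle\{p,r\}\) differ by a swap of \(q\) for \(r\). So three nonplanes \(S\), \(S\triangle\{p,q\}\), \(S\triangle\{p,r\}\) form a triangle. Explicitly, take \(S=\{0,1,2,4\}\) and \(p=0\). This gives the sets \(\{0,1,2,4\}\), \(\{1,2,3,4\}\) and \(\{1,2,4,5\}\), with \(\sigma\) values \(7,4,2\), all nonzero.

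For connectivity of \(H\), every pair \(\{p,q\}\) lies in an edge of \(H\), since some plane contains \(p\) but not \(q\). Two planes through a common point \(p\) are both adjacent to \(\{p,q\}\) for any \(q\) lying outside both of them. Planes through a common point, together with this overlap structure, link all fourteen planes.

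For connectivity of \(G\), I would first try to use the affine group \(\mathrm{AGL}(3,2)\). It preserves \(\sigma\)-planarity and maps swaps to swaps, so it acts on \(G\) by automorphisms. I would check that it is transitive on the fifty-six nonplanes; then it suffices to show that the component of one nonplane meets every orbit, which is automatic under transitivity once that component is invariant. If the invariance argument is awkward, the fallback is a direct finite breadth-first check of the fifty-six vertices, matching the exact-verification style of \cref{subsec:certificate-terminology}.
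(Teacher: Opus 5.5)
\textbf{Verdict.} Your Steps 1 and 2, the odd cycle, and the treatment of \(H\) follow the paper's proof, but the connectivity of \(G\) has a genuine gap. Your triangle \(\{0,1,2,4\},\{1,2,3,4\},\{1,2,4,5\}\) is a valid substitute for the paper's. For \(H\) you should still say two things. First, two distinct planes through \(p\) meet in exactly two points, so a point \(q\) outside their union exists. Second, disjoint planes are complementary, so they have the same incident pairs; alternatively, join them through a third, nonparallel plane.

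\textbf{The gap.} Your \(\mathrm{AGL}(3,2)\) argument for the connectivity of \(G\) is circular. With a transitive action there is only one orbit, so ``the component meets every orbit'' says nothing. Invariance of the component of \(v\) is equivalent to \(g(v)\) lying in that component for every \(g\), and under transitivity that is exactly connectivity. To make the group route work you would have to check \(g(v)\in\) component for a generating set, which you do not do. As written, only the breadth-first fallback actually proves the claim. Note also that connectivity of \(G\) is itself one of the asserted conclusions, so it cannot be bypassed. This holds even though transitivity plus your triangle would already put every nonplane on a triangle.

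\textbf{The missing idea (the paper's).} Lift \(t\mapsto(1,t)\in\mathbb F_2^4\). A linear dependence among the lifts of four distinct points must involve an even number of them with xor zero. Two distinct points never cancel, so the four lifts form a basis exactly when the four-set is a nonplane. Hence the nonplanes are the bases of a rank-four binary matroid, and the edges of \(G\) are single-element basis exchanges. The basis-exchange graph is connected because the exchange axiom lets you reduce \(|B\setminus B'|\) by one.

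Concretely, take nonplanes \(B\ne B'\) and \(x\in B\setminus B'\). The three points of \(B\setminus\{x\}\) span a plane \(\Pi\), and \(B'\not\subseteq\Pi\) because \(B'\) is not a plane. Any \(y\in B'\setminus\Pi\) then makes \((B\setminus\{x\})\cup\{y\}\) a nonplane adjacent to \(B\) in \(G\) and one step closer to \(B'\).
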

	
	\begin{proof}
		A four-set has xor zero exactly when it is an affine plane: xor zero makes
		the fourth point the affine sum of the other three, and the seven nonzero
		linear functionals on \(\mathbb F_2^3\) each have two affine level-set
		planes.  Thus there are fourteen such sets and fifty-six nonplanes.
		For a nonplane \(S\), the exchanged set \(S'=S\mathbin\triangle\{p,q\}\)
		is again a nonplane exactly when \(\sigma(S)\notin\{0,p+q\}\).
		Lift \(t\in V\) to
		\(\widehat t=(1,t)\in\mathbb F_2^4\), and let \(M\) be the represented
		matroid on ground set \(V\) with representing vectors \(\widehat t\).  A four-set is a nonplane exactly
		when its four lifted points form a basis.  Thus \(G\) is the
		basis-exchange graph of the rank-four binary matroid \(M\), whose
		vertices are bases and whose edges exchange one element while preserving
		a basis.  This graph is connected.
		
		In binary labelling it contains the triangle below, where, for example,
		\(0124\) abbreviates the four-set \(\{0,1,2,4\}\):
		\begin{equation*}
			\mathtt{0124}\;\text{---}\;\mathtt{0125}\;\text{---}\;
			\mathtt{0245}\;\text{---}\;\mathtt{0124}.
			\tag{Q8.20}\label{eq:cert-q8-bad-swap-cycle}
		\end{equation*}
		The three xor sums are \(7,6,3\), while the corresponding swap directions
		are \(1,5,4\); hence all three edges belong to \(G\).  The second
		line of \cref{eq:cert-q8-canonical-swap-system} gives, on every edge of \(G\),
		\begin{equation*}
			c_{S'}=-c_S.
		\end{equation*}
		Alternating around
		\cref{eq:cert-q8-bad-swap-cycle} and using characteristic zero forces the
		three coefficients on the triangle to vanish.  Connectivity then forces
		\(c_S=0\) for every nonplane.
		
		Let \(S\) now be a plane and let \(p\in S,q\notin S\).  Then
		\(S'=S\mathbin\triangle\{p,q\}\) has xor \(p+q\ne0\), so it is a nonplane.
		The first line of \cref{eq:cert-q8-canonical-swap-system} therefore gives
		\begin{equation}
			c_S=\lambda_{pq}\ne0.
			\label{eq:cert-q8-plane-pair-value}
		\end{equation}
		It remains only to show that these values agree.  Regard a plane as
		incident with every unordered pair having one endpoint in the plane and
		one outside it.  Two nonparallel affine planes meet in two points and their
		union has six points; a pair from their intersection to the complement of
		their union is incident with both planes in \(H\).  Two distinct
		parallel planes are complementary and hence have the same incident port
		pairs.  Thus \(H\) is connected.  Every pair \(\{p,q\}\) is incident
		with a plane:
		choose a linear functional \(\ell\) with \(\ell(p+q)=1\) and take
		\(\{x:\ell(x)=\ell(p)\}\).
		\Cref{eq:cert-q8-plane-pair-value} makes all fourteen plane coefficients
		and all incident \(\lambda_{pq}\)'s equal to one common nonzero \(c\),
		proving \cref{eq:cert-q8-canonical-coefficients}.
		
		The script \texttt{verify\_high\_charge\_q8\_atlas.py} also checks the
		\(14+56\) split, the connectivity of \(G\), the displayed odd cycle,
		and the connectivity of \(H\). These are supplementary checks of
		the argument above.
	\end{proof}
	
	\begin{theorem}[Affine-plane coefficients]
		\label{thm:cert-q8-canonical-support-interface}
		Identify \(V=\mathbb F_2^3\).  Let
		\(c_S\in\mathbb C\) for \(S\in\binom V4\), and let
		\(\lambda_{pq}\in\mathbb C^\times\) for every unordered pair
		\(\{p,q\}\in\binom V2\).  Suppose that, whenever \(p\in S\), \(q\notin S\),
		\(h=p+q\), and \(S'=S\mathbin\triangle\{p,q\}\),
		\[
		c_S+c_{S'}=
		\begin{cases}
			\lambda_{pq},&\displaystyle\bigoplus_{s\in S}s\in\{0,h\},\\
			0,&\displaystyle\bigoplus_{s\in S}s\notin\{0,h\}.
		\end{cases}
		\]
		Then there is one \(c\in\mathbb C^\times\) such that
		\[
		c_S=
		\begin{cases}
			c,&\displaystyle\bigoplus_{s\in S}s=0,\\
			0,&\displaystyle\bigoplus_{s\in S}s\ne0.
		\end{cases}
		\]
		Equivalently, the support consists exactly of the fourteen affine planes of
		\(\mathbb F_2^3\), and all supported coefficients are equal.
	\end{theorem}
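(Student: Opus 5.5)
This follows directly from \cref{prop:cert-q8-canonical-support}, whose hypotheses are exactly those of the present statement. The plan is therefore to reproduce that argument in three steps.

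\emph{Step 1: split the four-sets.} First show that \(\sigma(S)=0\) characterizes affine planes. There are then fourteen planes and fifty-six nonplanes. For a nonplane \(S\) and an exchange \(S'=S\mathbin\triangle\{p,q\}\), we have \(\sigma(S')=\sigma(S)+p+q\). So \(S'\) is a nonplane exactly when \(\sigma(S)\notin\{0,h\}\), and in that case the swap relation reads \(c_{S'}=-c_S\).

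\emph{Step 2: kill the nonplane coefficients.} Take the graph \(G\) on nonplanes whose edges are these exchanges. Identify it with the basis-exchange graph of the binary matroid represented by the lifted vectors \((1,t)\in\Ftwo^4\); it is connected because basis-exchange graphs of matroids are connected. Exhibit the explicit triangle \(0124,0125,0245\) in \(G\). Sign alternation around an odd cycle forces \(c=-c\), so in characteristic zero those coefficients vanish. Connectivity then propagates \(c_S=0\) to all nonplanes.

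\emph{Step 3: plane coefficients.} For a plane \(S\) with \(p\in S\) and \(q\notin S\), the exchanged set \(S'\) is a nonplane. Hence \(c_S=\lambda_{pq}\ne0\). Consider the bipartite incidence graph \(H\) between planes and port pairs.
\begin{itemize}
\item Two nonparallel planes meet in two points, and their union has six points. A pair joining the intersection to the complement of the union is incident with both planes.
\item Two parallel planes are complementary, so they have the same incident pairs.
\end{itemize}
Thus \(H\) is connected, and every value \(c_S\) and every \(\lambda_{pq}\) equals one common nonzero \(c\).

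\emph{Main obstacle.} The only nonformal point is the connectivity of \(G\) in Step 2, together with the existence of an odd cycle. The plan is to cite matroid basis-exchange connectivity, or else check it by direct enumeration of the fifty-six nonplanes, as the supplementary script does.
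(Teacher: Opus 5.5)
Your proposal is correct and follows the paper's proof essentially verbatim: the paper also derives the theorem from the preceding proposition, whose proof is exactly your split of four-sets by \(\sigma(S)\), the basis-exchange identification with the odd triangle \(0124,0125,0245\), and the plane--pair incidence graph \(H\). The one detail you leave implicit is that every pair \(\{p,q\}\) is incident with some plane, which the paper gets from a linear functional \(\ell\) with \(\ell(p+q)=1\). You need it only for your stronger claims that \(H\) is connected and that every \(\lambda_{pq}\) equals \(c\); the stated conclusion about the \(c_S\) does not depend on it.
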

	
	\begin{proof}
		This follows from
		\cref{prop:cert-q8-canonical-support}.
	\end{proof}
	
	\bibliographystyle{alpha}
	\bibliography{paper2-full-holant-dichotomy}
	
\end{document}